\algrenewcommand{\alglinenumber}[1]{\color{black}\footnotesize#1:}
\newtheorem{cor}{Corollary}
\newtheorem{problem}{Problem}
\setlist[itemize]{label=--}
\definecolor{red}{HTML}{BF2300}	
\definecolor{silver}{gray}{0.7}
\newcommand{\ourmethod}{\textsc{Gamine}\xspace}
\newcommand{\oururl}{\href{https://doi.org/10.5281/zenodo.7936816}{10.5281/zenodo.7936816}}
\newcommand{\NP}{\textnormal{NP}\xspace}
\newcommand{\graph}{G}
\newcommand{\nodes}{V}
\newcommand{\edges}{E}
\newcommand{\outdegree}[1]{\delta^+\!(#1)}
\newcommand{\outregulardegree}{d}
\newcommand{\indegree}[1]{\delta^-\!(#1)}
\newcommand{\nnodes}{n}
\newcommand{\nedges}{m}
\newcommand{\cardinality}[1]{|#1|}
\newcommand{\outneighbors}[1]{\Gamma^+\!(#1)}
\newcommand{\labeling}{\ensuremath{c}\xspace}
\newcommand{\labelingvector}{\ensuremath{\mathbf{c}}\xspace}
\newcommand{\range}{[0,1]}
\newcommand{\edge}[2]{(#1,#2)}
\newcommand{\probability}[1]{p_{#1}}
\newcommand{\pabsorption}{\alpha}
\newcommand{\transitionmatrix}{\ensuremath{\mathbf{P}}\xspace}
\newcommand{\fundamental}{\ensuremath{\mathbf{F}}\xspace}
\newcommand{\identity}{\ensuremath{\mathbf{I}}}
\newcommand{\norm}[1]{\lVert #1 \rVert}
\newcommand{\budget}{r}
\newcommand{\onevector}{\ensuremath{\mathbf{1}}\xspace}
\newcommand{\objective}[1]{f(#1)}
\newcommand{\objectivef}{f}
\newcommand{\ivector}{\ensuremath{\mathbf{u}}\xspace}
\newcommand{\jkvector}{\ensuremath{\mathbf{v}}\xspace}
\newcommand{\unitvector}{\ensuremath{\mathbf{e}}\xspace}
\newcommand{\ourproblem}{\textsc{REM}\xspace}
\newcommand{\safenodes}{S}
\newcommand{\safenode}{s}
\newcommand{\unsafenodes}{U}
\newcommand{\mcovered}{\gamma}
\newcommand{\analgorithm}{\mathcal{A}}
\newcommand{\maxoutdegree}{\Delta^+}
\newcommand{\maxunsafeoutdegree}{\Lambda^+}
\newcommand{\selection}{\ensuremath{X}\xspace}
\newcommand{\reals}{\mathbb{R}}
\newcommand{\naturals}{\mathbb{N}}
\newcommand{\bigoh}[1]{\mathcal{O}(#1)}
\newcommand{\niter}{\kappa}
\newcommand{\quality}[1]{\theta(#1)}
\newcommand{\qualityf}{\theta}
\newcommand{\qualitythreshold}{q}
\newcommand{\fabbrialg}{\textsc{MMS}\xspace}
\newcommand{\baselineone}{\textsc{BL1}\xspace}
\newcommand{\baselinetwo}{\textsc{BL2}\xspace}
\newcommand{\baselinethree}{\textsc{BL3}\xspace}
\newcommand{\baselinefour}{\textsc{BL4}\xspace}
\newcommand{\relevancematrix}{\ensuremath{\mathbf{R}}\xspace}
\newcommand{\rank}{\text{idx}}
\newcommand{\dcg}{\textsc{DCG}\xspace}
\newcommand{\ndcg}{\textsc{nDCG}\xspace}
\newcommand{\idcg}{\textsc{iDCG}\xspace}
\newcommand{\topranked}{T}
\newcommand{\outdeg}{\delta^+}
\newcommand{\ourproblemtwo}{\textsc{QREM}\xspace}
\newcommand{\somematrix}{\ensuremath{\mathbf{M}}\xspace}
\newcommand{\apxerror}{\varepsilon}
\newcommand{\probabilityshape}{\chi}
\newcommand{\maxobjectivef}{\ensuremath{\objectivef_\Delta}\xspace}
\newcommand{\maxsetobjectivef}{\ensuremath{\hat{\objectivef}_\Delta}\xspace}
\newcommand{\maxobjective}[1]{\ensuremath{\objectivef_\Delta(#1)}\xspace}
\newcommand{\maxsetobjective}[1]{\ensuremath{\hat{\objectivef}_\Delta(#1)}\xspace}
\newcommand{\rewiring}[3]{(#1,#2,#3)}
\newcommand{\outseq}[1]{\mathbf{r}_#1}
\newcommand{\nqpermissible}{\ell}
\newcommand{\qualityset}{Q}
\newcommand{\qualitycomplexity}{g}
\newcommand{\qualitycomplexitystart}{h}
\newcommand{\cover}{C}
\newcommand{\distance}{t}
\newcommand{\smallterm}{\varepsilon}
\newcommand{\neglectedterm}{\xi}
\newcommand{\greedyname}{1-REM\xspace}
\newcommand{\kcandidates}{K}
\newcommand{\heuristic}{\hat{\Delta}}
\newcommand{\synuni}{\textsc{SU}\xspace}
\newcommand{\synhom}{\textsc{SH}\xspace}
\newcommand{\yt}{\textsc{YT}\xspace}
\newcommand{\nf}{\textsc{NF}\xspace}
\newcommand{\ytone}{\textsc{YT-100k}\xspace}
\newcommand{\yttwo}{\textsc{YT-10k}\xspace}
\newcommand{\nelaone}{\textsc{NF-Jan06}\xspace}
\newcommand{\nelatwo}{\textsc{NF-Cov19}\xspace}
\newcommand{\nelathree}{\textsc{NF-All}\xspace}
\newcommand{\roundingthreshold}{\mu}
\newcommand{\fractionbad}{\beta}
\title{Reducing Exposure to Harmful Content via Graph Rewiring}
\date{}
\begin{document}
	\author{Corinna Coupette}
	\orcid{0000-0001-9151-2092}
	\affiliation{%
		\institution{MPI for Informatics}
		\country{Germany}
	}
	\author{Stefan Neumann}
	\orcid{0000-0002-3981-1500}
	\affiliation{%
		\institution{KTH Royal Institute of Technology}
		\country{Sweden}
	}
	\author{Aristides Gionis}
	\orcid{0000-0002-5211-112X}
	\affiliation{%
		\institution{KTH Royal Institute of Technology}
		\country{Sweden}
	}

\renewcommand{\shortauthors}{Coupette et al.}

\begin{abstract}
Most media content consumed today is provided by digital platforms that aggregate input from diverse sources, 
where access to information is mediated by recommendation algorithms.
One principal challenge in this context
is dealing with content that is considered harmful. 
Striking a balance between competing stakeholder interests,
rather than block harmful content altogether, 
one approach is to minimize the exposure to such content that is induced specifically by algorithmic recommendations.
Hence, 
modeling media items and recommendations as a directed graph,
we study the problem of reducing the exposure to harmful content via edge rewiring. 
We formalize this problem using absorbing random walks, 
and prove that it is \NP-hard and \NP-hard to approximate to within an additive error, 
while under realistic assumptions,
the greedy method yields a $(1-\nicefrac{1}{e})$-approximation.
Thus, we introduce \ourmethod, 
a fast greedy algorithm that can reduce the exposure to harmful content with or without quality constraints on recommendations.
By performing just $100$ rewirings on YouTube graphs with several hundred thousand edges, 
\ourmethod reduces the initial exposure by $50\%$, 
while ensuring that its recommendations are at most $5\%$ less relevant than the original recommendations.
Through extensive experiments on synthetic data 
and real-world data from video recommendation and news feed applications,
we confirm the effectiveness, robustness, and efficiency of \ourmethod in practice.
\end{abstract}
	
\begin{CCSXML}
	<ccs2012>
	<concept>
	<concept_id>10002951.10003317.10003347.10003350</concept_id>
	<concept_desc>Information systems~Recommender systems</concept_desc>
	<concept_significance>500</concept_significance>
	</concept>
	<concept>
	<concept_id>10003456.10003457.10003580.10003543</concept_id>
	<concept_desc>Social and professional topics~Codes of ethics</concept_desc>
	<concept_significance>100</concept_significance>
	</concept>
	<concept>
	<concept_id>10003752.10010061.10010065</concept_id>
	<concept_desc>Theory of computation~Random walks and Markov chains</concept_desc>
	<concept_significance>500</concept_significance>
	</concept>
	<concept>
	<concept_id>10003752.10003809.10003635</concept_id>
	<concept_desc>Theory of computation~Graph algorithms analysis</concept_desc>
	<concept_significance>300</concept_significance>
	</concept>
	<concept>
	<concept_id>10002950.10003624.10003633.10010917</concept_id>
	<concept_desc>Mathematics of computing~Graph algorithms</concept_desc>
	<concept_significance>300</concept_significance>
	</concept>
	<concept>
	<concept_id>10002951.10003260.10003282</concept_id>
	<concept_desc>Information systems~Web applications</concept_desc>
	<concept_significance>100</concept_significance>
	</concept>
	<concept>
	<concept_id>10003456.10003462</concept_id>
	<concept_desc>Social and professional topics~Computing / technology policy</concept_desc>
	<concept_significance>100</concept_significance>
	</concept>
	</ccs2012>
\end{CCSXML}

\ccsdesc[500]{Information systems~Recommender systems}
\ccsdesc[100]{Information systems~Web applications}
\ccsdesc[500]{Theory of computation~Random walks and Markov chains}
\ccsdesc[300]{Mathematics of computing~Graph algorithms}

\keywords{graph rewiring, random walks, recommendation graphs}

\maketitle

\captionsetup[subfigure]{skip=0pt}
\captionsetup{skip=3pt}

\section{Introduction}
\label{sec:introduction}

Recommendation algorithms mediate access to content on digital platforms, 
and as such, they critically influence how individuals and societies perceive the world and form their opinions \cite{spinelli2017closed,robertson2018auditing,papadamou2022just,hussein2020measuring,ferrara2022link}. 
In recent years, 
platforms have come under increasing scrutiny from researchers and regulators alike
due to concerns and evidence that their recommendation algorithms create filter bubbles \cite{ledwich2022radical,srba2022auditing,kirdemir2022exploring,chitra2020analyzing} 
and fuel radicalization \cite{hosseinmardi2021examining,whittaker2021recommender,ribeiro2020auditing,ledwich2020algorithmic,pescetelli2022bots}. 
One of the main challenges in this context is dealing with content that is considered harmful \cite{bandy2021problematic,yesilada2022systematic,costanza2022audits}.
To address this challenge while balancing the interests of creators, users, and platforms,
rather than block harmful content, 
one approach is to minimize the exposure to such content that is induced by algorithmic recommendations. 

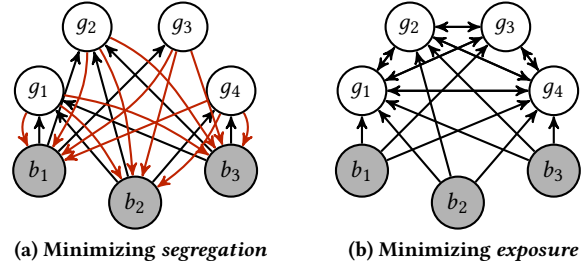
\begin{figure}[t]
	\centering
	\begin{subfigure}{0.5\linewidth}
		\centering
\begin{tikzpicture}[->,>=stealth',thick,scale=0.85]
	\node[minimum height = 2em,minimum width = 2em,draw,circle] at (0, 0)   (g1) {$g_1$};
	\node[minimum height = 2em,minimum width = 2em,draw,circle] at (0.75, 1)   (g2) {$g_2$};
	\node[minimum height = 2em,minimum width = 2em,draw,circle] at (2.25, 1)   (g3) {$g_3$};
	\node[minimum height = 2em,minimum width = 2em,draw,circle] at (3, 0)   (g4) {$g_4$};
	\node[minimum height = 2em,minimum width = 2em,draw,circle,fill=silver] at (0, -1.25)   (b1) {$b_1$};
	\node[minimum height = 2em,minimum width = 2em,draw,circle,fill=silver] at (1.5, -1.75)   (b2) {$b_2$};
	\node[minimum height = 2em,minimum width = 2em,draw,circle,fill=silver] at (3, -1.25)   (b3) {$b_3$};

	\draw (b1) -- (g1);
	\draw (b1) -- (g2);
	\draw (b1) -- (g3);
	\draw (b2) -- (g1);
	\draw (b2) -- (g2);
	\draw (b2) -- (g4);
	\draw (b3) -- (g1);
	\draw (b3) -- (g2);
	\draw (b3) -- (g4);
	\draw[red] (g3) to [out=245,in=35] (b1); 
	\draw[red] (g3) -- (b2); 
	\draw[red] (g3) -- (b3); 
	\draw[red] (g1) to [out=240,in=115] (b1);
	\draw[red] (g1) to [out=325,in=120] (b2); 
	\draw[red] (g1) to [out=350,in=145] (b3); 
	\draw[red] (g2) to [out=270,in=60] (b1); 
	\draw[red] (g2) to [out=300,in=95] (b2); 
	\draw[red] (g2) to [out=335,in=125] (b3); 
	\draw[red] (g4) -- (b1); 
	\draw[red] (g4) to [out=245,in=35] (b2); 
	\draw[red] (g4) to [out=300,in=65] (b3); 
\end{tikzpicture}
		\vspace*{3pt}
		\subcaption{Minimizing \emph{segregation}}
		\label{fig:setting:bad}
	\end{subfigure}~%
	\begin{subfigure}{0.5\linewidth}
		\centering

\begin{tikzpicture}[->,>=stealth',thick,scale=0.85]
	\node[minimum height = 2em,minimum width = 2em,draw,circle] at (0, 0)   (g1) {$g_1$};
	\node[minimum height = 2em,minimum width = 2em,draw,circle] at (0.75, 1)   (g2) {$g_2$};
	\node[minimum height = 2em,minimum width = 2em,draw,circle] at (2.25, 1)   (g3) {$g_3$};
	\node[minimum height = 2em,minimum width = 2em,draw,circle] at (3, 0)   (g4) {$g_4$};
	\node[minimum height = 2em,minimum width = 2em,draw,circle,fill=silver] at (0, -1.25)   (b1) {$b_1$};
	\node[minimum height = 2em,minimum width = 2em,draw,circle,fill=silver] at (1.5, -1.75)   (b2) {$b_2$};
	\node[minimum height = 2em,minimum width = 2em,draw,circle,fill=silver] at (3, -1.25)   (b3) {$b_3$};
	
	\draw (g1) -- (g2); 
	\draw (g1) -- (g3);
	\draw (g1) -- (g4);
	\draw (g2) -- (g1); 
	\draw (g2) -- (g3);
	\draw (g2) -- (g4);
	\draw (g3) -- (g1); 
	\draw (g3) -- (g2);
	\draw (g3) -- (g4);
	\draw (g4) -- (g1); 
	\draw (g4) -- (g2);
	\draw (g4) -- (g3);
	\draw (b1) -- (g1);
	\draw (b1) -- (g3); 
	\draw (b1) -- (g4);
	\draw (b2) -- (g1); 
	\draw (b2) -- (g2); 
	\draw (b2) -- (g4);
	\draw (b3) -- (g1); 
	\draw (b3) -- (g2);
	\draw (b3) -- (g4);
\end{tikzpicture}
		\vspace*{3pt}
		\subcaption{Minimizing \emph{exposure}}
		\label{fig:setting:good}
	\end{subfigure}
	\caption{%
		3-out-regular directed graphs with four \emph{good} nodes (white) and three \emph{bad} nodes (gray). 
		Edges running from good to bad nodes are drawn in red. 
		The left graph minimizes the \emph{segregation} objective from \citeauthor{fabbri2022rewiring} \cite{fabbri2022rewiring}, 
		but random walks oscillate between good nodes and bad nodes. 
		In contrast, only the right graph minimizes our \emph{exposure} objective.
	}\vspace*{-12pt}
	\label{fig:setting}
\end{figure}

In this paper, 
we study the problem of reducing the exposure to harmful content via \emph{edge rewiring}, 
i.e., replacing certain recommendations by others. 
This problem was recently introduced by \citeauthor{fabbri2022rewiring} \cite{fabbri2022rewiring},
who proposed to address it
by modeling harmfulness as a \emph{binary} node label and minimizing the \emph{maximum} \emph{segregation}, 
defined as the largest expected number of steps of a random walk starting at a harmful node until it visits a benign node. 
However, while \citeauthor{fabbri2022rewiring} \cite{fabbri2022rewiring} 
posed a theoretically interesting and practically important problem, 
their approach has some crucial limitations. 

First, treating harmfulness as dichotomous 
fails to capture the complexity of real-world harmfulness assessments.
Second, 
the segregation objective
ignores completely all random-walk continuations that return to harmful content after the first visit to a benign node, 
but \emph{benign nodes do not act as absorbing states} in practice.
The consequences are illustrated in \cref{fig:setting:bad}, 
where the segregation objective judges that the graph
provides minimal exposure to harmful content 
(the hitting time from any harmful node to a benign node is~1), 
while long random walks, 
which model user behavior more realistically, 
oscillate between harmful and benign~content.

In this paper, 
we remedy the above-mentioned limitations.
First, we more nuancedly model harmfulness as \emph{real-valued} node costs.
Second, we propose a novel minimization objective, 
the \emph{expected} \emph{total} \emph{exposure}, 
defined as the sum of the costs of absorbing random walks starting at any node.
Notably, in our model, 
no node is an absorbing state, 
but any node can lead to absorption, 
which represents more faithfully how users cease to interact with a platform.
Our exposure objective truly minimizes the exposure to harmful content. 
For example, it correctly 
identifies the graph in \cref{fig:setting:good} as significantly less harmful than that in \cref{fig:setting:bad}, 
while for the segregation objective by \citeauthor{fabbri2022rewiring} \cite{fabbri2022rewiring},
the two graphs are indistinguishable. 

On the algorithmic side, 
we show that although minimizing the expected total exposure is \NP-hard and \NP-hard to approximate to within an additive error, 
its maximization version is equivalent to a submodular maximization problem 
under the assumption that the input graph contains a small number of \emph{safe} nodes, 
i.e., nodes that cannot reach nodes with non-zero costs.
If these safe nodes are present---%
which holds in $80\%$ of the real-world graphs used in our experiments---%
the greedy method yields a $(1-\nicefrac{1}{e})$-approximation.  
Based on our theoretical insights, 
we introduce \ourmethod, 
a fast greedy algorithm for reducing exposure to harmful content via edge rewiring. 
\ourmethod leverages provable strategies for pruning unpromising rewiring candidates,
and it works both with and without quality constraints on recommendations.
With just $100$ rewirings on YouTube graphs containing hundred thousands of edges, 
\ourmethod reduces the exposure by $50\%$, 
while ensuring that its recommendations are at least $95\%$ as relevant as the originals.

In the following, 
we introduce our problems, \ourproblem and \ourproblemtwo (\cref{sec:problems}), 
and analyze them theoretically (\cref{sec:theory}).
Building on our theoretical insights, 
we develop \ourmethod as an efficient greedy algorithm for tackling our problems (\cref{sec:algorithm}). 
Having discussed related work (\cref{sec:related}),
we demonstrate the performance of \ourmethod through extensive experiments (\cref{sec:experiments}) 
before concluding with a discussion (\cref{sec:conclusion}).
All code, datasets, and results are publicly available,\!\footnote{\oururl}
and we provide further materials in \cref{apx:datasets,apx:edits,apx:experiments,apx:hardness,apx:pseudocode,apx:reproducibility}.

\section{Problems}
\label{sec:problems}

We consider a directed graph $\graph = (\nodes,\edges)$ of content items ($\nodes$) and what-to-consume-next recommendations ($\edges$),
with $\nnodes = \cardinality{\nodes}$ nodes and $\nedges = \cardinality{\edges}$ edges. 
Since we can typically make a fixed number of recommendations for a given content item, 
such \emph{recommendation graphs} are often $\outregulardegree$-out-regular, 
i.e., all nodes have $\outregulardegree = \nicefrac{\nedges}{\nnodes}$ out-neighbors, 
but we do not restrict ourselves to this setting.
Rather, each node $i$ has an out-degree $\outdegree{i} = \cardinality{\outneighbors{i}}$, 
where~$\outneighbors{i}$ is the set of out-neighbors of $i$, 
and a cost $\labeling_i\in\range$,
which quantifies the harmfulness of content item $i$, 
ranging from $0$ (not harmful at all) to $1$ (maximally harmful).
For convenience, we define $\maxoutdegree = \max\{\outdegree{i}\mid i \in \nodes\}$
and collect all costs into a vector $\labelingvector\in \range^{\nnodes}$. 
We model user behavior as a random-walk process on the recommendation graph~$\graph$. 
Each edge $\edge{i}{j}$ in the recommendation graph
is associated with a transition probability $\probability{ij}$ 
such that $\sum_{j\in\outneighbors{i}}\probability{ij} = 1 - \pabsorption_i$, 
where $\pabsorption_i$ is the absorption probability of a random walk at node $i$ 
(i.e., the probability that the walk ends~at~$i$). Intuitively, one can interpret $\alpha_i$ as the probability that a user stops using the service after consuming content~$i$.
For simplicity, we assume $\pabsorption_i = \pabsorption \in (0,1]$ for all $i\in\nodes$.
Thus, we can represent the random-walk process on $\graph$ 
by the transition matrix $\transitionmatrix\in[0,1-\pabsorption]^{\nnodes\times\nnodes}$, 
where 
\begin{align}
	\transitionmatrix[i,j] =\begin{cases}
		 \probability{ij}& \text{if }\edge{i}{j}\in \edges\;,\\
		 0&\text{otherwise}\;.
	\end{cases}
\end{align}
This is an absorbing Markov chain, and the expected number of visits from a node $i$ 
to a node $j$ before absorption is given by the entry $(i,j)$ 
of the \emph{fundamental matrix} 
$\fundamental\in\reals_{\geq 0}^{\nnodes\times\nnodes}$, defined as
\begin{align}\label{eq:fundamental}
	\fundamental = \sum_{i=0}^{\infty} \transitionmatrix^i = (\identity - \transitionmatrix)^{-1}\;,
\end{align}
where $\identity$ is the $\nnodes\times\nnodes$-dimensional identity matrix, 
and the series converges since $\norm{\transitionmatrix}_{\infty} = \max_i\sum_{j=0}^{\nnodes}\transitionmatrix[i,j] = 1 - \pabsorption < 1$.
Denoting the $i$-th unit vector as~$\unitvector_i$, 
observe that the row vector $\unitvector_i^T\fundamental$ gives the expected number of visits, before absorption, from $i$ to any node, 
and the column vector $\fundamental\unitvector_i$ gives the expected number of visits from any node to $i$.
Hence, $\unitvector_i^T\fundamental\labelingvector = \sum_{j\in\nodes}\fundamental[i,j]\labelingvector_j$ gives the expected exposure to harmful content of users starting their random walk at node $i$, 
referred to as the \emph{exposure} of $i$. 
The \emph{expected total exposure} to harm in the graph $\graph$, then, 
is given by the non-negative function
\begin{align}
	\objective{\graph} = \onevector^T\fundamental\labelingvector\;,
	\label{eq:harm}
\end{align}
where $\onevector$ is the vector with each entry equal to~$1$.

We would like to \emph{minimize} the exposure function given in \cref{eq:harm} 
by making $\budget$ edits to the graph $\graph$, 
i.e., we seek an effective \emph{post-processing strategy} for harm reduction. 
In line with our motivating application, 
we restrict edits to \emph{edge rewirings} denoted as $\rewiring{i}{j}{k}$, 
in which we replace an edge $\edge{i}{j}\in\edges$ by an edge $\edge{i}{k}\notin\edges$ with $i\neq k$, 
setting $\probability{ik} = \probability{ij}$ 
(other edits are discussed in \cref{apx:edits}). 
Seeking edge rewirings to minimize the expected total exposure yields the following problem definition.
\begin{problem}[$\budget$-rewiring exposure minimization {[}\ourproblem{]}]
	\label{problem}
	Given a graph $\graph$, 
	its random-walk transition matrix $\transitionmatrix$, 
	a node cost vector $\labelingvector$, 
	and a budget $\budget$, 
	minimize $\objective{\graph_\budget}$, 
	where $\graph_\budget$ is $\graph$ after $\budget$ rewirings.
\end{problem}
Equivalently, we can \emph{maximize} the \emph{reduction} in the expected total exposure to harmful content,
\begin{align}
	\maxobjective{\graph,\graph_\budget} =\objective{\graph}-\objective{\graph_\budget}\;.
	\label{eq:maxdelta}
\end{align}
Note that while any set of rewirings minimizing $\objective{\graph_\budget}$ also maximizes $\maxobjective{\graph,\graph_\budget}$, 
the approximabilities of $\objectivef$ and $\maxobjectivef$ can differ widely.

As \cref{problem} does not impose any constraints on the rewiring operations, 
the optimal solution might contain rewirings $\rewiring{i}{j}{k}$ such that node $k$ is unrelated to $i$. 
To guarantee high-quality recommendations, 
we need additional relevance information, 
which we assume to be given as a \emph{relevance matrix} 
$\relevancematrix\in\reals_{\geq 0}^{\nnodes\times\nnodes}$, 
where $\relevancematrix[i,j]$ denotes the relevance of node~$j$ in the context of node~$i$. 
Given such relevance information, 
and assuming that the out-neighbors of a node $i$ are ordered as
$\outseq{i}\in \nodes^{\outdegree{i}}$, 
we can define a \emph{relevance function}~$\qualityf$ with range $\range$
to judge the quality of the recommendation sequence at node~$i$,
depending on the relevance and ordering of recommended nodes, 
and demand that any rewiring retain $\quality{\outseq{i}} \geq \qualitythreshold$ for all $i\in\nodes$ and some \emph{quality threshold} $\qualitythreshold\in\range$. 
One potential choice for $\qualityf$
is the normalized discounted cumulative gain ($\ndcg$), 
a popular ranking quality measure,
which we use in our experiments and define in \cref{apx:reproducibility:qualityf}.
Introducing $\qualityf$ allows us to consider a variant of \ourproblem with relevance constraints.

\begin{problem}[$\qualitythreshold$-relevant $\budget$-rewiring exposure minimization {[}\ourproblemtwo{]}]\label{problem2}
	Given a graph $\graph$, 
	its random-walk transition matrix $\transitionmatrix$, 
	a node cost vector $\labelingvector$, 
	a budget $\budget$, 
	a relevance matrix $\relevancematrix$,
	a relevance function $\qualityf$,
	and a quality threshold $\qualitythreshold$, 
	minimize $\objective{\graph_\budget}$
	under the condition that $\quality{\outseq{i}} \geq \qualitythreshold$ for all $i\in\nodes$.
\end{problem}

For $\qualitythreshold = 0$, \ourproblemtwo is equivalent to \ourproblem. 
Collecting our notation in Appendix~\cref{tab:notation}, 
we now seek to address both problems.

\section{Theory}
\label{sec:theory}

To start with, we establish some theoretical properties of our problems, the functions $\objectivef$ and $\maxobjectivef$, 
and potential solution approaches.

\paragraph{Hardness}
We begin by proving that \ourproblem (and hence, also \ourproblemtwo) is an \NP-hard problem.
\begin{restatable}[\NP-Hardness of \ourproblem]{thm}{hardness}\label{thm:hardness}
	The $\budget$-rewiring exposure minimization problem is \NP-hard, even on 3-out-regular input graphs with binary costs $\labelingvector\in\{0,1\}^{\nnodes}$.
\end{restatable}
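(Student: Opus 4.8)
The plan is to reduce from a cardinality-constrained covering problem---\textbf{Maximum Coverage} (equivalently, maximum vertex coverage), which is \NP-hard. Given a universe $\{u_1,\dots,u_m\}$, a family of sets $S_1,\dots,S_n$, and an integer $k$, I would build a 3-out-regular directed graph $\graph$ with binary costs $\labelingvector\in\{0,1\}^{\nnodes}$ in which performing $\budget=k$ rewirings corresponds to selecting $k$ sets, and the achievable expected total exposure $\objective{\graph_\budget}$ is a strictly decreasing affine function of the number of covered elements. Deciding whether $\objective{\graph_\budget}$ can be pushed below a threshold $\tau$ would then decide whether $k$ sets cover at least $t$ elements, establishing \NP-hardness of \ourproblem (and hence of \ourproblemtwo, via $\qualitythreshold=0$).

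For the gadget, I would introduce one bad node $b_j$ (cost $1$) per element $u_j$ and one gateway node $a_i$ per set $S_i$, wiring $a_i$ toward the bad nodes of $S_i$ and feeding all gateways from a common source, so that in the initial graph every bad node is visited and contributes a fixed positive amount to $\onevector^T\fundamental\labelingvector$. A rewiring at a gateway $a_i$ would redirect its outgoing probability mass into a cost-$0$ absorbing region (``safe'' nodes), thereby diverting the walks that would otherwise reach the bad nodes associated with $S_i$. The overlap intrinsic to Maximum Coverage---an element covered by several chosen sets must be counted only once---would be encoded by arranging the gadget so that, once a bad node stops being reached along a redirected route, its contribution is already removed and further rewirings of other gateways covering the same element yield no additional gain; this is exactly the diminishing-returns behaviour of the reduction $\maxobjectivef$. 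Binary costs are immediate from this construction, and to enforce 3-out-regularity I would pad every node with auxiliary edges into a pool of cost-$0$ dummy nodes and split any high-degree gateway into a constant-degree routing subtree, checking that these padding edges carry no harmful mass and hence leave the relevant exposure terms unchanged.

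The main technical steps are then: (i) write $\objectivef$ in closed form on the gadget from $\fundamental=(\identity-\transitionmatrix)^{-1}$, exploiting the recurrence $\mathbf{z}^T=\onevector^T+\mathbf{z}^T\transitionmatrix$ for the column sums $\mathbf{z}^T=\onevector^T\fundamental$, so that $\objective{\graph}=\sum_{j} z_{b_j}$ becomes a transparent sum over bad nodes; (ii) compute exactly how $z_{b_j}$ changes when a gateway is rewired, showing that it drops to its minimum precisely when $u_j$ becomes covered and is insensitive to rewirings of already-covering gateways; and (iii) choose the absorption probability $\pabsorption$, the transition probabilities, and the threshold $\tau$ so that the total exposure is an exact affine function of the number of covered elements. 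Assembling these, a \textsc{yes} instance of Maximum Coverage yields $\budget$ rewirings with $\objective{\graph_\budget}\le\tau$; conversely, any rewiring set achieving $\objective{\graph_\budget}\le\tau$ can be assumed---after an exchange argument that replaces any non-gateway rewiring by a gateway rewiring without increasing exposure---to select $k$ covering sets.

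The hard part will be controlling the global, nonlocal nature of the fundamental matrix: because a single rewiring perturbs the entire linear system $\identity-\transitionmatrix$, the change in $\objectivef$ is a priori a rational function of all edges rather than a clean per-gadget quantity. The crux of the proof is therefore the gadget design that decouples these interactions---keeping the bad nodes' contributions additive across elements while making each gateway's effect sharp and overlap-correct---under the simultaneous constraints of 3-out-regularity and binary costs, which leave little freedom in setting the transition probabilities. I expect the bookkeeping to show that the auxiliary and padding structure contributes only a fixed additive constant to $\objectivef$, so that it can be folded into the threshold $\tau$.
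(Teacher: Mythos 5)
There is a genuine gap, and it sits exactly where you placed your hopes: in ``arranging the gadget'' so that an element covered twice yields no extra gain. Your construction cannot do this, because the exposure objective is \emph{linear} in expected visit counts: the exposure attributable to a bad node $b_j$ decomposes as a sum over its in-routes, and in your source-to-gateway-to-element topology every walk reaching $b_j$ passes through \emph{exactly one} gateway edge. Hence the gain from rewiring gateway $a_i$ is a fixed quantity (the flow that $a_i$ carries into the bad nodes of $S_i$), completely independent of which other gateways have already been rewired. The induced optimization is modular, not submodular-with-overlaps: the optimal $k$ rewirings simply pick the $k$ gateways with the largest carried flow, a problem solvable greedily in polynomial time. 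So your reduction does not encode Maximum Coverage at all, and no choice of $\pabsorption$, transition probabilities, or padding can repair it within this topology --- to get ``covered once counts once'' you would need every route into $b_j$ to traverse the candidate edges of \emph{all} sets containing $u_j$, which a star-shaped gadget never achieves. Your secondary goal of making $\objectivef$ an \emph{exact} affine function of coverage is also unattainable: the tail of $\fundamental=\sum_i \transitionmatrix^i$ contributes terms that depend on which rewirings were chosen, so the padding and auxiliary structure is not ``a fixed additive constant'' that can be folded into the threshold.

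The paper's proof solves both problems with one trick you are missing. It reduces from minimum vertex cover on cubic graphs (coverage with sets of size two), makes the candidate rewirings the companion edges $(i,b_i)$, and encodes each original edge $\{i,j\}$ by \emph{reversed} edges $(b_i,j)$ and $(b_j,i)$. The point is that the length-3 walk $(j,b_j,i,b_i)$ traverses \emph{two} candidate edges, so it is destroyed if \emph{either} $(i,b_i)$ or $(j,b_j)$ is rewired --- this OR-structure is precisely the once-per-covered-edge accounting, and it surfaces as a clean term $\frac{2\gamma'}{27}(1-\pabsorption)^3$ at the third-order scale, while the distance-1 and distance-2 gains are exactly additive across rewirings. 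Instead of exact affineness, the paper then uses a separation argument: a lemma shows that for $\pabsorption\geq\frac{1}{2}$ the \emph{entire} contribution of walks of length at least four satisfies $\neglectedterm < \frac{2}{27}(1-\pabsorption)^3$, i.e., the uncontrolled tail is strictly smaller than the quantum gained by covering a single new edge, so thresholding $\maxobjectivef$ decides the vertex-cover instance. If you want to salvage your plan, you should adopt these two ideas: route walks through \emph{pairs} (or tuples) of candidate edges to create the coverage semantics, and replace exactness by a tail bound obtained from a large absorption probability.
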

\begin{proof}
	We obtain this result by reduction from minimum vertex cover for cubic, 
	i.e., 3-regular graphs (MVC-3), 
	which is known to be \NP-hard \cite{greenlaw1995cubic}.
	A full, illustrated proof is given in \cref{apx:hardness:np}.
\end{proof}

Next, we further show that \ourproblem is hard to approximate under the Unique Games Conjecture (UGC) \cite{khot2002power}, 
an influential conjecture in hardness-of-approximation theory.
\begin{restatable}{thm}{apxhardness}\label{thm:apxhardness}
	Assuming the UGC, 
	\ourproblem is hard to approximate to within an additive error of both $\Theta(\nnodes)$ and $\Theta(\budget)$.
\end{restatable}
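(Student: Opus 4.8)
The plan is to give a gap-preserving reduction from the gap version of minimum vertex cover, whose inapproximability under the UGC is exactly what we exploit. By the UGC-based hardness of vertex cover due to Khot and Regev, building on the UGC~\cite{khot2002power}, for every small $\epsilon>0$ it is \NP-hard to distinguish graphs $H$ on $N$ vertices that admit a vertex cover of size at most $(\tfrac12+\epsilon)N$ (\textbf{YES} instances) from graphs in which every vertex cover has size at least $(1-\epsilon)N$ (\textbf{NO} instances). The minimum cover sizes of the two cases differ by $(\tfrac12-2\epsilon)N=\Theta(N)$, and the goal is to transport this $\Theta(N)$ gap into an $\Theta(\nnodes)$ additive gap in the exposure objective $\objectivef$, while keeping the budget $\budget=\Theta(\nnodes)$ so that the very same gap is also $\Theta(\budget)$. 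Since the reduction blows up $H$ by only a constant factor, we will have $\nnodes=\Theta(N)$, and since the maximum possible exposure is itself $\Theta(\nnodes)$ for constant $\pabsorption$, an additive error of $\Theta(\nnodes)$ precludes essentially any nontrivial additive approximation.

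First I would reuse, and lightly adapt to general (rather than cubic) input graphs, the gadget underlying the reduction of \cref{thm:hardness}, engineered so that selecting a vertex into a cover corresponds to a bounded number of rewirings and each covered edge-gadget removes a fixed amount of exposure. Concretely, I would arrange the construction so that, for the instance $\graph$ built from $H$, spending the budget $\budget$ to mimic a cover $X\subseteq V(H)$ yields an exposure of the form $\objective{\graph_\budget}=C-\lambda\,\lvert\{e\in E(H):e\text{ covered by }X\}\rvert$ up to lower-order terms, for constants $C,\lambda>0$ determined by $\pabsorption$ and the costs $\labelingvector$. The crucial point is that this relation is (nearly) affine in the number of covered edges, so an additive error in $\objectivef$ pulls back to an additive error in the number of uncovered edges.

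With such a gadget in hand, completeness is immediate: a cover of size $\leq(\tfrac12+\epsilon)N$ in a \textbf{YES} instance is realized by $\budget=\Theta(N)$ rewirings that cover all edges, achieving exposure at most the target $B=C-\lambda\lvert E(H)\rvert$. For soundness, I would argue that in a \textbf{NO} instance every choice of $\budget$ rewirings leaves $\Theta(N)$ edge-gadgets uncovered, each contributing $\Theta(1)$ to the exposure, so that $\objective{\graph_\budget}\geq B+\Theta(\nnodes)$. Hence any polynomial-time algorithm whose additive error is smaller than this $\Theta(\nnodes)$ gap would distinguish \textbf{YES} from \textbf{NO}, which is UGC-hard; because $\budget=\Theta(\nnodes)$, the same bound simultaneously rules out additive error $\Theta(\budget)$.

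The main obstacle is establishing the (near-)additivity needed in the soundness direction, since the fundamental matrix $\fundamental=(\identity-\transitionmatrix)^{-1}$ couples the entire graph and the effect of a rewiring is a priori global. I would control this by exploiting that the uniform absorption probability $\pabsorption\in(0,1]$ caps the expected walk length at $\bigoh{\nicefrac{1}{\pabsorption}}$, so that the exposure decomposes, up to an $\bigoh{1}$-per-node error, into local per-gadget contributions; the gadget should be designed so that a walk reaching an uncovered bad node is absorbed within $\bigoh{1}$ steps, decoupling its contribution from the rest of $\graph$. A secondary subtlety is ensuring that a budget-$\budget$ partial cover in the \textbf{NO} case must miss $\Theta(N)$ edges rather than merely one: this requires the hard instances to carry a mild edge-density property, which can be guaranteed either by using the expanding, bounded-degree instances produced by the UGC reduction or by a padding argument that amplifies a single forced-uncovered edge into $\Theta(N)$ of them.
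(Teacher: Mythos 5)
Your blueprint shares its skeleton with the paper's proof: both reuse the gadget of \cref{thm:hardness}, both source the hardness from UGC-based inapproximability of vertex cover, and both exploit that the exposure reduction is affine in the number of newly covered edges, with the tail of long walks suppressed by the absorption probability (this is exactly what \cref{lem:alphasetting} delivers for $\pabsorption\geq\frac12$). The framing differs only cosmetically: you work with the promise/gap form of Khot--Regev at a single fixed budget, whereas the paper converts an additive-error \ourproblem algorithm into a multiplicative $(2-\varepsilon)$-approximation for MVC-3 by taking \emph{both} endpoints of every uncovered edge, enumerating all budgets $\budget\in\{\nicefrac{\nnodes'}{4},\dots,\nicefrac{2\nnodes'}{3}\}$, and returning the smallest reconstructed cover; these two formulations are interchangeable.

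The genuine gap lies in your execution plan. The gadget of \cref{thm:hardness} is hard-wired to 3-regular instances: uniform transition probabilities $\frac{1-\pabsorption}{3}$, branching factor $3^{\distance}$, a per-newly-covered-edge quantum of exactly $\frac{2}{27}(1-\pabsorption)^3$, and the tail bound of \cref{lem:alphasetting} all depend on this. Your proposal to ``lightly adapt'' it to the \emph{general} graphs on which the Khot--Regev $(\nicefrac12+\epsilon)$-versus-$(1-\epsilon)$ gap lives does not go through as stated: on non-regular instances the per-edge quanta become degree-dependent, so the affine relation $C-\lambda\cdot|\{e : e~\text{covered}\}|$ with a \emph{uniform} $\lambda$ collapses, and with it your soundness argument against arbitrary rewiring strategies (which, as in the paper's proof of \cref{thm:hardness}, must also rule out rewirings not of the shape $\rewiring{i}{b_i}{g_x}$); on $d$-regular instances the quantum scales as $\Theta(d^{-3})$, so the exposure gap is $\Theta(\nicefrac{\nnodes}{d^{3}})$, which is $\Theta(\nnodes)$ only for constant $d$ --- and on bounded-degree graphs the specific $(\nicefrac12+\epsilon)$-versus-$(1-\epsilon)$ gap you invoke is simply unavailable, since vertex cover there admits polynomial-time approximation strictly better than factor $2$; nor does the UG reduction hand you bounded-degree expander instances, as you suggest. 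You would need either the paper's route (staying with (near-)regular instances and the endpoints-conversion argument) or an explicit bounded-degree UGC gap (a weaker constant-factor gap would still give $\Theta(\nnodes)$, but you must actually invoke one). Separately, your closing worry points in the wrong direction: no expansion, density, or padding is needed to force $\Theta(\nnodes)$ uncovered edges in the NO case --- if a selection $X$ of size $(\nicefrac12+\epsilon)N$ leaves $U$ edges uncovered, adding one endpoint per uncovered edge yields a cover of size $|X|+U\geq(1-\epsilon)N$, so $U\geq(\nicefrac12-2\epsilon)N$ automatically; this endpoints trick is precisely the one the paper's proof relies on.
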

\begin{proof}
	We obtain this result via the hardness of approximation of MVC 
	under the UGC. 
	A full proof is given in \cref{apx:hardness:apx}.
\end{proof}
Both \cref{thm:hardness} and \cref{thm:apxhardness} extend from $\objectivef$ to \maxobjectivef (\cref{eq:delta}). 

\paragraph{Approximability}
Although we cannot approximate $\objectivef$ directly,  
we \emph{can} approximate $\maxobjectivef$ \emph{with guarantees} 
under mild assumptions, detailed below. 
To formulate this result and its assumptions, 
we start by calling a node \emph{safe} 
if  $\unitvector_i^T\fundamental\labelingvector = 0$, 
i.e., no node $j$ with $\labeling_j > 0$ is reachable from $i$, 
and \emph{unsafe} otherwise. 
Note that the existence of a safe node in a graph $\graph$ containing at least one unsafe node (i.e.,~$\labeling_i > 0$ for some $i\in \nodes$) 
implies that $\graph$ is not strongly connected. 
The node safety property partitions $\nodes$ into two sets of safe resp. unsafe nodes, 
$\safenodes = \{i\in\nodes\mid \unitvector_i^T\fundamental\labelingvector = 0\}$~and~$\unsafenodes = \{i\in\nodes\mid \unitvector_i^T\fundamental\labelingvector > 0\}$, 
and $\edges$ into four sets, 
$\edges_{\safenodes\safenodes}$, $\edges_{\safenodes\unsafenodes}$, $\edges_{\unsafenodes\safenodes}$, and $\edges_{\unsafenodes\unsafenodes}$, 
where $\edges_{AB} = \{(i,j)\in\edges\mid i\in A, j\in B\}$, 
and $\edges_{\safenodes\unsafenodes}=\emptyset$ by construction. 
Further, observe that if $\safenodes\neq\emptyset$,  
then $\objectivef$ is minimized, 
and $\maxobjectivef$ is maximized, 
once $\edges_{\unsafenodes\unsafenodes} = \emptyset$.
This allows us to state the following result.
\begin{restatable}{lem}{submodularitywithduplicates}\label{lem:submoddupes}
	If there exists a safe node in $\graph$ and we allow multi-edges,  
	maximizing $\maxobjectivef$ is equivalent to maximizing a monotone, submodular set function over $\edges_{\unsafenodes\unsafenodes}$.
\end{restatable}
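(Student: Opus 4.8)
The plan is to show that the reduction in total exposure, viewed as a function of a \emph{set} of rewired edges, is both monotone and submodular. First I would reformulate the problem combinatorially. Since $\safenodes\neq\emptyset$ and we allow multi-edges, any rewiring $\rewiring{i}{j}{k}$ with $i\in\unsafenodes$ can always redirect the removed edge toward a safe node, so the only meaningful decision for each edge in $\edges_{\unsafenodes\unsafenodes}$ is whether to \emph{cut} it (redirect it into $\safenodes$) or leave it in place. Thus I would identify a candidate solution with a set $\selection \subseteq \edges_{\unsafenodes\unsafenodes}$ of edges that get redirected to safe targets, and define a set function $\maxsetobjectivef(\selection) = \objective{\graph} - \objective{\graph_\selection}$, where $\graph_\selection$ is $\graph$ with every edge in $\selection$ rewired to point at a fixed safe node. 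The claim is then that $\maxsetobjectivef$ is monotone and submodular, and that maximizing $\maxobjectivef$ over $\budget$ rewirings reduces to maximizing $\maxsetobjectivef$ over $\budget$-element subsets.

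Next I would set up the analytic machinery for tracking how $\objectivef = \onevector^T\fundamental\labelingvector$ changes as edges are redirected into $\safenodes$. The key observation is that redirecting an edge $\edge{i}{j}$ (with $i,j\in\unsafenodes$) to a safe node $\safenode$ changes one row of $\transitionmatrix$: the mass $\probability{ij}$ that previously went to the unsafe node $j$ now goes to $\safenode$. Because $\unitvector_\safenode^T\fundamental\labelingvector = 0$ for a safe node, redirecting into $\safenodes$ only ever \emph{removes} exposure and never adds it, which should give monotonicity directly: cutting more unsafe-to-unsafe edges weakly decreases $\objectivef$, hence weakly increases $\maxsetobjectivef$, and $\maxsetobjectivef(\emptyset) = 0$. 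For a clean handle on the increments I would express $\objective{\graph_\selection}$ in terms of the fundamental matrix of the modified chain, using the Sherman--Morrison / low-rank-update identity to write how $\fundamental$ responds to a rank-one perturbation of $\transitionmatrix$, and accumulate these updates as edges are added to $\selection$.

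For submodularity, the plan is to verify the diminishing-returns inequality
\begin{align}
	\maxsetobjective{\selection\cup\{e\}} - \maxsetobjective{\selection}
	\;\geq\;
	\maxsetobjective{\selection'\cup\{e\}} - \maxsetobjective{\selection'}
	\label{eq:submoddupes-dr}
\end{align}
for all $\selection \subseteq \selection' \subseteq \edges_{\unsafenodes\unsafenodes}$ and all $e\notin\selection'$. Intuitively, the exposure removed by cutting a single edge $e = \edge{i}{j}$ is proportional to the expected number of visits to $i$ times the residual exposure downstream of $j$; as we cut more edges (enlarging the set from $\selection$ to $\selection'$), both the visit counts and the downstream exposure can only shrink, so the marginal gain of $e$ cannot increase. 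To make this rigorous I would express the marginal gain of adding $e$ to a set as a product of a nonnegative ``reachability'' factor (a fundamental-matrix entry in the current modified graph) and a nonnegative ``remaining exposure'' factor, and argue that each factor is monotonically nonincreasing in the set of already-cut edges. This is where I expect the main obstacle to lie: the fundamental matrix entries depend globally and nonlinearly on $\transitionmatrix$, so the cleanest route is likely a coupling or path-counting argument over the Markov chain — interpreting $\fundamental[i,j]$ as a sum over walks and observing that cutting an edge can only delete walks that use it, never create new ones — rather than a brute-force manipulation of the matrix-inverse formula.

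Finally I would close the loop between the set-function formulation and the original rewiring problem. I would argue that an optimal (or near-optimal) rewiring solution never benefits from redirecting an unsafe edge to another unsafe target rather than to a safe one: since $\edges_{\safenodes\unsafenodes}=\emptyset$ and safe targets contribute zero downstream exposure, replacing any unsafe target with a safe one weakly improves $\maxobjectivef$. Hence the search space collapses to choosing which edges of $\edges_{\unsafenodes\unsafenodes}$ to cut, and $\maxobjectivef$ restricted to this space coincides with $\maxsetobjectivef$, which we have shown to be monotone and submodular. This identification, combined with the monotonicity and submodularity of $\maxsetobjectivef$, establishes the lemma; the multi-edge allowance is what guarantees that redirecting to a single fixed safe node is always feasible without creating parallel-edge conflicts.
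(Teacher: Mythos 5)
Your proposal is correct and matches the paper's proof in all essentials: the paper likewise fixes a single safe node $\safenode$, argues that an optimal strategy only rewires edges of $\edges_{\unsafenodes\unsafenodes}$ to $\safenode$ (multi-edges making this always feasible), defines the set function $\maxsetobjective{\edges_{\selection}} = \objective{\graph}-\objective{\graph_{\edges_{\selection}}}$, obtains monotonicity immediately, and derives submodularity from exactly your key observation that unsafe nodes are unreachable from $\safenode$, so no exposure is removed only when two edges are both cut. The sole difference is presentational: the paper verifies the pairwise form $\maxsetobjective{\edges_{\selection}}+\maxsetobjective{\edges_{\selection}\cup\{x_1,x_2\}}\leq\maxsetobjective{\edges_{\selection}\cup\{x_1\}}+\maxsetobjective{\edges_{\selection}\cup\{x_2\}}$ directly from that reachability fact, rather than your diminishing-returns factorization---which, as you yourself anticipated, needs the walk-counting view to be rigorous, since the Sherman--Morrison normalization $\rho = 1+\jkvector^T\fundamental\ivector$ also shrinks as edges are cut and so a naive ``both factors decrease'' argument on $\nicefrac{\sigma\tau}{\rho}$ would not close on its own.
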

\begin{proof}
	Leveraging the terminology introduced above, we obtain this result by applying the definitions of monotonicity and submodularity. 
	A full proof is given in \cref{apx:hardness:submod}.
\end{proof}
Our motivating application, however, ideally prevents multi-edges.
To get a similar result without multi-edges, 
denote by $\maxunsafeoutdegree = \max\{\outdegree{i}\mid i \in \unsafenodes\}$
the maximum out-degree of any \emph{unsafe} node in~$\graph$, 
and assume that $\cardinality{\safenodes} \geq \maxunsafeoutdegree$. 
Now, we obtain the following.

\begin{restatable}{thm}{submodularity}\label{thm:submodularity}
	If $\cardinality{\safenodes} \geq \maxunsafeoutdegree$,
	then maximizing $\maxobjectivef$ is equivalent to maximizing a monotone and submodular set function over $\edges_{\unsafenodes\unsafenodes}$.
\end{restatable}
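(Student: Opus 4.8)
The plan is to reduce to \cref{lem:submoddupes} by showing that the degree assumption $\cardinality{\safenodes}\geq\maxunsafeoutdegree$ lets us realize every redirection toward safe nodes as a genuine, multi-edge-free rewiring \emph{without changing the attainable objective value}. Recall from the proof of \cref{lem:submoddupes} that the relevant set function $\maxsetobjectivef$ has ground set $\edges_{\unsafenodes\unsafenodes}$: for $X\subseteq\edges_{\unsafenodes\unsafenodes}$, the value $\maxsetobjective{X}$ is the reduction in total exposure obtained by rerouting every edge in $X$ to a safe node, and this function is monotone and submodular. I would first re-invoke the two structural facts underlying that reduction. First, rerouting an edge of $\edges_{\unsafenodes\unsafenodes}$ toward a safe node is optimal among all admissible targets, so an optimal solution never touches edges outside $\edges_{\unsafenodes\unsafenodes}$ and always redirects into $\safenodes$; hence $\edges_{\unsafenodes\unsafenodes}$ is the correct ground set. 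Second -- and this is what makes the target choice immaterial -- once a random walk reaches any safe node it accrues zero further cost, because safe nodes reach only zero-cost nodes, and adding an incoming edge leaves a node's out-neighborhood, and thus its reachable set, unchanged (a safe node can never reach an unsafe one, so it cannot traverse a newly added edge). Consequently the total exposure after rerouting $X$ depends only on \emph{which} edges of $\edges_{\unsafenodes\unsafenodes}$ are removed, not on which safe nodes serve as targets, so $\maxsetobjective{X}$ is well defined independently of that choice.

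The heart of the argument, and the step that genuinely uses the hypothesis, is a counting argument showing that the target safe nodes can always be chosen distinctly and among non-neighbors, so that no multi-edge is ever created. Fix an unsafe node $i$ and let $a_i = \cardinality{\outneighbors{i}\cap\safenodes}$ be its number of safe out-neighbors. Since $X\subseteq\edges_{\unsafenodes\unsafenodes}$, the edges of $X$ leaving $i$ point only to unsafe out-neighbors, of which $i$ has exactly $\outdegree{i}-a_i$; hence $X$ contains at most $\outdegree{i}-a_i$ edges out of $i$. The safe nodes available as fresh targets for $i$ -- those not already out-neighbors -- number $\cardinality{\safenodes}-a_i \geq \maxunsafeoutdegree-a_i \geq \outdegree{i}-a_i$, where the first inequality is the hypothesis and the second uses $\outdegree{i}\leq\maxunsafeoutdegree$. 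Thus there are at least as many admissible targets as edges of $X$ leaving $i$, so we can injectively assign to each such edge $\edge{i}{j}$ a distinct safe node $\safenode\notin\outneighbors{i}$, producing a valid rewiring $\rewiring{i}{j}{\safenode}$; here $\safenode\neq i$ holds automatically, as $\safenode$ is safe while $i$ is unsafe. Carrying this out independently for every unsafe node yields $\cardinality{X}$ rewirings that realize the rerouting of $X$ without introducing any multi-edge.

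Combining the two parts then finishes the proof. Any target set $X$ with $\cardinality{X}\leq\budget$ can be realized by multi-edge-free rewirings achieving reduction exactly $\maxsetobjective{X}$, and conversely any multi-edge-free rewiring solution corresponds to such an $X$, so the maximum attainable reduction coincides with that of the multi-edge setting of \cref{lem:submoddupes}. Maximizing $\maxobjectivef$ under the $\budget$-rewiring constraint is therefore equivalent to maximizing the monotone, submodular function $\maxsetobjectivef$ over subsets of $\edges_{\unsafenodes\unsafenodes}$ of cardinality at most $\budget$. I expect the main obstacle to be precisely this feasibility counting: one must simultaneously guarantee distinct targets per source \emph{and} avoid pre-existing out-neighbors, and verify that the degree hypothesis is exactly tight enough to supply them. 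The accompanying claim that the objective is insensitive to the specific safe targets -- intuitive via the ``cost accrual stops at the first safe node'' argument -- must be stated carefully so that the reduction to \cref{lem:submoddupes} is airtight.
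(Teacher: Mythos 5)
Your proposal is correct and follows essentially the same route as the paper, whose proof of \cref{thm:submodularity} simply invokes the reasoning of \cref{lem:submoddupes} together with the observation that $\cardinality{\safenodes}\geq\maxunsafeoutdegree$ guarantees safe targets are always available without creating multi-edges. Your counting argument ($\cardinality{\safenodes}-a_i \geq \maxunsafeoutdegree-a_i \geq \outdegree{i}-a_i$ fresh safe targets per unsafe source $i$) and your target-independence observation (cost accrual stops at the first safe node, and incoming edges cannot unsafe-ify a safe node) are exactly the details the paper leaves implicit, spelled out correctly.
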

\begin{proof}
	Following the reasoning provided for \cref{lem:submoddupes}, 
	with the modification that we need $\cardinality{\safenodes} \geq \maxunsafeoutdegree$ to ensure that safe targets are always available for rewiring without creating multi-edges.
\end{proof}

Observe that the larger the number of zero-cost nodes, 
the smaller the number of edges, 
or the more homophilous the linking, 
the higher the probability that safe nodes exist in a graph. 
Notably, the precondition of \cref{thm:submodularity} holds for the graph constructed to prove \cref{thm:hardness} (\cref{apx:hardness:np}, \cref{fig:hardness}) 
as well as for most of the real-world graphs used in our experiments (\cref{apx:datasets}, \cref{fig:safety}).
However,
\cref{thm:submodularity} only applies to the maximization version of \ourproblem (\cref{eq:delta}) and not to the maximization version of \ourproblemtwo, 
since in the quality-constrained setting, 
some safe nodes might not be available as rewiring targets for edges emanating from unsafe nodes. 
Still, for the maximization version of \ourproblem,
due to \cref{thm:submodularity}, 
using a greedy approach to optimize $\maxobjectivef$ 
provides an approximation guarantee 
with respect to the optimal solution~\cite{nemhauser1978analysis}.
\begin{cor}\label{cor:apxguarantee}
	If the precondition of \cref{thm:submodularity} holds, 
	then the greedy algorithm, 
	which always picks the rewiring $\rewiring{i}{j}{k}$ that maximizes $\maxobjective{\graph,\graph_1}$ for the current $\graph$,
	yields a $(1-\nicefrac{1}{e})$-approximation for $\maxobjectivef$.
\end{cor}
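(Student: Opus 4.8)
The plan is to reduce the claim directly to the classical guarantee for greedy maximization of a monotone submodular function subject to a cardinality constraint \cite{nemhauser1978analysis}, with \cref{thm:submodularity} serving as the bridge. First I would invoke \cref{thm:submodularity}: since its precondition $\cardinality{\safenodes} \geq \maxunsafeoutdegree$ is assumed, maximizing $\maxobjectivef$ is equivalent to maximizing a monotone, submodular set function $F$ over the ground set $\edges_{\unsafenodes\unsafenodes}$, where a subset $X \subseteq \edges_{\unsafenodes\unsafenodes}$ encodes which unsafe-to-unsafe edges are rewired toward safe targets and $F(X) = \maxobjective{\graph, \graph_X}$ is the exposure reduction obtained by carrying out these rewirings (here $\graph_X$ denotes $\graph$ after the rewirings in $X$). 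Performing exactly $\budget$ rewirings corresponds to selecting a set with $\cardinality{X} \leq \budget$, so the task becomes the cardinality-constrained problem $\max\{F(X) : X \subseteq \edges_{\unsafenodes\unsafenodes},\ \cardinality{X} \leq \budget\}$.

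The second step is to match the greedy procedure in the statement with the standard submodular greedy. If $X$ is the set of rewirings chosen so far, then the graph the corollary calls ``current'' is exactly $\graph_X$, and adding one further rewiring $\rewiring{i}{j}{k}$ produces $\graph_{X \cup \{e\}}$ for the corresponding element $e$. A short computation gives $F(X \cup \{e\}) - F(X) = \objective{\graph_X} - \objective{\graph_{X \cup \{e\}}} = \maxobjective{\graph_X, \graph_{X \cup \{e\}}}$, which is precisely the single-step exposure reduction from the current graph $\graph_X$ that the greedy rule maximizes. Hence choosing the rewiring with the largest $\maxobjectivef$-gain is identical to choosing the element of largest marginal value for $F$, so the procedure in the corollary is the standard greedy algorithm for the problem of the first step. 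Applying the Nemhauser--Wolsey--Fisher bound \cite{nemhauser1978analysis} then yields, after $\budget$ steps, a value within a factor $1 - \nicefrac{1}{e}$ of the optimum, which is the claim.

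The point that needs care is not the final bound but verifying that the equivalence of \cref{thm:submodularity} stays intact across all $\budget$ greedy iterations simultaneously. Concretely, I would confirm that every greedy step actually rewires an edge of $\edges_{\unsafenodes\unsafenodes}$ to a genuinely safe node---so that the chosen element lies in the ground set of $F$---rather than, say, redirecting an edge toward another unsafe node or rewiring an edge out of a safe node, both of which can only fail to reduce (or even increase) exposure and hence are never selected by a gain-maximizing greedy while beneficial rewirings remain. The precondition $\cardinality{\safenodes} \geq \maxunsafeoutdegree$ supplies, for any unsafe node, enough distinct safe out-neighbor slots to absorb all of its rewirings without creating a multi-edge; and since each rewiring only shrinks $\edges_{\unsafenodes\unsafenodes}$ and never raises an unsafe node's out-degree, this availability persists throughout the run. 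Once this invariant is recorded, no greedy step is ever forced outside the ground set of $F$, and the reduction together with the approximation factor follows.
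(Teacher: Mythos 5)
Your first two paragraphs reproduce the paper's proof exactly: the paper derives \cref{cor:apxguarantee} in a single step by combining the equivalence from \cref{thm:submodularity} with the classical $(1-\nicefrac{1}{e})$ guarantee of \citeauthor{nemhauser1978analysis} \cite{nemhauser1978analysis}, and your identification of the one-step exposure reduction with the marginal gain of the set function \maxsetobjectivef constructed in the proof of \cref{lem:submoddupes} is the correct (in the paper, implicit) bridge between the two.

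Your third paragraph, however, rests on a false claim. You assert that redirecting an edge toward another \emph{unsafe} node ``can only fail to reduce (or even increase) exposure,'' but this contradicts the paper's own permissibility criterion: by \cref{lemma:signs} and the corollary following it, a rewiring $\rewiring{i}{j}{k}$ has $\Delta > 0$ if and only if $\tau = \jkvector^T\fundamental\labelingvector > 0$, i.e., if $j$ is more exposed than $k$ --- no safety of $k$ is required. Worse, a safe target, while maximizing $\tau$, also \emph{maximizes} the normalizer $\rho = 1 + \probability{ij}\fundamental[j,i] - \probability{ij}\fundamental[k,i]$, since $\fundamental[k,i] = 0$ whenever $k$ is safe and $i$ unsafe; an unsafe target $k'$ with small exposure but a large expected number of returns to $i$ shrinks $\rho$ and can in principle yield a strictly larger one-step $\Delta$ than any safe target. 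Consequently, the unrestricted gain-maximizing greedy in the corollary statement is not automatically confined to the ground set of your function $F$, and your invariant argument does not close this gap --- the precondition $\cardinality{\safenodes} \geq \maxunsafeoutdegree$ guarantees that ground-set moves are always \emph{available}, not that they are always \emph{selected}. The paper sidesteps the issue by (implicitly) running the greedy on the equivalent set function over $\edges_{\unsafenodes\unsafenodes}$ supplied by \cref{thm:submodularity}, i.e., on rewirings onto safe targets, where the Nemhauser--Wolsey--Fisher bound applies verbatim. Repairing your write-up therefore requires either restricting the greedy to ground-set rewirings (matching the paper's reading) or an additional argument that off-ground-set greedy steps preserve the guarantee; with your third paragraph deleted or so amended, the remainder coincides with the paper's proof.
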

Note that \cref{cor:apxguarantee} only applies to the \emph{maximization} version of \ourproblem, 
not to its \emph{minimization} version, 
as supermodular minimization is less well-behaved than submodular maximization \cite{zhang2022fast,ilev2001approximation}.

\paragraph{Greedy Rewiring}
Given the quality assurance of a greedy approach at least for \ourproblem, 
we seek to design an efficient greedy algorithm to tackle both \ourproblem and \ourproblemtwo.
To this end, we analyze the mechanics of individual rewirings to understand how we can identify and perform greedily optimal rewirings efficiently.
As each greedy step constitutes a rank-one update of the 
transition matrix~$\transitionmatrix$, 
we can express the new transition matrix $\transitionmatrix'$ as 
\begin{align}
	\transitionmatrix' = \transitionmatrix + \ivector(-\jkvector)^T \;,
\end{align}
where $\ivector = \probability{ij}\unitvector_i$ and $\jkvector = \unitvector_j - \unitvector_k$, 
and we omit the dependence on $i$, $j$, and $k$ for notational conciseness.
This corresponds to a rank-one update of $\fundamental$, such that we obtain the new fundamental matrix $\fundamental'$ as
\begin{align}
	\fundamental' 
	= (\identity - (\transitionmatrix + \ivector(-\jkvector)^T))^{-1}
	= (\identity - \transitionmatrix + \ivector\jkvector^T)^{-1} \;.
\end{align}

The rank-one update allows us to use the 
Sherman-Morrison formula \cite{sherman1950adjustment} to compute 
the updated fundamental matrix as
\begin{align}\label{eq:sherman}
	\fundamental'  = 
		\fundamental - 
		\frac{\fundamental \ivector \jkvector^T\fundamental}
			 {1 + \jkvector^T\fundamental\ivector} \;.
\end{align}

The mechanics of an individual edge rewiring are summarized in \cref{tab:rewiring}.
They will help us \emph{perform} greedy updates efficiently.

To also \emph{identify} greedily optimal rewirings efficiently, 
leveraging \cref{eq:sherman},
we assess the impact 
of a rewiring on the value of our objective function, 
which will help us prune weak rewiring candidates. 
For a rewiring $\rewiring{i}{j}{k}$ represented by $\ivector$ and $\jkvector$, 
the value of the exposure function~$\objectivef$
for the new graph $\graph'$ is
\begin{align}
	\objective{\graph'} 
	&	= \onevector^T\fundamental'\labelingvector
		= \onevector^T\left(\fundamental - \frac{\fundamental \ivector \jkvector^T\fundamental}{1 + \jkvector^T\fundamental\ivector}\right)\labelingvector
		= \onevector^T\fundamental\labelingvector - \onevector^T\left(\frac{\fundamental \ivector \jkvector^T\fundamental}{1 + \jkvector^T\fundamental\ivector}\right)\labelingvector \nonumber \\
	&	= \objective{\graph} -  \frac{(\onevector^T\fundamental \ivector)(\jkvector^T\fundamental\labelingvector)}{1 + \jkvector^T\fundamental\ivector}
		= \objective{\graph} - \frac{\sigma\tau}{\rho}
		= \objective{\graph} - \Delta \;,  
	\label{eq:fnew=fold-delta}
\end{align}
with 
$\sigma = \onevector^T\fundamental \ivector$, 
$\tau = \jkvector^T\fundamental\labelingvector$,
$\rho = 1 + \jkvector^T\fundamental\ivector$,
and 
\begin{align}\label{eq:delta}
	\Delta = \maxobjective{\graph,\graph'} = \frac{\sigma\tau}{\rho} = \frac{(\onevector^T\fundamental \ivector)(\jkvector^T\fundamental\labelingvector)}{1 + \jkvector^T\fundamental\ivector} \;.
\end{align}

\begin{table}[t]
	\centering
	\caption{%
		Summary of an edge rewiring $\rewiring{i}{j}{k}$
		in a~graph $\graph = (\nodes,\edges)$
		with random-walk transition matrix $\transitionmatrix$
		and fundamental matrix $\fundamental = (\identity - \transitionmatrix)^{-1}$.
	}
	\label{tab:rewiring}
	\begin{tabular}{p{0.9625\columnwidth}}
		\toprule
		$\graph'=(\nodes,\edges')$, for $E' = (E \setminus \{\edge{i}{j}\}) \cup \{\edge{i}{k}\}$, $\edge{i}{j}\in E$, $\edge{i}{k}\notin E$\\
		\midrule
		$\transitionmatrix'[x,y] = \begin{cases}
			0&\text{if } x = i \text{ and } y = j\;,\\
			\transitionmatrix[i,j]&\text{if } x = i \text{ and } y = k\;,\\
			\transitionmatrix[x,y]&\text{otherwise}\;.
		\end{cases}$\\
		\midrule
		$\fundamental'
		= \fundamental - \frac{\fundamental\ivector\jkvector^T\fundamental}{1 + \jkvector^T\fundamental\ivector}$, with $\ivector = \probability{ij} \unitvector_i$, $\jkvector = \unitvector_j-\unitvector_k$, cf.~\cref{eq:sherman}\\
		\bottomrule
	\end{tabular}
\end{table}

The interpretation of the above quantities is as follows:
$\sigma$~is the $\probability{ij}$-scaled $i$-th column sum of $\fundamental$ (expected number of visits to $i$),
$\tau$~is the cost-scaled sum of the differences between the $j$-th row and the $k$-th row of $\fundamental$ (expected number of visits from $j$ resp. $k$), 
and $\rho$~is a normalization factor scaling the update by $1$ plus the $\probability{ij}$-scaled difference in the expected number of visits from $j$ to $i$ and from $k$ to $i$, 
ensuring that $\fundamental'\onevector = \fundamental\onevector$.
Scrutinizing \cref{eq:delta}, we observe:
\begin{restatable}{lem}{deltacomponents}\label{lem:deltacomponents}
	\label{lemma:signs}
	For a rewiring $(i,j,k)$ represented by $\ivector$ and $\jkvector$,
	\begin{enumerate*}[label=(\roman*)]
		\item $\rho$ is always positive, 
		\item $\sigma$ is always positive, and
		\item $\tau$ can have any sign.
	\end{enumerate*}
\end{restatable}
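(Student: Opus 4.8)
The plan is to treat the three scalars in turn, disposing of $\sigma$ and $\tau$ by inspection and reserving the real work for $\rho$. For claim~(ii), since $\ivector=\probability{ij}\unitvector_i$ we have $\sigma=\probability{ij}\sum_{x\in\nodes}\fundamental[x,i]$, i.e.\ $\probability{ij}$ times the $i$-th column sum of $\fundamental$. The Neumann series $\fundamental=\sum_{t\geq 0}\transitionmatrix^t$ shows that $\fundamental$ has non-negative entries with $\fundamental[i,i]\geq 1$ (the $t=0$ term supplies the identity), so this column sum is at least $1$; as $\probability{ij}>0$ on an existing edge, $\sigma>0$. For claim~(iii), I would rewrite $\tau=\unitvector_j^T\fundamental\labelingvector-\unitvector_k^T\fundamental\labelingvector$ as the exposure of $j$ minus the exposure of $k$. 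Both exposures are non-negative but otherwise unrelated, so choosing $j$ unsafe and $k$ safe forces $\tau>0$, the reverse forces $\tau<0$, and equal exposures give $\tau=0$; a small explicit graph certifies that each sign is realizable.

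The crux, and the main obstacle, is claim~(i): the positivity of $\rho=1+\jkvector^T\fundamental\ivector=1+\probability{ij}(\fundamental[j,i]-\fundamental[k,i])$, whose sign is not manifest because $\fundamental[j,i]-\fundamental[k,i]$ may be negative. I see two routes. The \emph{algebraic} route applies the matrix determinant lemma to $\identity-\transitionmatrix'=\identity-\transitionmatrix+\ivector\jkvector^T$, giving $\det(\identity-\transitionmatrix')=\rho\det(\identity-\transitionmatrix)$ and hence $\rho=\det(\identity-\transitionmatrix')/\det(\identity-\transitionmatrix)$; both determinants are positive because $\norm{\transitionmatrix}_\infty=\norm{\transitionmatrix'}_\infty=1-\pabsorption<1$ keeps every eigenvalue of $\transitionmatrix$ and $\transitionmatrix'$ strictly inside the unit disc, so each factor $1-\lambda$ of $\det(\identity-\transitionmatrix)$ either is a positive real or occurs in a conjugate pair of modulus-squared contributing a positive factor, making the product positive.

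The \emph{probabilistic} route, which I would present because it matches the visit-count language of the paper and yields an explicit bound, uses the standard decomposition $\fundamental[x,i]=f_{xi}/(1-r_i)$, where $f_{xi}\in[0,1]$ is the probability of reaching $i$ from $x$ before absorption and $r_i=\sum_{l\in\outneighbors{i}}\probability{il}f_{li}$ is the return probability to $i$. Since $r_i\leq\sum_{l\in\outneighbors{i}}\probability{il}=1-\pabsorption$, the denominator satisfies $1-r_i\geq\pabsorption>0$, so clearing it turns $\rho>0$ into $\sum_{l\in\outneighbors{i},\,l\neq j}\probability{il}f_{li}+\probability{ij}f_{ki}\leq\sum_{l\in\outneighbors{i}}\probability{il}=1-\pabsorption<1$, which holds because every $f_{li},f_{ki}\leq 1$; this even delivers the quantitative bound $\rho\geq\pabsorption$. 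The one delicate step is the telescoping cancellation in which the $\probability{ij}f_{ji}$ term arising from $\fundamental[j,i]$ exactly annihilates the $l=j$ summand of $r_i$, and getting that bookkeeping right is precisely what collapses the estimate to the row sum $1-\pabsorption$.
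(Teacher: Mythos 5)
Your proof is correct, and parts (ii) and (iii) coincide with the paper's one-line arguments (your observation that $\fundamental[i,i]\geq 1$ from the $t=0$ term of the Neumann series is in fact a cleaner justification for $\sigma>0$ than the paper's appeal to positivity of the sums of $\fundamental$). The real divergence is in part (i). The paper's proof is also probabilistic but splits at $j$ rather than at $i$: it reads $\probability{ij}\fundamental[x,i]$ as the expected number of traversals of the edge $\edge{i}{j}$ by a walk started at $x$, and decomposes walks from $k$ at their first visit to $j$ to obtain $\probability{ij}\fundamental[k,i]\leq(1-\pabsorption)\probability{ij}+(1-\pabsorption)\probability{ij}\fundamental[j,i]<1+\probability{ij}\fundamental[j,i]$, which is exactly $\rho>0$. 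Your renewal decomposition $\fundamental[x,i]=f_{xi}/(1-r_i)$ with the telescoping cancellation against the $l=j$ summand of $r_i$ is a close cousin of this, organized around returns to $i$ instead of first visits to $j$; what it buys over the paper's version is the explicit quantitative bound $\rho\geq\pabsorption$, which the paper never states and which usefully caps how small the denominator of $\Delta=\nicefrac{\sigma\tau}{\rho}$ can get. Your determinant route is genuinely different from anything in the paper: the matrix determinant lemma gives $\rho=\det(\identity-\transitionmatrix')/\det(\identity-\transitionmatrix)$, and both determinants are positive because the rewired matrix is again nonnegative with $\norm{\transitionmatrix'}_\infty=1-\pabsorption<1$, real eigenvalues of $\identity-\transitionmatrix$ being positive and complex ones pairing into positive factors. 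This argument isolates the exact structural reason for positivity---validity of $\transitionmatrix'$ as a substochastic matrix---and therefore transfers verbatim to edge deletions (\cref{tab:deletion}), recovering the paper's claim there, while correctly refusing to apply to edge insertions (\cref{tab:insertion}), where the even subtraction of $\nicefrac{\probability{ij}}{\outdegree{i}}$ can drive entries of $\transitionmatrix'$ negative; this pinpoints more sharply than \cref{apx:edits} does when and why $\rho>0$ can fail for insertions.
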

\begin{proof}
	We obtain this result by analyzing the definitions of $\rho$, $\sigma$, and $\tau$. 
	The full proof is given in \cref{apx:hardness:deltaanalysis}.
\end{proof}

To express when we can safely prune rewiring candidates,
we call a rewiring $\rewiring{i}{j}{k}$ \emph{greedily permissible} if $\Delta > 0$, 
i.e., if it reduces our objective, 
and \emph{greedily optimal} if it maximizes $\Delta$. 
For \ourproblemtwo, we further call a rewiring $\rewiring{i}{j}{k}$ \emph{greedily $\qualitythreshold$-permissible} if it ensures that $\quality{\outseq{i}} \geq \qualitythreshold$ under the given relevance function $\qualityf$.
With this terminology, 
we can confirm our intuition about rewirings 
as a corollary of \cref{eq:fnew=fold-delta,eq:delta}, 
combined with \cref{lemma:signs}.
\begin{cor}
	A rewiring $\rewiring{i}{j}{k}$ is greedily permissible if and only if $\tau > 0$, i.e., 
	if $j$ is more exposed to harm than $k$.
\end{cor}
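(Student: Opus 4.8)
The plan is to read the claim straight off the closed form for $\Delta$ in \cref{eq:delta} once the signs of its three factors are known. Writing $\Delta = \nicefrac{\sigma\tau}{\rho}$, the sign of $\Delta$ is exactly the sign of $\sigma\tau/\rho$. By \cref{lemma:signs}, we have $\sigma > 0$ and $\rho > 0$ for \emph{every} rewiring, irrespective of which triple $\rewiring{i}{j}{k}$ is chosen, whereas $\tau$ is the only one of the three factors permitted to change sign. Consequently $\Delta$ and $\tau$ always share the same sign, and in particular $\Delta > 0 \iff \tau > 0$. Since a rewiring was defined to be greedily permissible exactly when $\Delta > 0$, this is precisely the asserted equivalence.

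It then remains to connect $\tau > 0$ with the informal phrase ``$j$ is more exposed to harm than $k$''. For this I would expand $\tau$ using $\jkvector = \unitvector_j - \unitvector_k$, obtaining $\tau = \jkvector^T\fundamental\labelingvector = \unitvector_j^T\fundamental\labelingvector - \unitvector_k^T\fundamental\labelingvector$. Each term $\unitvector_x^T\fundamental\labelingvector$ is exactly the exposure of node $x$ introduced in \cref{sec:problems}, so $\tau$ equals the exposure of $j$ minus the exposure of $k$. Hence $\tau > 0$ holds if and only if $j$ carries strictly more exposure than $k$, which are respectively the node we disconnect from and the node we connect to when rewiring $\edge{i}{j}$ into $\edge{i}{k}$; this justifies the intuitive reading in the statement.

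The corollary is therefore immediate modulo the sign analysis, and essentially all of its content is inherited from \cref{lemma:signs}: the only genuinely substantive ingredient is that $\sigma > 0$ and $\rho > 0$ hold \emph{unconditionally}, which is established there (with details deferred to \cref{apx:hardness:deltaanalysis}). Given that input, no further work is needed, so I do not anticipate any real obstacle here beyond invoking the correct parts of the preceding lemma and correctly identifying $\unitvector_x^T\fundamental\labelingvector$ with the exposure of $x$.
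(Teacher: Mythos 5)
Your proposal is correct and follows essentially the same route as the paper, which presents the corollary as an immediate consequence of the closed form $\Delta = \nicefrac{\sigma\tau}{\rho}$ from \cref{eq:delta} together with the sign analysis in \cref{lemma:signs} ($\sigma>0$, $\rho>0$ unconditionally, $\tau$ of arbitrary sign). Your expansion $\tau = \unitvector_j^T\fundamental\labelingvector - \unitvector_k^T\fundamental\labelingvector$ also matches the paper's definition of node exposure, so nothing is missing.
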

 
For the greedily optimal rewiring, 
that is, to maximize $\Delta$, 
we would like $\sigma\tau$ to be as \emph{large} as possible, 
and $\rho$ to be as \emph{small} as possible.
Inspecting \cref{eq:delta},
we find that to accomplish this objective, it helps if (in expectation)
$i$~is visited more often (from~$\sigma$),
$j$~is more exposed and $k$ is less exposed to harm (from~$\tau$),
and $i$~is harder to reach from $j$ and easier to reach from $k$ (from~$\rho$).

In the next section, we leverage these insights
to guide our efficient implementation of the greedy method
for \ourproblem and \ourproblemtwo.

\section{Algorithm}
\label{sec:algorithm}

In the previous section, we identified useful structure
in the fundamental matrix $\fundamental$, the exposure function $\objectivef$, and our maximization objective \maxobjectivef.
Now, we leverage this structure to design an efficient greedy algorithm
for \ourproblem and \ourproblemtwo.
We develop this algorithm in three steps, 
focusing on \ourproblem in the first two steps, 
and integrating the capability to handle \ourproblemtwo in the third step.

\paragraph{Na\"ive implementation}

Given a graph $\graph$, its transition matrix~$\transitionmatrix$, a cost vector $\labelingvector$,
and a budget $\budget$,
a na\"ive greedy implementation for \ourproblem 
computes the fundamental matrix 
and gradually fills up an initially empty set of rewirings 
by performing $\budget$ greedy steps 
before returning the selected rewirings (\cref{apx:pseudocode}, \cref{alg:naivegreedy}). 
In each greedy step, 
we identify the triple $\rewiring{i}{j}{k}$ that maximizes \cref{eq:delta} 
by going through all edges $\edge{i}{j}\in\edges$ 
and computing $\Delta$ for rewirings to all potential targets $k$. 
We then update $\edges$, $\transitionmatrix$, and $\fundamental$ to reflect a rewiring replacing $\edge{i}{j}$ by $\edge{i}{k}$ (cf.~\cref{tab:rewiring}), 
and add the triple $\rewiring{i}{j}{k}$ to our set of rewirings.
Computing the fundamental matrix na\"ively takes time $\bigoh{\nnodes^3}$,
computing $\Delta$ takes time $\bigoh{\nnodes}$ and is done $\bigoh{\nedges\nnodes}$ times, and 
updating $\fundamental$ takes time $\bigoh{\nnodes^2}$.
Hence, we arrive at a time complexity of $\bigoh{\budget\nnodes^2(\nnodes+\nedges)}$.
But we can do better.

\paragraph{Forgoing matrix inversion}

When identifying the greedy rewiring, 
we never need access to $\fundamental$ directly.
Rather, in \cref{eq:delta}, we work with $\onevector^T\fundamental$, 
corresponding to the column sums of $\fundamental$, 
and with $\fundamental\labelingvector$, 
corresponding to the cost-scaled row sums of $\fundamental$.
We can approximate both via power iteration:
\begin{align}
	\onevector^T\fundamental 
	=~& \onevector^T\sum_{i=0}^{\infty}\transitionmatrix^i 
	= \onevector^T 
	+ \onevector^T\transitionmatrix 
	+  (\onevector^T\transitionmatrix)\transitionmatrix 
	+ ((\onevector^T\transitionmatrix)\transitionmatrix)\transitionmatrix + \dots\\
	\fundamental\labelingvector 
	=~& \left(\sum_{i=0}^{\infty}\transitionmatrix^i\right)\labelingvector
	= \labelingvector + \transitionmatrix\labelingvector + \transitionmatrix(\transitionmatrix\labelingvector) + \transitionmatrix(\transitionmatrix(\transitionmatrix\labelingvector)) + \dots
\end{align}
For each term in these sums, we need to perform $\bigoh{\nedges}$ multiplications, 
such that we can compute $\onevector^T\fundamental$ and $\fundamental\labelingvector$ in time $\bigoh{\niter\nedges}$, 
where $\niter$ is the number of power iterations.
This allows us to compute 
$\onevector^T\fundamental\ivector$ for all $\edge{i}{j}\in\edges$ in time $\bigoh{\nedges}$ and $\jkvector^T\fundamental\labelingvector$ for all $j\neq k\in\nodes$ in time $\bigoh{\nnodes^2}$.
To compute $\Delta$ in time $\bigoh{1}$,
as $\fundamental$ is now unknown, 
we need to compute $\fundamental\ivector$ for all $(i,j)\in\edges$ via power iteration, 
which is doable in time $\bigoh{\niter\nnodes^2}$.
This changes the running time from $\bigoh{\budget\nnodes^2(\nnodes+\nedges)}$ to $\bigoh{\budget\niter\nnodes(\nnodes+\nedges)}$ (\cref{apx:pseudocode}, \cref{alg:greedy}).
But we can do better.

\paragraph{Reducing the number of candidate rewirings}
Observe that to further improve the time complexity of our algorithm,
we need to reduce the number of rewiring candidates considered. 
To this end, note that the quantity $\tau$ is maximized for the nodes $j$ and $k$ with the largest difference in cost-scaled row sums.
How exactly we leverage this fact depends on our problem.

If we solve \ourproblem, 
instead of considering all possible rewiring targets, 
we focus on the $\maxoutdegree+2$ candidate targets $\kcandidates$ with the smallest exposure, 
which we can identify in time $\bigoh{\nnodes}$ without sorting $\fundamental\labelingvector$.
This ensures that for each $\edge{i}{j}\in\edges$, 
there is at least one $k\in\kcandidates$ such that $k\neq i$ and $k\neq j$, 
which ascertains that despite restricting to $\kcandidates$, 
for each $i\in\nodes$, 
we still consider the rewiring $\rewiring{i}{j}{k}$ maximizing $\tau$.
With this modification, we reduce the number of candidate targets from $\bigoh{\nnodes}$ to $\bigoh{\maxoutdegree}$
and the time to compute all relevant $\jkvector^T\fundamental\labelingvector$ values from $\bigoh{\nnodes^2}$ to $\bigoh{\maxoutdegree\nnodes}$.
To obtain a subquadratic complexity, however, we still need to eliminate the computation of $\fundamental\ivector$ for all $(i,j)\in\edges$.
This also means that we can no longer afford to compute~$\rho$ for each of the now $\bigoh{\nedges\maxoutdegree}$ rewiring candidates under consideration, 
as this can only be done in constant time if $\fundamental\ivector$ is already precomputed for the relevant edge $\edge{i}{j}$.
However, $\rho$ is driven by the difference between two \emph{entries} of $\fundamental$, 
whereas $\tau$ is driven by the difference between two \emph{row sums} of $\fundamental$, 
and $\sigma$ is driven by a single \emph{column sum} of $\fundamental$. 
Thus, although $\sigma\tau > \sigma\tau'$ does not generally imply $\nicefrac{\sigma\tau}{\rho} > \nicefrac{\sigma\tau'}{\rho'}$,
the variation in $\sigma\tau$ is typically much larger than that in $\rho$, 
and large $\sigma\tau$ values mostly dominate small values of $\rho$.
Consequently, as demonstrated in \cref{apx:exp:heuristic}, 
the correlation between $\heuristic = \Delta\rho = \sigma\tau$ and  $\Delta = \nicefrac{\sigma\tau}{\rho}$ is almost perfect.
Thus, instead of $\Delta$, we opt to compute $\heuristic$ as a heuristic, 
and we further hedge against small fluctuations without increasing the time complexity of our algorithm by computing $\Delta$ 
for the rewirings associated with the $\bigoh{1}$ \emph{largest} values of $\heuristic$,
rather than selecting the rewiring with the \emph{best} $\heuristic$ value directly. 
Using $\heuristic$ instead of $\Delta$, 
we obtain a running time of $\bigoh{\budget\niter\maxoutdegree(\nnodes+\nedges)}$ when solving \ourproblem.

When solving \ourproblemtwo, 
we are given a relevance matrix $\relevancematrix$, 
a relevance function $\qualityf$,
and a relevance threshold $\qualitythreshold$ as additional inputs. 
Instead of considering the $\maxoutdegree+2$ nodes $\kcandidates$ with the smallest exposure as candidate targets for \emph{all} edges, 
for \emph{each} edge $\edge{i}{j}$, 
we first identify the set of rewiring candidates $\rewiring{i}{j}{k}$ 
such that $\rewiring{i}{j}{k}$ is $\qualitythreshold$-permissible, 
i.e., $\quality{\outseq{i}} \geq \qualitythreshold$ after replacing $\edge{i}{j}$ by $\edge{i}{k}$,
and then select the node $k_{ij}$ with the smallest exposure to construct our most promising rewiring candidate $\rewiring{i}{j}{k_{ij}}$ for edge $\edge{i}{j}$.
This ensures that we can still identify the rewiring $\rewiring{i}{j}{k}$ that maximizes $\sigma\tau$ \emph{and} satisfies our quality constraints,
and it leaves us to consider $\bigoh{\nedges}$ rewiring candidates. 
Again using $\heuristic$ instead of  $\Delta$, 
we can now solve \ourproblemtwo in time 
$\bigoh{\budget\niter\nqpermissible\qualitycomplexity\nedges+\qualitycomplexitystart}$, 
where 
$\nqpermissible$ is the maximum number of targets $k$ such that $\rewiring{i}{j}{k}$ is $\qualitythreshold$-permissible,
$\qualitycomplexity$ is the complexity of evaluating $\qualityf$, 
and $\qualitycomplexitystart$ is the complexity of determining the initial set $\qualityset$ of $\qualitythreshold$-permissible rewirings.

\smallskip
Thus, we have arrived at our efficient greedy algorithm, 
called \ourmethod (\textsc{G}reedy \textsc{a}pproximate \textsc{min}imi\-zation of \textsc{e}xposure),
whose pseudocode we state as 
\cref{alg:greedyrelevant:rem,alg:greedyrelevant} in \cref{apx:pseudocode}.
\ourmethod solves \ourproblem in time $\bigoh{\budget\niter\maxoutdegree(\nnodes+\nedges)}$
and \ourproblemtwo in time $\bigoh{\budget\niter\nqpermissible\qualitycomplexity\nedges+\qualitycomplexitystart}$.
In realistic recommendation settings, 
the graph $\graph$ is $\outregulardegree$-out-regular for $\outregulardegree\in\bigoh{1}$, 
such that
$\maxoutdegree\in\bigoh{1}$ and $\nedges=\outregulardegree\nnodes\in\bigoh{\nnodes}$.
Further, for \ourproblemtwo,
we can expect that $\qualityf$ is evaluable in time $\bigoh{1}$,
and that only the $\bigoh{1}$ nodes most relevant for $i$ will be considered as potential rewiring targets of any edge $\edge{i}{j}$, 
such that $\nqpermissible\in\bigoh{1}$ and $\qualitycomplexitystart \in \bigoh{\nedges} = \bigoh{\nnodes}$. 
As we can also safely work with a number of power iterations $\niter\in\bigoh{1}$ (\cref{apx:reproducibility:poweriteration}), 
in realistic settings, 
\ourmethod solves both \ourproblem and \ourproblemtwo in time $\bigoh{\budget\nnodes}$, 
which, for $\budget\in\bigoh{1}$, is linear in the order of the input graph~$\graph$.

\section{Related Work}
\label{sec:related}

Our work methodically relates to research on \emph{graph edits} with distinct goals, 
such as improving robustness, reducing distances, or increasing centralities \cite{chan2016optimizing,parotsidis2015selecting,medya2018group},
and research leveraging \emph{random walks} to rank nodes \cite{wkas2019random,oettershagen2022temporal,mavroforakis2015absorbing}
or recommend links \cite{yin2010unified,paudel2021random}.
The agenda of our work, however, 
aligns most closely with the literature studying harm reduction, bias mitigation, and conflict prevention in graphs.
Here, the large body of research on shaping opinions or mitigating negative phenomena
in \emph{graphs of user interactions} (especially on social media)
\cite{gionis2013opinion,das2014modeling,abebe2021opinion,amelkin2019fighting,garimella2017reducing,garimella2018quantifying,tsioutsiouliklis2022link,zhu2021minimizing,zhu2022nearly,vendeville2023opening,minici2022cascade}
pursues goals \emph{similar} to ours in graphs capturing \emph{different} digital contexts.

As our research is motivated by recent work demonstrating how recommendations on digital media platforms like YouTube can fuel radicalization \cite{ribeiro2020auditing,mamie2021anti}, 
the comparatively scarce literature on harm reduction in \emph{graphs of content items} is even more closely related.
Our contribution is inspired by \citeauthor{fabbri2022rewiring} \cite{fabbri2022rewiring}, 
who study how edge rewiring can reduce \emph{radicalization pathways} in recommendation graphs. 
\citeauthor{fabbri2022rewiring} \cite{fabbri2022rewiring} encode harmfulness in binary node labels, 
model benign nodes as absorbing states, 
and aim to minimize the \emph{maximum} \emph{segregation} of any node, 
defined as the largest expected length of a random walk starting at a harmful node before it visits a benign node.
In contrast, we encode harmfulness in more nuanced, real-valued node attributes,
use an absorbing Markov chain model that more naturally reflects user behavior, 
and aim to minimize the \emph{expected} \emph{total} \emph{exposure} to harm in random walks  starting at any node. 
Thus, our work not only eliminates several limitations of the work by \citeauthor{fabbri2022rewiring} \cite{fabbri2022rewiring}, 
but it also provides a different perspective on harm mitigation in recommendation graphs. 

While \citeauthor{fabbri2022rewiring} \cite{fabbri2022rewiring}, 
like us, 
consider \emph{recommendation graphs}, 
\citeauthor{haddadan2022reducing} \cite{haddadan2022reducing}
focus on polarization mitigation via \emph{edge insertions}. 
Their setting was recently reconsidered by \citeauthor{adriaens2023minimizing}~\cite{adriaens2023minimizing}, 
who tackle the minimization objective directly instead of using the maximization objective as a proxy,
providing approximation bounds as well as speed-ups for the standard greedy method. 
Both \citeauthor{fabbri2022rewiring} \cite{fabbri2022rewiring} and the works on edge insertion employ with random-walk objectives 
that---%
unlike our exposure function---%
do not depend on random walks starting from \emph{all} nodes.
In our experiments, 
we compare with the algorithm introduced by \citeauthor{fabbri2022rewiring} \cite{fabbri2022rewiring}, which we call \fabbrialg. 
We refrain from comparing with edge insertion strategies because they consider a different graph edit operation and are already outperformed by \fabbrialg.

\section{Experimental Evaluation}
\label{sec:experiments}

In our experiments, we seek to 
\begin{enumerate}
	\item establish the impact of modeling choices and input parameters
	on the performance of \ourmethod;
	\item demonstrate the effectiveness of \ourmethod in reducing exposure to harm compared to existing methods and baselines;
	\item ensure that \ourmethod is scalable in theory \emph{and practice}; 
	\item understand what features make reducing exposure to harm easier resp. harder on different datasets; and
	\item derive general guidelines for reducing exposure to harm in recommendation graphs under budget constraints.
\end{enumerate}
Further experimental results are provided in \cref{apx:experiments}.

\subsection{Setup}

\subsubsection{Datasets}
To achieve our experimental goals, 
we work with both synthetic and real-world data, 
as summarized in \cref{tab:data}. 
Below, we briefly introduce these datasets.
Further details, 
including on data generation and preprocessing, are provided in \cref{apx:datasets}.

\begin{table}[t]
	\centering
	\caption{%
		Overview of the datasets used in our experiments. 
		For each graph $\graph$, 
		we report the regular out-degree $\outregulardegree$, 
		the number of nodes $\nnodes$, and the number of edges $\nedges$, 
		as well as the range of the expected exposure $\nicefrac{\objective{\graph}}{\nnodes}$ under our various cost functions, edge wirings, and edge transition probabilities.
		Datasets with identical statistics are pooled in the same row.
	}
	\label{tab:data}
	\begin{tabular}{lrrrr}
	\toprule
	Dataset&$\outregulardegree$&$\nnodes$&$\nedges$&$\nicefrac{\objective{\graph}}{\nnodes}$\\
	\midrule
	\multirow{2}{0.23\linewidth}{\synuni, \synhom \quad\quad ($2\cdot 4\cdot 36$~graphs)}&\multirow{2}{*}{5}&\multirow{1}{*}{10$^i$}&\multirow{1}{*}{5$\times$10$^i$}&\multirow{2}{*}{[1.291, 15.231]}\\
	&&\multicolumn{2}{r}{\multirow{1}{*}{for $i\in\{2,3,4,5\}$}}&\\
	\midrule
	\multirow{3}{0.2\linewidth}{\ytone ($3\cdot6$~graphs)}&5&\multirow{3}{*}{40\,415}&202\,075&[0.900, 8.475]\\
	&10&&404\,150&[0.938, 8.701]\\
	&20&&808\,300&[0.989, 9.444]\\
	\midrule
	\multirow{3}{0.2\linewidth}{\yttwo ($3\cdot6$~graphs)}&5&\multirow{3}{*}{150\,572}&752\,860&[0.806, 5.785]\\
	&10&&1\,505\,720&[0.883, 7.576]\\
	&20&&3\,011\,440&[0.949, 8.987]\\
	\midrule
	\multirow{3}{0.2\linewidth}{\nelaone ($3\cdot6$~graphs)}&5&\multirow{3}{*}{11\,931}&59\,655&[4.217, 9.533]\\
	&10&&119\,310&[4.248, 9.567]\\
	&20&&238\,620&[4.217, 9.533]\\
	\midrule
	\multirow{3}{0.2\linewidth}{\nelatwo ($3\cdot6$~graphs)}&5&\multirow{3}{*}{57\,447}&287\,235&[4.609, 11.068]\\
	&10&&574\,470&[4.392, 10.769]\\
	&20&&1\,148\,940&[4.329, 10.741]\\
	\midrule
	\multirow{3}{0.2\linewidth}{\nelathree ($3\cdot6$~graphs)}&5&\multirow{3}{*}{93\,455}&467\,275&[5.565, 11.896]\\
	&10&&934\,550&[5.315, 11.660]\\
	&20&&1\,869\,100&[5.138, 11.517]\\
	\bottomrule
\end{tabular}

\end{table}

\paragraph{Synthetic data}
As our synthetic data, 
we generate a total of $288$ synthetic graphs of four different sizes using two different edge placement models and various parametrizations.
The first model, \synuni, 
chooses out-edges \emph{uniformly} at random, 
similar to a directed Erd\H{o}s-R\'enyi model~\cite{erdHos1959random}. 
In contrast, the second model, \synhom, 
chooses edges \emph{preferentially} to favor small distances between the costs of the source and the target node, 
implementing the concept of \emph{homophily} \cite{mcpherson2001birds}.
We use these graphs primarily to analyze the behavior of our objective function, 
and to understand the impact of using $\heuristic$ instead of $\Delta$ to select the greedily optimal rewiring (\cref{apx:exp:heuristic}).

\paragraph{Real-world data}
We work with real-world data from two domains, 
video recommendations (\yt) and news feeds (\nf).
For our \emph{video application}, 
we use the YouTube data by \citeauthor{ribeiro2020auditing} \cite{mamie2021anti,ribeiro2020auditing}, 
which contains identifiers and ``Up Next''-recommendations for videos from selected channels categorized 
to reflect different degrees and directions of radicalization. 
For our \emph{news application}, 
we use subsets of the NELA-GT-2021 dataset \cite{gruppi2020nelagt2021}, 
which contains 1.8 million news articles published in 2021 from 367 outlets, 
along with veracity labels from Media Bias/Fact Check. 
Prior versions of both datasets are used in the experiments reported by \citeauthor{fabbri2022rewiring} \cite{fabbri2022rewiring}. 

\paragraph{Parametrizations}
To comprehensively assess the effect of modeling assumptions regarding the input graph and its associated random-walk process on our measure of exposure as well as on the performance of \ourmethod and its competitors, 
we experiment with a variety of parametrizations expressing these assumptions.
For all datasets, 
we distinguish three random-walk absorption probabilities $\alpha\in\{0.05,0.1,0.2\}$ 
and two probability shapes $\probabilityshape\in\{\mathbf{U},\mathbf{S}\}$ over the  out-edges of each node (\textbf{U}ni\-form and \textbf{S}kewed).
For our synthetic datasets, 
we further experiment with three fractions of latently harmful nodes $\fractionbad\in\{0.3,0.5,0.7\}$ and 
two cost functions $\labeling \in \{\labeling_{B},\labeling_{R}\}$, 
one binary and one real-valued.
Lastly, for our real-world datasets,   
we distinguish three regular out-degrees $\outregulardegree\in\{5,10,20\}$,
five quality thresholds $\qualitythreshold\in\{0.0,0.5,0.9,0.95,0.99\}$ 
and four cost functions,
two binary ($\labeling_B1$, $\labeling_B2$) and two real-valued ($\labeling_{R1}, \labeling_{R2}$),
based on labels provided with the original datasets, 
as detailed in \cref{apx:data:costfunctions}.

\subsubsection{Algorithms}
\label{exp:alg}
We compare \ourmethod, our algorithm for \ourproblem and \ourproblemtwo, 
with four baselines (\baselineone-\baselinefour) 
and the algorithm by \citeauthor{fabbri2022rewiring} \cite{fabbri2022rewiring} for minimizing the maximum segregation, which we call \fabbrialg. 
In all \ourproblemtwo experiments, 
we use the $\bigoh{1}$-computable normalized discounted cumulative gain (\ndcg),
defined in \cref{apx:reproducibility:qualityf} and also used by \fabbrialg,
as a relevance function $\qualityf$, 
and consider the $100$ most relevant nodes as potential rewiring targets.

As \fabbrialg can only handle binary costs, 
we transform nonbinary costs $\labeling$ into binary costs $\labeling'$ by thresholding to ensure $\labeling_i \geq \roundingthreshold \Leftrightarrow \labeling'_i = 1$ for some rounding threshold $\roundingthreshold\in(0,1]$ 
(cf. \cref{apx:reproducibility:binarization}).
Since \fabbrialg requires access to relevance information, 
we restrict our comparisons with \fabbrialg to data where this information is available.

Our baselines \baselineone-\baselinefour are ablations of \ourmethod, 
such that outperforming them shows how each component of our approach is beneficial. 
We order the baselines by the competition we expect from them, 
from no competition at all (\baselineone) to strong competition (\baselinefour). 
Intuitively, \baselineone does not consider our objective at all,
\baselinetwo is a heuristic focusing on the $\tau$ component of our objective, 
\baselinethree is a heuristic focusing on the $\sigma$ component of our objective, 
and \baselinefour is a heuristic eliminating the iterative element of our approach. 
\baselineone--\baselinethree each run in $\budget$ rounds, while \baselinefour runs in one round. 
In each round,
\baselineone randomly selects a permissible rewiring via rejection sampling.
\baselinetwo selects the rewiring $(i,j,k)$ with the node $j$ maximizing $\unitvector_j^T\fundamental\labelingvector$ as its old target, 
the node $i$ with $j\in\outneighbors{i}$ maximizing $\onevector^T\fundamental\unitvector_i$ as its source,
and the available node $k$ minimizing $\unitvector_k^T\fundamental\labelingvector$ as its new target. 
\baselinethree selects the rewiring $(i,j,k)$ with the node $i$ maximizing $\onevector^T\fundamental\unitvector_i$ as its source, 
the node $j$ with $j\in\outneighbors{i}$ maximizing $\unitvector_j^T\fundamental\labelingvector$ as its old target, and the available node $k$ minimizing $\unitvector_k^T\fundamental\labelingvector$ as its new target.
\baselinefour selects the $\budget$ rewirings with the largest initial values of $\heuristic$, 
while ensuring each edge is rewired at most once.

\subsubsection{Implementation and reproducibility}
All algorithms, including \ourmethod, the baselines, and \fabbrialg, 
are implemented in Python 3.10.
We run our experiments on a 2.9 GHz 6-Core Intel Core i9 with 32 GB RAM and report wall-clock time.
All code, datasets, and results are publicly available,\!\footnote{\oururl}
and we provide further reproducibility information in \cref{apx:reproducibility}.

\subsection{Results}

\subsubsection{Impact of modeling choices}
To understand the impact of a particular modeling choice on the performance of \ourmethod and its competitors, 
we analyze groups of experimental settings that vary only the parameter of interest while keeping the other parameters constant,
focusing on the \ytone datasets. 
We primarily report the evolution of the ratio $\nicefrac{\objective{\graph_{\budget}}}{\objective{\graph}}=\nicefrac{\big(\objective{\graph}-\maxobjective{\graph,\graph_{\budget}}\big)}{\objective{\graph}}$, 
which indicates what fraction of the initial expected total exposure is left after $\budget$ rewirings, 
and hence is comparable across \ourproblem instances with different starting values.
Overall, we observe that \ourmethod robustly reduces the expected total exposure to harm, 
and that it changes its behavior predictably under parameter variations. 
Due to space constraints, we defer the results showing this for variations in the regular out-degree $\outregulardegree$, 
the random-walk absorption probability $\pabsorption$, 
the probability shape $\probabilityshape$,
and the cost function $\labeling$ 
to \cref{apx:exp:modelingchoices}.

\paragraph{Impact of quality threshold $\qualitythreshold$}
The higher the quality threshold $\qualitythreshold$, 
the more constrained our rewiring options. 
Thus, under a given budget $\budget$, 
we expect \ourmethod to reduce our objective more strongly for smaller $\qualitythreshold$. 
As illustrated in \cref{fig:quality_threshold}, 
our experiments confirm this intuition, 
and the effect is more pronounced if the out-edge probability distribution is skewed.
We further observe that \ourmethod can guarantee $\qualitythreshold = 0.5$ with little performance impact, 
and it can strongly reduce the exposure to harm even under a strict $\qualitythreshold = 0.95$:
With just $100$ edge rewirings, 
it reduces the expected total exposure to harm by $50\%$, 
while ensuring that its recommendations are at most $5\%$ less relevant than the original recommendations.

\subsubsection{Performance comparisons}
Having ensured that \ourmethod robustly and predictably reduces the total exposure across the entire spectrum of modeling choices, 
we now compare it with its competitors.
Overall, we find that \ourmethod offers more reliable performance and achieves stronger harm reduction than its contenders.

\paragraph{Comparison with baselines \baselineone--\baselinefour}
First, we compare \ourmethod with our four baselines, 
each representing a different ablation of our algorithm.  
As depicted in \cref{fig:baselines}, 
the general pattern we observe matches our performance expectations (from weak performance of \baselineone to strong performance of \baselinefour), 
but we are struck by the strong performance of \baselinethree 
(selecting based on $\sigma$), 
especially in contrast to the weak performance of \baselinetwo 
(selecting based on $\tau$). 
This suggests that whereas the most \emph{exposed} node does not necessarily have a highly visited node as an in-neighbor, 
the most \emph{visited} node tends to have a highly exposed node as an out-neighbor.
In other words, for some highly \emph{prominent} videos, 
the YouTube algorithm
problematically
appears to recommend highly \emph{harm-inducing} content to watch next. 
Despite the competitive performance of \baselinethree and \baselinefour,
\ourmethod consistently outperforms these baselines, too, 
and unlike the baselines, it \emph{smoothly} reduces the exposure function.
This lends additional support to our reliance on $\sigma\tau$ (rewiring a highly visited $i$ away from a highly exposed $j$) as an \emph{iteratively} evaluated heuristic.

\paragraph{Comparison with \fabbrialg}
Having established that all components of \ourmethod are needed to achieve its performance, 
we now compare our algorithm with \fabbrialg, the method proposed by \citeauthor{fabbri2022rewiring} \cite{fabbri2022rewiring}. 
To this end, we run both \ourmethod and \fabbrialg using their respective objective functions,
i.e., the expected total exposure to harm of random walks starting at any node (\emph{total exposure}, \ourmethod)
and the maximum expected number of random-walk steps from a harmful node to a benign node (\emph{maximum segregation}, \fabbrialg).
Reporting their performance under the objectives of \emph{both} algorithms (as well as the \emph{total segregation}, which sums the segregation scores of all harmful nodes) in \cref{fig:gamine_vs_mms}, 
we find that under strict quality control ($\qualitythreshold\in\{0.9,0.95,0.99\}$), 
\ourmethod outperforms \fabbrialg on \emph{all} objectives, 
and \fabbrialg stops early as it can no longer reduce its objective function.
For $\qualitythreshold = 0.5$, \fabbrialg outperforms \ourmethod on the segregation-based objectives, 
but \ourmethod still outperforms \fabbrialg on our exposure-based objective, sometimes at twice the margin (\cref{fig:largemargin}).
Further, while \ourmethod delivers consistent and predictable performance that is strong on exposure-based and segregation-based objectives, 
we observe much less consistency in the performance of \fabbrialg.
For example, it is counterintuitive that \fabbrialg identifies $100$ rewirings on the smaller \ytone data but stops early on the larger \yttwo data. 
Moreover, \fabbrialg delivers the results shown in \cref{fig:gamine_vs_mms} under $\labeling_{B1}$, 
but it cannot decrease its objective at all on the same data under $\labeling_{B2}$, 
which differs from $\labeling_{B1}$ only in that it also assigns harm to anti-feminist content (\cref{apx:datasets}, \cref{tab:yt-costs}).
We attribute this brittleness to the reliance on the \emph{maximum}-based \emph{segregation} objective, 
which, by design, 
is less robust than our \emph{sum}-based \emph{exposure} objective.

\begin{figure}[t]
	\centering
	\begin{subfigure}{0.5\linewidth}
		\centering
		\includegraphics[width=\linewidth]{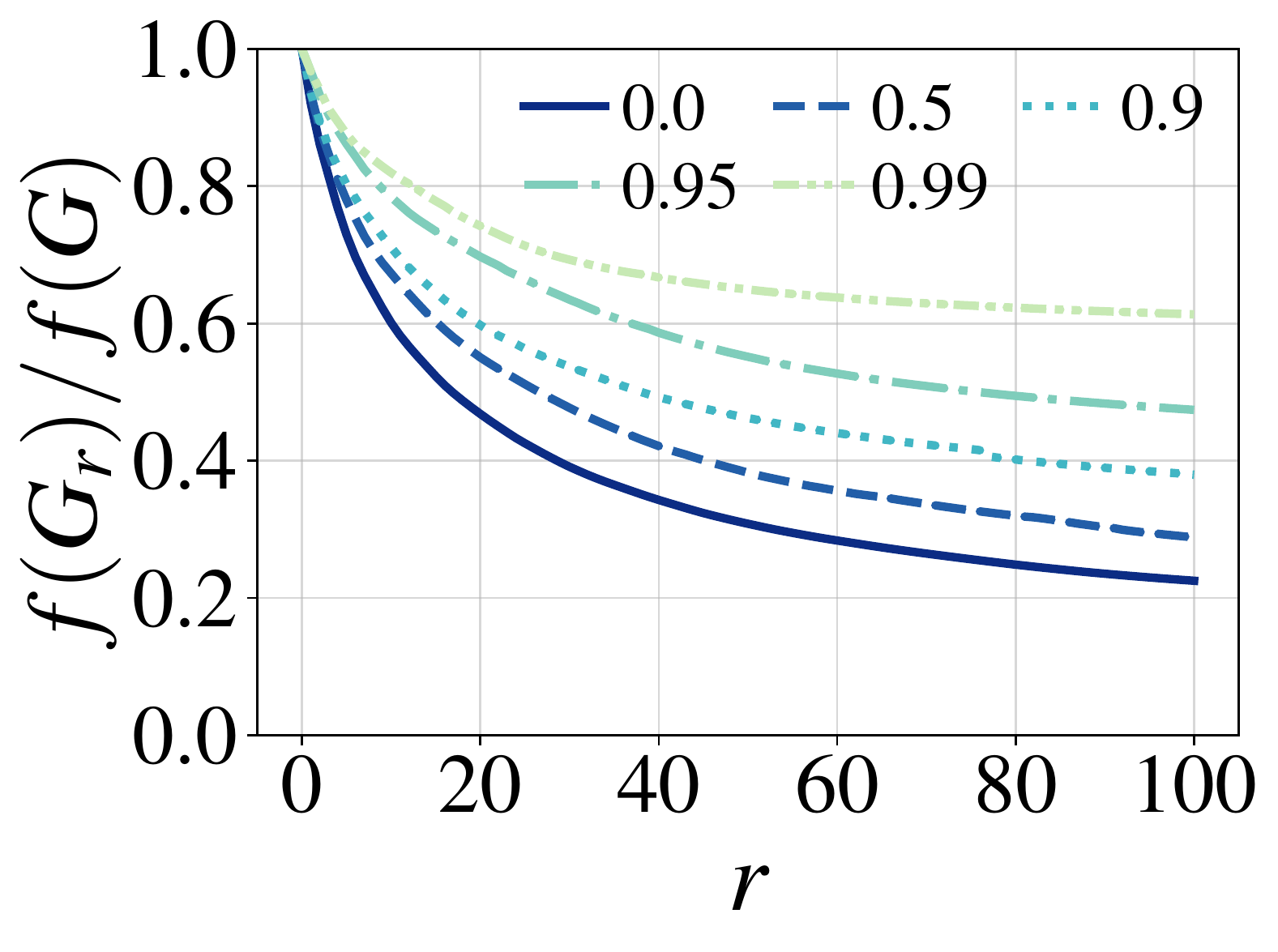}
		\subcaption{Probability shape $\probabilityshape = \mathbf{U}$}
	\end{subfigure}~%
	\begin{subfigure}{0.5\linewidth}
		\centering
		\includegraphics[width=\linewidth]{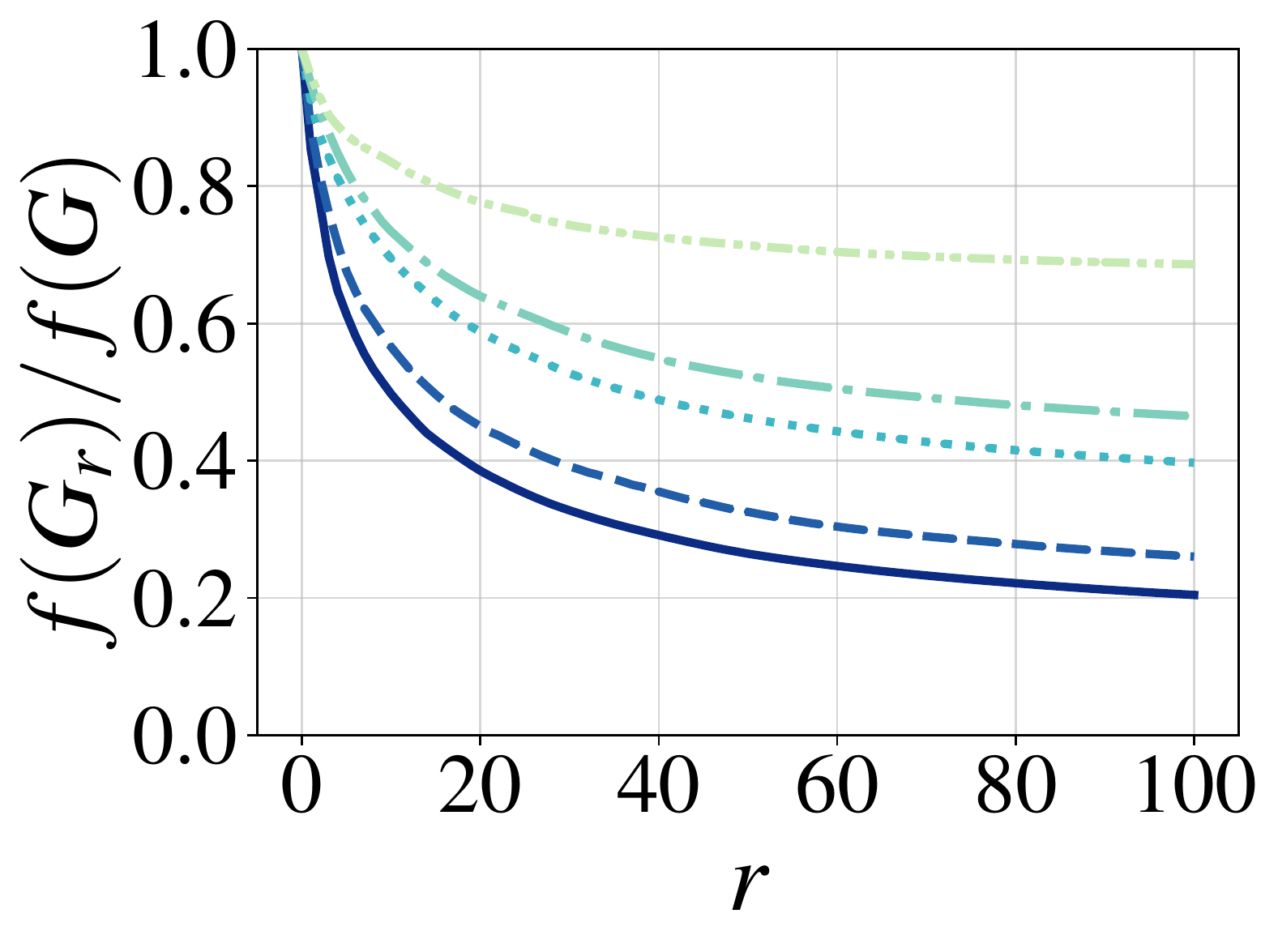}
		\subcaption{Probability shape $\probabilityshape = \mathbf{S}$}
	\end{subfigure}
	\caption{%
		Performance of \ourmethod for quality thresholds $\qualitythreshold\in\{0.0,0.5,0.9,0.95,0.99\}$ as measured by $c_{B2}$,
		run on \ytone with $\outregulardegree=5$ and $\pabsorption=0.05$.
		\ourmethod can ensure $\qualitythreshold = 0.5$ with little loss in performance, 
		and it can reduce our objective considerably even under a strict $\qualitythreshold = 0.95$.
	}\label{fig:quality_threshold}
\end{figure}
\begin{figure}[t] 
	\begin{subfigure}{0.5\linewidth}
		\centering
		\includegraphics[width=\linewidth]{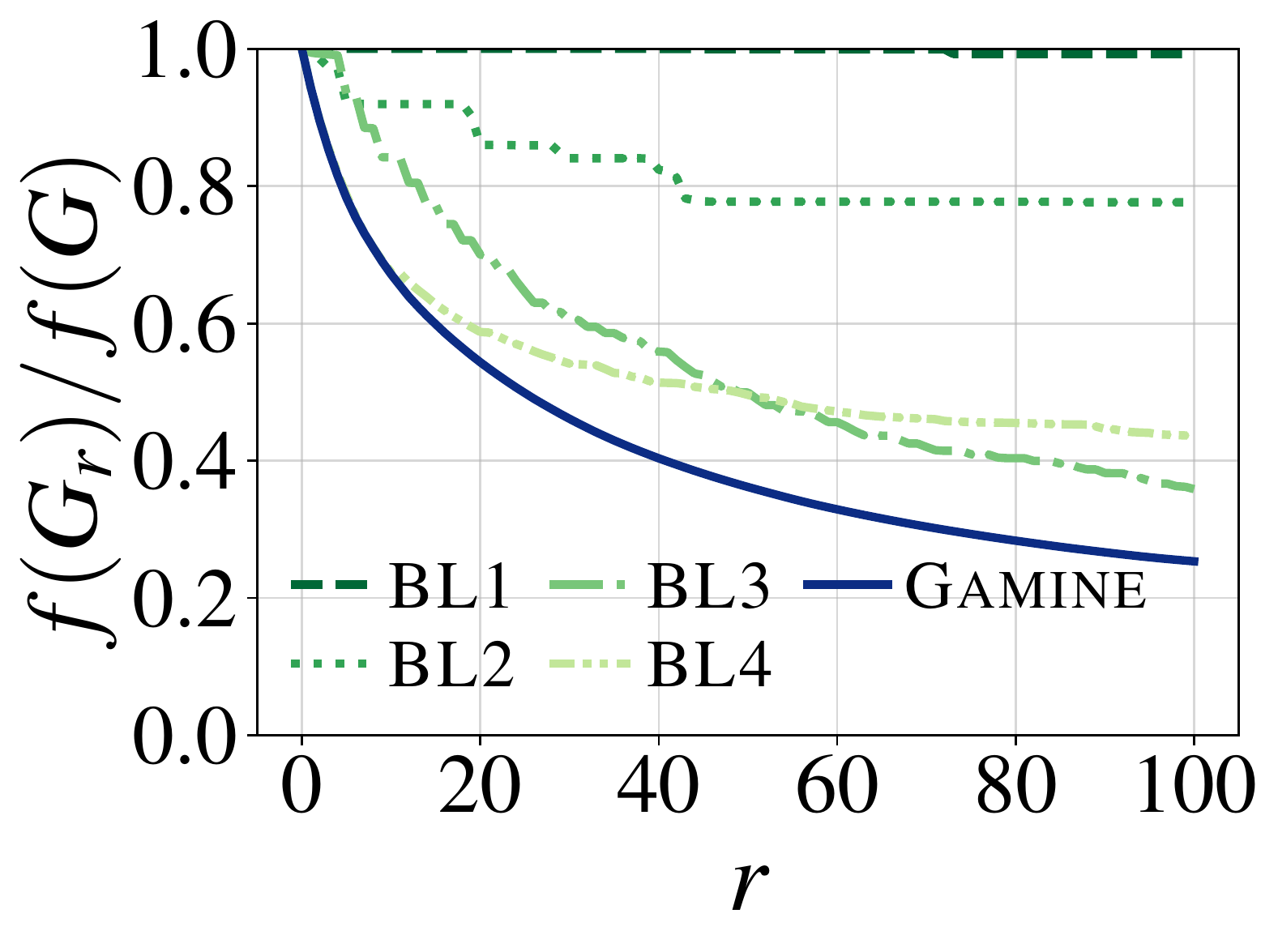}
		\subcaption{Regular out-degree $\outregulardegree = 10$}
	\end{subfigure}~%
	\begin{subfigure}{0.5\linewidth}
		\centering
		\includegraphics[width=\linewidth]{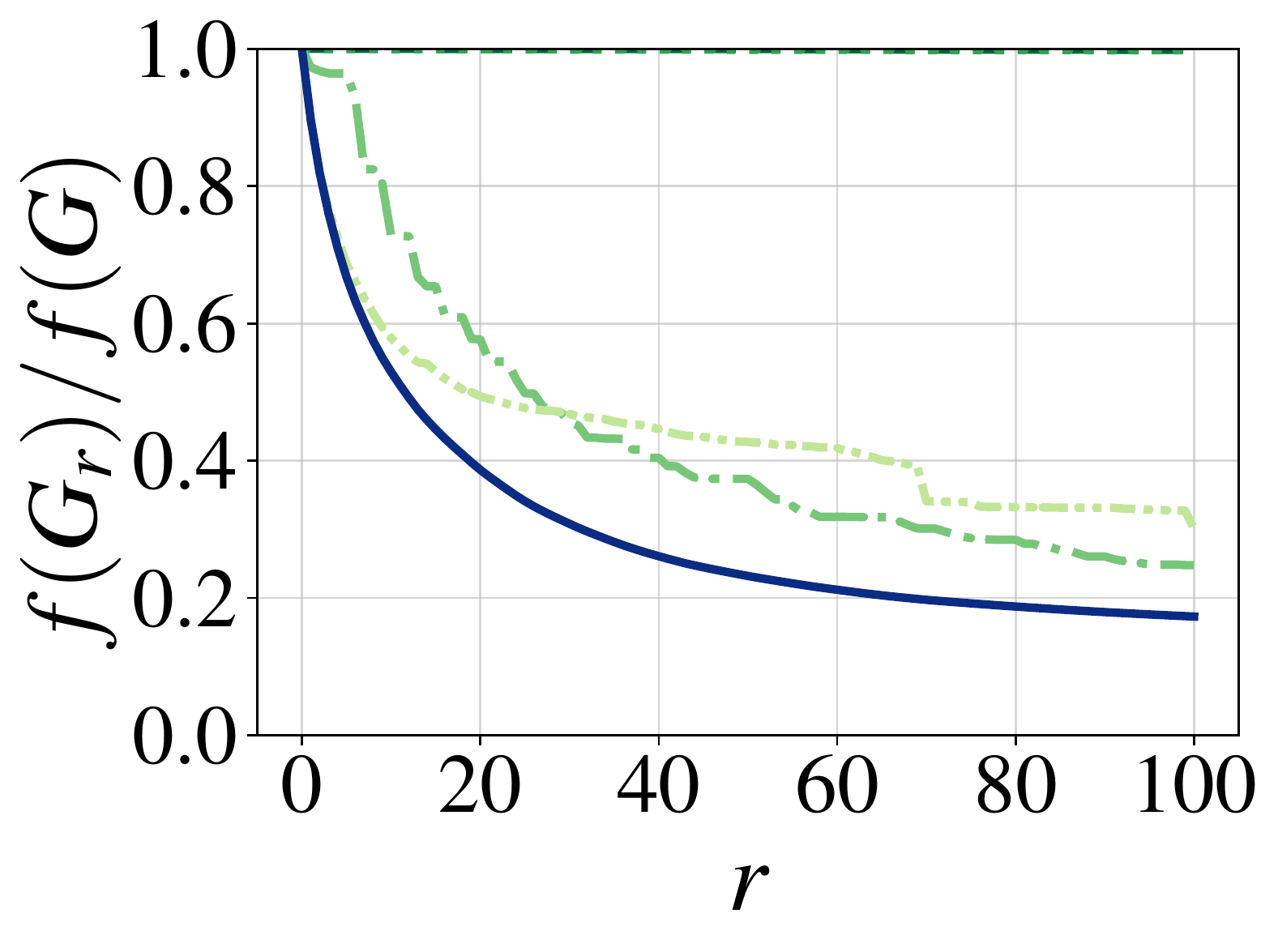}
		\subcaption{Regular out-degree $\outregulardegree = 5$}
	\end{subfigure}
	\caption{%
		Performance of \ourmethod with $\qualitythreshold = 0.0$, compared with the four baselines \baselineone, \baselinetwo, \baselinethree, and \baselinefour under $c_{B1}$,
		run on \ytone with $\pabsorption = 0.05$ and $\probabilityshape = \mathbf{U}$. 
		As \baselinefour is roundless, we apply its rewirings in decreasing order of $\Delta$ to depict its performance as a function of $\budget$.
		\ourmethod outcompetes all baselines, 
		but \baselinethree and \baselinefour also show strong performance. 
	}\label{fig:baselines}
\end{figure}
\begin{figure}[t]
	\centering
	\begin{subfigure}{0.5\linewidth}
		\centering
		\includegraphics[width=\linewidth]{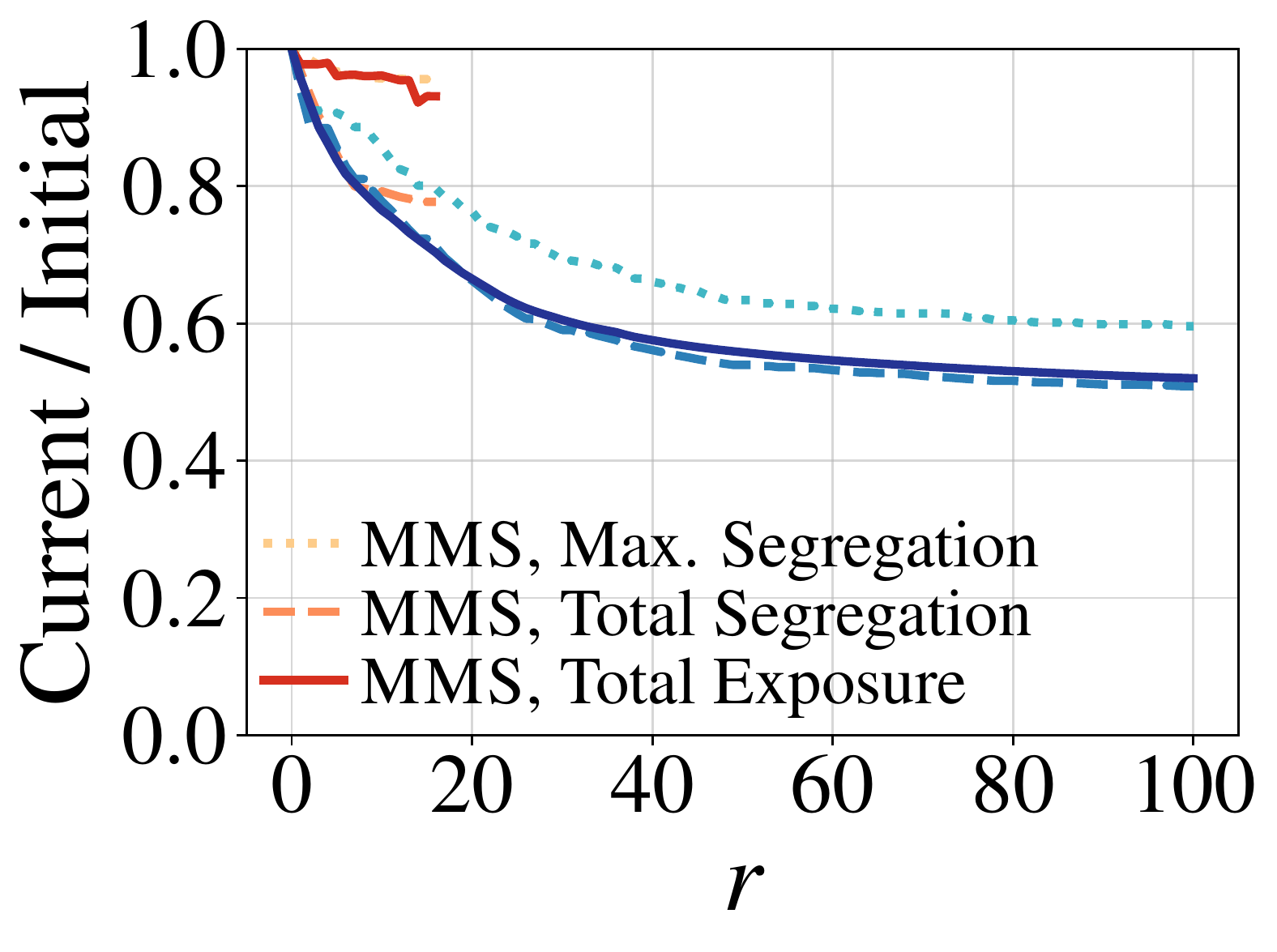}
		\subcaption{\ytone, $\qualitythreshold=0.99$}
	\end{subfigure}~%
	\begin{subfigure}{0.5\linewidth}
		\centering
		\includegraphics[width=\linewidth]{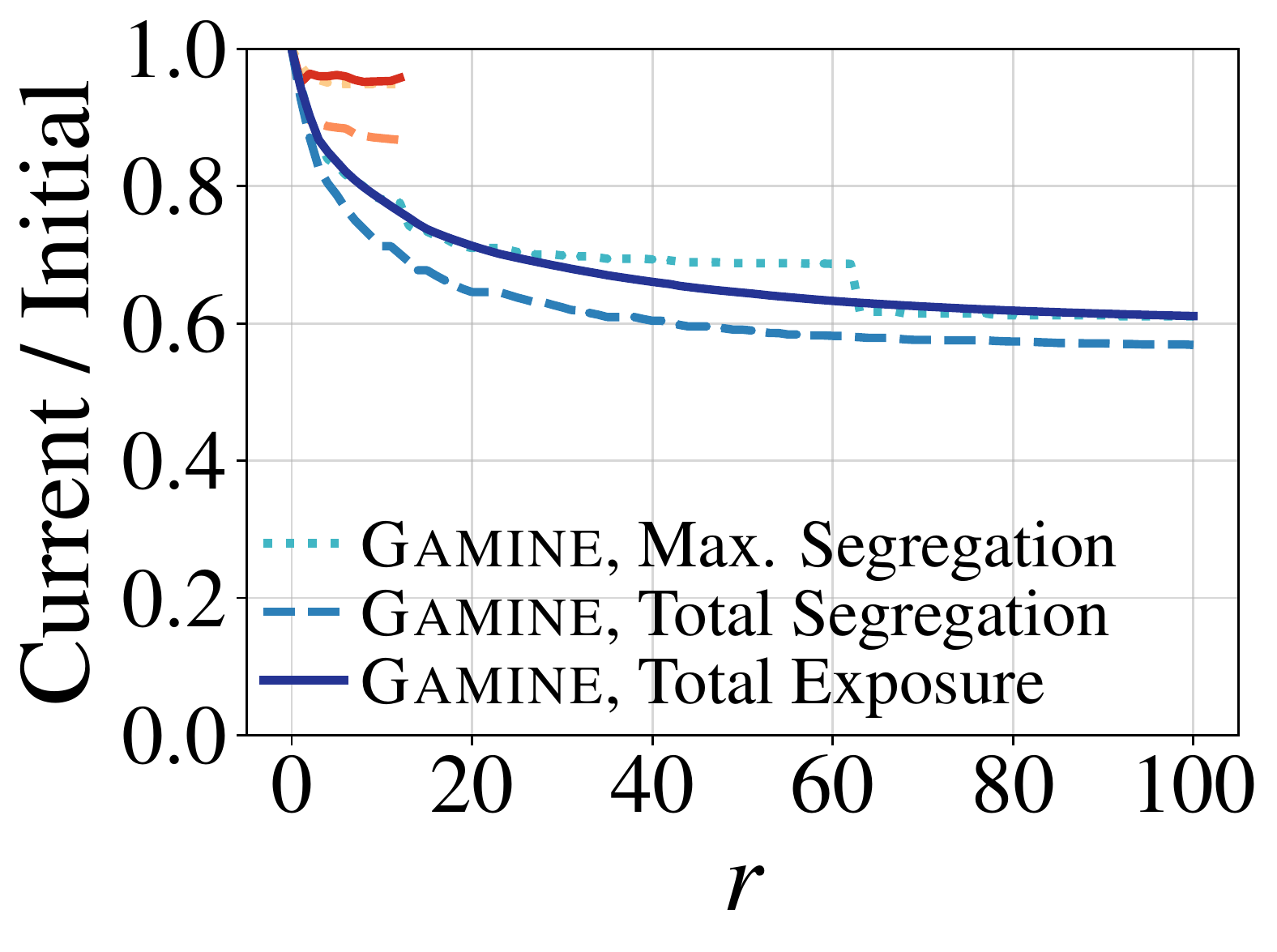}
		\subcaption{\yttwo, $\qualitythreshold=0.99$}
	\end{subfigure}\vspace*{3pt}
	\begin{subfigure}{0.5\linewidth}
		\centering
		\includegraphics[width=\linewidth]{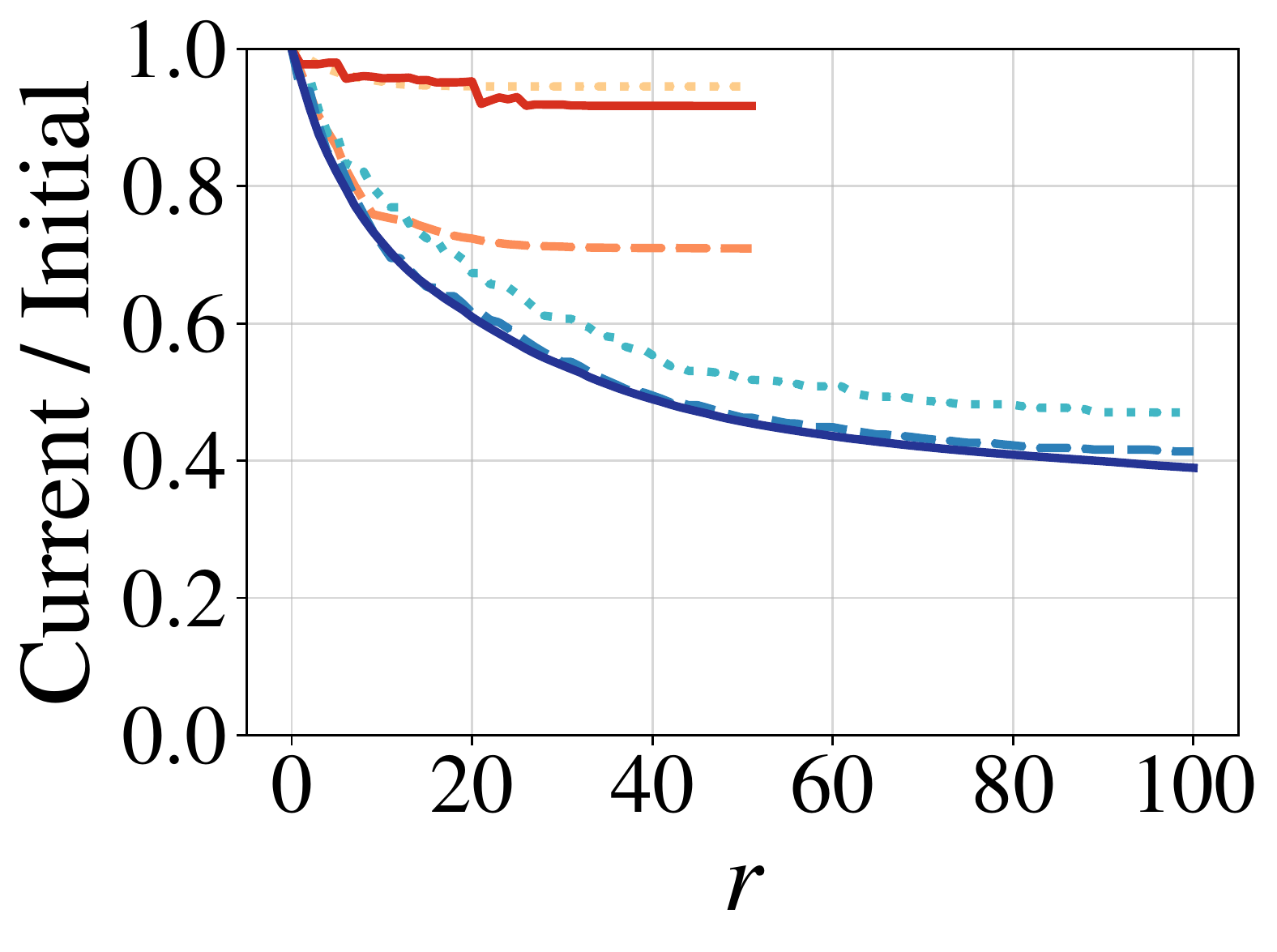}
		\subcaption{\ytone, $\qualitythreshold=0.95$}
	\end{subfigure}~%
	\begin{subfigure}{0.5\linewidth}
		\centering
		\includegraphics[width=\linewidth]{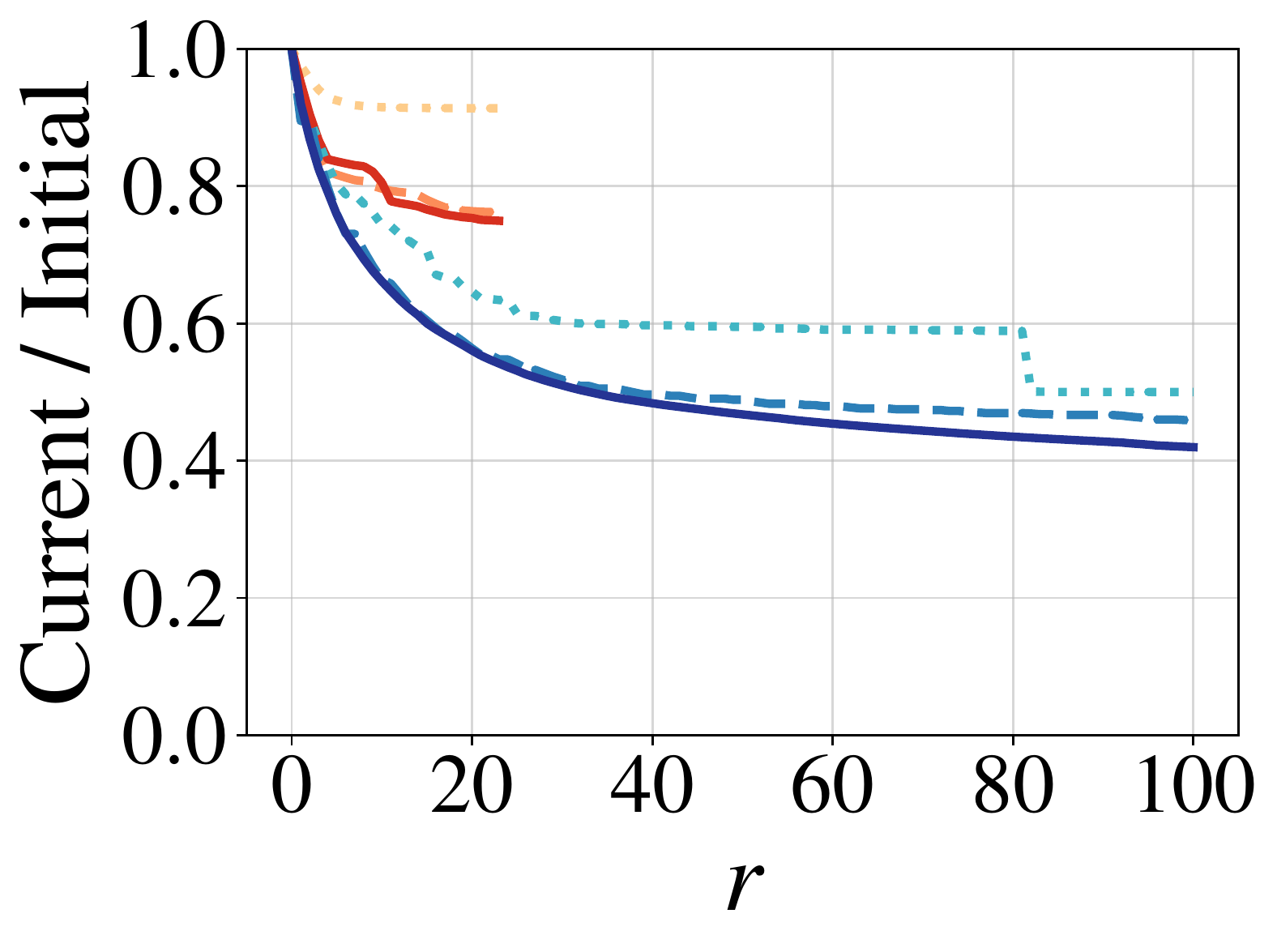}
		\subcaption{\yttwo, $\qualitythreshold=0.95$}
	\end{subfigure}\vspace*{3pt}
	\begin{subfigure}{0.5\linewidth}
		\centering
		\includegraphics[width=\linewidth]{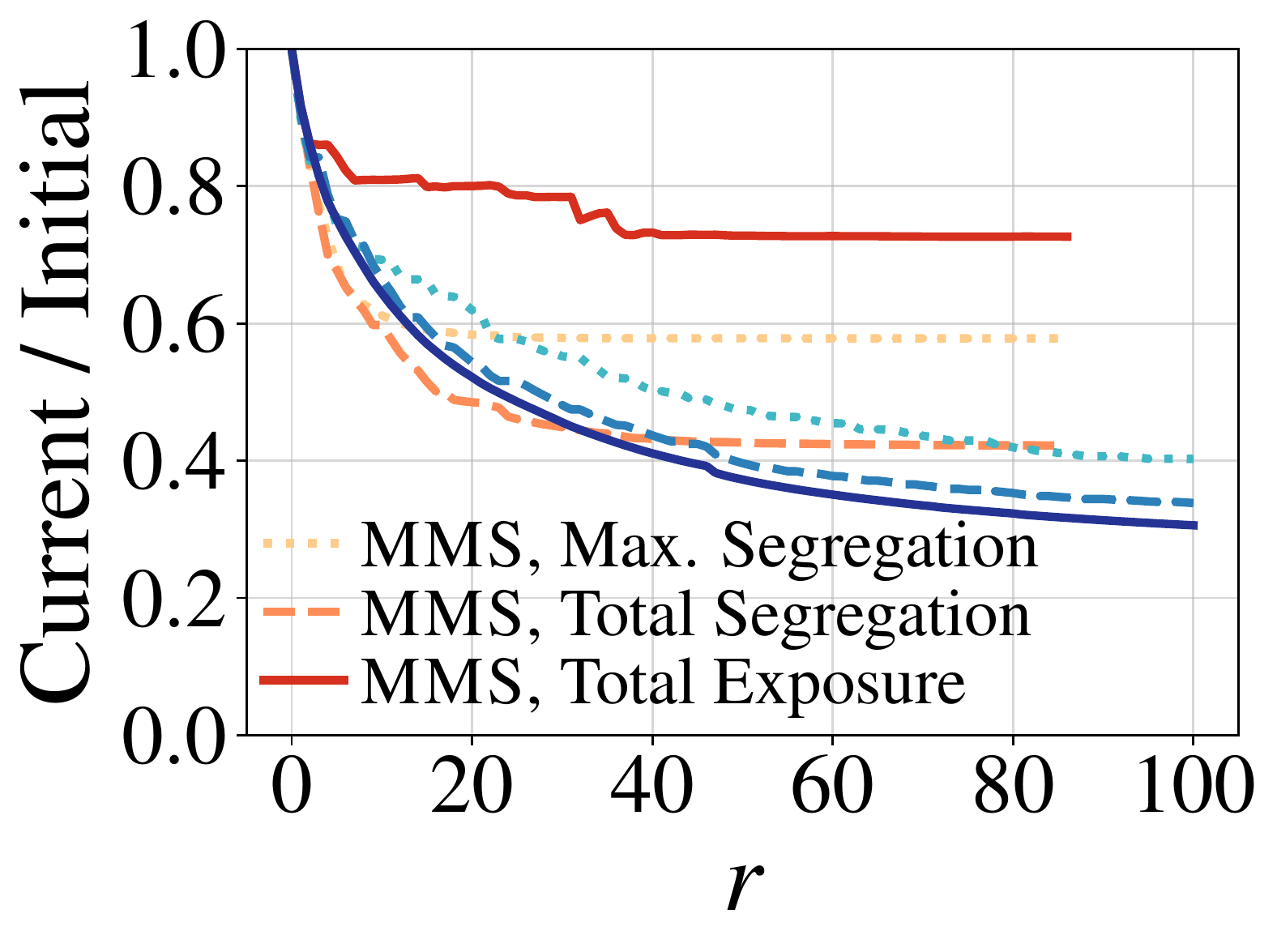}
		\subcaption{\ytone, $\qualitythreshold=0.9$}
	\end{subfigure}~%
	\begin{subfigure}{0.5\linewidth}
		\centering
		\includegraphics[width=\linewidth]{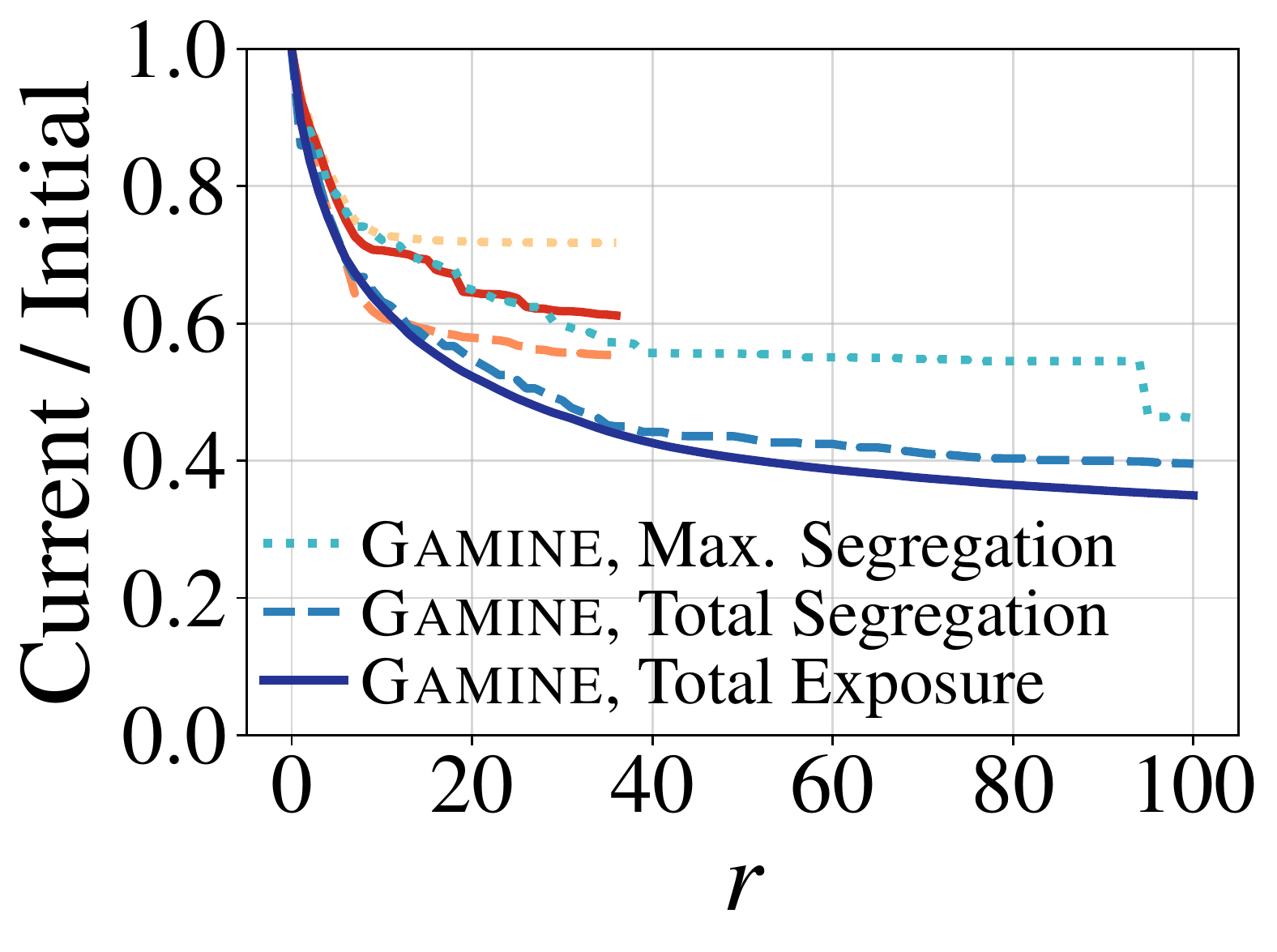}
		\subcaption{\yttwo, $\qualitythreshold=0.9$}
	\end{subfigure}\vspace*{3pt}
	\begin{subfigure}{0.5\linewidth}
		\centering
		\includegraphics[width=\linewidth]{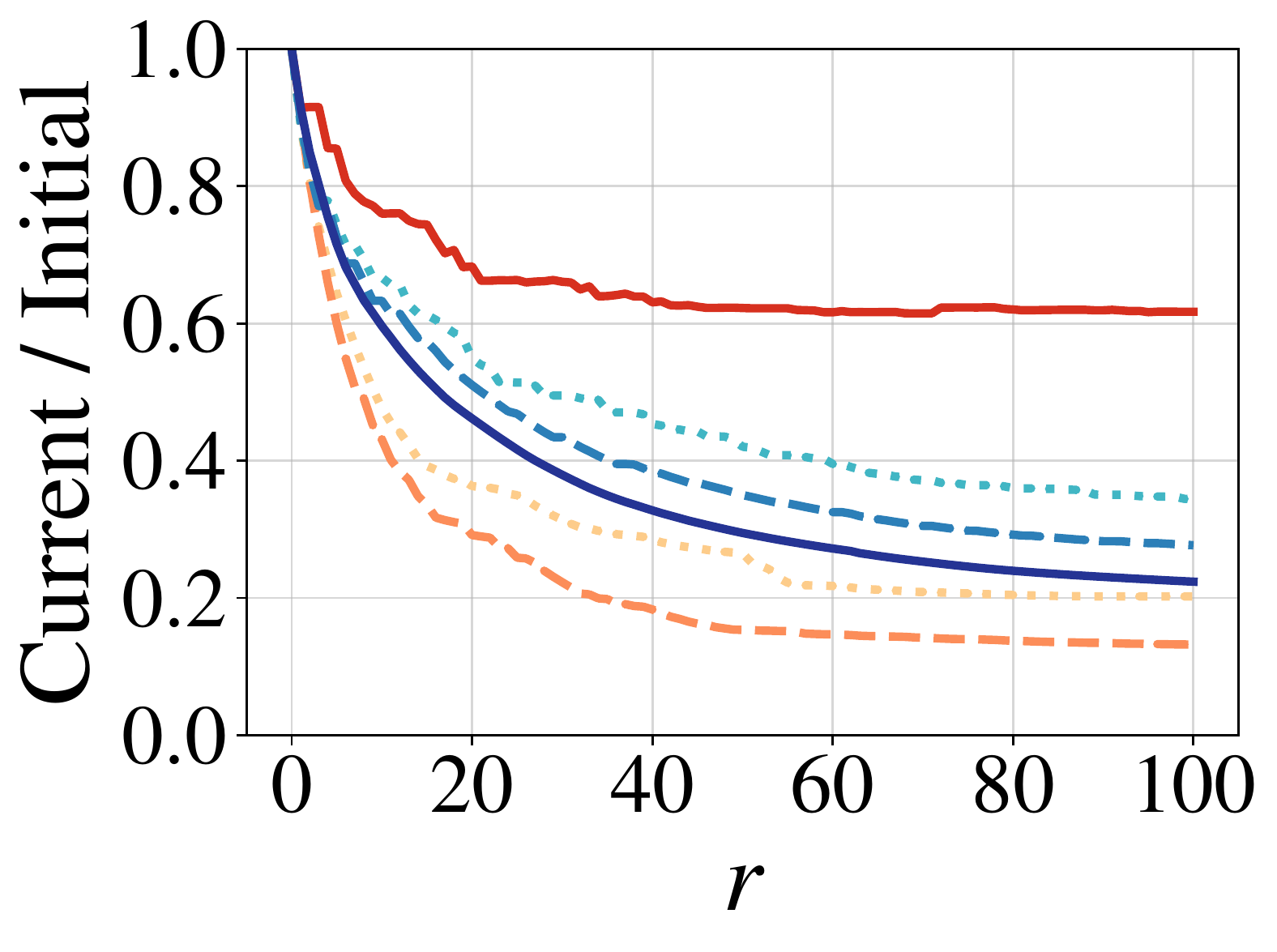}
		\subcaption{\ytone, $\qualitythreshold=0.5$}\label{fig:largemargin}
	\end{subfigure}~%
	\begin{subfigure}{0.5\linewidth}
		\centering
		\includegraphics[width=\linewidth]{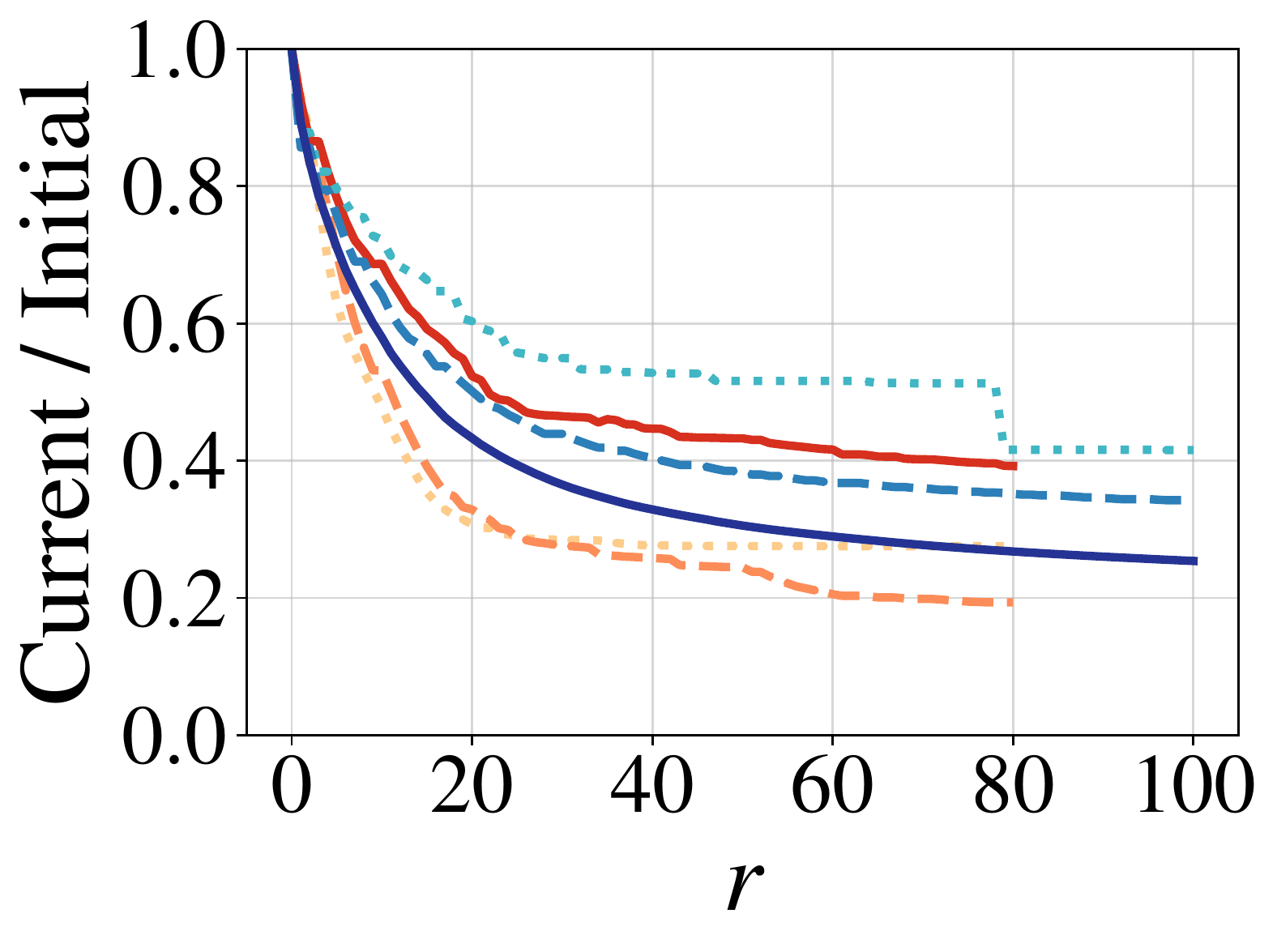}
		\subcaption{\yttwo, $\qualitythreshold=0.5$}
	\end{subfigure}
	\caption{%
		Performance of \ourmethod and \fabbrialg 
		when measured under $c_{B1}$ by the maximum segregation or the total segregation from \citeauthor{fabbri2022rewiring} \cite{fabbri2022rewiring}, 
		or by the total exposure as defined in \cref{eq:harm},
		run on \ytone (left) and \yttwo (right) with $\outregulardegree=5$, $\pabsorption=0.05$, and $\probabilityshape=\mathbf{U}$.
		For all but $\qualitythreshold = 0.5$, 
		\ourmethod outperforms \fabbrialg on \emph{all} objectives, 
		and \fabbrialg stops early because it can no longer reduce the maximum segregation. 
	}\label{fig:gamine_vs_mms}
\end{figure}
\begin{figure}[t]
	\begin{subfigure}{0.5\linewidth}
		\centering
		\includegraphics[width=\linewidth]{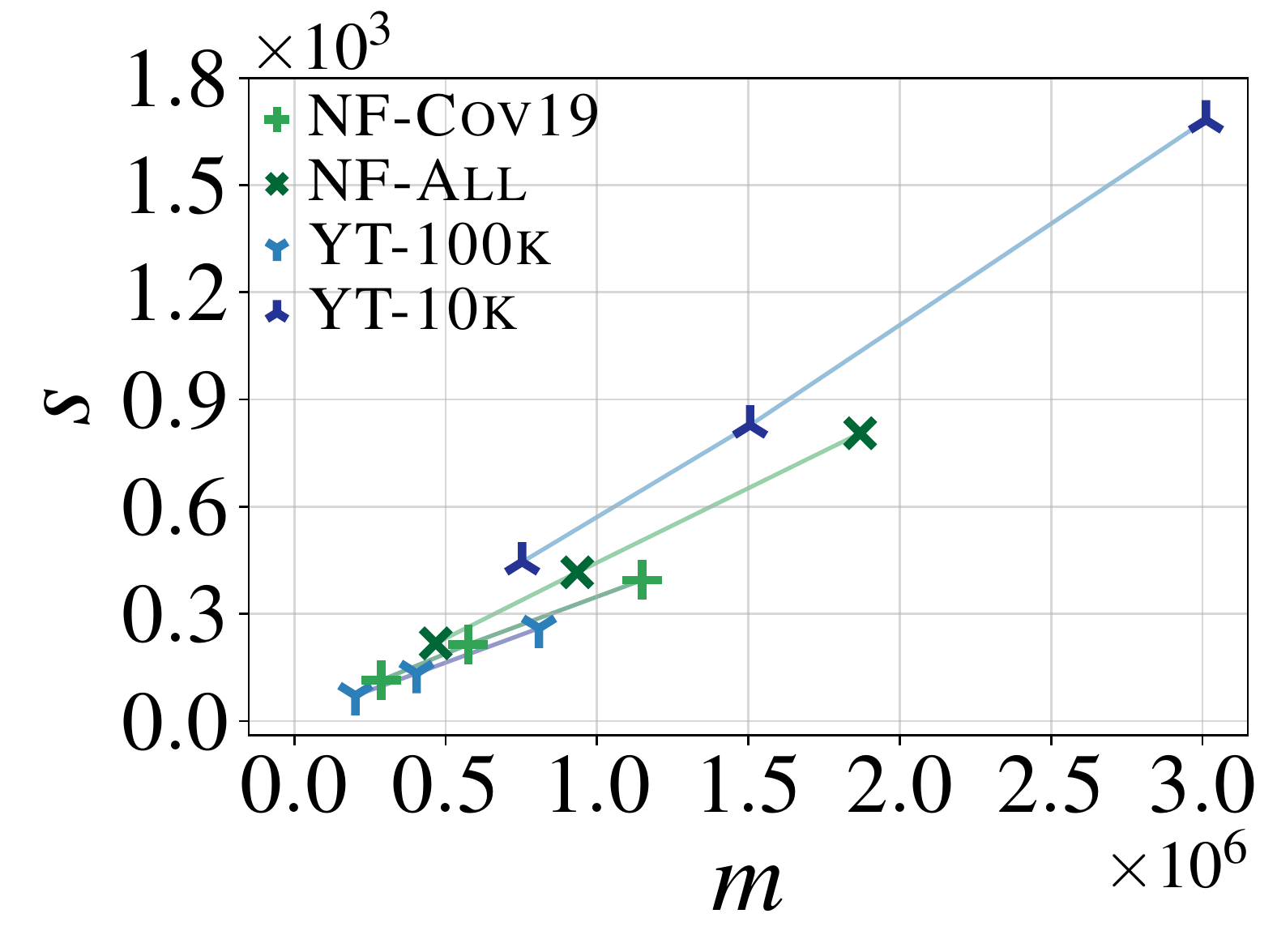}
		\subcaption{\ourmethod on \ourproblem}
	\end{subfigure}~%
	\begin{subfigure}{0.5\linewidth}
		\centering
		\includegraphics[width=\linewidth]{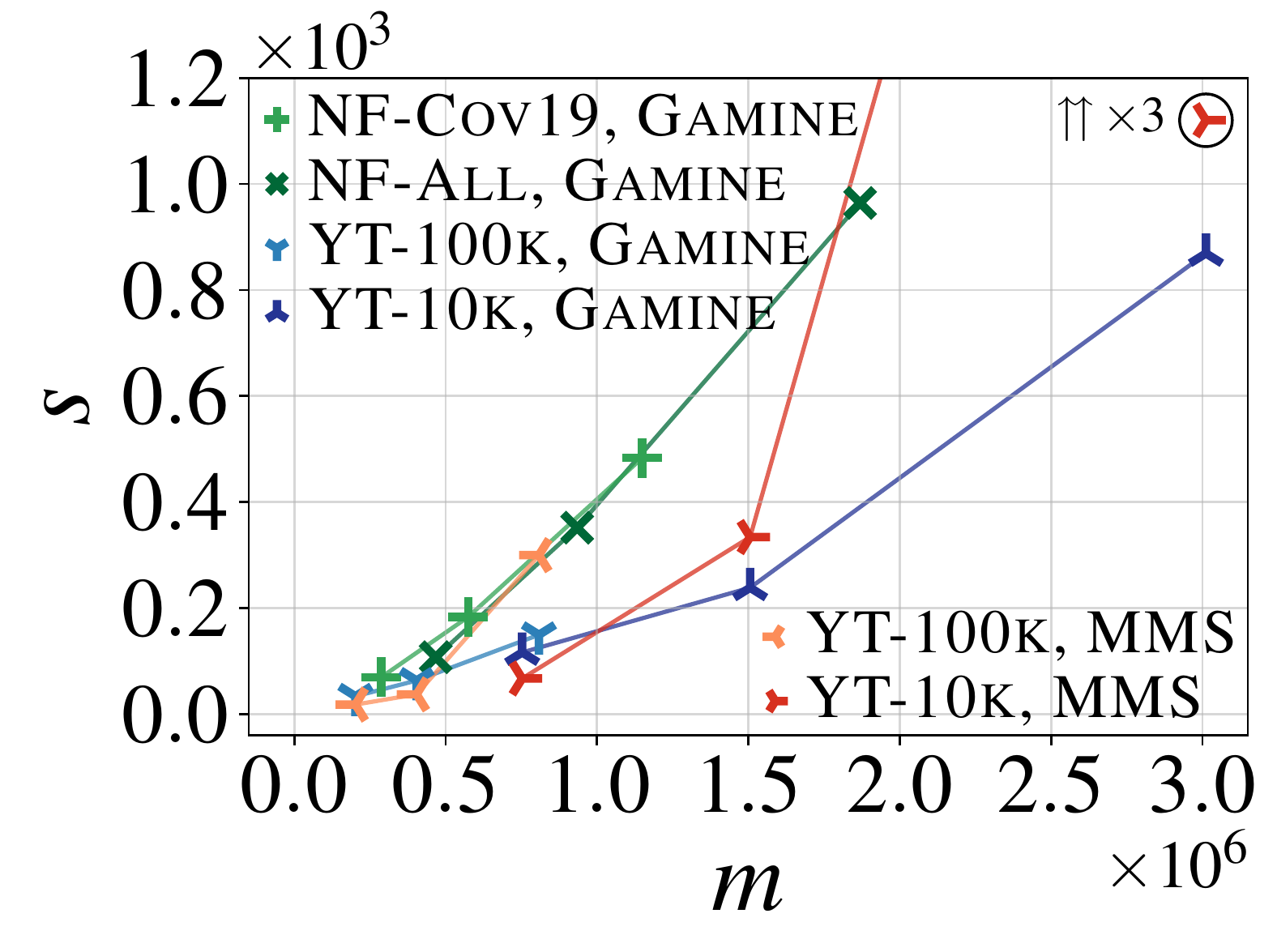}
		\subcaption{\ourmethod and \fabbrialg on \ourproblemtwo}
	\end{subfigure}
	\caption{%
		Scaling of \ourmethod and \fabbrialg
		under $\labeling_{B1}$ with 
		$\pabsorption = 0.05$,
		$\probabilityshape = \mathbf{U}$, and 
		$\qualitythreshold = 0.0$ (\ourproblem) resp.\ $0.99$ (\ourproblemtwo).
		We report the seconds $s$ to compute a single rewiring 
		as a function of $\nedges = \outregulardegree\nnodes$ 
		(\fabbrialg does not identify any rewirings on \nelatwo and \nelathree). 
		\ourmethod scales more favorably than \fabbrialg. 
	}\label{fig:scalability}
\end{figure}

\subsubsection{Empirical scalability of \ourmethod}

In our previous experiments, we found that \ourmethod robustly and reliably reduces the expected total exposure to harm. 
Now, we seek to ascertain that its practical scaling behavior matches our theoretical predictions, 
i.e., that under realistic assumptions on the input, 
\ourmethod scales linearly in $\nnodes$ and $\nedges$.
We are also interested in comparing \ourmethod's scalability to that of \fabbrialg.
To this end, we measure the time taken to compute a single rewiring and report, 
in \cref{fig:scalability},
the \emph{average} over ten
rewirings for each of our datasets.
This corresponds to the time taken by \greedyname in \ourmethod and by \textsc{1-Rewiring} in \fabbrialg, 
which drives the overall scaling behavior of both algorithms. 
We find that \ourmethod scales approximately linearly, 
whereas \fabbrialg scales approximately quadratically 
(contrasting with the empirical time complexity of $\bigoh{\nnodes\log\nnodes}$ claimed in \cite{fabbri2022rewiring}).
This is because our implementation of \fabbrialg follows the original authors', 
whose evaluation of the segregation objective takes time  $\bigoh{\nnodes}$ and is performed $\bigoh{\nedges}$ times. 
The speed of precomputations depends on the problem variant (\ourproblem vs. \ourproblemtwo), 
and for \ourproblemtwo, also on the quality function $\qualityf$. 
In our experiments, precomputations add linear overhead for \ourmethod and volatile overhead for \fabbrialg, 
as we report in \cref{apx:exp:scalability}.

\subsubsection{Data complexity}
Given that \ourmethod strongly reduces the expected total exposure to harm with few rewirings on the YouTube data, 
as evidenced in \cref{fig:quality_threshold,fig:baselines,fig:gamine_vs_mms}, 
one might be surprised to learn that its performance seems much weaker on the NELA-GT data (\cref{apx:exp:nela}): 
While it still reduces the expected total exposure and outperforms \fabbrialg (which struggles to reduce its objective at all on the \nf data), 
the impact of individual rewirings is much smaller than on the YouTube datasets, 
and the value of the quality threshold $\qualitythreshold$ barely makes a difference. 
This motivates us to investigate how \emph{data complexity} impacts our ability to reduce the expected total exposure to harm via edge rewiring: 
Could reducing exposure to harm be \emph{intrinsically} harder on \nf data than on \yt data? 
The answer is yes. 
First, 
the in-degree distributions of the \yt graphs are an order of magnitude more skewed than those of the \nf graphs (\cref{apx:datasets:statistics}, \cref{fig:real-indegrees}). 
This is unsurprising given the different origins of their edges (user interactions vs. cosine similarities), 
but it creates opportunities for high-impact rewirings involving highly prominent nodes in \yt graphs (which \ourmethod seizes in practice, see below).
Second, as depicted in \cref{fig:channelclasses},
harmful and benign nodes are much more strongly interwoven in the \nf data than in the \yt data. 
This means that harmful content is less siloed in the \nf graphs, 
but it also impedes strong reductions of the expected total exposure. 
Third, 
as a result of the two previous properties, 
the initial node exposures are much more concentrated in the \nf graphs than in the \yt graphs, 
as illustrated in \cref{fig:initialexposure}, 
with a median sometimes twice as large as the median of the identically parametrized \yt graphs, 
and a much higher average exposure (cf.~$\nicefrac{\objective{\graph}}{\nnodes}$ in \cref{tab:data}). 
Finally, the relevance scores are much more skewed in the \yt data than in the \nf data (\cref{apx:datasets:statistics}, \cref{fig:relevancescores}). 
Hence, while we are strongly constrained by $\qualitythreshold$ on the \yt data even when considering only the $100$ highest-ranked nodes as potential rewiring targets, 
we are almost unconstrained in the same setting on the \nf data, 
which explains the comparative irrelevance of $\qualitythreshold$ on the \nf data.
Thus, the performance differences we observe between the \nf data and the \yt data are due to intrinsic dataset properties: 
\ourproblem and \ourproblemtwo are simply more complex on the news data than on the video data.

\begin{figure}[t]
	\centering
	\hspace*{-5pt}\begin{subfigure}[b]{0.5\linewidth}
		\centering
		\includegraphics[width=0.94\linewidth]{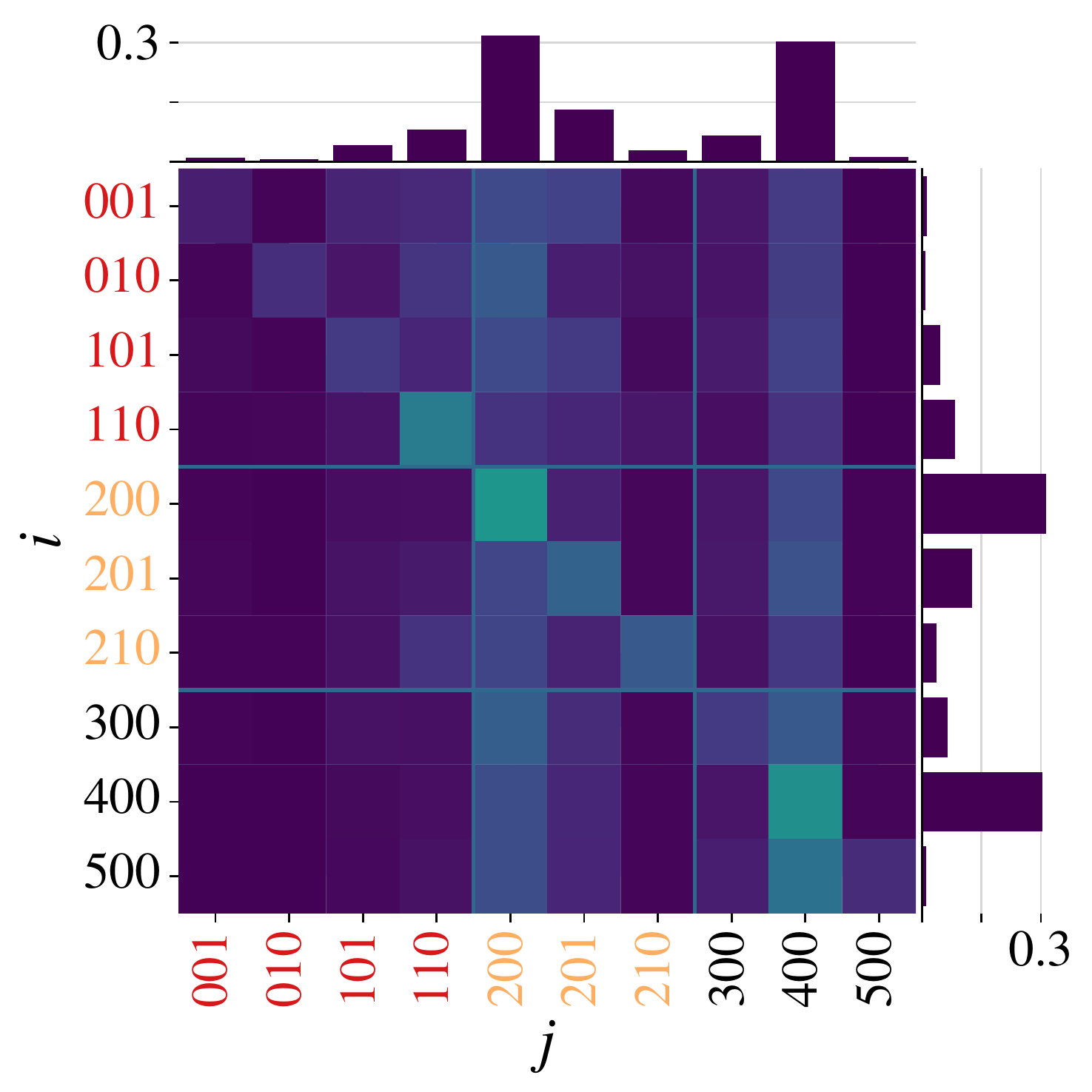}\vspace*{7pt}
		\subcaption{\nelathree}
	\end{subfigure}~%
	\hspace*{-11pt}\begin{subfigure}{0.5\linewidth}
		\centering
		\includegraphics[width=\linewidth]{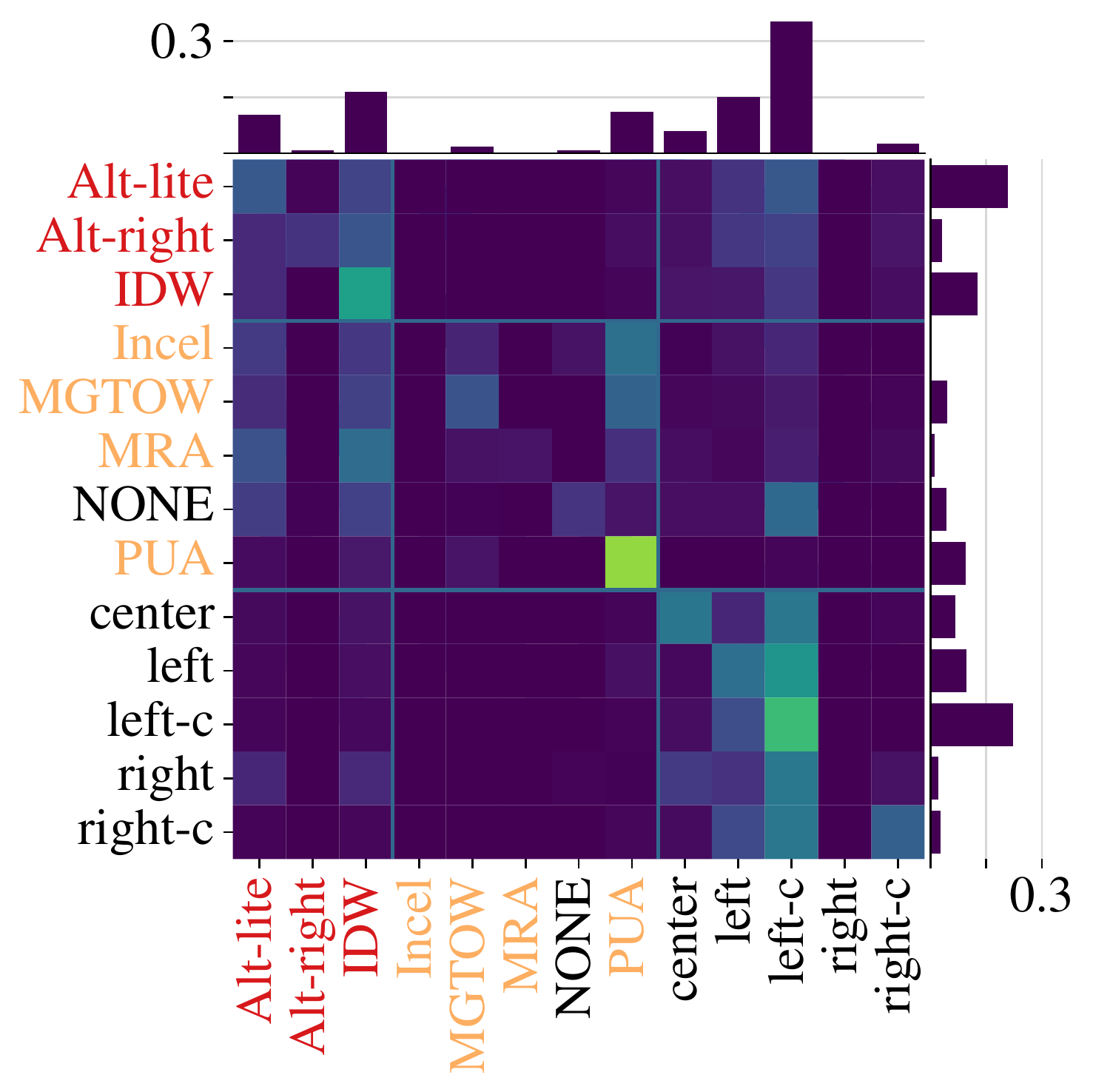}
		\subcaption{\yttwo}
	\end{subfigure}~%
	\hspace*{-7pt}\begin{subfigure}{0.078\linewidth}
		\centering
		\includegraphics[height=3cm]{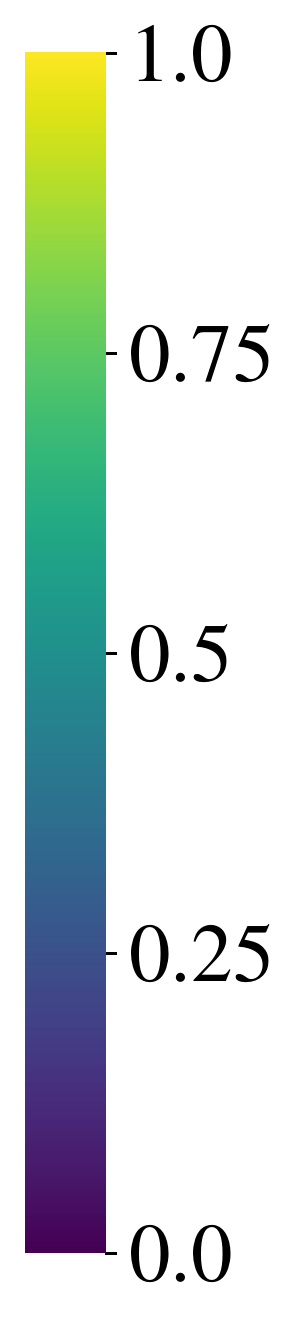}\vspace*{31.2pt}
	\end{subfigure}
	\caption{%
		Fractions of edges running between news outlet resp. video channel categories for real-world graphs with $\outregulardegree = 5$, 
		with marginals indicating the fraction of sources (right) resp. targets (top) in each category. 
		News outlet categories are denoted as triples (veracity score,
		conspiracy-pseudoscience flag, questionable-source flag); 
		for video channel categories, $\{\text{left},\text{right}\}$-center is abbreviated as $\{\text{left},\text{right}\}$-c;
		and label colors are coarse indicators of harm. 
		In \nelathree, harmful and benign nodes are more interconnected than in \yttwo.
	}
	\label{fig:channelclasses}
\end{figure}
\begin{figure}[t]
	\centering
	\begin{subfigure}{0.5\linewidth}
		\centering 
		\includegraphics[width=\linewidth]{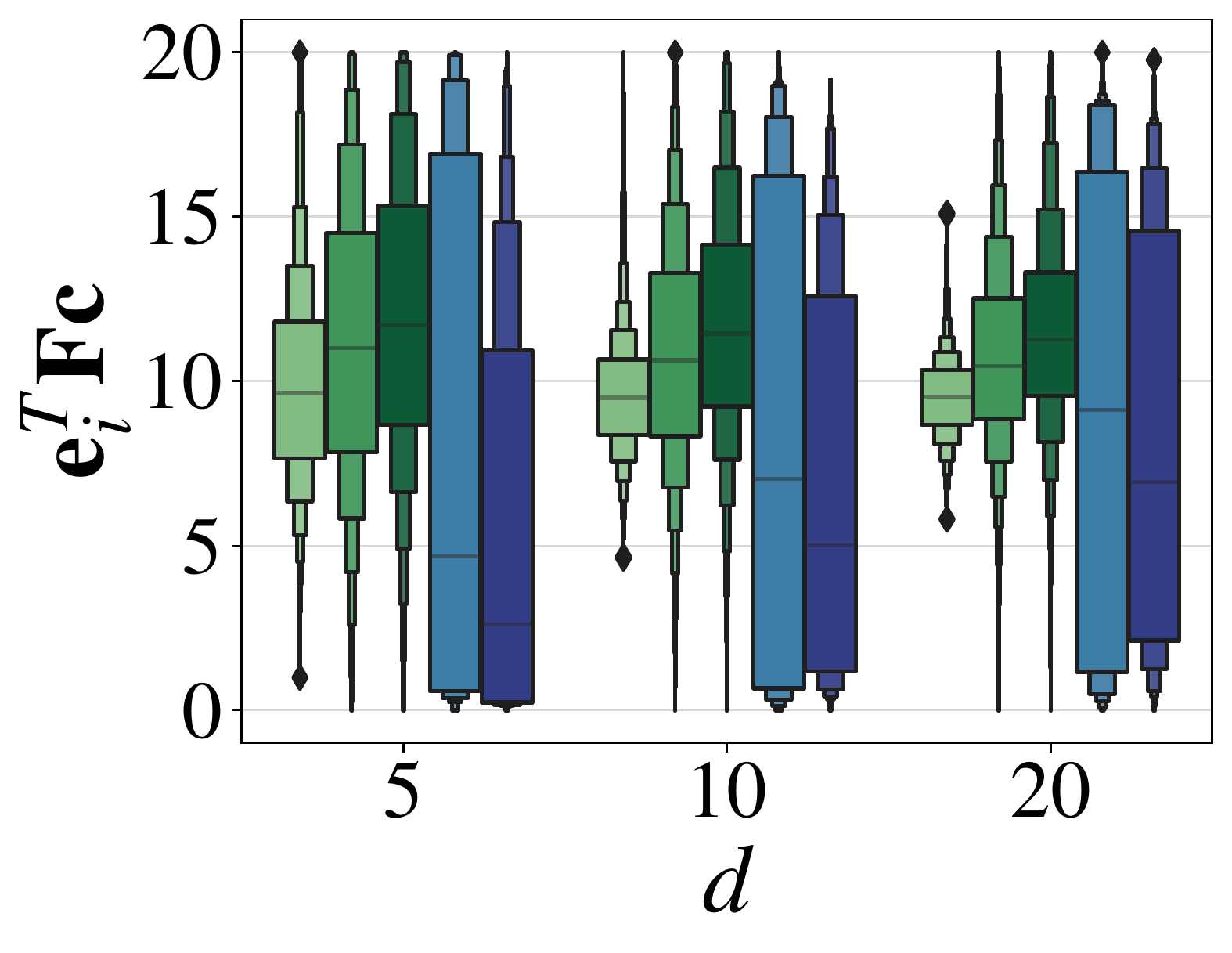}
		\subcaption{$\labeling_{B2}$, $\probabilityshape=\mathbf{S}$}
	\end{subfigure}~%
	\begin{subfigure}{0.5\linewidth}
		\centering
		\includegraphics[width=\linewidth]{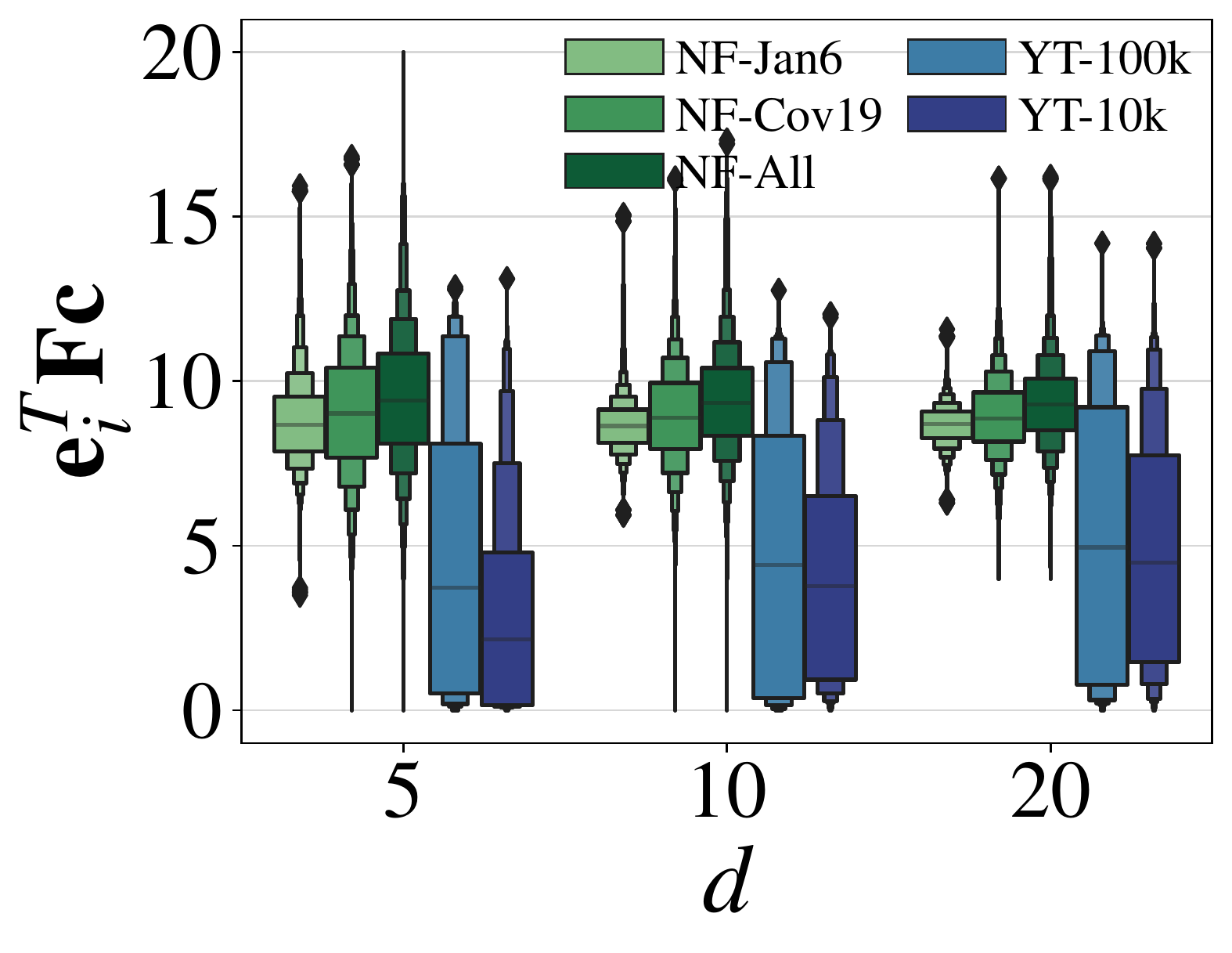}
		\subcaption{$\labeling_{R1}$, $\probabilityshape=\mathbf{U}$}
	\end{subfigure}
	\caption{%
		Distributions of initial node exposures $\unitvector^T_i\fundamental\labelingvector$ in our real-world datasets, computed with $\pabsorption=0.05$. 
		Note that cost functions sharing a name are defined differently for the \yt and \nf datasets (based on their semantics).
		The \nf datasets generally exhibit more concentrated exposure distributions than the \yt datasets and higher median exposures.
	}
	\label{fig:initialexposure}
\end{figure}
\begin{figure}[t]
	\centering
	\hspace*{-10pt}\begin{subfigure}{0.5\linewidth}
		\centering
		\includegraphics[width=1.1\linewidth]{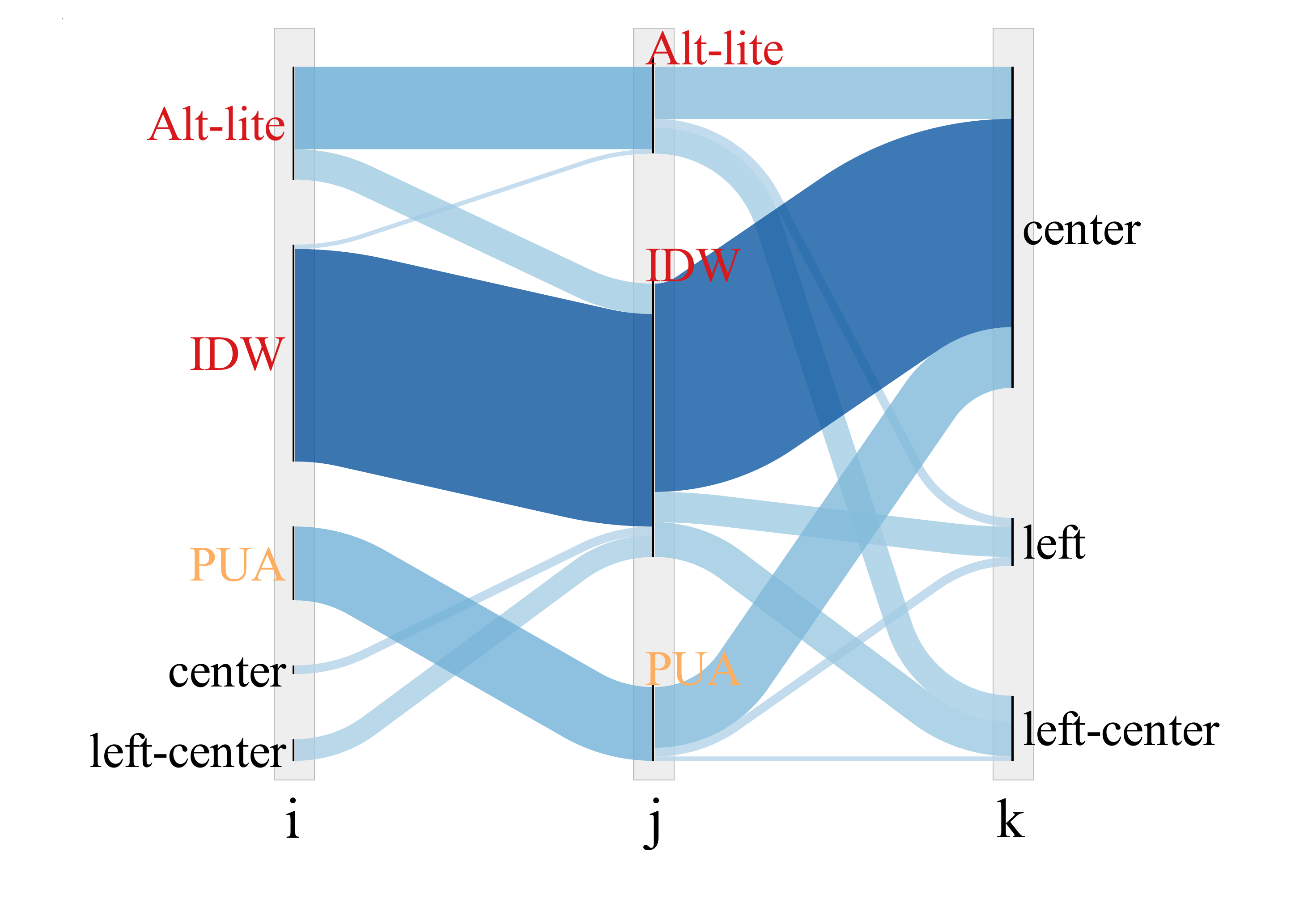} 
		\subcaption{Quality threshold $\qualitythreshold=0.0$}
	\end{subfigure}~%
	\begin{subfigure}{0.5\linewidth}
		\centering
		\hspace*{-3pt}\includegraphics[width=1.1\linewidth]{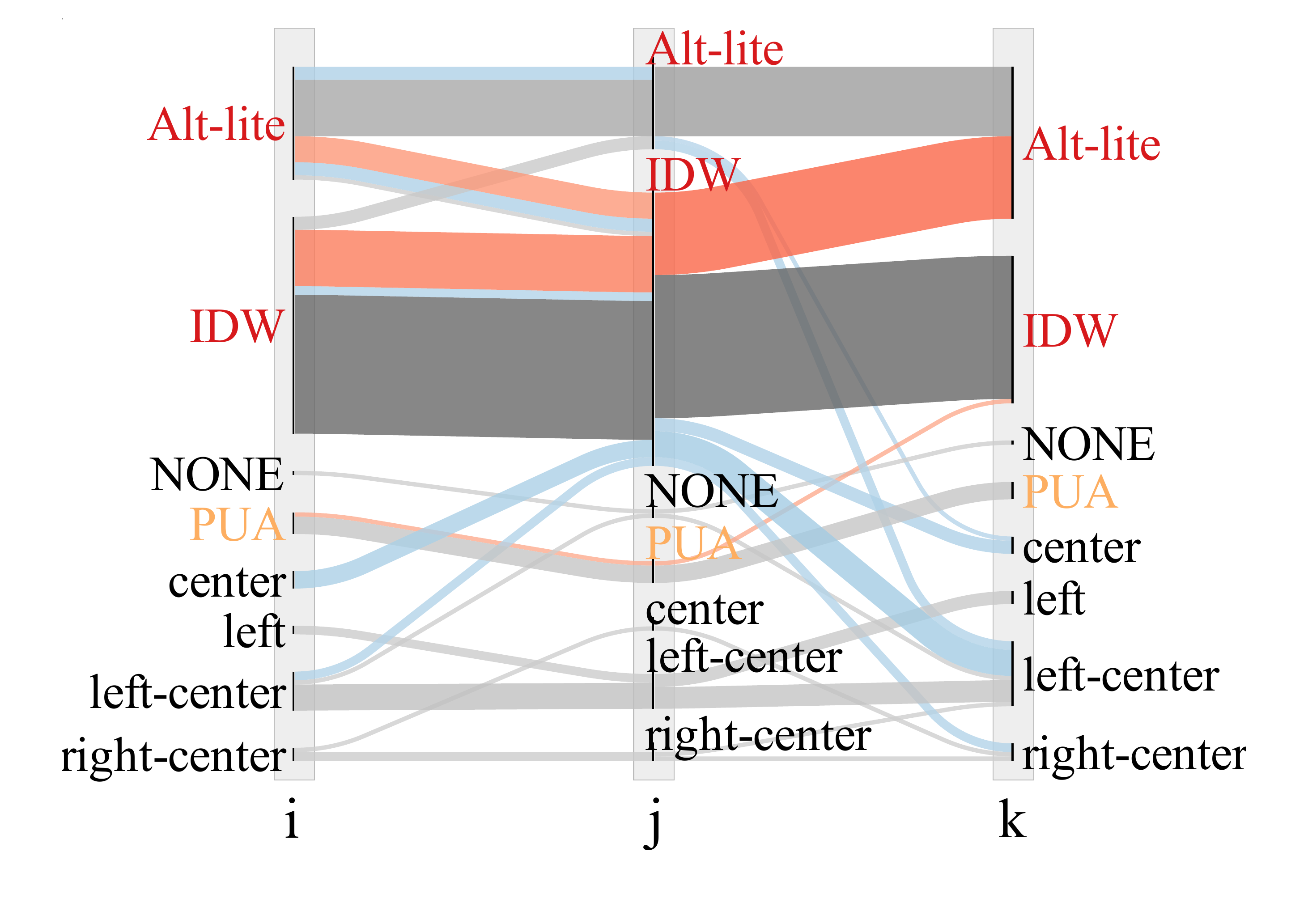}
		\subcaption{Quality threshold $\qualitythreshold=0.99$}
	\end{subfigure}
	\caption{%
		Channel class of videos in rewirings $\rewiring{i}{j}{k}$ on \ytone with $\outregulardegree=5$, 
		$\pabsorption=0.05$, 
		and $\probabilityshape=\mathbf{U}$, 
		computed using $\labeling_{R1}$, for different quality thresholds. 
		Rewirings between classes are color-scaled by their count, 
		using blues if $c_{R1}(k) < c_{R1}(j)$, reds if $c_{R1}(k) > c_{R1}(j)$, and grays otherwise. 
		For $\qualitythreshold = 0.0$, 
		we only replace costly targets $j$ by less costly targets $k$,
		as expected, 
		but for $\qualitythreshold = 0.99$, 
		we see many rewirings with $c_{R1}(k) \geq c_{R1}(j)$.
	}
	\label{fig:yt-rewiringchannels-R1}
\end{figure}
\begin{figure}[t]
	\centering
	\includegraphics[width=\linewidth]{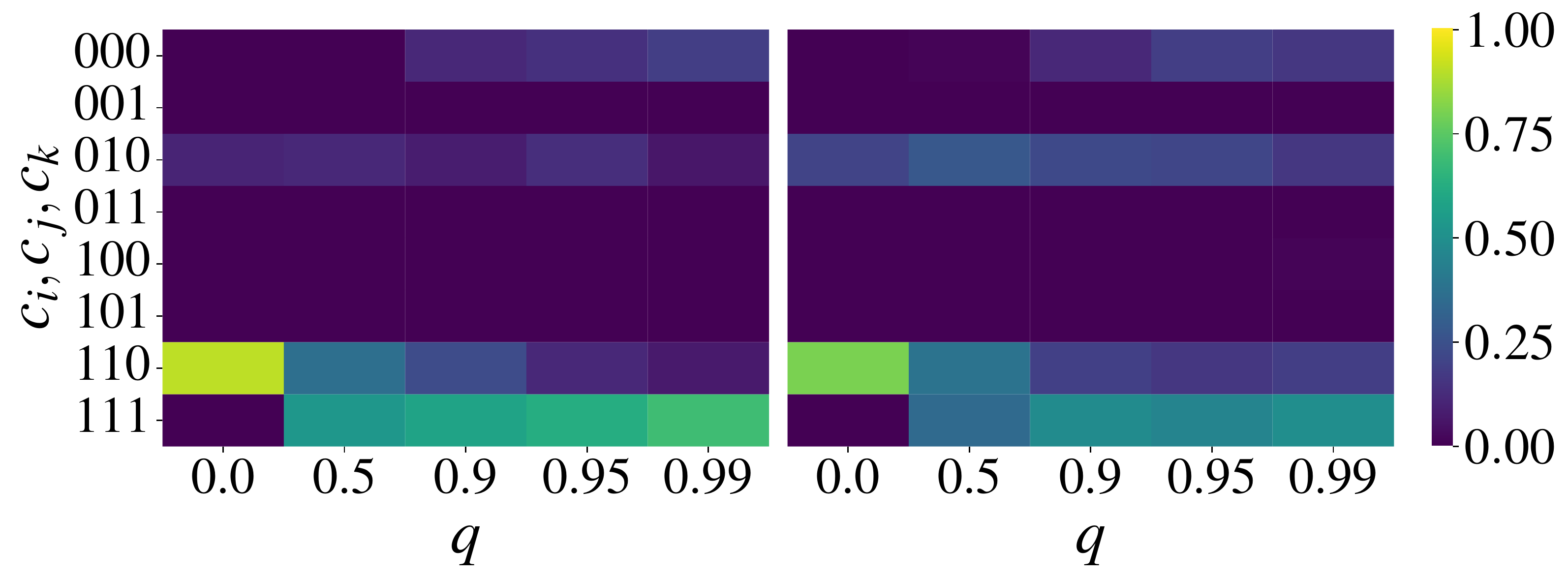}
	\begin{subfigure}{0.5\linewidth}
		\centering
		\subcaption{\ytone}
	\end{subfigure}~%
	\begin{subfigure}{0.5\linewidth}
		\centering
		\subcaption{\yttwo}
	\end{subfigure}
	\caption{%
		Mapping the nodes in each rewiring $\rewiring{i}{j}{k}$ to their costs $(\labeling_i, \labeling_j, \labeling_k)$, 
		we report the fraction of rewirings in each cost class 
		under $\labeling_{B1}$ and $\qualitythreshold\in\{0.0,0.5,0.9,0.95,0.99\}$,
		for \yt graphs with 
		$\outregulardegree=5$, 
		$\pabsorption = 0.05$, 
		and $\probabilityshape = \textbf{U}$.
		While most intuitively suboptimal classes occur rarely (e.g., 001, 011, 101), 
		under quality constraints,
		we often rewire \emph{among} harmful nodes.
	}
	\label{fig:yt-rewiringtypes-B1}
\end{figure}

\subsubsection{General guidelines}
Finally, we would like to abstract the findings from our experiments into general guidelines for reducing exposure to harm in recommendation graphs, 
especially under quality constraints.
To this end, we analyze the metadata associated with our rewirings. 
In particular, for each set of rewirings $(i,j,k)$ obtained in our experiments, 
we are interested in the channel resp. news outlet classes involved, 
as well as in the distributions of 
cost triples $(\labeling_i,\labeling_j,\labeling_k)$ 
and in-degree tuples $(\indegree{i},\indegree{j})$.
As exemplified in \cref{fig:yt-rewiringchannels-R1}, 
while we consistently rewire edges from harmful to benign targets in the quality-unconstrained setting ($\qualitythreshold=0.0$), 
under strict quality control ($\qualitythreshold=0.99$),
we frequently see rewirings from harmful to equally or more harmful targets.
More generally, as illustrated in \cref{fig:yt-rewiringtypes-B1},
the larger the threshold $\qualitythreshold$, 
the more we rewire \emph{among} harmful, resp.\ benign, nodes ($\labeling_i=\labeling_j=\labeling_k=1$, resp.\ $0$)---%
which \fabbrialg does not even allow.
Furthermore, the edges we rewire typically connect nodes with large in-degrees (\cref{apx:experiments:edgestats}, \cref{fig:indegree-sums}).
We conclude that a simplified strategy for reducing exposure to harm under quality constraints 
is to identify edges that connect high-cost nodes with large in-degrees, 
and rewire them to the node with the lowest exposure among all nodes meeting the quality constraints.

\section{Discussion and Conclusion}
\label{sec:conclusion}

We studied the problem of reducing the exposure to harmful content in recommendation graphs by edge rewiring. 
Modeling this exposure via absorbing random walks, 
we introduced \ourproblemtwo and \ourproblem as formalizations of the problem with and without quality constraints on recommendations. 
We proved that both problems are \NP-hard and \NP-hard to approximate to within an additive error, 
but that under mild assumptions, 
the greedy method provides a $(1-\nicefrac{1}{e})$-approximation for the \ourproblem problem.
Hence, we introduced \ourmethod, 
a greedy algorithm for \ourproblem and \ourproblemtwo running in linear time under realistic assumptions on the input, 
and we confirmed its effectiveness, robustness, and efficiency through extensive experiments on synthetic data 
as well as on real-world data from video recommendation and news feed applications.

Our work improves over the state of the art (\fabbrialg by \citeauthor{fabbri2022rewiring} \cite{fabbri2022rewiring}) in terms of performance, 
and it eliminates several limitations of prior work.
While \citeauthor{fabbri2022rewiring} \cite{fabbri2022rewiring} model benign nodes as \emph{absorbing} states and consider a brittle \emph{max}-objective that is minimized even by highly harm-exposing recommendation graphs, 
we model benign nodes as \emph{transient} states and consider a robust \emph{sum}-objective that captures the overall consumption of harmful content by users starting at any node in the graph. 
Whereas \fabbrialg can only handle \emph{binary} node labels, 
\ourmethod works with \emph{real-valued} node attributes, 
which permits a more nuanced encoding of harmfulness.

We see potential for future work in several directions.
For example, 
it would be interesting to adapt our objective to mitigate \emph{polarization}, 
i.e., the separation of content with opposing views, 
with positions modeled as positive and negative node costs.
Moreover, we currently assume that all nodes are equally likely as starting points of random walks, 
which is unrealistic in many applications. 
Finally, 
we observe that harm reduction in recommendation graphs has largely been studied in separation from harm reduction in other graphs representing consumption phenomena, 
such as user interaction graphs.
A framework for optimizing functions under budget constraints that includes edge rewirings, insertions, and deletions
could unify these research lines and facilitate future progress.

\vspace*{18pt}

\begin{acks}
This research is supported by 
the Academy of Finland project MLDB (325117), 
the ERC Advanced Grant REBOUND (834862), 
the EC H2020 RIA project SoBigData++ (871042), and 
the Wallenberg AI, Autonomous Systems and Software Program (WASP) funded by the Knut and Alice Wallenberg Foundation. 
\end{acks}

\clearpage

\bibliographystyle{ACM-Reference-Format}
\bibliography{bibliography.bib}


\begin{thebibliography}{54}


\ifx \showCODEN    \undefined \def \showCODEN     #1{\unskip}     \fi
\ifx \showDOI      \undefined \def \showDOI       #1{#1}\fi
\ifx \showISBNx    \undefined \def \showISBNx     #1{\unskip}     \fi
\ifx \showISBNxiii \undefined \def \showISBNxiii  #1{\unskip}     \fi
\ifx \showISSN     \undefined \def \showISSN      #1{\unskip}     \fi
\ifx \showLCCN     \undefined \def \showLCCN      #1{\unskip}     \fi
\ifx \shownote     \undefined \def \shownote      #1{#1}          \fi
\ifx \showarticletitle \undefined \def \showarticletitle #1{#1}   \fi
\ifx \showURL      \undefined \def \showURL       {\relax}        \fi
\providecommand\bibfield[2]{#2}
\providecommand\bibinfo[2]{#2}
\providecommand\natexlab[1]{#1}
\providecommand\showeprint[2][]{arXiv:#2}

\bibitem[Abebe et~al\mbox{.}(2021)]%
        {abebe2021opinion}
\bibfield{author}{\bibinfo{person}{Rediet Abebe}, \bibinfo{person}{T-H~Hubert
  Chan}, \bibinfo{person}{Jon Kleinberg}, \bibinfo{person}{Zhibin Liang},
  \bibinfo{person}{David Parkes}, \bibinfo{person}{Mauro Sozio}, {and}
  \bibinfo{person}{Charalampos~E Tsourakakis}.}
  \bibinfo{year}{2021}\natexlab{}.
\newblock \showarticletitle{Opinion dynamics optimization by varying
  susceptibility to persuasion via non-convex local search}.
\newblock \bibinfo{journal}{\emph{ACM Transactions on Knowledge Discovery from
  Data}} \bibinfo{volume}{16}, \bibinfo{number}{2} (\bibinfo{year}{2021}),
  \bibinfo{pages}{1--34}.
\newblock


\bibitem[Adriaens et~al\mbox{.}(2023)]%
        {adriaens2023minimizing}
\bibfield{author}{\bibinfo{person}{Florian Adriaens}, \bibinfo{person}{Honglian
  Wang}, {and} \bibinfo{person}{Aristides Gionis}.}
  \bibinfo{year}{2023}\natexlab{}.
\newblock \showarticletitle{Minimizing hitting time between disparate groups
  with shortcut edges}. In \bibinfo{booktitle}{\emph{Proceedings of the ACM
  International Conference on Knowledge Discovery and Data Mining (KDD)}}.
  \bibinfo{pages}{To appear}.
\newblock


\bibitem[Amelkin and Singh(2019)]%
        {amelkin2019fighting}
\bibfield{author}{\bibinfo{person}{Victor Amelkin} {and}
  \bibinfo{person}{Ambuj~K Singh}.} \bibinfo{year}{2019}\natexlab{}.
\newblock \showarticletitle{Fighting opinion control in social networks via
  link recommendation}. In \bibinfo{booktitle}{\emph{Proceedings of the ACM
  International Conference on Knowledge Discovery and Data Mining (KDD)}}.
  \bibinfo{pages}{677--685}.
\newblock


\bibitem[Bandy(2021)]%
        {bandy2021problematic}
\bibfield{author}{\bibinfo{person}{Jack Bandy}.}
  \bibinfo{year}{2021}\natexlab{}.
\newblock \showarticletitle{Problematic machine behavior: A systematic
  literature review of algorithm audits}. In
  \bibinfo{booktitle}{\emph{Proceedings of the ACM on Human-Computer
  Interaction (CHI)}}, Vol.~\bibinfo{volume}{5}. \bibinfo{pages}{1--34}.
\newblock


\bibitem[Chan and Akoglu(2016)]%
        {chan2016optimizing}
\bibfield{author}{\bibinfo{person}{Hau Chan} {and} \bibinfo{person}{Leman
  Akoglu}.} \bibinfo{year}{2016}\natexlab{}.
\newblock \showarticletitle{Optimizing network robustness by edge rewiring: a
  general framework}.
\newblock \bibinfo{journal}{\emph{Data Mining and Knowledge Discovery}}
  \bibinfo{volume}{30}, \bibinfo{number}{5} (\bibinfo{year}{2016}),
  \bibinfo{pages}{1395--1425}.
\newblock


\bibitem[Chitra and Musco(2020)]%
        {chitra2020analyzing}
\bibfield{author}{\bibinfo{person}{Uthsav Chitra} {and}
  \bibinfo{person}{Christopher Musco}.} \bibinfo{year}{2020}\natexlab{}.
\newblock \showarticletitle{Analyzing the impact of filter bubbles on social
  network polarization}. In \bibinfo{booktitle}{\emph{Proceedings of the ACM
  International Conference on Web Search and Data Mining}}.
  \bibinfo{pages}{115--123}.
\newblock


\bibitem[Costanza-Chock et~al\mbox{.}(2022)]%
        {costanza2022audits}
\bibfield{author}{\bibinfo{person}{Sasha Costanza-Chock},
  \bibinfo{person}{Inioluwa~Deborah Raji}, {and} \bibinfo{person}{Joy
  Buolamwini}.} \bibinfo{year}{2022}\natexlab{}.
\newblock \showarticletitle{Who Audits the Auditors? Recommendations from a
  field scan of the algorithmic auditing ecosystem}. In
  \bibinfo{booktitle}{\emph{Proceedings of the ACM Conference on Fairness,
  Accountability, and Transparency}}. \bibinfo{pages}{1571--1583}.
\newblock


\bibitem[Das et~al\mbox{.}(2014)]%
        {das2014modeling}
\bibfield{author}{\bibinfo{person}{Abhimanyu Das}, \bibinfo{person}{Sreenivas
  Gollapudi}, {and} \bibinfo{person}{Kamesh Munagala}.}
  \bibinfo{year}{2014}\natexlab{}.
\newblock \showarticletitle{Modeling opinion dynamics in social networks}. In
  \bibinfo{booktitle}{\emph{Proceedings of the ACM International Conference on
  Web Search and Data Mining}}. \bibinfo{pages}{403--412}.
\newblock


\bibitem[Erd{\H{o}}s and R{\'e}nyi(1959)]%
        {erdHos1959random}
\bibfield{author}{\bibinfo{person}{Paul Erd{\H{o}}s} {and}
  \bibinfo{person}{Alfr{\'e}d R{\'e}nyi}.} \bibinfo{year}{1959}\natexlab{}.
\newblock \showarticletitle{On random graphs I}.
\newblock \bibinfo{journal}{\emph{Publicationes Mathematicae}}
  \bibinfo{volume}{6}, \bibinfo{number}{1} (\bibinfo{year}{1959}),
  \bibinfo{pages}{290--297}.
\newblock


\bibitem[Fabbri et~al\mbox{.}(2022)]%
        {fabbri2022rewiring}
\bibfield{author}{\bibinfo{person}{Francesco Fabbri}, \bibinfo{person}{Yanhao
  Wang}, \bibinfo{person}{Francesco Bonchi}, \bibinfo{person}{Carlos Castillo},
  {and} \bibinfo{person}{Michael Mathioudakis}.}
  \bibinfo{year}{2022}\natexlab{}.
\newblock \showarticletitle{Rewiring what-to-watch-next recommendations to
  reduce radicalization pathways}. In \bibinfo{booktitle}{\emph{Proceedings of
  the ACM Web Conference}}. \bibinfo{pages}{2719--2728}.
\newblock


\bibitem[Feige(2003)]%
        {feige2003vertex}
\bibfield{author}{\bibinfo{person}{Uriel Feige}.}
  \bibinfo{year}{2003}\natexlab{}.
\newblock \bibinfo{title}{Vertex cover is hardest to approximate on regular
  graphs}.
\newblock \bibinfo{howpublished}{Technical Report MCS03--15}.
\newblock


\bibitem[Ferrara et~al\mbox{.}(2022)]%
        {ferrara2022link}
\bibfield{author}{\bibinfo{person}{Antonio Ferrara}, \bibinfo{person}{Lisette
  Esp{\'\i}n-Noboa}, \bibinfo{person}{Fariba Karimi}, {and}
  \bibinfo{person}{Claudia Wagner}.} \bibinfo{year}{2022}\natexlab{}.
\newblock \showarticletitle{Link recommendations: Their impact on network
  structure and minorities}. In \bibinfo{booktitle}{\emph{Proceedings of the
  ACM Web Science Conference}}. \bibinfo{pages}{228--238}.
\newblock


\bibitem[Garimella et~al\mbox{.}(2017)]%
        {garimella2017reducing}
\bibfield{author}{\bibinfo{person}{Kiran Garimella}, \bibinfo{person}{Gianmarco
  De~Francisci~Morales}, \bibinfo{person}{Aristides Gionis}, {and}
  \bibinfo{person}{Michael Mathioudakis}.} \bibinfo{year}{2017}\natexlab{}.
\newblock \showarticletitle{Reducing controversy by connecting opposing views}.
  In \bibinfo{booktitle}{\emph{Proceedings of the ACM International Conference
  on Web Search and Data Mining}}. \bibinfo{pages}{81--90}.
\newblock


\bibitem[Garimella et~al\mbox{.}(2018)]%
        {garimella2018quantifying}
\bibfield{author}{\bibinfo{person}{Kiran Garimella}, \bibinfo{person}{Gianmarco
  De~Francisci Morales}, \bibinfo{person}{Aristides Gionis}, {and}
  \bibinfo{person}{Michael Mathioudakis}.} \bibinfo{year}{2018}\natexlab{}.
\newblock \showarticletitle{Quantifying controversy on social media}.
\newblock \bibinfo{journal}{\emph{ACM Transactions on Social Computing}}
  \bibinfo{volume}{1}, \bibinfo{number}{1} (\bibinfo{year}{2018}),
  \bibinfo{pages}{1--27}.
\newblock


\bibitem[Gionis et~al\mbox{.}(2013)]%
        {gionis2013opinion}
\bibfield{author}{\bibinfo{person}{Aristides Gionis}, \bibinfo{person}{Evimaria
  Terzi}, {and} \bibinfo{person}{Panayiotis Tsaparas}.}
  \bibinfo{year}{2013}\natexlab{}.
\newblock \showarticletitle{Opinion maximization in social networks}. In
  \bibinfo{booktitle}{\emph{Proceedings of the SIAM International Conference on
  Data Mining (SDM)}}. \bibinfo{pages}{387--395}.
\newblock


\bibitem[Greenlaw and Petreschi(1995)]%
        {greenlaw1995cubic}
\bibfield{author}{\bibinfo{person}{Raymond Greenlaw} {and}
  \bibinfo{person}{Rossella Petreschi}.} \bibinfo{year}{1995}\natexlab{}.
\newblock \showarticletitle{Cubic graphs}.
\newblock \bibinfo{journal}{\emph{ACM Computing Surveys (CSUR)}}
  \bibinfo{volume}{27}, \bibinfo{number}{4} (\bibinfo{year}{1995}),
  \bibinfo{pages}{471--495}.
\newblock


\bibitem[Gruppi et~al\mbox{.}(2021)]%
        {gruppi2020nelagt2021}
\bibfield{author}{\bibinfo{person}{Mauricio Gruppi},
  \bibinfo{person}{Benjamin~D. Horne}, {and} \bibinfo{person}{Sibel Adali}.}
  \bibinfo{year}{2021}\natexlab{}.
\newblock \bibinfo{title}{NELA-GT-2021: A large multi-labelled news dataset for
  the study of misinformation in news articles}.
\newblock
\newblock
\showeprint[arxiv]{2203.05659}~[cs.CY]


\bibitem[Haddadan et~al\mbox{.}(2022)]%
        {haddadan2022reducing}
\bibfield{author}{\bibinfo{person}{Shahrzad Haddadan},
  \bibinfo{person}{Cristina Menghini}, \bibinfo{person}{Matteo Riondato}, {and}
  \bibinfo{person}{Eli Upfal}.} \bibinfo{year}{2022}\natexlab{}.
\newblock \showarticletitle{Reducing polarization and increasing diverse
  navigability in graphs by inserting edges and swapping edge weights}.
\newblock \bibinfo{journal}{\emph{Data Mining and Knowledge Discovery}}
  \bibinfo{volume}{36}, \bibinfo{number}{6} (\bibinfo{year}{2022}),
  \bibinfo{pages}{2334--2378}.
\newblock


\bibitem[Hosseinmardi et~al\mbox{.}(2021)]%
        {hosseinmardi2021examining}
\bibfield{author}{\bibinfo{person}{Homa Hosseinmardi}, \bibinfo{person}{Amir
  Ghasemian}, \bibinfo{person}{Aaron Clauset}, \bibinfo{person}{Markus Mobius},
  \bibinfo{person}{David~M Rothschild}, {and} \bibinfo{person}{Duncan~J
  Watts}.} \bibinfo{year}{2021}\natexlab{}.
\newblock \showarticletitle{Examining the consumption of radical content on
  YouTube}.
\newblock \bibinfo{journal}{\emph{Proceedings of the National Academy of
  Sciences}} \bibinfo{volume}{118}, \bibinfo{number}{32}
  (\bibinfo{year}{2021}), \bibinfo{pages}{e2101967118}.
\newblock


\bibitem[Hu et~al\mbox{.}(2008)]%
        {hu2008collaborative}
\bibfield{author}{\bibinfo{person}{Yifan Hu}, \bibinfo{person}{Yehuda Koren},
  {and} \bibinfo{person}{Chris Volinsky}.} \bibinfo{year}{2008}\natexlab{}.
\newblock \showarticletitle{Collaborative filtering for implicit feedback
  datasets}. In \bibinfo{booktitle}{\emph{Proceedings of the IEEE International
  Conference on Data Mining (ICDM)}}. \bibinfo{pages}{263--272}.
\newblock


\bibitem[Hussein et~al\mbox{.}(2020)]%
        {hussein2020measuring}
\bibfield{author}{\bibinfo{person}{Eslam Hussein}, \bibinfo{person}{Prerna
  Juneja}, {and} \bibinfo{person}{Tanushree Mitra}.}
  \bibinfo{year}{2020}\natexlab{}.
\newblock \showarticletitle{Measuring misinformation in video search platforms:
  An audit study on YouTube}.
\newblock \bibinfo{journal}{\emph{Proceedings of the ACM on Human-Computer
  Interaction (CHI)}} \bibinfo{volume}{4}, \bibinfo{number}{CSCW1}
  (\bibinfo{year}{2020}), \bibinfo{pages}{1--27}.
\newblock


\bibitem[{Il'ev}(2001)]%
        {ilev2001approximation}
\bibfield{author}{\bibinfo{person}{Victor~P {Il'ev}}.}
  \bibinfo{year}{2001}\natexlab{}.
\newblock \showarticletitle{An approximation guarantee of the greedy descent
  algorithm for minimizing a supermodular set function}.
\newblock \bibinfo{journal}{\emph{Discrete Applied Mathematics}}
  \bibinfo{volume}{114}, \bibinfo{number}{1-3} (\bibinfo{year}{2001}),
  \bibinfo{pages}{131--146}.
\newblock


\bibitem[J{\"a}rvelin and Kek{\"a}l{\"a}inen(2002)]%
        {jarvelin2002cumulated}
\bibfield{author}{\bibinfo{person}{Kalervo J{\"a}rvelin} {and}
  \bibinfo{person}{Jaana Kek{\"a}l{\"a}inen}.} \bibinfo{year}{2002}\natexlab{}.
\newblock \showarticletitle{Cumulated gain-based evaluation of IR techniques}.
\newblock \bibinfo{journal}{\emph{ACM Transactions on Information Systems
  (TOIS)}} \bibinfo{volume}{20}, \bibinfo{number}{4} (\bibinfo{year}{2002}),
  \bibinfo{pages}{422--446}.
\newblock


\bibitem[Khot(2002)]%
        {khot2002power}
\bibfield{author}{\bibinfo{person}{Subhash Khot}.}
  \bibinfo{year}{2002}\natexlab{}.
\newblock \showarticletitle{On the power of unique 2-prover 1-round games}. In
  \bibinfo{booktitle}{\emph{Proceedings of the Annual ACM Symposium on Theory
  of Computing}}. \bibinfo{pages}{767--775}.
\newblock


\bibitem[Khot and Regev(2008)]%
        {khot2008vertex}
\bibfield{author}{\bibinfo{person}{Subhash Khot} {and} \bibinfo{person}{Oded
  Regev}.} \bibinfo{year}{2008}\natexlab{}.
\newblock \showarticletitle{Vertex cover might be hard to approximate to within
  $2-\varepsilon$}.
\newblock \bibinfo{journal}{\emph{J. Comput. System Sci.}}
  \bibinfo{volume}{74}, \bibinfo{number}{3} (\bibinfo{year}{2008}),
  \bibinfo{pages}{335--349}.
\newblock


\bibitem[Kirdemir and Agarwal(2022)]%
        {kirdemir2022exploring}
\bibfield{author}{\bibinfo{person}{Baris Kirdemir} {and} \bibinfo{person}{Nitin
  Agarwal}.} \bibinfo{year}{2022}\natexlab{}.
\newblock \showarticletitle{Exploring bias and information bubbles in YouTube's
  video recommendation networks}. In \bibinfo{booktitle}{\emph{Proceedings of
  the International Conference on Complex Networks and Their Applications}}.
  \bibinfo{pages}{166--177}.
\newblock


\bibitem[Ledwich and Zaitsev(2020)]%
        {ledwich2020algorithmic}
\bibfield{author}{\bibinfo{person}{Mark Ledwich} {and} \bibinfo{person}{Anna
  Zaitsev}.} \bibinfo{year}{2020}\natexlab{}.
\newblock \showarticletitle{Algorithmic extremism: Examining YouTube's rabbit
  hole of radicalization}.
\newblock \bibinfo{journal}{\emph{First Monday}} (\bibinfo{year}{2020}).
\newblock


\bibitem[Ledwich et~al\mbox{.}(2022)]%
        {ledwich2022radical}
\bibfield{author}{\bibinfo{person}{Mark Ledwich}, \bibinfo{person}{Anna
  Zaitsev}, {and} \bibinfo{person}{Anton Laukemper}.}
  \bibinfo{year}{2022}\natexlab{}.
\newblock \showarticletitle{Radical bubbles on YouTube? Revisiting algorithmic
  extremism with personalised recommendations}.
\newblock \bibinfo{journal}{\emph{First Monday}} (\bibinfo{year}{2022}).
\newblock


\bibitem[Mami{\'e} et~al\mbox{.}(2021)]%
        {mamie2021anti}
\bibfield{author}{\bibinfo{person}{Robin Mami{\'e}}, \bibinfo{person}{Manoel
  Horta~Ribeiro}, {and} \bibinfo{person}{Robert West}.}
  \bibinfo{year}{2021}\natexlab{}.
\newblock \showarticletitle{Are anti-feminist communities gateways to the far
  right? Evidence from Reddit and YouTube}. In
  \bibinfo{booktitle}{\emph{Proceedings of the ACM Web Science Conference}}.
  \bibinfo{pages}{139--147}.
\newblock


\bibitem[Mavroforakis et~al\mbox{.}(2015)]%
        {mavroforakis2015absorbing}
\bibfield{author}{\bibinfo{person}{Charalampos Mavroforakis},
  \bibinfo{person}{Michael Mathioudakis}, {and} \bibinfo{person}{Aristides
  Gionis}.} \bibinfo{year}{2015}\natexlab{}.
\newblock \showarticletitle{Absorbing random-walk centrality: Theory and
  algorithms}. In \bibinfo{booktitle}{\emph{Proceedings of the IEEE
  International Conference on Data Mining (ICDM)}}. \bibinfo{pages}{901--906}.
\newblock


\bibitem[McPherson et~al\mbox{.}(2001)]%
        {mcpherson2001birds}
\bibfield{author}{\bibinfo{person}{Miller McPherson}, \bibinfo{person}{Lynn
  Smith-Lovin}, {and} \bibinfo{person}{James~M Cook}.}
  \bibinfo{year}{2001}\natexlab{}.
\newblock \showarticletitle{Birds of a feather: Homophily in social networks}.
\newblock \bibinfo{journal}{\emph{Annual Review of Sociology}}
  (\bibinfo{year}{2001}), \bibinfo{pages}{415--444}.
\newblock


\bibitem[Medya et~al\mbox{.}(2018)]%
        {medya2018group}
\bibfield{author}{\bibinfo{person}{Sourav Medya}, \bibinfo{person}{Arlei
  Silva}, \bibinfo{person}{Ambuj Singh}, \bibinfo{person}{Prithwish Basu},
  {and} \bibinfo{person}{Ananthram Swami}.} \bibinfo{year}{2018}\natexlab{}.
\newblock \showarticletitle{Group centrality maximization via network design}.
  In \bibinfo{booktitle}{\emph{Proceedings of the SIAM International Conference
  on Data Mining (SDM)}}. \bibinfo{pages}{126--134}.
\newblock


\bibitem[Minici et~al\mbox{.}(2022)]%
        {minici2022cascade}
\bibfield{author}{\bibinfo{person}{Marco Minici}, \bibinfo{person}{Federico
  Cinus}, \bibinfo{person}{Corrado Monti}, \bibinfo{person}{Francesco Bonchi},
  {and} \bibinfo{person}{Giuseppe Manco}.} \bibinfo{year}{2022}\natexlab{}.
\newblock \showarticletitle{Cascade-based echo chamber detection}. In
  \bibinfo{booktitle}{\emph{Proceedings of the ACM International Conference on
  Information and Knowledge Management (CIKM)}}. \bibinfo{pages}{1511--1520}.
\newblock


\bibitem[Nemhauser et~al\mbox{.}(1978)]%
        {nemhauser1978analysis}
\bibfield{author}{\bibinfo{person}{George~L Nemhauser},
  \bibinfo{person}{Laurence~A Wolsey}, {and} \bibinfo{person}{Marshall~L
  Fisher}.} \bibinfo{year}{1978}\natexlab{}.
\newblock \showarticletitle{An analysis of approximations for maximizing
  submodular set functions---I}.
\newblock \bibinfo{journal}{\emph{Mathematical Programming}}
  \bibinfo{volume}{14}, \bibinfo{number}{1} (\bibinfo{year}{1978}),
  \bibinfo{pages}{265--294}.
\newblock


\bibitem[Oettershagen et~al\mbox{.}(2022)]%
        {oettershagen2022temporal}
\bibfield{author}{\bibinfo{person}{Lutz Oettershagen}, \bibinfo{person}{Petra
  Mutzel}, {and} \bibinfo{person}{Nils~M Kriege}.}
  \bibinfo{year}{2022}\natexlab{}.
\newblock \showarticletitle{Temporal walk centrality: Ranking nodes in evolving
  networks}. In \bibinfo{booktitle}{\emph{Proceedings of the ACM Web
  Conference}}. \bibinfo{pages}{1640--1650}.
\newblock


\bibitem[Papadamou et~al\mbox{.}(2022)]%
        {papadamou2022just}
\bibfield{author}{\bibinfo{person}{Kostantinos Papadamou},
  \bibinfo{person}{Savvas Zannettou}, \bibinfo{person}{Jeremy Blackburn},
  \bibinfo{person}{Emiliano De~Cristofaro}, \bibinfo{person}{Gianluca
  Stringhini}, {and} \bibinfo{person}{Michael Sirivianos}.}
  \bibinfo{year}{2022}\natexlab{}.
\newblock \showarticletitle{''It is just a flu'': Assessing the effect of watch
  history on YouTube's pseudoscientific video recommendations}. In
  \bibinfo{booktitle}{\emph{Proceedings of the International AAAI Conference on
  Web and Social Media}}. \bibinfo{pages}{723--734}.
\newblock


\bibitem[Parotsidis et~al\mbox{.}(2015)]%
        {parotsidis2015selecting}
\bibfield{author}{\bibinfo{person}{Nikos Parotsidis},
  \bibinfo{person}{Evaggelia Pitoura}, {and} \bibinfo{person}{Panayiotis
  Tsaparas}.} \bibinfo{year}{2015}\natexlab{}.
\newblock \showarticletitle{Selecting shortcuts for a smaller world}. In
  \bibinfo{booktitle}{\emph{Proceedings of the SIAM International Conference on
  Data Mining (SDM)}}. \bibinfo{pages}{28--36}.
\newblock


\bibitem[Paudel and Bernstein(2021)]%
        {paudel2021random}
\bibfield{author}{\bibinfo{person}{Bibek Paudel} {and} \bibinfo{person}{Abraham
  Bernstein}.} \bibinfo{year}{2021}\natexlab{}.
\newblock \showarticletitle{Random walks with erasure: Diversifying
  personalized recommendations on social and information networks}. In
  \bibinfo{booktitle}{\emph{Proceedings of the ACM Web Conference}}.
  \bibinfo{pages}{2046--2057}.
\newblock


\bibitem[Pescetelli et~al\mbox{.}(2022)]%
        {pescetelli2022bots}
\bibfield{author}{\bibinfo{person}{Niccolo Pescetelli}, \bibinfo{person}{Daniel
  Barkoczi}, {and} \bibinfo{person}{Manuel Cebrian}.}
  \bibinfo{year}{2022}\natexlab{}.
\newblock \showarticletitle{Bots influence opinion dynamics without direct
  human-bot interaction: The mediating role of recommender systems}.
\newblock \bibinfo{journal}{\emph{Applied Network Science}}
  \bibinfo{volume}{7}, \bibinfo{number}{1} (\bibinfo{year}{2022}),
  \bibinfo{pages}{1--19}.
\newblock


\bibitem[Reimers and Gurevych(2019)]%
        {reimers2019sentence}
\bibfield{author}{\bibinfo{person}{Nils Reimers} {and} \bibinfo{person}{Iryna
  Gurevych}.} \bibinfo{year}{2019}\natexlab{}.
\newblock \showarticletitle{Sentence-{BERT}: Sentence embeddings using siamese
  {BERT}-networks}. \bibinfo{pages}{3982--3992}.
\newblock


\bibitem[Ribeiro et~al\mbox{.}(2020)]%
        {ribeiro2020auditing}
\bibfield{author}{\bibinfo{person}{Manoel~Horta Ribeiro},
  \bibinfo{person}{Raphael Ottoni}, \bibinfo{person}{Robert West},
  \bibinfo{person}{Virg{\'\i}lio~AF Almeida}, {and} \bibinfo{person}{Wagner
  Meira~Jr}.} \bibinfo{year}{2020}\natexlab{}.
\newblock \showarticletitle{Auditing radicalization pathways on YouTube}. In
  \bibinfo{booktitle}{\emph{Proceedings of the ACM Conference on Fairness,
  Accountability, and Transparency}}. \bibinfo{pages}{131--141}.
\newblock


\bibitem[Robertson et~al\mbox{.}(2018)]%
        {robertson2018auditing}
\bibfield{author}{\bibinfo{person}{Ronald~E Robertson}, \bibinfo{person}{Shan
  Jiang}, \bibinfo{person}{Kenneth Joseph}, \bibinfo{person}{Lisa Friedland},
  \bibinfo{person}{David Lazer}, {and} \bibinfo{person}{Christo Wilson}.}
  \bibinfo{year}{2018}\natexlab{}.
\newblock \showarticletitle{Auditing partisan audience bias within google
  search}. In \bibinfo{booktitle}{\emph{Proceedings of the ACM on
  Human-Computer Interaction (CHI)}}, Vol.~\bibinfo{volume}{2}.
  \bibinfo{pages}{1--22}.
\newblock


\bibitem[Sherman and Morrison(1950)]%
        {sherman1950adjustment}
\bibfield{author}{\bibinfo{person}{Jack Sherman} {and}
  \bibinfo{person}{Winifred~J Morrison}.} \bibinfo{year}{1950}\natexlab{}.
\newblock \showarticletitle{Adjustment of an inverse matrix corresponding to a
  change in one element of a given matrix}.
\newblock \bibinfo{journal}{\emph{The Annals of Mathematical Statistics}}
  \bibinfo{volume}{21}, \bibinfo{number}{1} (\bibinfo{year}{1950}),
  \bibinfo{pages}{124--127}.
\newblock


\bibitem[Spinelli and Crovella(2017)]%
        {spinelli2017closed}
\bibfield{author}{\bibinfo{person}{Larissa Spinelli} {and}
  \bibinfo{person}{Mark Crovella}.} \bibinfo{year}{2017}\natexlab{}.
\newblock \showarticletitle{Closed-loop opinion formation}. In
  \bibinfo{booktitle}{\emph{Proceedings of the ACM Web Science Conference}}.
  \bibinfo{pages}{73--82}.
\newblock


\bibitem[Srba et~al\mbox{.}(2022)]%
        {srba2022auditing}
\bibfield{author}{\bibinfo{person}{Ivan Srba}, \bibinfo{person}{Robert Moro},
  \bibinfo{person}{Matus Tomlein}, \bibinfo{person}{Branislav Pecher},
  \bibinfo{person}{Jakub Simko}, \bibinfo{person}{Elena Stefancova},
  \bibinfo{person}{Michal Kompan}, \bibinfo{person}{Andrea Hrckova},
  \bibinfo{person}{Juraj Podrouzek}, \bibinfo{person}{Adrian Gavornik},
  {et~al\mbox{.}}} \bibinfo{year}{2022}\natexlab{}.
\newblock \showarticletitle{Auditing YouTube's recommendation algorithm for
  misinformation filter bubbles}.
\newblock \bibinfo{journal}{\emph{ACM Transactions on Recommender Systems}}
  (\bibinfo{year}{2022}).
\newblock


\bibitem[Tsioutsiouliklis et~al\mbox{.}(2022)]%
        {tsioutsiouliklis2022link}
\bibfield{author}{\bibinfo{person}{Sotiris Tsioutsiouliklis},
  \bibinfo{person}{Evaggelia Pitoura}, \bibinfo{person}{Konstantinos
  Semertzidis}, {and} \bibinfo{person}{Panayiotis Tsaparas}.}
  \bibinfo{year}{2022}\natexlab{}.
\newblock \showarticletitle{Link recommendations for PageRank fairness}. In
  \bibinfo{booktitle}{\emph{Proceedings of the ACM Web Conference}}.
  \bibinfo{pages}{3541--3551}.
\newblock


\bibitem[Vendeville et~al\mbox{.}(2023)]%
        {vendeville2023opening}
\bibfield{author}{\bibinfo{person}{Antoine Vendeville},
  \bibinfo{person}{Anastasios Giovanidis}, \bibinfo{person}{Effrosyni
  Papanastasiou}, {and} \bibinfo{person}{Benjamin Guedj}.}
  \bibinfo{year}{2023}\natexlab{}.
\newblock \showarticletitle{Opening up echo chambers via optimal content
  recommendation}. In \bibinfo{booktitle}{\emph{Proceedings of the
  International Conference on Complex Networks and Their Applications}}.
  \bibinfo{pages}{74--85}.
\newblock


\bibitem[W{\k{a}}s et~al\mbox{.}(2019)]%
        {wkas2019random}
\bibfield{author}{\bibinfo{person}{Tomasz W{\k{a}}s}, \bibinfo{person}{Talal
  Rahwan}, {and} \bibinfo{person}{Oskar Skibski}.}
  \bibinfo{year}{2019}\natexlab{}.
\newblock \showarticletitle{Random walk decay centrality}. In
  \bibinfo{booktitle}{\emph{Proceedings of the AAAI Conference on Artificial
  Intelligence (AAAI)}}. \bibinfo{pages}{2197--2204}.
\newblock


\bibitem[Whittaker et~al\mbox{.}(2021)]%
        {whittaker2021recommender}
\bibfield{author}{\bibinfo{person}{Joe Whittaker}, \bibinfo{person}{Se{\'a}n
  Looney}, \bibinfo{person}{Alastair Reed}, {and} \bibinfo{person}{Fabio
  Votta}.} \bibinfo{year}{2021}\natexlab{}.
\newblock \showarticletitle{Recommender systems and the amplification of
  extremist content}.
\newblock \bibinfo{journal}{\emph{Internet Policy Review}}
  \bibinfo{volume}{10}, \bibinfo{number}{2} (\bibinfo{year}{2021}),
  \bibinfo{pages}{1--29}.
\newblock


\bibitem[Yesilada and Lewandowsky(2022)]%
        {yesilada2022systematic}
\bibfield{author}{\bibinfo{person}{Muhsin Yesilada} {and}
  \bibinfo{person}{Stephan Lewandowsky}.} \bibinfo{year}{2022}\natexlab{}.
\newblock \showarticletitle{Systematic review: YouTube recommendations and
  problematic content}.
\newblock \bibinfo{journal}{\emph{Internet Policy Review}}
  \bibinfo{volume}{11}, \bibinfo{number}{1} (\bibinfo{year}{2022}),
  \bibinfo{pages}{1--22}.
\newblock


\bibitem[Yin et~al\mbox{.}(2010)]%
        {yin2010unified}
\bibfield{author}{\bibinfo{person}{Zhijun Yin}, \bibinfo{person}{Manish Gupta},
  \bibinfo{person}{Tim Weninger}, {and} \bibinfo{person}{Jiawei Han}.}
  \bibinfo{year}{2010}\natexlab{}.
\newblock \showarticletitle{A unified framework for link recommendation using
  random walks}. In \bibinfo{booktitle}{\emph{Proceedings of the IEEE/ACM
  International Conference on Advances in Social Networks Analysis and Mining
  (ASONAM)}}. \bibinfo{pages}{152--159}.
\newblock


\bibitem[Zhang et~al\mbox{.}(2022)]%
        {zhang2022fast}
\bibfield{author}{\bibinfo{person}{Xiaojuan Zhang}, \bibinfo{person}{Qian Liu},
  \bibinfo{person}{Min Li}, {and} \bibinfo{person}{Yang Zhou}.}
  \bibinfo{year}{2022}\natexlab{}.
\newblock \showarticletitle{Fast algorithms for supermodular and
  non-supermodular minimization via bi-criteria strategy}.
\newblock \bibinfo{journal}{\emph{Journal of Combinatorial Optimization}}
  \bibinfo{volume}{44}, \bibinfo{number}{5} (\bibinfo{year}{2022}),
  \bibinfo{pages}{3549--3574}.
\newblock


\bibitem[Zhu et~al\mbox{.}(2021)]%
        {zhu2021minimizing}
\bibfield{author}{\bibinfo{person}{Liwang Zhu}, \bibinfo{person}{Qi Bao}, {and}
  \bibinfo{person}{Zhongzhi Zhang}.} \bibinfo{year}{2021}\natexlab{}.
\newblock \showarticletitle{Minimizing polarization and disagreement in social
  networks via link recommendation}.
\newblock \bibinfo{journal}{\emph{Advances in Neural Information Processing
  Systems}}, \bibinfo{pages}{2072--2084}.
\newblock


\bibitem[Zhu and Zhang(2022)]%
        {zhu2022nearly}
\bibfield{author}{\bibinfo{person}{Liwang Zhu} {and} \bibinfo{person}{Zhongzhi
  Zhang}.} \bibinfo{year}{2022}\natexlab{}.
\newblock \showarticletitle{A nearly-linear time algorithm for minimizing risk
  of conflict in social networks}. In \bibinfo{booktitle}{\emph{Proceedings of
  the ACM International Conference on Knowledge Discovery and Data Mining
  (KDD)}}. \bibinfo{pages}{2648--2656}.
\newblock


\end{thebibliography}

\appendix

\pdfbookmark[section]{Ethics Statement}{ethicsmark}
\section*{Ethics Statement}
\label{apx:ethics}

In this work, we introduce \ourmethod, 
a method to reduce the exposure to harm induced by recommendation algorithms on digital media platforms via edge rewiring, 
i.e., replacing certain recommendations by others. 
While removing harm-inducing recommendations constitutes a milder intervention than censoring content directly, 
it still steers attention away from certain content to other content, 
which, if pushed to the extreme, can have censorship-like effects.
Although in its intended usage, 
\ourmethod primarily counteracts the tendency of recommendation algorithms to overexpose harmful content as similar to other harmful content, 
when fed with a contrived cost function, 
it could also be used to discriminate against content considered undesirable for problematic reasons 
(e.g., due to political biases or stereotypes against minorities).
However, 
as the changes to recommendations suggested by \ourmethod could also be made by amending recommendation algorithms directly,
the risk of \emph{intentional} abuse is no greater than that inherent in the recommendation algorithms themselves, 
and \emph{unintentional} abuse can be prevented by rigorous impact assessments and cost function audits before and during deployment. 
Thus, we are confident that overall, \ourmethod can contribute to the health of digital platforms.

\pdfbookmark[section]{Appendix}{appendixmark}
\section*{Appendix}

In addition to \cref{tab:notation}, included below,
the written appendix to this work contains the following sections:
\begin{enumerate}[label=\Alph*,align=left,labelwidth=7.5pt,leftmargin=15pt]
	\item Omitted proofs
	\item Other graph edits
	\item Omitted pseudocode
	\item Reproducibility information
	\item Dataset information 
	\item Further experiments
\end{enumerate}
This appendix, along with the main paper, 
is available on arXiv and also deposited at the following DOI: 
\href{https://doi.org/10.5281/zenodo.8002980}{10.5281/zenodo.8002980}. 
To facilitate reproducibility, 
all code, data, and results are made available at the following DOI: \oururl.
\balance

\begin{table*}[t]	
	\centering\small
	\caption{Most important notation used in this work.}\label{tab:notation}\pdfbookmark[section]{Notation}{notationmark}
\begin{tabular}{r@{\hskip 3pt}c@{\hskip 3pt}ll}
	\toprule
	\bfseries Symbol & &\bfseries Definition & \bfseries Description\\
	\midrule
	\multicolumn{4}{c}{\textsc{Graph Notation}}\\
	\midrule
	$\graph$&$=$&$(\nodes,\edges)$&Graph\\
	$\nnodes$&$=$&$\cardinality{\nodes}$&Number of nodes\\
	$\nedges$&$=$&$\cardinality{\edges}$&Number of edges\\
	$\indegree{i}$&$=$&$\cardinality{\{j\mid (j,i)\in \edges\}}$&In-degree of node $i$\\
	$\outneighbors{i}$&$=$&$\{j\mid(i,j)\in\edges\}$&Set of out-neighbors of node $i$\\
	$\outdegree{i}$&$=$&$\cardinality{\outneighbors{i}}$&Out-degree of node $i$\\
	$\outregulardegree$&&&Regular out-degree of an out-regular graph\\
	$\maxoutdegree$&$=$&$\max\{\outdegree{i}\mid i\in\nodes\}$&Maximum out-degree\\
	$\safenodes$&$=$&$\{i\in\nodes \mid \unitvector_i^T\fundamental\labelingvector = 0\}$&Set of safe nodes\\
	$\unsafenodes$&$=$&$\{i\in\nodes \mid \unitvector_i^T\fundamental\labelingvector > 0\}$&Set of unsafe nodes\\
	$\maxunsafeoutdegree$&$=$&$\max\{\outdegree{i}\mid i\in\unsafenodes\}$&Maximum out-degree of an unsafe node\\
	\midrule
	\multicolumn{4}{c}{\textsc{Matrix Notation}}\\
	\midrule
	$\somematrix[i,j]$&&&Element in row $i$, column $j$ of $\somematrix$\\
	$\somematrix[i,:]$&&&Row $i$ of $\somematrix$\\
	$\somematrix[:,j]$&&&Column $j$ of $\somematrix$\\
	$\unitvector_i$&&&$i$-th unit vector\\
	$\onevector$&&&All-ones vector\\
	$\identity$&&&Identity matrix\\
	$\norm{\somematrix}_\infty$&$=$&$\max_i\sum_{j=0}^{\nnodes}\somematrix[i,j]$&Infinity norm\\
	\midrule
	\multicolumn{4}{c}{\textsc{Notation for \ourproblem and \ourproblemtwo}}\\
	\midrule
	$\rewiring{i}{j}{k}$&&&Rewiring replacing $\edge{i}{j}\in\edges$ by $\edge{i}{k}\notin\edges$ with $\probability{ik} = \probability{ij}$, cf.~\cref{tab:rewiring}\\
	$\budget$&$\in$&$\naturals$&Rewiring budget\\
	$\pabsorption$&$\in$&$(0,1]$&Random-walk absorption probability\\
	$\probability{ij}$&$\in$&$(0,1-\pabsorption]$&Probability of traversing $(i,j)$ from $i$\\
	$\transitionmatrix$&$\in$&$[0,1-\pabsorption]^{\nnodes\times\nnodes}$&Random-walk transition matrix\\
	$\fundamental$&$=$&$\sum_{i=0}^{\infty}\transitionmatrix^i = (\identity-\transitionmatrix)^{-1}$&Fundamental matrix\\
	$\labeling$&&&Cost function with range $\range$\\
	$\labeling_i$&$\in$&$\range$&Cost associated with node $i$\\
	$\labelingvector$&$\in$&$\range^{\nnodes}$&Vector of node costs\\
	$\niter$&$\in$&$\naturals$&Number of power iterations\\
	$\probabilityshape$&$\in$&$\{\mathbf{U},\mathbf{S}\}$&Shape of probability distribution over the out-edges of a node\\
	\midrule
	\multicolumn{4}{c}{\textsc{Notation for \ourproblemtwo Only}}\\
	\midrule
	$\relevancematrix$&$\in$&$\reals_{\geq 0}^{\nnodes\times\nnodes}$&Relevance matrix\\
	$\qualityf$&&&Relevance function with range $\range$\\
	$\qualitythreshold$&$\in$&$\range$&Quality threshold\\
	$\outseq{i}$&$\in$&$\nodes^{\outdegree{i}}$&Relevance-ordered targets of out-edges of $i$\\
	$\rank_i(j)$&&&Relevance rank of node $j$ for node $i$\\
	$\topranked_{\outdeg}\!(i)$&$=$&$\{j\mid \rank_i(j)\leq \outdegree{i}\}$&Set of the $\outdegree{i}$ nodes most relevant for node $i$\\
	$\dcg$&$=$&$\sum_{j\in\outneighbors{i}} \frac{\relevancematrix[i,j]}{\log_2(1 + \rank_i(j))}$&Discounted Cumulative Gain\\
	$\idcg$&$=$&$\sum_{j\in\topranked_{\outdeg}\!(i)} \frac{\relevancematrix[i,j]}{\log_2(1 + \rank_i(j))}$&Ideal Discounted Cumulative Gain\\
	$\ndcg$&$=$&$\frac{\dcg(i)}{\idcg(i)}$&Normalized Discounted Cumulative Gain\\
	\midrule
	\multicolumn{4}{c}{\textsc{Notation Related to the Exposure Function $\objectivef$ and its Analysis}}\\
	\midrule
	$\objective{\graph}$&$=$&$\onevector^T\fundamental\labeling$&Exposure function (minimization objective)\\
	$\maxobjective{\graph,\graph_\budget}$&$=$&$\objective{\graph}-\objective{\graph_\budget}$&Reduction-in-exposure function (equivalent maximization objective)\\
	$\graph'$, $\transitionmatrix'$, $\fundamental'$&&&Graph $\graph$, transition matrix $\transitionmatrix$, fundamental matrix $\fundamental$, as updated by rewiring $\rewiring{i}{j}{k}$, cf.~\cref{tab:rewiring}\\
	$\ivector$&$=$&$\probability{ij}\unitvector_i$&Vector capturing the source $i$ of a rewiring $\rewiring{i}{j}{k}$ and the traversal probability of $\edge{i}{j}$\\
	$\jkvector$&$=$&$\unitvector_j -\unitvector_k$&Vector capturing the old target $j$ and the new target $k$ of a rewiring $\rewiring{i}{j}{k}$\\
	$\sigma$&$=$&$\onevector^T\fundamental\ivector$&$\probability{ij}$-scaled $i$-th column sum\\
	$\tau$&$=$&$\jkvector^T\fundamental\labelingvector$&$\labelingvector$-scaled sum of differences between the $j$-th row sum and the $k$-th row sum\\
	$\rho$&$=$&$1+\jkvector^T\fundamental\ivector$&Normalization factor ensuring that $\fundamental'\onevector=\fundamental\onevector$\\
	$\Delta$&$=$&$\maxobjective{\graph,\graph'} = \nicefrac{\sigma\tau}{\rho}$&Reduction of $\objectivef$ obtained by a single rewiring $\rewiring{i}{j}{k}$\\
	$\heuristic$&$=$&$\Delta\rho = \sigma\tau$&Heuristic for $\Delta$\\
	\bottomrule
\end{tabular}\vspace*{-20em}
\end{table*}

\balance

\clearpage

\section*{\huge Appendix}

In this appendix, 
we present the proofs omitted in the main paper (\cref{apx:hardness}), 
discuss alternative graph edit operations (\cref{apx:edits}), 
and state the pseudocode for \ourmethod as well as the algorithms leading up to it (\cref{apx:pseudocode}).  
We also provide further reproducibility information (\cref{apx:reproducibility}), 
more details on our datasets (\cref{apx:datasets}), 
and additional experimental results (\cref{apx:experiments}).

\section{Omitted Proofs}
\label{apx:hardness}

In this section, we provide the full proofs of our hardness results for \ourproblem, which carry over to \ourproblemtwo: 
NP-hardness (\cref{thm:hardness}) and hardness of approximation (\cref{thm:apxhardness}). 
We further give the complete proofs of the submodularity of \maxobjectivef (\cref{lem:submoddupes,thm:submodularity}) 
and of the mathematical structure in $\Delta$ (\cref{lem:deltacomponents}).

\subsection{\NP-Hardness  of \ourproblem}
\label{apx:hardness:np}

\hardness*
\begin{proof}
	We reduce from minimum vertex cover for undirected cubic, i.e., 3-regular graphs (MVC-3), 
	which is known to be NP-hard \cite{greenlaw1995cubic}. 
	To this end, we transform an instance of MVC-3 into an instance of \ourproblem with a directed, 3-out-regular input graph (\ourproblem-3 instance) as follows.
	From a cubic undirected graph $\graph' = (\nodes', \edges')$ with $\nnodes' = \cardinality{\nodes'}$ and $\nedges' = \cardinality{\edges'} = \nicefrac{3\nnodes'}{2}$, 
	we construct our directed \ourproblem-3 instance 
	by defining a graph $\graph = (\nodes, \edges)$ with $\nnodes = \cardinality{\nodes} = 2\nnodes' + 4$ nodes and $\nedges = \cardinality{\edges} = 6\nnodes' + 12$ edges such that
	\begin{align*}
		\nodes 
		=~& \nodes' 
		\cup \overline{\nodes'} 
		\cup \safenodes\;,~\text{for}~\overline{\nodes'} = \{b_i\mid i\in\nodes'\}\;,~\safenodes = \{g_1, g_2, g_3, g_4\}\;,\\
		\edges 
		=~&\{(i,b_i) \mid i \in \nodes'\}
		\cup \{(b_i,j)\mid \{i,j\}\in \edges'\} \\
		&
		\cup \{(g_i,g_j)\in \safenodes\times \safenodes\mid i\neq j\}
		\cup \{(i,g_x)\mid i \in \nodes', x\in \{1,2\}\}
		\;,\\
		\transitionmatrix[x,y] =~& \frac{1 - \pabsorption}{\outdegree{x}} = \frac{1-\pabsorption}{3}\;,~\text{and}~\labeling_x = \begin{cases}
			1&x\in \overline{\nodes'} \\
			0&\text{otherwise}\;.
		\end{cases}
	\end{align*}
	That is, for each node $i\in\nodes'$, we introduce a node $i\in\nodes$ with $\labeling_i = 0$, a companion node $b_i\in\nodes$ with $\labeling_{b_i} = 1$, and an edge $(i,b_i)\in\edges$ in~$\graph$. 
	We then encode the original edge set implicitly by defining two edges $(b_i,j)\in\edges$ and $(b_j,i)\in \edges$ for each edge $\{i,j\} \in \edges'$. 
	Finally, we add a complete 3-out-regular graph of zero-cost nodes and connect each node representing a node from $\nodes'$ to the first two nodes of that graph. 
	
	Intuitively,
	the edges $\{(i,b_i)\mid i\in\nodes'\}$ will be our prime candidates for rewiring---%
	and rewiring an edge $(i,b_i)$ in \ourproblem-3 will correspond to selecting node $i$ into the vertex cover of the original MVC-3 instance.
	The implicit encoding of the original edge set introduces the asymmetry necessary to tell from the value of our objective function 
	if an optimal $\budget$-rewiring of $\graph$ corresponds to a vertex cover of cardinality $\budget$ in $\graph'$.
	Adding a complete 3-out-regular graph of zero-cost nodes gives us a strongly connected safe component $\safenodes$ of nodes as rewiring targets, and it ensures that $\graph$ is 3-out-regular. 
	The entire transformation is visualized in \cref{fig:hardness}.
	
		\begin{figure}[t]
		\begin{subfigure}{\linewidth}
			\centering

\begin{tikzpicture}[thick]
	\node[minimum height = 2em,minimum width = 2em,draw,circle] at (0, 0)   (i) {$i$};
	\node[minimum height = 2em,minimum width = 2em,draw,circle] at (1.5, 0)   (j) {$j$};
	\node[minimum height = 2em,minimum width = 2em,draw,circle] at (2.25, -1)   (k) {$k$};
	\node[minimum height = 2em,minimum width = 2em,draw,circle] at (1.5, -2)   (x) {$x$};
	\node[minimum height = 2em,minimum width = 2em,draw,circle] at (0, -2)   (y) {$y$};
	\node[minimum height = 2em,minimum width = 2em,draw,circle] at (-0.75, -1)   (z) {$z$};
	
	\draw (i) -- (j) -- (k) -- (z) -- (x) -- (y) ;
	\draw (j) -- (y) -- (z); 
	\draw (k) -- (i) -- (x);
\end{tikzpicture}
			\vspace*{4.5pt}
			\subcaption{Toy MVC-3 instance $\graph' = (\nodes',\edges')$}\label{fig:toygraph}\vspace*{1em}
			\vspace*{-9pt}\end{subfigure}
		\begin{subfigure}{\linewidth}
			\centering

\begin{tikzpicture}[->,>=stealth',thick]
	\node[minimum height = 2em,minimum width = 2em,draw,circle,fill=black!10] at (1.25, 0)  (g1) {$g_1$};
	\node[minimum height = 2em,minimum width = 2em,circle,fill=black!10,draw=black!20] at (2.5, 0)  (g2) {$g_4$};
	\node[minimum height = 2em,minimum width = 2em,circle,fill=black!10,draw=black!20] at (3.75, 0)  (g3) {$g_3$};
	\node[minimum height = 2em,minimum width = 2em,draw,circle,fill=black!10] at (5, 0)  (g4) {$g_2$};
	
	\node[minimum height = 2em,minimum width = 2em] at (7.5, 0)  (S) {$\safenodes$};
	\node[minimum height = 2em,minimum width = 2em] at (7.5, -2)  (S) {$\nodes'$};
	\node[minimum height = 2em,minimum width = 2em] at (7.5, -4)  (S) {$\overline{\nodes'}$};
	
	\node[minimum height = 2em,minimum width = 2em,draw,circle] at (0, -2)   (i) {$i$};
	\node[minimum height = 2.5em,minimum width = 2em,draw,circle,fill=silver] at (0, -4)   (bi) {$b_i$};
	
	\node[minimum height = 2em,minimum width = 2em,draw,circle] at (1.25, -2)   (j) {$j$};
	\node[minimum height = 2.5em,minimum width = 2em,draw,circle,fill=silver] at (1.25, -4)   (bj) {$b_j$};
	
	\node[minimum height = 2em,minimum width = 2em,draw,circle] at (2.5, -2)   (k) {$k$};
	\node[minimum height = 2.5em,minimum width = 2em,draw,circle,fill=silver] at (2.5, -4)   (bk) {$b_k$};
	
	\node[minimum height = 2em,minimum width = 2em,draw,circle] at (3.75, -2)   (x) {$x$};
	\node[minimum height = 2.5em,minimum width = 2em,draw,circle,fill=silver] at (3.75, -4)   (bx) {$b_x$};
	
	\node[minimum height = 2em,minimum width = 2em,draw,circle] at (5, -2)   (y) {$y$};
	\node[minimum height = 2.5em,minimum width = 2em,draw,circle,fill=silver] at (5, -4)   (by) {$b_y$};
	
	\node[minimum height = 2em,minimum width = 2em,draw,circle] at (6.25, -2)   (z) {$z$};
	\node[minimum height = 2.5em,minimum width = 2em,draw,circle,fill=silver] at (6.25, -4)   (bz) {$b_z$};

	\draw[red] (i) -- (bi); 
	\draw[red] (j) -- (bj); 
	\draw[red] (k) -- (bk);
	\draw[red] (x) -- (bx);
	\draw[red] (y) -- (by);
	\draw[red] (z) -- (bz);
	
	\draw (bi) -- (j);
	\draw (bi) -- (k);
	\draw (bi) -- (x);
	
	\draw (bj) -- (i);
	\draw (bj) -- (k);
	\draw (bj) -- (y);
	
	\draw (bk) -- (i);
	\draw (bk) -- (j);
	\draw (bk) -- (z);
	
	\draw (bx) -- (i);
	\draw (bx) -- (y);
	\draw (bx) -- (z);
	
	\draw (by) -- (j);
	\draw (by) -- (x);
	\draw (by) -- (z);
	
	\draw (bz) -- (k);
	\draw (bz) -- (x);
	\draw (bz) -- (y);
	
	\draw[black!20] (i) -- (g1);
	\draw[black!20] (i) -- (g4);
	\draw[black!20] (j) -- (g1);
	\draw[black!20] (j) -- (g4);
	\draw[black!20] (k) -- (g1);
	\draw[black!20] (k) -- (g4);
	\draw[black!20] (x) -- (g1);
	\draw[black!20] (x) -- (g4);
	\draw[black!20] (y) -- (g1);
	\draw[black!20] (y) -- (g4);
	\draw[black!20] (z) -- (g1);
	\draw[black!20] (z) -- (g4);
	
	\draw (g1)[black!20] to [out=350,in=190] (g2);
	\draw (g1)[black!20] to [out=330,in=210] (g3);
	\draw (g1) to [out=310,in=230] (g4);
	\draw (g2)[black!20] to [out=170,in=10] (g1);
	\draw (g2)[black!20] to [out=350,in=190] (g3);
	\draw (g2)[black!20] to [out=330,in=210] (g4);
	\draw (g3)[black!20] to [out=150,in=30] (g1);
	\draw (g3)[black!20] to [out=170,in=10] (g2);
	\draw (g3)[black!20] to [out=350,in=190] (g4);
	\draw (g4)[black!20] to [out=170,in=10] (g3);
	\draw (g4)[black!20] to [out=150,in=30] (g2);
	\draw (g4) to [out=130,in=50] (g1);

\end{tikzpicture}
			\vspace*{4.5pt}
			\subcaption{\ourproblem-3 instance constructed from $\graph'$}\label{fig:reduction}
		\end{subfigure}
		\caption{%
			Reduction setup for \cref{thm:hardness}. 
			\cref{fig:toygraph} depicts a toy MVC-3 instance, 
			which we transform into a \ourproblem-3 instance as shown in \cref{fig:reduction}.  
			In \cref{fig:reduction}, white nodes represent $\nodes'$, 
			gray nodes represent $\overline{\nodes'}$, 
			silver nodes represent $\safenodes$, 
			and edges $ab$ with $c_a = 0$ and $c_b = 1$ are drawn in red.
			Silver edges and nodes with silver boundaries are needed to ensure that $\graph$ is 3-out-regular, 
			and all edges are traversed with probability $\frac{(1-\pabsorption)}{3}$.
		}\label{fig:hardness}
	\end{figure}
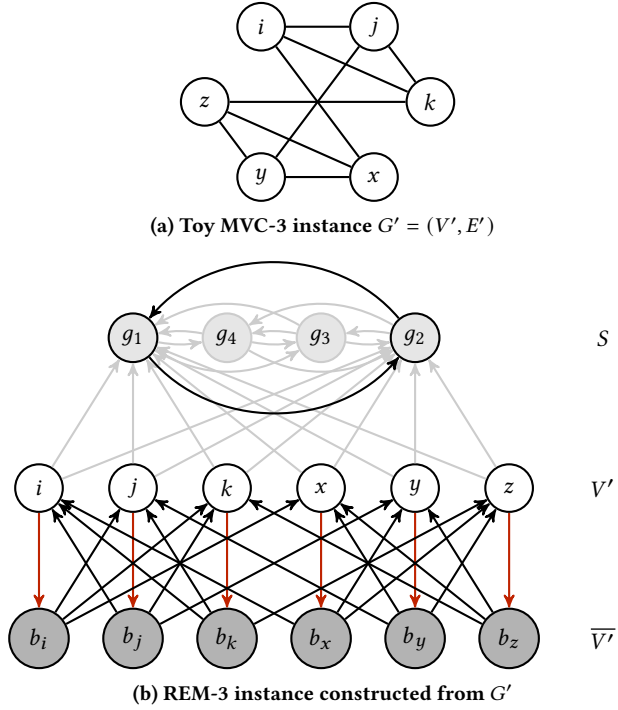	
	
	In the graph $\graph$ thus constructed, 
	the only nodes ever exposed to harm are the $\nnodes'$ nodes in $\nodes'$ and the $\nnodes'$ nodes in $\overline{\nodes'}$. 
	As illustrated in \cref{fig:view-trees}, 
	random walks starting from a node in $\overline{\nodes'}$ only see nodes with cost $1$ after an \emph{even} number of steps, 
	random walks starting from a node in $\nodes'$ only see nodes with cost $1$ after an \emph{odd} number of steps, 
	and as $\graph$ is 3-out-regular, 
	\emph{all} random walks have a branching factor of $3$, 
	such that they see exactly $3^\distance$ (not necessarily distinct) nodes at $\distance$ steps from their origin. 
	Each node in $\nodes'$ has three out-neighbors, and before the first rewiring, 
	exactly one of them is a node with cost $1$. 
	Thus if the random walks do not get absorbed,
	the regularity in our construction
	implies that the probability of encountering a node with cost $1$ after $2$ steps from a node in $\overline{\nodes'}$ is $\frac{3^1}{3^2} = \frac{1}{3}$, 
	just like the probability of encountering a node with cost $1$ after $3$ steps from a node in $\nodes'$ is $\frac{3^1}{3^3} = \frac{1}{9}$.
	Therefore, the starting value of our objective function can be written succinctly as 
	\begin{align}\label{eq:objective-initial}
		\objective{\graph} 
		= 
		\underbrace{\nnodes'\sum_{\distance=0}^{\infty}
			3^{-\distance}(1-\pabsorption)^{2\distance}}_{\text{Contributions from $\overline{\nodes'}$}}
		+
		\underbrace{\nnodes'\sum_{\distance=0}^{\infty}
			3^{-\distance-1}(1-\pabsorption)^{2\distance+1}}_{\text{Contributions from $\nodes'$}}
		\;.
	\end{align}

\begin{figure}[t]
	\centering
	\begin{subfigure}{\linewidth}
		\centering
	\tiny
	\begin{tikzpicture}[level distance=1cm,
	every node/.style = {shape=circle, 
		draw, align=center,thick,minimum width=1em, minimum height=1em},
	edge from parent/.style={draw,->,>=stealth',thick},
	level 1/.style={sibling distance=0.09\linewidth},
	level 2/.style={sibling distance=0.27\linewidth},
	level 3/.style={sibling distance=0.09\linewidth},
	level 4/.style={sibling distance=0.09\linewidth}
	]
	\node {$i$}
	child {  node[fill=silver] {$b_i$}
		child {  node {$j$}
			child { node[fill=silver] {$b_j$}
				child {node {$i$}}
				child {node {$k$}}
				child {node {$y$}}
			}
			child {node[shape=rectangle,fill=black!10] {$S$}
			}
			child {node[shape=rectangle,fill=black!10] {$S$}
			}
		}
		child {node {$k$}
			child {node[fill=silver] {$b_k$}
				child {node {$i$}}
				child {node {$j$}}
				child {node {$z$}}
			}
			child {node[shape=rectangle,fill=black!10] {$S$}
			}
			child {node[shape=rectangle,fill=black!10] {$S$}
			}
		}
		child {node {$y$}
			child {node[fill=silver] {$b_y$}
				child {node {$j$}}
				child {node {$x$}}
				child {node {$z$}}
			}
			child {node[shape=rectangle,fill=black!10] {$S$}
			}
			child {node[shape=rectangle,fill=black!10] {$S$}
			}
		}
	}
	child {node[shape=rectangle,fill=black!10] {$S$}
	}
	child {node[shape=rectangle,fill=black!10] {$S$}
	}
;
	\node[draw=none] at (3.25,0) {$\phantom{^1}0/3^0$};
	\node[draw=none] at (3.25,-1) {$3^0/3^1$};
	\node[draw=none] at (3.25,-2) {$0/3^2$};
	\node[draw=none] at (3.25,-3) {$3^1/3^3$};
	\node[draw=none] at (3.25,-4) {$\phantom{^1}0/3^4$};
\end{tikzpicture}
		\vspace*{1em}
		\subcaption{Walks starting at node $i \in \nodes'$}
	\end{subfigure}
	\begin{subfigure}{\linewidth}
		\centering
	\tiny
\begin{tikzpicture}[level distance=1cm,
	every node/.style = {shape=circle, 
		draw, align=center,thick,minimum width=1em, minimum height=1em},
	edge from parent/.style={draw,->,>=stealth',thick},
	level 1/.style={sibling distance=0.27\linewidth},
	level 2/.style={sibling distance=0.09\linewidth},
	level 3/.style={sibling distance=0.09\linewidth},
	level 4/.style={sibling distance=0.09\linewidth}
	]
	\node[fill=silver] {$b_i$}
		child {  node {$j$}
			child { node[fill=silver] {$b_j$}
				child {node {$i$}
					child {node[fill=silver] {$b_i$}}
				}
				child {node {$k$}
					child {node[fill=silver] {$b_k$}}
				}
				child {node {$y$}
					child {node[fill=silver] {$b_y$}}
				}
			}
			child {node[shape=rectangle,fill=black!10] {$S$}
			}
			child {node[shape=rectangle,fill=black!10] {$S$}
			}
		}
		child {node {$k$}
			child {node[fill=silver] {$b_k$}
				child {node {$i$}
					child {node[fill=silver] {$b_i$}}
				}
				child {node {$j$}
					child {node[fill=silver] {$b_j$}}
				}
				child {node {$z$}
					child {node[fill=silver] {$b_z$}}
				}
			}
			child {node[shape=rectangle,fill=black!10] {$S$}
			}
			child {node[shape=rectangle,fill=black!10] {$S$}
			}
		}
		child {node {$y$}
			child {node[fill=silver] {$b_y$}
				child {node {$j$}
					child {node[fill=silver] {$b_j$}}
				}
				child {node {$x$}
					child {node[fill=silver] {$b_x$}}
				}
				child {node {$z$}
					child {node[fill=silver] {$b_z$}}
				}
			}
			child {node[shape=rectangle,fill=black!10] {$S$}
			}
			child {node[shape=rectangle,fill=black!10] {$S$}
			}
		}
	;
	\node[draw=none] at (3.9,0) {$3^0/3^0$};
	\node[draw=none] at (3.9,-1) {$\phantom{^1}0/3^1$};
	\node[draw=none] at (3.9,-2) {$3^1/3^2$};
	\node[draw=none] at (3.9,-3) {$\phantom{^1}0/3^3$};
	\node[draw=none] at (3.9,-4) {$3^2/3^4$};
\end{tikzpicture}
		\vspace*{1em}
		\subcaption{Walks starting at node $b_i \in \overline{\nodes'}$ (safe nodes at step $4$ not shown)}
	\end{subfigure}
	\caption{%
		Random walks in 3-out-regular directed graphs $\graph = (\nodes,\edges)$ constructed from undirected MVC-3 instances as depicted in \cref{fig:hardness}. 
		All edges are traversed with probability $\frac{1-\pabsorption}{3}$,
		nodes in $\overline{\nodes'}$ are drawn in gray,
		and branches leading into the safe component $\safenodes$ are collated into silver square boxes labeled $\safenodes$.
		The annotations at level $\distance$ to the right of each random-walk tree indicate the fraction of nodes with cost $1$ of all nodes encountered after taking exactly $\distance$ steps.
	}
	\label{fig:view-trees}
\end{figure}	
	
	Since $\safenodes$ contains four safe nodes and $\graph$ is 3-out-regular, we can always rewire edges with unsafe targets to safe targets without creating multi-edges. 
	Therefore,
	as long as $\budget\leq \nnodes'$, an optimal rewiring $\selection$ will contain triples of shape $\rewiring{i}{b_i}{g_x}$, 
	where $g_x$ is any node that is safe \emph{after} the $\budget$ rewirings have been performed (this includes the nodes in  $\safenodes$ but can also include other zero-cost nodes~$i$ for which $\rewiring{i}{b_i}{g_{x'}}$ is part of the rewiring for some other, safe node~$g_{x'}$). 
	Now, each individual rewiring reduces the objective by 
	\begin{align}
		\underbrace{\frac{1}{3}(1-\pabsorption)}_{(1)} + \underbrace{\frac{3}{9}(1-\pabsorption)^2}_{(2)} + \underbrace{\frac{2\mcovered'}{27}(1-\pabsorption)^3}_{(3)}
		+~\smallterm'
		\;,\label{eq:objective-reduction}
	\end{align}
	where $\smallterm'$ is a term summarizing all contributions from walks longer than $3$ steps,
	and $\mcovered'$ is the number of edges that are \emph{newly} covered in $\graph'$ by selecting the source node $i$ of our rewiring in $\graph$ into the vertex cover $\cover$ of $\graph'$, 
	i.e., 
	\begin{align}
		\mcovered' = 
		\cardinality{\{e\in\edges' \mid \cover \cap e = \emptyset\}}~~
		\text{for}~~\cover = \{i\in\nodes'\mid (i,b_i)\in \edges_{\selection}\}\;, 
	\end{align}
	where $\edges_\selection$ is the set of previously rewired edges in $\graph$. 
	
	More elaborately, 
	in \cref{eq:objective-reduction},
	the component marked (1) is the exposure of $i$ to $b_i$ at distance $1$ via the walk $(i,b_i)$, 
	and the component marked (2) is the exposure of $b_j$ to $b_i$ at distance $2$ via the walk $(b_j,i,b_i)$, for the three nodes $j$ such that $\{i,j\}\in\edges'$, 
	where $\cardinality{\{\{x,y\}\in\edges'\mid y = j\}} = 3$ because $\graph'$ is 3-regular.
	The component marked (3) is the sum of 
	(i) the exposure of $i$ to nodes $b_j$ with $\{i,j\}\in\edges'$ at distance $3$ via the walk $(i,b_i,j,b_j)$,
	and (ii) the exposure of nodes $j$ with $\{i,j\}\in\edges'$ to node $i$ at distance $3$ via the walk $(j,b_j,i,b_i)$,
	each of which is 
	\begin{align*}
		\frac{1}{27}(1-\pabsorption)^3\cdot (1-\cardinality{\edges_\selection\cap \{\edge{j}{b_j}\}}) = \begin{cases}
			\frac{1}{27}(1-\pabsorption)^3&\text{if}~\edge{j}{b_j}\notin \edges_\selection\\
			0&\text{otherwise}\;.
		\end{cases}
	\end{align*}
	Hence, the objective function reduces by $\frac{2}{27}(1-\pabsorption)^3$ 
	for each edge $\{i,j\}\in\edges'$ that is covered for the \emph{first} time when we select $i$ into the vertex cover of $\graph'$, 
	and  because $\graph'$ is 3-regular, each rewiring can cover at most 3 new edges, such that $\mcovered'\leq 3$. 
	
	Thus,
	an optimal $\budget$-rewiring of $\graph$ reduces the objective function in \cref{eq:objective-initial} by 
	\begin{align}
		\maxobjective{\graph,\graph_\budget} = \frac{\budget}{3}(1-\pabsorption) 
		+ \frac{3\budget}{9}(1-\pabsorption)^2 
		+ \frac{2\mcovered}{27}(1-\pabsorption)^3
		+ \smallterm\;,
	\end{align}
	where $\mcovered\geq \frac{3}{2}\budget$ is the number of edges in $\graph'$ that are covered by the source nodes of our rewirings,
	and $\smallterm$ is the sum of the small terms $\smallterm'$ associated with each rewiring.
	Therefore, $\graph'$ has a minimum vertex cover of size at most $\budget$ 
	if and only if an optimal $\budget$-rewiring of $\graph$ reduces our objective by
	\begin{align}\label{eq:mvcreduction}
		\maxobjective{\graph,\graph_{\budget}} = \frac{\budget}{3}(1-\pabsorption) 
		+ \frac{3\budget}{9}(1-\pabsorption)^2 
		+ \frac{2\nedges'}{27}(1-\pabsorption)^3
		+ \smallterm\;,
	\end{align}
	i.e., $\graph'$ has a minimum vertex cover of size at most $\budget$  if and only if 
	\begin{align}\nonumber
		\objective{\graph_\budget}
		=~& \objective{\graph} - \maxobjective{\graph,\graph_{\budget}}
		 \\\nonumber
		=~& \nnodes' + \frac{\nnodes'-\budget}{3}(1-\pabsorption)
		+ \frac{3(\nnodes'-\budget)}{9}(1-\pabsorption)^2
		+\frac{3\nnodes'- 2\nedges'}{27}(1-\pabsorption)^3\\\nonumber
		&+ (\neglectedterm-\smallterm)
		\\\nonumber
		=~&
		\nnodes' + \frac{\nnodes'-\budget}{3}(1-\pabsorption)
		+ \frac{3(\nnodes'-\budget)}{9}(1-\pabsorption)^2
		+\frac{3\nnodes'-2\frac{3\nnodes'}{2}}{27}(1-\pabsorption)^3\\\nonumber
		&+ (\neglectedterm-\smallterm)\\\label{eq:mvcreductionf}
		=~&
		\nnodes' + \frac{\nnodes'-\budget}{3}(1-\pabsorption)
		+ \frac{3(\nnodes'-\budget)}{9}(1-\pabsorption)^2
		+ (\neglectedterm-\smallterm)
		\;,
	\end{align}
	where $\neglectedterm\geq\smallterm$ is the \emph{entire} exposure of random walks in~$G$ due to nodes encountered after four or more steps, i.e., 
	\begin{align}\label{eq:neglectedterm}
		\neglectedterm &= \underbrace{\nnodes'\sum_{\distance=0}^{\infty}
			3^{-\distance}(1-\pabsorption)^{2\distance}}_{\text{\emph{All} contributions from $\overline{\nodes'}$}}
		+
		\underbrace{\nnodes'\sum_{\distance=0}^{\infty}
			3^{-\distance-1}(1-\pabsorption)^{2\distance+1}}_{\text{\emph{All} contributions from $\nodes'$}}
		\\\nonumber
		&\quad
		- \underbrace{\nnodes'\sum_{\distance=0}^{1}
			3^{-\distance}(1-\pabsorption)^{2\distance}}_{\text{At most $3$ steps from $\overline{\nodes'}$}}
		- \underbrace{\nnodes'\sum_{\distance=0}^{1}
			3^{-\distance-1}(1-\pabsorption)^{2\distance+1}}_{\text{At most $3$ steps from $\nodes'$}}\;.
	\end{align}

	As we do not know $\smallterm$ exactly, 
	we cannot check \cref{eq:mvcreductionf} directly 
	to decide whether $\graph'$ has a vertex cover of size at most $\budget$.
	Instead, we would like to check if
	\begin{align}\label{eq:ourcheck}
		\maxobjective{\graph,\graph_{\budget}} \geq \frac{\budget}{3}(1-\pabsorption) 
		+ \frac{3\budget}{9}(1-\pabsorption)^2 
		+ \frac{2\nedges'}{27}(1-\pabsorption)^3\;,
	\end{align}
	that is, for the purposes of our decision, 
	we would like to ignore $\smallterm$.
	Observe that as $\smallterm\leq \neglectedterm$,
	we can safely do this if 
	\begin{align}\label{eq:guarantee}
		\neglectedterm < \frac{2}{27}(1-\pabsorption)^3\;,
	\end{align}
	as in this case, the \emph{entire} exposure of random walks due to nodes encountered after four or more steps in $\graph$
	is smaller than the change of the objective function we obtain by covering a \emph{single} new edge in the original MVC-3 instance $\graph'$.
	In \cref{lem:alphasetting}, 
	we prove that if we choose $\pabsorption\geq\frac{1}{2}$,
	then \cref{eq:guarantee} is guaranteed.
	Hence, 
	$\graph'$ has a minimum vertex cover of size at most $\budget$  if and only if \cref{eq:ourcheck} holds, 
	and we obtain the vertex cover of $\graph'$ by setting
	\begin{align*}
		\cover = \{i\in\nodes'\mid (i,b_i)\in \edges_\selection\}\;.
	\end{align*}
\end{proof}

\begin{restatable}{lem}{alphasetting}\label{lem:alphasetting}
	If in the setting of \cref{thm:hardness}, 
	we set the random-walk absorption probability to $\pabsorption \geq \frac{1}{2}$,
	then $\neglectedterm < \frac{2}{27}(1-\pabsorption)^3$.
\end{restatable}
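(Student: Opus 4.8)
The plan is to put $\neglectedterm$ into closed form and then reduce the claim to a single-variable inequality in $\pabsorption$ that is cheapest to check at the boundary $\pabsorption = \nicefrac{1}{2}$. By \cref{eq:neglectedterm}, $\neglectedterm$ is exactly the tail of the two geometric series making up $\objective{\graph}$ in \cref{eq:objective-initial}, i.e., the full series minus their truncations at $\distance \in \{0,1\}$. Writing $q = \nicefrac{(1-\pabsorption)^2}{3}$, I would use $\sum_{\distance=0}^{\infty} q^{\distance} = \nicefrac{1}{(1-q)}$ together with $\sum_{\distance=0}^{1} q^{\distance} = 1 + q$, so that each full-minus-truncated difference collapses to the pure tail $\sum_{\distance \geq 2} q^{\distance} = \nicefrac{q^2}{(1-q)}$. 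The even-step contributions (walks starting in $\overline{\nodes'}$) and the odd-step contributions (walks starting in $\nodes'$) then differ only by the prefactor $\nicefrac{(1-\pabsorption)}{3}$ on the odd series, so $\neglectedterm$ reduces to $\nicefrac{q^2}{(1-q)}$ multiplied by the factor $1 + \nicefrac{(1-\pabsorption)}{3}$.

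Next I would divide the target inequality $\neglectedterm < \nicefrac{2}{27}(1-\pabsorption)^3$ by the strictly positive quantity $(1-\pabsorption)^3$; substituting back $q = \nicefrac{(1-\pabsorption)^2}{3}$ and clearing denominators turns the claim into $g(\pabsorption) < 2$ for the explicit rational function $g(\pabsorption) = \nicefrac{3(1-\pabsorption)(3 + (1-\pabsorption))}{(3 - (1-\pabsorption)^2)}$. With the change of variable $y = 1 - \pabsorption \in (0, \nicefrac{1}{2}]$, this becomes $g = \nicefrac{(9y + 3y^2)}{(3 - y^2)}$, whose numerator is increasing in $y$ while its denominator is decreasing in $y$; hence $g$ is monotonically increasing in $y$, i.e., monotonically \emph{decreasing} in $\pabsorption$ on $[\nicefrac{1}{2}, 1)$. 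It therefore suffices to verify the inequality at the single extreme point $\pabsorption = \nicefrac{1}{2}$, where a direct evaluation gives $g(\nicefrac{1}{2}) = \nicefrac{21}{11} < 2$; monotonicity then extends $g(\pabsorption) < 2$, and thus the claim, to every $\pabsorption \geq \nicefrac{1}{2}$, which in turn guarantees $\smallterm \leq \neglectedterm < \nicefrac{2}{27}(1-\pabsorption)^3$.

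The main obstacle is the bookkeeping in the first step: I must confirm that ``at most three steps'' corresponds \emph{exactly} to the index range $\distance \in \{0,1\}$ in \emph{both} series---even steps for walks starting in $\overline{\nodes'}$ and odd steps for walks starting in $\nodes'$---so that $\neglectedterm$ is genuinely the pure tail from distance four onward and no lower-order term survives the subtraction. Once this closed form is pinned down, everything else is routine: the substitution $y = 1 - \pabsorption$ lets me argue monotonicity without differentiating and collapses the entire statement to one boundary evaluation at $\pabsorption = \nicefrac{1}{2}$, certifying that the neglected tail stays below the one-edge gap $\nicefrac{2}{27}(1-\pabsorption)^3$ used in the reduction.
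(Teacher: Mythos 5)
Your proposal is correct and takes essentially the same route as the paper's proof: both collapse $\neglectedterm$ to the closed form $\frac{(\nicefrac{x^4}{9})(1+\nicefrac{x}{3})}{1-\nicefrac{x^2}{3}}$ with $x=1-\pabsorption$ (your geometric-tail identity $\sum_{\distance\geq 2}q^{\distance}=\nicefrac{q^2}{(1-q)}$ merely shortcuts the paper's longer common-denominator cancellation), factor out $x^3$, and settle the remaining one-variable inequality at the boundary $\pabsorption=\nicefrac{1}{2}$ using monotonicity in $x$. Your boundary value $g(\nicefrac{1}{2})=\nicefrac{21}{11}<2$ is exactly the paper's final comparison $\nicefrac{7}{99}<\nicefrac{2}{27}$ rescaled by $27$, and your explicit increasing-numerator/decreasing-denominator argument makes rigorous the same step the paper performs by substituting $x=\nicefrac{1}{2}$ term by term.
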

\begin{proof}
	Recall the definition of $\neglectedterm$ from \cref{eq:neglectedterm}, and observe that the infinite series involved have closed-form solutions
	\begin{align*}
		\sum_{\distance=0}^{\infty}
		3^{-\distance}(1-\pabsorption)^{2\distance} 
		= \sum_{\distance=0}^{\infty}\frac{(1-\pabsorption)^{2\distance}}{3^{\distance}}
		=\sum_{\distance=0}^{\infty}\bigg(\frac{(1-\pabsorption)^{2}}{3}\bigg)^\distance\\
		= \frac{1}{1-\frac{(1-\pabsorption)^{2}}{3}}\;,~\text{and}\\
		\sum_{\distance=0}^{\infty}
		3^{-\distance-1}(1-\pabsorption)^{2\distance+1} 
		= \sum_{\distance=0}^{\infty}\frac{(1-\pabsorption)^{2\distance+1} }{3^{\distance+1}}
		= \sum_{\distance=0}^{\infty}\frac{1-\pabsorption}{3}\bigg(\frac{(1-\pabsorption)^{2}}{3}\bigg)^\distance\\
		= \frac{\frac{1-\pabsorption}{3}}{1-\frac{(1-\pabsorption)^{2}}{3}}
		\;,
	\end{align*}
	and that the partial sums evaluate to
	\begin{align*}
		\sum_{\distance=0}^{1}
		3^{-\distance}(1-\pabsorption)^{2\distance} 
		= 1 + \frac{1}{3}(1-\pabsorption)^2
		\;,~\text{and}\\
		\sum_{\distance=0}^{1}
		3^{-\distance-1}(1-\pabsorption)^{2\distance+1}
		= \frac{1}{3}(1-\pabsorption) + \frac{1}{9}(1-\pabsorption)^3
		\;.\\
	\end{align*}
	Using these equalities to rewrite \cref{eq:neglectedterm} for $\neglectedterm$, 
	and intermittently setting $x = 1-\pabsorption$, 
	where for $\pabsorption\geq\frac{1}{2}$, we have $x \leq \frac{1}{2}$,
	we obtain
	\begin{align*}
		\neglectedterm 
		&=
		\frac{1 + \frac{(1-\pabsorption)}{3}}{1-\frac{(1-\pabsorption)^2}{3}}
		- 1 - \frac{(1-\pabsorption)^2}{3} - \frac{(1-\pabsorption)}{3} - \frac{(1-\pabsorption)^3}{9}\\
		&=
		\frac{1 + \frac{x}{3}}{1-\frac{x^2}{3}}
		- 1 
		- \frac{x^2}{3} 
		- \frac{x}{3} 
		- \frac{x^3}{9}
		\\
		&=
		\frac{1 + \frac{x}{3}}{1-\frac{x^2}{3}}
		- \frac{1-\frac{x^2}{3}}{1-\frac{x^2}{3}}
		- \frac{\frac{x^2}{3}-\frac{x^2}{3}\frac{x^2}{3}}{1-\frac{x^2}{3}}
		- \frac{\frac{x}{3}-\frac{x}{3}\frac{x^2}{3}}{1-\frac{x^2}{3}} 
		- \frac{\frac{x^3}{9}-\frac{x^3}{9}\frac{x^2}{3}}{1-\frac{x^2}{3}}
		\\
		&=
		\frac{
			1 
			+ 
			\frac{x}{3}
			-
			1
			+
			\frac{x^2}{3}
			-
			\frac{x^2}{3}
			+
			\frac{x^2}{3}\frac{x^2}{3}
			-
			\frac{x}{3}
			+
			\frac{x}{3}\frac{x^2}{3}
			-
			\frac{x^3}{9}
			+
			\frac{x^3}{9}\frac{x^2}{3}
			}{
			1-\frac{x^2}{3}
		}\\
		&=
		\frac{
			\frac{x^4}{9}
			+
			\frac{x^5}{27}
		}{
			1-\frac{x^2}{3}
		}
		= 
		\frac{\frac{x}{9}
			+
			\frac{x^2}{27}}{1-\frac{x^2}{3}}x^3
		\\
		&\leq
		\frac{\frac{1}{18}
			+
			\frac{1}{108}}{1-\frac{1}{12}}x^3
		= 
		\frac{108 + 18}{18\cdot 108}\cdot\frac{12}{11}x^3 \\
		&= 
		\frac{126}{3\cdot 54 \cdot 11}x^3
		=
		\frac{126}{1\,749}x^3
		=
		\frac{3\,402}{1\,749\cdot 27}x^3\\
		&<
		\frac{3\,498}{1\,749\cdot 27}x^3
		=
		\frac{2}{27}x^3 
		= 
		\frac{2}{27}(1-\pabsorption)^3\;,
	\end{align*}
as required. 
\end{proof}

Note that the choice of $\pabsorption\geq\frac{1}{2}$ in \cref{lem:alphasetting} is almost tight, 
as for $\alpha = 1 - \frac{1}{10} (\sqrt{201} - 9) \approx 0.48$, 
we have
\begin{align}
	\frac{\frac{1-\pabsorption}{9}+\frac{(1-\pabsorption)^2}{27}}{1-\frac{(1-\pabsorption)^2}{3}} = \frac{2}{27}\;.
\end{align}
We present the slightly looser bound as it suffices to prove \cref{thm:hardness} and simplifies the presentation.

\subsection{Hardness of Approximation for \ourproblem}
\label{apx:hardness:apx}

\apxhardness*
\begin{proof}
	Under the UGC, 
	MVC is hard to approximate to within a factor of $(2-\varepsilon)$
	\cite{khot2008vertex},
	and it is generally hardest to approximate on regular graphs \cite{feige2003vertex}. 
	Therefore, consider again the reduction construction from the proof of \cref{thm:hardness} with an original MVC-3 graph $\graph'=(\nodes',\edges')$ as well as a transformed \ourproblem graph $\graph = (\nodes,\edges)$, 
	and assume that $\pabsorption = \frac{1}{2}$, 
	satisfying \cref{lem:alphasetting}.
	
	A solution to \ourproblem on a graph derived from an MVC-3 instance that has
	a minimum vertex cover of size $\budget$ 
	which approximates the optimum to within an additive error of 
	\begin{align}
		\frac{2\budget}{27}(1-\pabsorption)^3 
		-
		\frac{2\frac{\budget-\varepsilon\budget}{2}}{27}(1-\pabsorption)^3 
		= \frac{(1+\varepsilon)\budget}{27}(1-\pabsorption)^3
		=
		\frac{(1+\varepsilon)\budget}{27\cdot 8}
	\end{align}
	would rewire edges such that $\frac{\budget-\varepsilon\budget}{2}$ edges in the MVC-3 instance remain uncovered. 
	In this case, taking both endpoints of all uncovered edges yields a vertex cover of size $\budget + 2\frac{\budget-\varepsilon\budget}{2} = (2-\varepsilon)\budget$.
	Thus, if there existed an algorithm $\analgorithm$ discovering the stated approximate solution to \ourproblem in polynomial time, 
	we could obtain a $(2-\varepsilon)$-approximation to MVC-3 in polynomial time by transforming the MVC-3 instance into a \ourproblem-3 instance, 
	running $\analgorithm$ for all integers $\budget \in \{\frac{\nnodes'}{4},\dots,\frac{2\nnodes'}{3}\}$,
	where $\frac{\nnodes'}{4}$ and $\frac{2\nnodes'}{3}$ are the minimum resp. maximum cardinality of an MVC on a 3-regular undirected graph with $\nnodes'$ nodes, 
	reconstructing the vertex cover solutions, 
	and finally picking the solution with the smallest cardinality. 
	This would contradict the UGC. 
	Observing that the cardinality of an MVC in 3-regular undirected graphs with $\nnodes'$ nodes is in $\budget \in \Theta(\nnodes')$,
	that $\nnodes' \in \Theta(\nnodes)$, 
	and that $\frac{(1+\smallterm)}{27\cdot 8}\in\Theta(1)$,
	the claim follows.
\end{proof}

\subsection[Submodularity of $f$-Delta]{Submodularity of \maxobjectivef}
\label{apx:hardness:submod}

\submodularitywithduplicates*
\begin{proof}
	By assumption, there exists a safe node in $\graph$. 
	Therefore, fix a safe node $\safenode$, 
	and observe that $\safenode$ is an optimal rewiring target because $\unitvector_\safenode^T\fundamental\labelingvector = 0$.
	Hence, there exists an optimal strategy for maximizing \maxobjectivef that selects only rewirings $\rewiring{i}{j}{\safenode}$ with $\edge{i}{j}\in\edges_{\unsafenodes\unsafenodes}$. 
	Now denote the set of rewirings as $\selection$, 
	and
	the set of rewired edges as $\edges_{\selection} = \{(i,j)\mid \rewiring{i}{j}{k}\in\selection\}$. 
	Knowing that there exists an optimal rewiring for which $\edges_{\selection}\subseteq \edges_{\unsafenodes\unsafenodes}$, 
	we can define a set function $\maxsetobjectivef$ over the set $\edges_{\unsafenodes\unsafenodes}$ that is equivalent to \maxobjectivef as
	 \begin{align}
		\maxsetobjective{\edges_{\selection}} = \objective{\graph} - \objective{\graph_{\edges_{\selection}}}\;.
		\label{eq:maxsetobjective}
	\end{align}

	The function \maxsetobjectivef is \emph{monotone} because we only perform rewirings from $\edges_{\unsafenodes\unsafenodes}$ to $\safenode$,
	and no such rewiring can decrease \maxsetobjectivef.
	To see that \maxsetobjectivef is also \emph{submodular},
	fix $\edges_\selection\subseteq \edges_{\unsafenodes\unsafenodes}$, 
	and consider $x_1\neq x_2\in \edges_{\unsafenodes\unsafenodes}\setminus\edges_\selection$.
	Observe that $x_1$ and $x_2$ consist of unsafe nodes,
	which cannot be reachable from $\safenode$---%
	otherwise, $\unitvector_\safenode^T\fundamental\labelingvector > 0$, 
	and $\safenode$ would not be safe. 
	Hence, there is no exposure to harm that is \emph{only} removed when both $x_1$ \emph{and} $x_2$ are rewired, 
	and we have
	\begin{align}\nonumber
		\objective{\graph_{\edges_{\selection}}}
		-
		\objective{\graph_{\edges_{\selection}\cup\{x1,x_2\}}}
		\phantom{~.}\\ 
		\leq
		\big(\objective{\graph_{\edges_{\selection}}}
		-
		\objective{\graph_{\edges_{\selection}\cup\{x_1\}}}\big)
		+ 
		\big(\objective{\graph_{\edges_{\selection}}}
		-
		\objective{\graph_{\edges_{\selection}\cup\{x_2\}}}\big)
		\;.
		\label{eq:submodineq}
	\end{align}
	Using the definition from \cref{eq:maxsetobjective}, 
	we obtain
	\begin{align*}
		\objective{\graph_{\edges_{\selection}}}
		-
		\objective{\graph_{\edges_{\selection}\cup\{x1,x_2\}}}
		&= \objective{\graph_{\edges_{\selection}}} - \objective{\graph} + \maxsetobjective{\edges_\selection\cup\{x_1,x_2\}}\;,\\
		\objective{\graph_{\edges_{\selection}}}
		-
		\objective{\graph_{\edges_{\selection}\cup\{x_1\}}}
		&= 
		\objective{\graph_{\edges_{\selection}}}
		-
		\objective{\graph}
		+ \maxsetobjective{\edges_\selection\cup\{x_1\}}\;,~\text{and}\\
		\objective{\graph_{\edges_{\selection}}}
		-
		\objective{\graph_{\edges_{\selection}\cup\{x_2\}}}
		&= 
		\objective{\graph_{\edges_{\selection}}}
		-
		\objective{\graph}
		+ \maxsetobjective{\edges_\selection\cup\{x_2\}}\;,
	\end{align*}
	for the three parts of \cref{eq:submodineq}.
	Putting things together, we obtain
	\begin{align*}
	\objective{\graph_{\edges_{\selection}}} 
	- \objective{\graph} 
	+ \maxsetobjective{\edges_\selection\cup\{x_1,x_2\}}\phantom{~,}\\
	\leq
	\objective{\graph_{\edges_{\selection}}}
	-
	\objective{\graph}
	+ \maxsetobjective{\edges_\selection\cup\{x_2\}}\phantom{~,}\\
	+
	~\objective{\graph_{\edges_{\selection}}}
	-
	\objective{\graph}
	+ \maxsetobjective{\edges_\selection\cup\{x_1\}}\phantom{~,}\\
	\Leftrightarrow
	\objective{\graph} 
	- 
	\objective{\graph_{\edges_{\selection}}} 
	+ \maxsetobjective{\edges_\selection\cup\{x_1,x_2\}}\phantom{~,}\\
	\leq
	\maxsetobjective{\edges_\selection\cup\{x_1\}}
	+ 
	\maxsetobjective{\edges_\selection\cup\{x_2\}}\phantom{~,}\\
	\Leftrightarrow
	\maxsetobjective{\edges_\selection}
	+ \maxsetobjective{\edges_\selection\cup\{x_1,x_2\}}
	\leq
	\maxsetobjective{\edges_\selection\cup\{x_1\}}
	+ 
	\maxsetobjective{\edges_\selection\cup\{x_2\}}\;,
	\end{align*}
	which is the definition of submodularity.
\end{proof}

Observe that \cref{cor:apxguarantee}, 
which follows from \cref{lem:submoddupes,thm:submodularity}, 
does not contradict \cref{thm:apxhardness}:
As for the graphs used in \cref{thm:hardness} and \cref{thm:apxhardness}, 
which satisfy the precondition of \cref{thm:submodularity},
the value of $\maxobjectivef$ stated in \cref{eq:mvcreduction} is
\begin{align*}
	\maxobjective{\graph,\graph_{\budget}} = \frac{\budget}{3}(1-\pabsorption) 
	+ \frac{3\budget}{9}(1-\pabsorption)^2 
	+ \frac{2\nedges'}{27}(1-\pabsorption)^3
	+ \smallterm\;,
\end{align*}
the $(1-\nicefrac{1}{e})$-approximation of $\maxobjectivef$ guaranteed by \cref{cor:apxguarantee} 
still loses an additive term of $\Theta(\nnodes)$ and $\Theta(\budget)$, 
as required by \cref{thm:apxhardness}.

Furthermore, note that \cref{cor:apxguarantee} does not provide any approximation guarantee for the minimization of $\objectivef$: 
Although $\objectivef$ is necessarily supermodular when $\maxobjectivef$ is submodular, 
approximation guarantees from submodular maximization 
do not generally carry over to supermodular minimization \cite{ilev2001approximation,zhang2022fast}.

\subsection[Components of Delta]{Components of $\Delta$}
\label{apx:hardness:deltaanalysis}
\deltacomponents*
\begin{proof}
	For $\rho$, we have
	\begin{align}\label{eq:rho}
		\rho = 1+ \jkvector^T\fundamental\ivector
		= 1 + \jkvector^T \probability{ij} \fundamental[:,i]
		= 1 + \probability{ij}\fundamental[j,i] - \probability{ij}\fundamental[k,i]\thinspace.
	\end{align}
	For a node $x$, $\probability{ij}\fundamental[x,i]$ is the expected number of times we traverse the edge $\edge{i}{j}$ in a random walk starting at $x$. 
	Now, the probability that we reach $j$ from $k\notin\{i,j\}$ is at most $(1-\pabsorption)$, 
	and the probability that we traverse $\edge{i}{j}$ from $k$ without first visiting $j$ is at most $(1-\pabsorption)\probability{ij}$.
	Since $\pabsorption> 0$ and $\probability{ij}\leq 1-\pabsorption$, therefore, we have
	\begin{align}
		\probability{ij}\fundamental[k,i] \leq (1-\pabsorption)\probability{ij} + (1-\pabsorption)\probability{ij}\fundamental[j,i]
		< 1 + \probability{ij}\fundamental[j,i]\;,
	\end{align}
	and hence, $\rho > 0$.
	
	For $\sigma$, we have
	\begin{align}\label{eq:sigma}
		\sigma = \onevector^T \fundamental \ivector = \probability{ij}  \sum_x \fundamental[x,i] \;,
	\end{align}
	which is positive as all row sums of $\fundamental$ are positive.
	
	For $\tau$, we have 
	\begin{align}\label{eq:tau}
		\tau =	\jkvector^T\fundamental\labelingvector 
		= \unitvector_j^T\fundamental\labelingvector - \unitvector_k^T\fundamental\labelingvector
		\;,
	\end{align}
	which is 
	\emph{positive} (resp. \emph{negative}) if $j$ is \emph{more} (resp. \emph{less}) 
	exposed to harm than~$k$,
	and \emph{zero} if both nodes are \emph{equally} exposed to harm.
\end{proof}

\section{Other Graph Edits}
\label{apx:edits}

In this section, we define and analyze two other graph edits, 
which are less natural for recommendation graphs but potentially relevant in other applications: 
edge deletions and edge insertions.

\subsection{Edge Deletions}
An edge deletion removes an edge $(i,j)$ from $\graph$, 
redistributing the $\probability{ij}$ to the remaining edges outgoing from $i$.
Assuming that we redistribute the freed probability mass evenly among the remaining out-neighbors of $i$, 
the necessary changes are summarized in \cref{tab:deletion}. 
We require $\outdegree{i} > 1$, since otherwise, $i$ would have no remaining neighbors among which to distribute the unused probability mass (and to exclude division by zero), 
which would effectively require us to create a new absorbing state.

What can we say about the components of $\Delta=\nicefrac{\sigma\tau}{\rho}$?
For $\rho$, 
\begin{align*}
	\rho &= 1+ \jkvector^T\fundamental\ivector
	= 1 + \probability{ij}\fundamental[j,i]
	-  \frac{\probability{ij}}{\outdegree{i}-1}\underset{k\in\outneighbors{i}\setminus\{j\}}{\sum}  \fundamental[k,i]\;,
	\\
\end{align*} 
which generalizes what we observed for edge rewirings.
With the same reasoning as for edge rewiring, for each $k\in\outneighbors{i}$, we have
\begin{align*}
	\probability{ij}\fundamental[k,i] \leq (1-\pabsorption) + \probability{ij}\fundamental[j,i] 
	&< 1 + \probability{ij}\fundamental[j,i]\\
	\Leftrightarrow
	\frac{\probability{ij}}{\outdegree{i}-1}\fundamental[k,i] 
	&< \frac{1}{\outdegree{i}-1} + \frac{\probability{ij}}{\outdegree{i}-1}\fundamental[j,i]\;,
\end{align*}
such that $\rho$ again must be positive.
For $\sigma = \onevector^T \fundamental \ivector$, 
as $\ivector$ is exactly the same as for edge rewirings, 
the analysis for $\sigma$ under edge rewiring holds analogously.
For $\tau$, we get 
\begin{align*}
	\tau =	\jkvector^T\fundamental\labelingvector
	= \unitvector_j\fundamental\labelingvector 
	- \frac{1}{\outdegree{i}-1}\underset{k\in\outneighbors{i}\setminus\{j\}}{\sum} \unitvector_k\fundamental\labelingvector
	\;,
\end{align*}
which can have any sign, 
and which we would like to be positive because $\rho$ and $\sigma$ are positive, too.
Intuitively, this generalizes what we observed for edge rewirings:
To maximize $\tau$, we need to maximize the difference between the cost-scaled row sum of $j$ 
and the \emph{average} of the cost-scaled row sums of all other out-neighbors of $i$.

\subsection{Edge Insertions}
An edge insertion adds an edge $(i,j)$ into $\graph$ with a freely chosen $\probability{ij} \leq 1-\pabsorption$, 
reducing the probability masses associated with the other edges outgoing from $i$ proportionally.
Assuming that we subtract the required probability mass evenly from the original out-neighbors of $i$, 
the necessary changes are summarized in \cref{tab:insertion}.

What can we say about the components of $\Delta=\nicefrac{\sigma\tau}{\rho}$?
For $\rho$, 
\begin{align*}
	\rho &= 1+ \jkvector^T\fundamental\ivector
	= 1 + \frac{1}{\outdegree{i}}\sum_{k\in\outneighbors{i}}\probability{ij}\fundamental[k,i] - \probability{ij}\fundamental[j,i]\;,
\end{align*} 
which generalizes what we observed for edge rewirings.
Unfortunately, as in this case, the edge $(i,j)$ does not factor into the computation of $\fundamental$ (as is the case for edge rewirings and edge deletions), 
we cannot guarantee that $\rho$ is always positive.
For $\sigma = \onevector^T \fundamental \ivector$, 
as $\ivector$ is exactly the same as for edge rewirings, 
the analysis for $\sigma$ under edge rewiring holds analogously.
For $\tau$, we get 
\begin{align*}
	\tau =	\jkvector^T\fundamental\labelingvector
	= \frac{1}{\outdegree{i}}\sum_{k\in\outneighbors{i}}\unitvector_k\fundamental\labelingvector - \unitvector_j\fundamental\labelingvector\;,
\end{align*}
which can have any sign, 
and which we would like to be positive if $\rho$ is positive, 
and negative if $\rho$ is negative.
Intuitively, this generalizes what we observed for edge rewirings:
To maximize the $\tau$, 
we need to maximize the difference between the \emph{average} of the cost-scaled row sums of all out-neighbors of $i$ and the cost-scaled row sum of~$j$.

\begin{table}[!t]
	\centering
	\caption{Summary of an edge deletion $-(i,j)$ in a~graph $\graph = (\nodes,\edges)$
		with random-walk transition matrix $\transitionmatrix$
		and fundamental matrix $\fundamental = (\identity - \transitionmatrix)^{-1}$.}
	\label{tab:deletion}
	\begin{tabular}{l}
		\toprule
		$\graph'=(\nodes,\edges')$, for $\edges' = \edges \setminus \{(i,j)\}$, $(i,j)\in \edges$\\
		\midrule
		$\transitionmatrix'[x,y] = \begin{cases}
			0&\text{if}~x = i~\text{and}~y = j\\
			\transitionmatrix[i,y] + \frac{\transitionmatrix[i,j]}{\outdegree{i}-1} 
			&\text{if}~x = i~\text{and}~y \in\outneighbors{i}\setminus\{j\}\\
			0&\text{otherwise}\;.
		\end{cases}$\\
		\midrule
		$\fundamental'
		= \fundamental - \frac{\fundamental\ivector\jkvector^T\fundamental}{1 + \jkvector^T\fundamental\ivector}$, with $\ivector = \probability{ij}\unitvector_i$, $\jkvector = \unitvector_j-\frac{1}{\outdegree{i}-1}\underset{k\in\outneighbors{i}\setminus\{j\}}{\sum} \unitvector_k$\\
		\bottomrule
	\end{tabular}
\end{table}
\begin{table}[!t]
	\centering
	\caption{Summary of an edge insertion $+(i,j)$ in a~graph $\graph = (\nodes,\edges)$
		with random-walk transition matrix $\transitionmatrix$
		and fundamental matrix $\fundamental = (\identity - \transitionmatrix)^{-1}$.}
	\label{tab:insertion}
	\begin{tabular}{l}
		\toprule
		$\graph'=(\nodes,\edges')$, for $\edges' = \edges \cup \{(i,j)\}$, $(i,j)\notin E$\\
		\midrule
		$\transitionmatrix'[x,y] = \begin{cases}
			\probability{ij}&\text{if}~x = i~\text{and}~y = j\\
			\transitionmatrix[i,y] - \frac{\probability{ij}}{\outdegree{i}} 
			&\text{if}~x = i~\text{and}~y \in\outneighbors{i}\\
			0&\text{otherwise}\;,
		\end{cases}$\\
		\hspace*{4.5em}for $\probability{ij} \leq 1-\pabsorption$ chosen freely.\\
		\midrule
		$\fundamental'
		= \fundamental - \frac{\fundamental\ivector\jkvector^T\fundamental}{1 + \jkvector^T\fundamental\ivector}$, with $\ivector = \probability{ij}\unitvector_i$, $\jkvector = - \unitvector_j + \frac{1}{\outdegree{i}}\underset{k\in\outneighbors{i}\setminus\{j\}}{\sum} \unitvector_k$\\
		\bottomrule
	\end{tabular}
\end{table}

\section{Omitted Pseudocode}
\label{apx:pseudocode}

In the main paper, 
we omitted the pseudocode for \ourmethod, 
our algorithm for 
heuristic greedy $\budget$-rewiring exposure minimization (\ourproblem) and 
heuristic greedy $\qualitythreshold$-relevant $\budget$-rewiring exposure minimization (\ourproblemtwo). 
We now provide this pseudocode as \cref{alg:greedyrelevant:rem} for \ourproblem and \cref{alg:greedyrelevant} for \ourproblemtwo.
Furthermore, we also state the pseudocode for the algorithms leading up to \ourmethod, 
na\"ive greedy $\budget$-rewiring exposure minimization
and exact greedy $\budget$-rewiring exposure minimization, 
as \cref{alg:naivegreedy,alg:greedy}.

\begin{algorithm}[h]

\begin{algorithmic}[1]
	\Require Graph $\graph = (\nodes, \edges)$, 
	transition matrix $\transitionmatrix$, 
	costs $\labelingvector$, budget $\budget$
	\Ensure Set of $\budget$ rewirings $\selection$ of shape $(i,j,k)$ 
	\State $\selection \gets \emptyset$
	\For{$i \in \naturals_{\leq \budget}$}
	\State Compute $\onevector^T\fundamental$ and $\fundamental\labelingvector$\Comment{$\bigoh{\niter\nedges}$}
	\State 
	Compute $\onevector^T\fundamental\ivector$ for all $\edge{i}{j}\in \edges$\Comment{$\bigoh{\nedges}$}
	\State $\kcandidates \gets \{k\mid \fundamental\labelingvector[k]~\in \text{$\{(\maxoutdegree+2)$ smallest $\fundamental\labelingvector$-values}\}\}$\Comment{$\bigoh{\nnodes}$}
	\State Compute $\jkvector^T\fundamental\labelingvector$ for $j\in \nodes$ and $k\in \kcandidates$\Comment{$\bigoh{\maxoutdegree\nnodes}$}
	\State \Call{\greedyname}{}(\thinspace)
	\EndFor
	\State \Return $\selection$
	\Statex
	\Function{\greedyname}{}(\thinspace) 
	\State$\heuristic, i', j', k' \gets 0, \bot, \bot, \bot$
	\For{$\edge{i}{j}\in \edges$}\Comment{$\bigoh{\nedges}$}
	\For{$k_{ij}\in \kcandidates\setminus (\outneighbors{i}\cup \{i\})$}\Comment{$\bigoh{\maxoutdegree}$}
	\State $\ivector \gets \transitionmatrix[i,j]\unitvector_i$
	\State 
	$\jkvector \gets \unitvector_j - \unitvector_{k_{ij}}$
	\State $\heuristic_{ijk} \gets (\onevector^T\fundamental \ivector)(\jkvector^T\fundamental\labelingvector)$\Comment{$\bigoh{1}$}
	\If{$\heuristic_{ijk} > \heuristic$}
	\State $\heuristic, i', j', k' \gets \heuristic_{ijk}, i, j, k_{ij}$
	\EndIf
	\EndFor
	\EndFor
	\State $\edges \gets (\edges \setminus \{(i',j')\}) \cup \{(i',k')\}$
	
	\State 
	$\transitionmatrix[i',k'] \gets \transitionmatrix[i',j']$
	\State
	$\transitionmatrix[i',j'] \gets 0$
	\State$\selection \gets \selection \cup \{(i',j',k')\}$
	\algrenewcommand{\alglinenumber}[1]{\color{black!25}\footnotesize#1:}
	\EndFunction
	
\end{algorithmic}
	\caption{%
		Heuristic greedy \ourproblem with \ourmethod. 
	}\label{alg:greedyrelevant:rem}
\end{algorithm}

\begin{algorithm}[h]

\begin{algorithmic}[1]
	\Require \!Graph $\graph = (\nodes, \edges)$, 
	transition matrix $\transitionmatrix$, 
	costs $\labelingvector$, budget $\budget$,\newline
	\hspace*{1.2em} 
	relevance matrix $\relevancematrix$, relevance function $\qualityf$, quality threshold~$\qualitythreshold$
	\Ensure Set of $\budget$ rewirings $\selection$ of shape $(i,j,k)$ 
	
	\State $\selection \gets \emptyset$
	\State $\qualityset \gets \{\rewiring{i}{j}{k}~\mid \edge{i}{j}\in\edges, \edge{i}{k}\notin\edges, \rewiring{i}{j}{k}~\text{is $\qualitythreshold$-permissible}\}$
	\Statex\Comment{$\bigoh{\qualitycomplexitystart}$}
	\For{$i \in \naturals_{\leq \budget}$}
	\State Compute $\onevector^T\fundamental$ and $\fundamental\labelingvector$\Comment{$\bigoh{\niter\nedges}$}
	\State 
	Compute $\onevector^T\fundamental\ivector$ for all $\edge{i}{j}\in \edges$\Comment{$\bigoh{\nedges}$}
	\For{$\edge{i}{j}\in\edges$}\Comment{$\bigoh{\nedges}$}
	\State $k_{ij} \gets \arg\min\{\unitvector_k^T\fundamental\labelingvector\mid\rewiring{i}{j}{k}\in \qualityset\}$\Comment{$\bigoh{\nqpermissible}$}
	\State Compute $\jkvector^T\fundamental\labelingvector$ for $j$ and $k_{ij}$\Comment{$\bigoh{1}$}
	\EndFor
	\State \Call{\greedyname}{}(\thinspace)
	\EndFor
	\State \Return $\selection$
	\Statex
	\Function{\greedyname}{}(\thinspace) 
	\State$\heuristic, i', j', k' \gets 0, \bot, \bot, \bot$
	\For{$\edge{i}{j}\in \edges$}\Comment{$\bigoh{\nedges}$}
	\State $\ivector \gets \transitionmatrix[i,j]\unitvector_i$ 
	\State 
	$\jkvector \gets \unitvector_j - \unitvector_{k_{ij}}$
	\State $\heuristic_{ijk} \gets (\onevector^T\fundamental \ivector)(\jkvector^T\fundamental\labelingvector)$\Comment{$\bigoh{1}$}
	\If{$\heuristic_{ijk} > \heuristic$}
	\State $\heuristic, i', j', k' \gets \heuristic_{ijk}, i, j, k_{ij}$
	\EndIf
	\EndFor
	\State $\edges \gets (\edges \setminus \{(i',j')\}) \cup \{(i',k')\}$
	
	\State 
	$\transitionmatrix[i',k'] \gets \transitionmatrix[i',j']$ 
	\State
	$\transitionmatrix[i',j'] \gets 0$
	\State$\selection \gets \selection \cup \{(i',j',k')\}$
	\State Update $\qualityset$\Comment{$\bigoh{\nqpermissible\qualitycomplexity}$}
	\EndFunction
	
\end{algorithmic}
	\caption{%
		Heuristic greedy \ourproblemtwo with \ourmethod. 
	}\label{alg:greedyrelevant}
\end{algorithm}

\clearpage

\begin{algorithm}[h]

\begin{algorithmic}[1]
	\Require Graph $\graph = (\nodes, \edges)$, 
	transition matrix $\transitionmatrix$, 
	costs $\labelingvector$,
	budget $\budget$
	\Ensure Set of $\budget$ rewirings $\selection$ of shape $(i,j,k)$ 
	
	\State $\fundamental \gets (\identity - \transitionmatrix)^{-1}$\Comment{\cref{eq:fundamental}}\label{line:fundamental}
	\State $\selection \gets \emptyset$\label{line:emptyset}
	\For{$i \in \naturals_{\leq \budget}$}\label{line:budget}
	\State \Call{\greedyname}{}(\thinspace)\label{line:greedycall}
	\EndFor
	\State \Return $\selection$\label{line:rewiringreturn}
	
	\item[]
	
	\Function{\greedyname}{}(\thinspace) 
	\State$\Delta, i', j', k' \gets 0, \bot, \bot, \bot$\label{line:maxdelta}
	\For{$(i,j)\in \edges$}\label{line:alledges}
	\For{$k\in \nodes\setminus(\outneighbors{i}\cup\{i\})$}\label{line:alltargets}
	\State $\ivector \gets \transitionmatrix[i,j]\unitvector_i$ 
	\State 
	$\jkvector \gets \unitvector_j - \unitvector_k$
	\State $\Delta_{ijk} \gets \frac{(\onevector^T\fundamental \ivector)(\jkvector^T\fundamental\labelingvector)}{1 + \jkvector^T\fundamental\ivector}$\Comment{\cref{eq:delta}}\label{line:delta}
	\If{$\Delta_{ijk} > \Delta$}
	\State$\Delta, i', j', k' \gets \Delta_{ijk}, i, j, k$\label{line:newmaxdelta}
	\EndIf
	\EndFor
	\EndFor
	\State $\edges \gets (\edges \setminus \{(i',j')\}) \cup \{(i',k')\}$\Comment{\cref{tab:rewiring}}\label{line:edgeupdate}
	
	\State 
	$\transitionmatrix[i',k'] \gets \transitionmatrix[i',j']$
	\State
	$\transitionmatrix[i',j'] \gets 0$
	\State $\fundamental \gets \fundamental - \frac{\fundamental\ivector\jkvector^T\fundamental}{1 + \jkvector^T\fundamental\ivector}$\label{line:fundamentalupdate}
	\State$\selection \gets \selection \cup \{(i',j',k')\}$\label{line:selectionupdate}
	\EndFunction
	
\end{algorithmic}
	\caption{Na\"ive greedy \ourproblem.}\label{alg:naivegreedy}
\end{algorithm}

\begin{algorithm}[h]

\begin{algorithmic}[1]
	\Require Graph $\graph = (\nodes, \edges)$, 
	transition matrix $\transitionmatrix$, 
	costs $\labelingvector$,
	budget $\budget$
	\Ensure Set of $\budget$ rewirings $\selection$ of shape $(i,j,k)$ 
	
	\State $\selection \gets \emptyset$
	\For{$i \in \naturals_{\leq \budget}$}
	\State Precompute $\onevector^T\fundamental$ and $\fundamental\labelingvector$\Comment{$\bigoh{\niter\nedges}$}
	\State Precompute $\onevector^T\fundamental\ivector$ for $(i,j)\in \edges$\Comment{$\bigoh{\nedges}$}
	\State Precompute $\fundamental\ivector$ for $(i,j)\in\edges$\Comment{$\bigoh{\nnodes^2}$}
	\State Precompute $\jkvector^T\fundamental\labelingvector$ for $j\neq k\in \nodes$\Comment{$\bigoh{\niter\nnodes^2}$}
	\State \Call{\greedyname}{}(\thinspace)
	\EndFor
	\State \Return $\selection$
	
	\item[]
	
	\Function{\greedyname}{}(\thinspace) 
	\State$\Delta, i', j', k' \gets 0, \bot, \bot, \bot$
	\For{$(i,j)\in \edges$}\Comment{$\bigoh{\nedges}$}
	\For{$k\in \nodes\setminus(\outneighbors{i}\cup\{i\})$}\Comment{$\bigoh{\nnodes}$}
	\State $\ivector \gets \transitionmatrix[i,j]\unitvector_i$
	\State 
	$\jkvector \gets \unitvector_j - \unitvector_k$
	\State $\Delta_{ijk} \gets \frac{(\onevector^T\fundamental \ivector)(\jkvector^T\fundamental\labelingvector)}{1 + \jkvector^T\fundamental\ivector}$\Comment{$\bigoh{1}$}
	\If{$\Delta_{ijk} > \Delta$}
	\State$\Delta, i', j', k' \gets \Delta_{ijk}, i, j, k$
	\EndIf
	\EndFor
	\EndFor
	\State $\edges \gets (\edges \setminus \{(i',j')\}) \cup \{(i',k')\}$
	
	\State 
	$\transitionmatrix[i',k'] \gets \transitionmatrix[i',j']$
	\State
	$\transitionmatrix[i',j'] \gets 0$
	\State$\selection \gets \selection \cup \{(i',j',k')\}$
	\EndFunction
	
\end{algorithmic}
	\caption{Exact greedy \ourproblem.}\label{alg:greedy}
\end{algorithm}

\section{Reproducibility Information}
\label{apx:reproducibility}

We make all code, datasets, and results publicly available.\!\footnote{\oururl}

\subsection[Choice of Relevance Function Theta]{Choice of Relevance Function $\qualityf$}
\label{apx:reproducibility:qualityf}

In our experiments, we instantiate $\qualityf$ with the normalized discounted cumulative gain ($\ndcg$), 
a popular measure of ranking quality. 

Given $\relevancematrix$, 
and denoting as $\rank_i(j)$ the relevance rank of $j$ for $i$,
we define the \emph{discounted cumulative gain} 
(\dcg) \cite{jarvelin2002cumulated} of node $i$ as
\begin{align}
	\dcg(i) &= \sum_{j\in\outneighbors{i}} \frac{\relevancematrix[i,j]}{\log_2(1 + \rank_i(j))}\;.
\end{align}
Denoting as $\topranked_{\outdeg}\!(i) = \{j\mid \rank_i(j)\leq \outdegree{i}\}$ 
the $\outdegree{i}$ most relevant nodes for node $i$, 
we obtain the \emph{normalized} \dcg as
\begin{align}
	\ndcg(i) &= \frac{\dcg(i)}{\idcg(i)}\;,~\text{where}\\
	\idcg(i) &= \!\!\!\!\sum_{j\in\topranked_{\outdeg}\!(i)} \frac{\relevancematrix[i,j]}{\log_2(1 + \rank_i(j))}\phantom{nI\;.}
\end{align}
is the \emph{ideal} discounted cumulative gain.
Asserting that $\ndcg(i) \geq \qualitythreshold$ in the original graph $\graph$ 
(which holds when a system simply recommends the top-ranked items, 
such that before rewiring, 
$\ndcg(i) = \idcg(i)$ for all $i\in\nodes$) 
allows us to require that all rewirings $\rewiring{i}{j}{k}$ maintain $\ndcg(i) \geq \qualitythreshold$ for $i\in\nodes$. 

We can compute the \ndcg in time $\bigoh{\maxoutdegree}$,
assuming that lookup operations for matrix elements and relevance ranks take constant time.
As $\maxoutdegree\in\bigoh{1}$ for $\bigoh{1}$-out-regular graphs, 
this entails that in our experiments, 
we can evaluate $\qualityf\equiv \ndcg$ in constant time.

\subsection{Binarization Thresholds for \fabbrialg}
\label{apx:reproducibility:binarization}

As \fabbrialg can only handle binary costs, 
we transform nonbinary costs $\labeling$ into binary costs $\labeling'$ by thresholding to guarantee $\labeling_i \geq \roundingthreshold \Leftrightarrow \labeling'_i = 1$ for some rounding threshold $\roundingthreshold\in(0,1]$. 
In our experiments, we use the binarization thresholds $1.0$, $0.6$, and $0.4$, 
which are chosen to ensure that they yield \emph{different} binarized costs given our original real-valued costs as inputs.
Note, however, that the resulting problem instances differ from the original instances, 
and as such,
it is hardly possible to fairly compare \ourmethod with \fabbrialg in the real-valued setting.
Hence, we focus our performance comparisons on the binary setting.

\subsection{Other Parameters}
\label{apx:reproducibility:poweriteration}

\paragraph{Hedging against small fluctuations}
When developing \ourmethod, 
we state that we can hedge against small fluctuations in the relationship between $\Delta$ and $\heuristic$ by computing $\Delta$ exactly for the rewiring candidates associated with the $\bigoh{1}$ largest values of $\heuristic$ before selecting the final rewiring. 
In our experiments, we compute $\Delta$ exactly for the top $100$ rewiring candidates.

\paragraph{Error bounds for power iteration}
Recall that $\norm{\somematrix^\niter} \leq \norm{\somematrix}^\niter$ for any square matrix $\somematrix$, associated matrix norm $\norm{\cdotB}$, and non-negative integer $\niter$.
Recall further that each row of $\transitionmatrix$ sums to $(1-\pabsorption)$, 
such that $\norm{\transitionmatrix}_{\infty} = (1-\pabsorption)$ and
\begin{align}
	\norm{\transitionmatrix^\niter}_{\infty} = \norm{\transitionmatrix}_{\infty}^\niter = (1-\pabsorption)^\niter\;.
\end{align}
Therefore, we can bound the approximation error in the infinity norm of our approximation of $\fundamental$ as 
\begin{align}
	\sum_{i=0}^{\infty}\transitionmatrix^i - \sum_{i=0}^{\niter} \transitionmatrix^i 
	= \frac{1}{1-(1-\pabsorption)} - \frac{1-(1-\pabsorption)^\niter}{1-(1-\pabsorption)}
	= \frac{(1-\pabsorption)^\niter}{\pabsorption}\;.
\end{align}
Thus, to obtain an approximation error on the row sums of at most $\apxerror$, we need to set the number of iterations $\niter$
\begin{align}
	\apxerror \leq \frac{(1-\pabsorption)^\niter}{\pabsorption} 
	\Leftrightarrow
	\apxerror\pabsorption \leq (1-\pabsorption)^\niter
	\Leftrightarrow 
	\niter \geq \log_{1-\pabsorption}\apxerror\pabsorption = \frac{\log \apxerror\pabsorption}{\log 1-\pabsorption}\;.
\end{align}
For example, to obtain an absolute approximation error $\apxerror \leq 0.01$ on the row sums, given an absorption probability of $\pabsorption = 0.05$, 
we need to set
\begin{align*}
	\niter \geq \frac{\log 0.005}{\log 0.95} = 148.18 \approx 150\;,
\end{align*}
independently of $\nnodes$. 
This justifies the assumption that $\niter\in\bigoh{1}$, and prompts us to set the number of iterations in all power iteration calculations to $150$.

\section{Dataset Information}
\label{apx:datasets}

In this section, we provide more information on our datasets and the cost assignments used in our experiments.

\subsection{Synthetic Data}
\label{apx:datasets:synthetic}

\subsubsection{Preprocessing}
\label{apx:datasets:synthetic:preprocessing}
Since the viewports of popular electronic devices typically fit around five recommendations,
as our synthetic data, 
we generate synthetic 5-out-regular graphs.
We experiment with four graph sizes using 
three absorption probabilities, 
two shapes of probability distributions over out-edges, 
three fractions of latently harmful nodes, 
and two cost functions, 
one binary and one real-valued based on a mixture of two beta distributions as depicted in \cref{fig:betamixture}, 
to assign costs to nodes.
We state the details on these parameters in \cref{tab:syntheticparameters}. 
For each of the resulting $144$ configurations, 
we place edges using two different edge-placement models, 
\synuni and \synhom, 
for a total of $288$ graphs.
For each node $i$, \synuni chooses $\outregulardegree$ distinct nodes $j\neq i$ as targets for its edges uniformly at random, 
whereas \synhom chooses $\outregulardegree$ distinct targets by sampling each $j$ with probability $\frac{1-\cardinality{\labeling(i)-\labeling(j)}}{\sum_{j\in\nodes}(1-\cardinality{\labeling(i)-\labeling(j)})}$. 
Hence, in \synhom, edges are drawn preferentially between nodes of similar costs, implementing homophily, 
whereas in \synuni, edges are drawn uniformly at random.

\begin{figure}[t]
	\centering
	\includegraphics[width=0.5\linewidth]{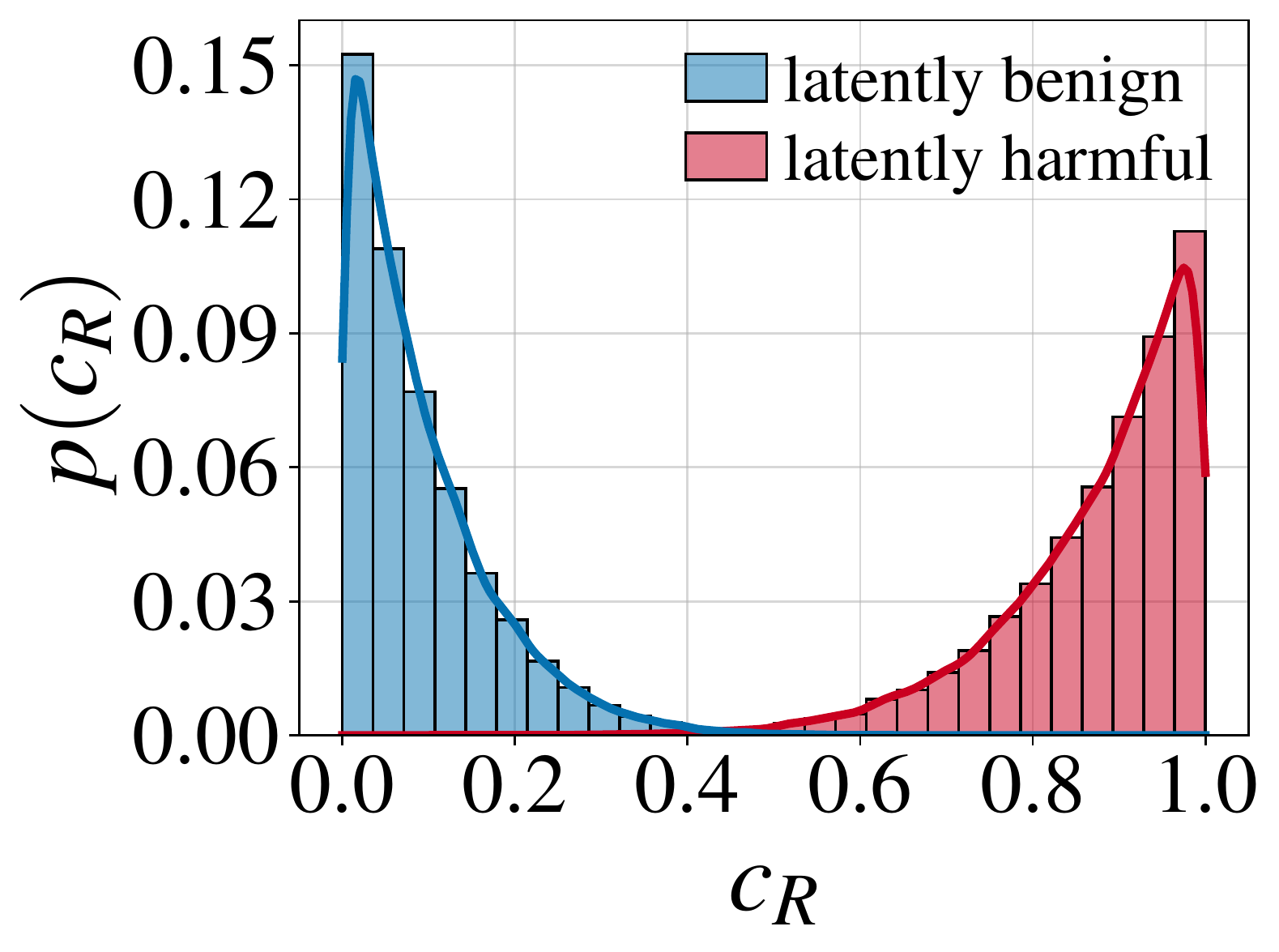}
	\caption{%
		Histograms and kernel density estimates of $50\,000$ draws from the beta distributions used to assign costs to nodes in \synuni and \synhom under $\labeling_R$. 
		The cost of latently benign nodes is drawn from $\mathrm{Beta}(1,10)$, whereas the cost of latently harmful nodes is drawn from $\mathrm{Beta}(7,1)$.
	}
	\label{fig:betamixture}
\end{figure}%
~
\begin{figure}[b]
	\centering
	\begin{subfigure}{0.5\linewidth}
		\centering
		\includegraphics[width=\linewidth]{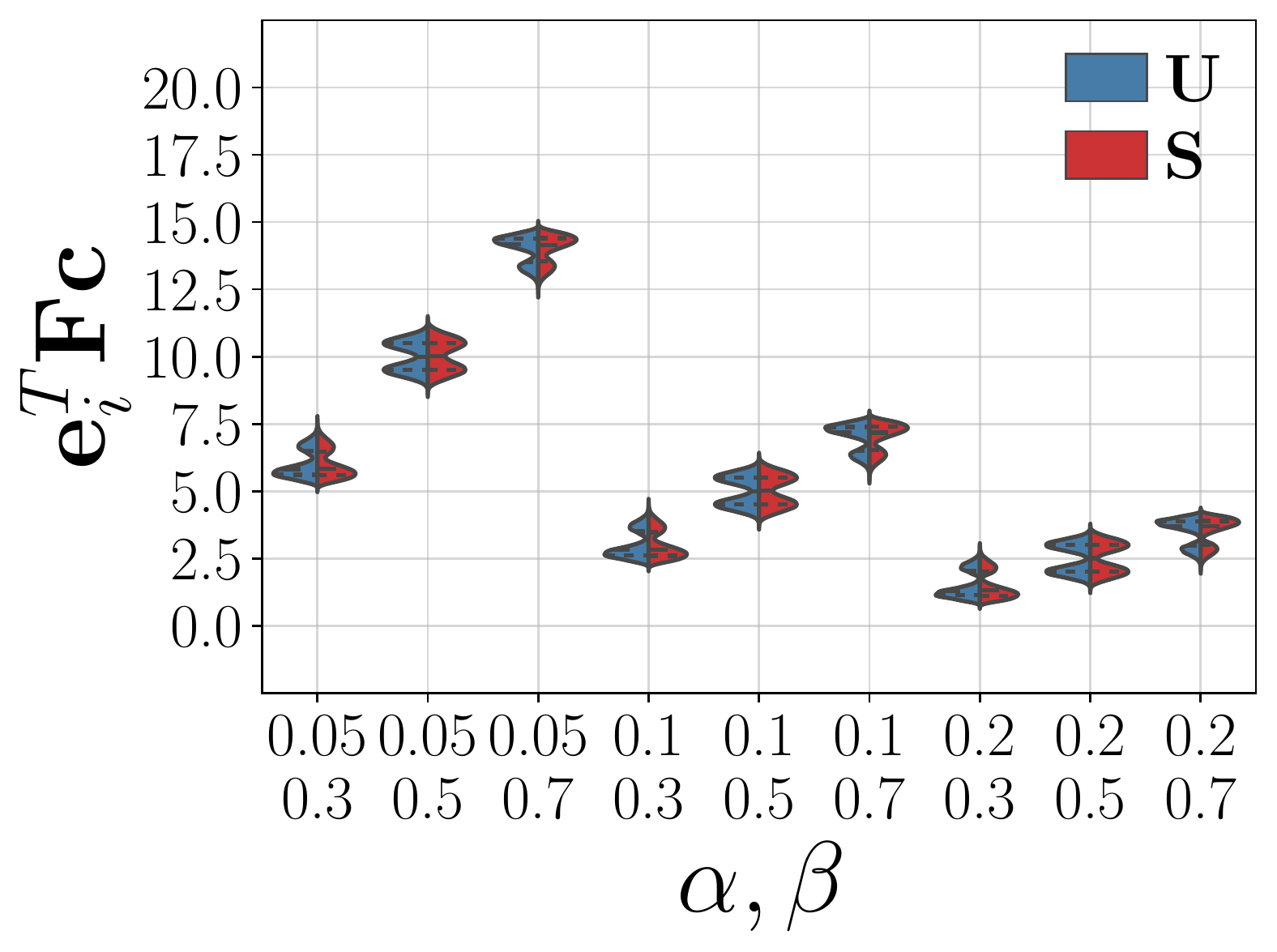}
		\subcaption{\synuni, cost function $\labeling_B$}
	\end{subfigure}~
	\begin{subfigure}{0.5\linewidth}
		\centering
		\includegraphics[width=\linewidth]{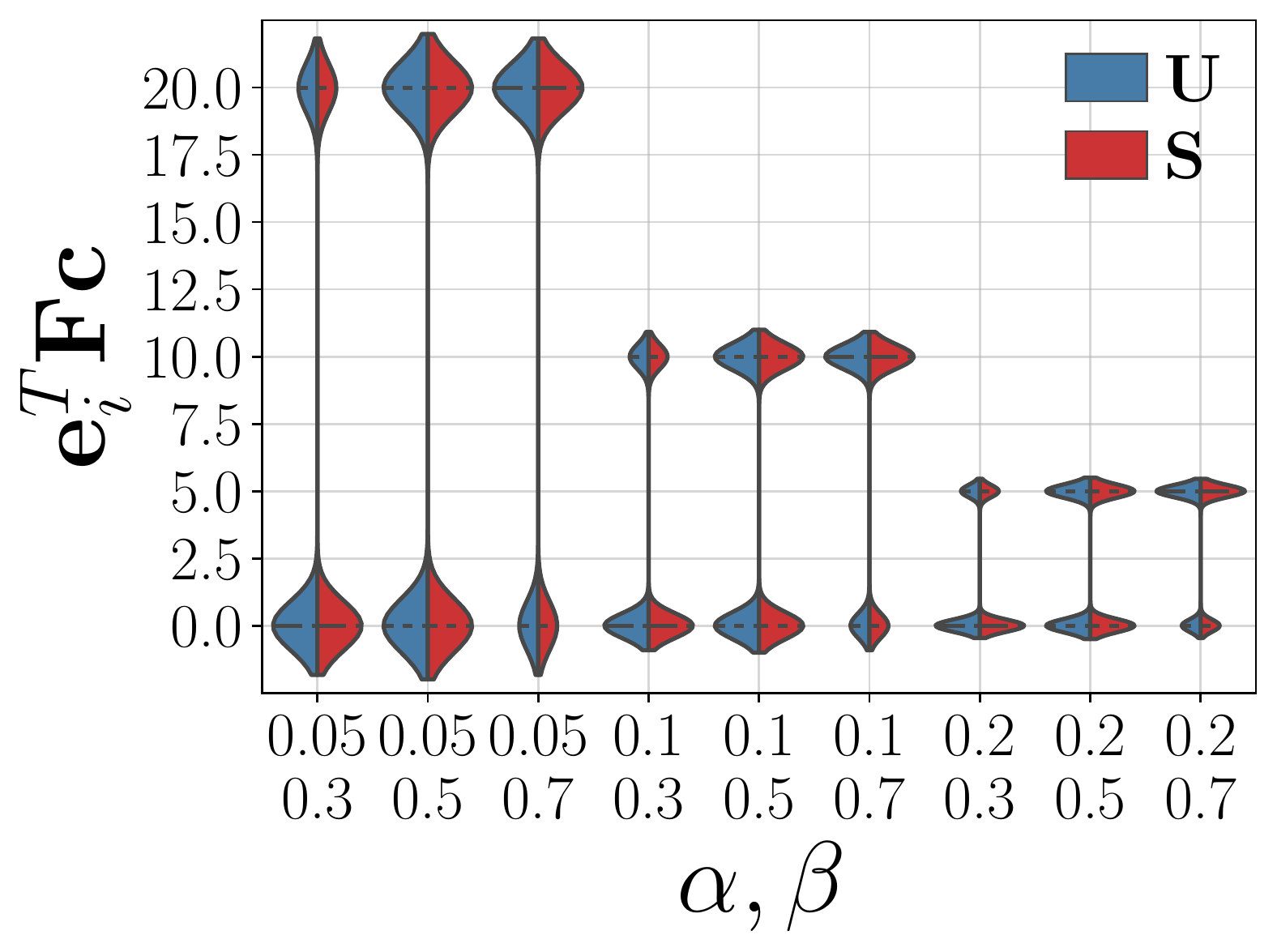}
		\subcaption{\synhom, cost function $\labeling_B$}
	\end{subfigure}
	\begin{subfigure}{0.5\linewidth}
		\centering
		\includegraphics[width=\linewidth]{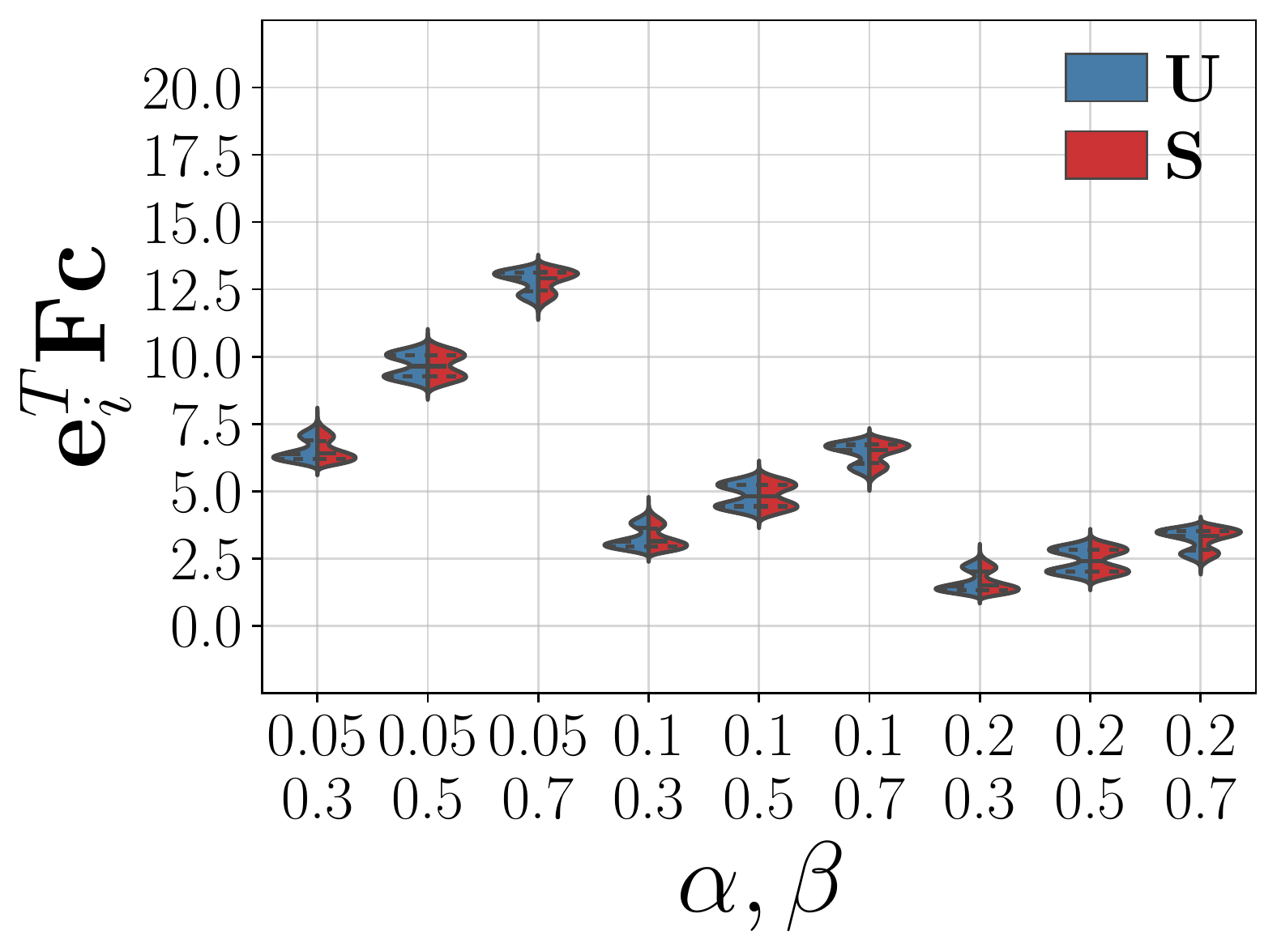}
		\subcaption{\synuni, cost function $\labeling_R$}
	\end{subfigure}~
	\begin{subfigure}{0.5\linewidth}
		\centering
		\includegraphics[width=\linewidth]{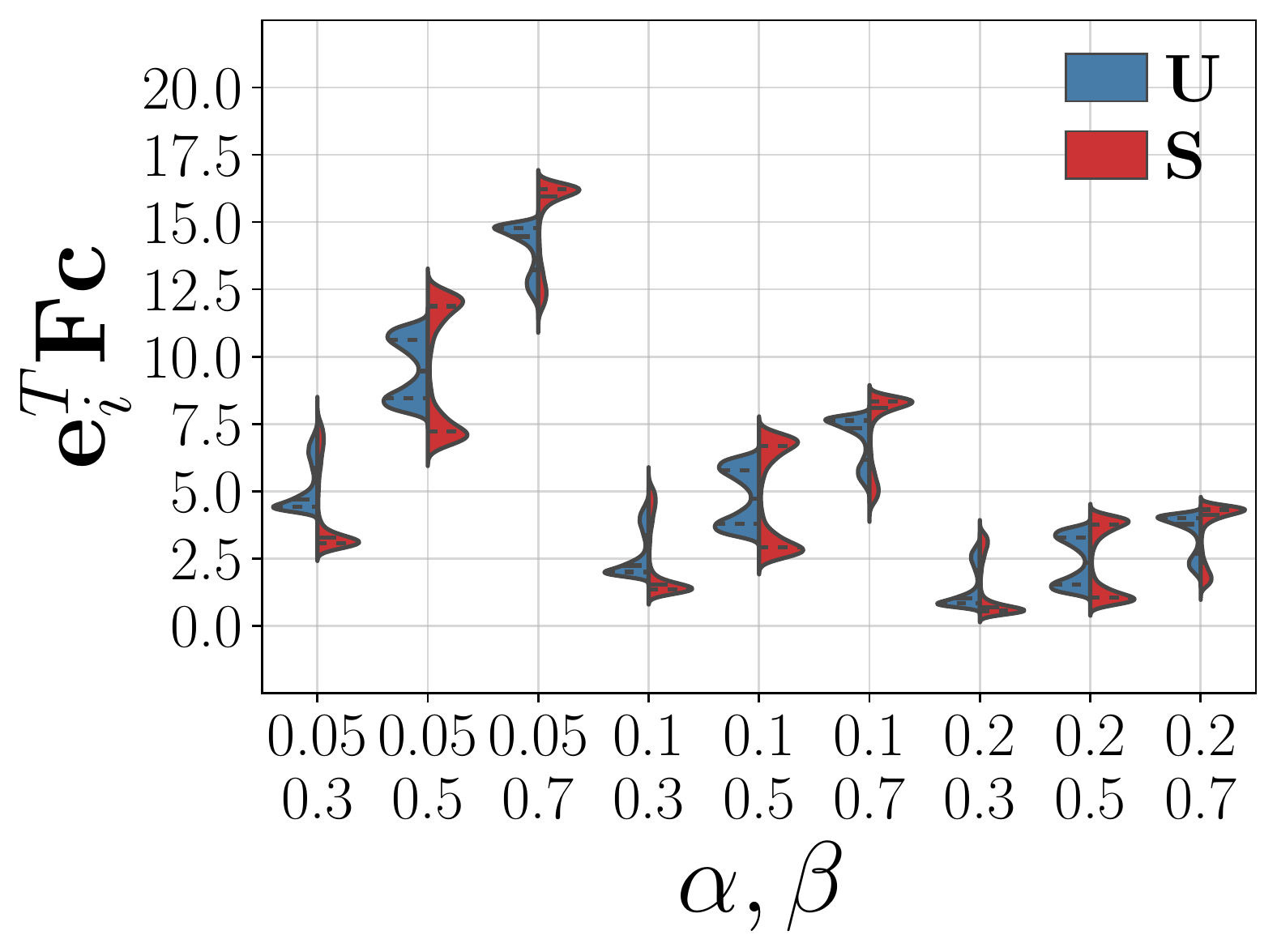}
		\subcaption{\synhom, cost function $\labeling_R$}
	\end{subfigure}
	\caption{%
		Distributions of initial exposures $\unitvector_i^T\fundamental\labelingvector$ 
		for nodes in \synuni and \synhom graphs with $100\,000$ nodes, 
		for all combinations of absorption probabilities $\pabsorption\in\{0.05,0.1,0.2\}$ ($x$-axis label, first row), 
		fractions of latently harmful nodes $\fractionbad\in\{0.3,0.5,0.7\}$  ($x$-axis label, second row), 
		and out-edge probability distributions $\probabilityshape\in\{\mathbf{U},\mathbf{S}\}$ (color).
		While the choice of the cost function barely makes a difference under random edge placements (\synuni, left), 
		it has a tremendous impact under homophilous edge placements (\synhom, right).
	}
	\label{fig:syn-costs}
\end{figure}

\subsubsection{Statistics}
\label{apx:datasets:synthetic:statistics}

In \cref{fig:syn-costs}, we show the distributions of initial exposures for nodes in our \synuni and \synhom graphs. 
For \synuni, we observe that the range of initial exposures is small and the cost function choice barely makes a difference,
which is expected as edges are placed uniformly at random. 
In contrast, for \synhom, 
we observe the maximum range of initial exposures under $\labeling_B$, 
as homophilous sampling under binary costs effectively splits the graph into two components consisting of harmful and benign nodes, respectively.
Under $\labeling_R$, 
we still observe a range of initial exposures that is twice to thrice as large as in \synuni graphs, 
and the probability shape $\probabilityshape\in\{\mathbf{U},\mathbf{S}\}$ strongly influences the distribution of initial exposures.
These are again effects of homophilous sampling.

\begin{table}[t]
	\centering
	\caption{Parameters used to generate \synuni and \synhom graphs.}
	\label{tab:syntheticparameters}
	\begin{tabular}{p{0.28\linewidth}@{\hskip 2pt}p{0.69\linewidth}}
		\toprule
		Parameter&Meaning\\
		\midrule
		$\outregulardegree = 5$&Regular out-degree\\
		$\nnodes \in \{10^i\mid i$\newline \phantom{1000}$\in \{2,3,4,5\}\}$&Number of nodes in $\graph$\\
		$\pabsorption\in\{0.05,0.1,0.2\}$&Random-walk absorption probability\\
		$\probabilityshape\in\{\mathbf{U},\mathbf{S}\}$&Probability shape over a node's out-edges:\newline 
		-- \textbf{U}niform\newline \phantom{-- }$\big(\probability{ij}=\frac{1-\pabsorption}{\outregulardegree}$ for all $\edge{i}{j}\in\edges\big)$;\newline -- \textbf{S}kewed\newline \phantom{-- }$\big(\frac{1-\pabsorption}{\outregulardegree}\cdot p$ for $p \in \langle0.35, 0.25, 0.20, 0.15, 0.05\rangle\big)$\\
		$\fractionbad\in\{0.3,0.5,0.7\}$&Fraction of latently harmful nodes\newline
		-- \synuni: fraction of nodes $i$ with $\labeling_i = 1$\newline
		-- \synhom: fraction of nodes drawn from the beta\newline \phantom{-- }distribution with parameters $\alpha=7$, $\beta=1$
		\\
		$\labeling\in\{\labeling_B,\labeling_R\}$&Cost functions\vspace*{1pt}\newline
		-- $\labeling_B(i) =\begin{cases}
			1&\text{if $i$ is latently harmful}\\
			0&\text{otherwise}\\
		\end{cases}$\newline
		-- $\labeling_R(i) =\begin{cases}
			\mathrm{Beta}(7,1)&\text{if $i$ is latently harmful}\\
			\mathrm{Beta}(1,10)&\text{otherwise}\\
		\end{cases}$
		\\
		\bottomrule
	\end{tabular}
\end{table}

\subsection{Real-World Data}
\label{apx:datasets:real}

\subsubsection{Preprocessing}

\paragraph{YouTube datasets}
For our YouTube datasets, 
like \citeauthor{fabbri2022rewiring} \cite{fabbri2022rewiring}, 
who experiment with a prior (not uniquely identified) version of this dataset,
we generate $\outregulardegree$-regular recommendation graphs for $\outregulardegree\in\{5,10,20\}$ that contain only videos with at least 100\,000, resp.\ 10\,000, views as nodes 
(\ytone, resp.\ \yttwo). 
Similar to \citeauthor{fabbri2022rewiring} \cite{fabbri2022rewiring}
we treat the observed recommendations as implicit feedback interactions, 
eliminate sinks in the observed recommendation graph, 
use alternating least squares to generate relevance scores \cite{hu2008collaborative}, 
and then take the nodes with the top scores as targets of out-edges in our reconstructed recommendation graphs. 
We additionally distinguish three absorption probabilities $\pabsorption\in\{0.05,0.1,0.2\}$ 
and two shapes $\probabilityshape\in\{\mathbf{U},\mathbf{S}\}$ of probability distributions over out-edges in our random-walk model, 
which leaves us with 36 transition matrices from six underlying graph structures.

\paragraph{NELA-GT datasets}

To create our NELA-GT datasets, 
we restrict ourselves to news items of \emph{at least 140 characters} (thus excluding boilerplate messages which we suspect were captured by accident)
that were published in \emph{January 2021} by one of the 341 outlets for which all veracity labels are present, 
and consider the articles containing the authors' \emph{January 6 keywords} (\nelaone), 
the articles containing the authors' \emph{COVID-19 keywords} (\nelatwo), 
and the collection containing \emph{all} articles (\nelathree). 
After embedding all news items using the \emph{all-MiniLM-L6-v2} model from the Sentence-Transformers library \cite{reimers2019sentence}, 
we compute pairwise cosine similarities $\cos(i,j)$ between all articles from the respective collection,
transform these similarities into relevance scores between 0 and 1 via min-max-normalization ($\relevancematrix[i,j] = \frac{\cos(i,j)+1}{2}$), 
and take the news items with the highest scores as targets of out-edges in our initial news feed graphs.

\subsubsection{Cost functions}
\label{apx:data:costfunctions}
For both the \yt and the \nf datasets, we measure performance based on four different cost functions $\labeling$ 
from two binary and two real-valued assignments of costs to channels and their videos ($\labeling_{B1}$, $\labeling_{B2}$, resp.\ $\labeling_{R1}$, $\labeling_{R2}$).

\paragraph{YouTube datasets}
Our cost functions for the \yt datasets map the channel categories provided with the original dataset to costs based on different mapping rules.
\Cref{tab:yt-costs} details the assignment of costs to video channels under our four different cost functions,
and in \cref{tab:yt-stats}, we provide the number of videos and the number of channels per category in \ytone and \yttwo. 
Additionally, 
\cref{tab:yt-starting-costs} lists the expected initial exposures of nodes in each of our YouTube recommendation graphs. 

\paragraph{NELA-GT datasets}
The costs we assign to nodes in our \nf datasets 
are based on the \emph{Media Bias/Fact Check scores} as well as the \emph{questionable source} and \emph{conspiracy/pseudoscience} flags of news outlets 
provided with the original dataset.
As the number of news outlets covered by this dataset is too large to detail their individual cost assignments,
here, we instead state how we transform the labels provided with the dataset into cost assignments under our four different cost functions. 
We define
\begin{align*}
	\labeling_{B1} &= \begin{cases}
		1&\text{if}~\texttt{questionable\_source} = 1\\&\text{or}~\texttt{conspiracy\_pseudoscience} = 1\\
		0&\text{otherwise,}
	\end{cases}\\
	\labeling_{B2} &= \begin{cases}
		1&\text{if}~\texttt{factuality} \leq 2\\
		0&\text{otherwise,}
	\end{cases}\\
\labeling_{R1} &= 1 - \frac{\texttt{factuality}}{5}\;,~\text{and}\\
\labeling_{R2} &= 1 - \frac{\texttt{label}}{2}\;,
\end{align*}
where \texttt{typewritten variables} are the names of the corresponding columns in the original data, 
$\texttt{questionable\_source}\in\{0,1\}$, 
$\texttt{conspiracy\_pseudoscience}\in\{0,1\}$, 
$\texttt{factuality}\in\{0,1,2,3,4,5\}$, and $\texttt{label}\in\{0,1,2\}$ is an aggregate label combining the other scores.
An overview of the resulting cost assignments in each of our \nf datasets is given in \cref{tab:nela-cost-counts}. 
In \cref{tab:nela-initial-exposures}, 
we additionally state the expected total exposure as well as the 	total segregation and the maximum segregation from \citeauthor{fabbri2022rewiring} \cite{fabbri2022rewiring} for all \nf datasets with $\pabsorption = 0.05$ and  $\probabilityshape=\textbf{U}$.

\begin{table}[b]
	\centering
	\caption{Costs of videos from each channel under our four different cost functions.}
	\label{tab:yt-costs}
	\begin{tabular}{lrrrr}
\toprule
             Category &  $c_{B1}$ &  $c_{B2}$ &  $c_{R1}$ &  $c_{R2}$ \\
\midrule
             Alt-lite &       1.0 &       1.0 &       0.8 &       0.8 \\
            Alt-right &       1.0 &       1.0 &       1.0 &       1.0 \\
                Incel &       0.0 &       1.0 &       0.4 &       0.6 \\
			IDW &       1.0 &       1.0 &       0.6 &       0.2 \\
                MGTOW &       0.0 &       1.0 &       0.4 &       0.6 \\
                  MRA &       0.0 &       1.0 &       0.2 &       0.4 \\
                 NONE &       0.0 &       0.0 &       0.0 &       0.0 \\
                  PUA &       0.0 &       1.0 &       0.2 &       0.4 \\
               center &       0.0 &       0.0 &       0.0 &       0.0 \\
                 left &       0.0 &       0.0 &       0.0 &       0.0 \\
          left-center &       0.0 &       0.0 &       0.0 &       0.0 \\
                right &       0.0 &       0.0 &       0.0 &       0.0 \\
         right-center &       0.0 &       0.0 &       0.0 &       0.0 \\
\bottomrule
\end{tabular}

\end{table}

\subsubsection{Statistics}
\label{apx:datasets:statistics}

\paragraph{In-degree distributions}
As our real-world graphs are $\outregulardegree$-out-regular by construction, their out-degree distributions are uniform. 
In contrast, the in-degree distributions of these graphs are highly skewed.
In \cref{fig:real-indegrees-zero}, 
we show the normalized in-degree distributions of our two largest real-world datasets, \nelathree and \yttwo. 
Note that in-degrees, at least visually, 
appear to be \emph{exponentially} distributed in the \nelathree graph and \emph{power-law} distributed in the \yttwo graph. 
Further, as illustrated in \cref{fig:real-indegrees}, 
even when considering only non-zero in-degrees and \emph{all} real-world graphs, 
the \nf graphs appear to be about an order of magnitude less skewed than the \yt graphs.

\clearpage

\begin{table}[t]
	\centering
	\caption{Number of videos $\cardinality{V}$ and number of channels $\cardinality{C}$ for each of our YouTube datasets.}
	\label{tab:yt-stats}
	\begin{tabular}{lrrrr}
\toprule
 & \textsc{YT-100k} & \textsc{YT-10k} & \textsc{YT-100k} & \textsc{YT-10k} \\
Category & $|V|$ & $|V|$ & $|C|$ & $|C|$ \\
\midrule
Alt-lite & 8\,908 & 31\,483 & 90 & 106 \\
Alt-right & 658 & 4\,685 & 41 & 71 \\
Incel & 44 & 322 & 13 & 28 \\
IDW & 6\,720 & 19\,146 & 79 & 85 \\
MGTOW & 431 & 6\,863 & 49 & 71 \\
MRA & 167 & 1\,522 & 17 & 27 \\
NONE & 2\,477 & 6\,590 & 21 & 30 \\
PUA & 4\,414 & 14\,209 & 87 & 119 \\
center & 2\,503 & 10\,117 & 16 & 16 \\
left & 4\,433 & 14\,705 & 16 & 16 \\
left-center & 8\,587 & 33\,617 & 24 & 24 \\
right & 370 & 3\,253 & 6 & 6 \\
right-center & 703 & 4\,060 & 5 & 5 \\
\bottomrule
\end{tabular}

\end{table}

\begin{figure}[t]
	\centering
	\begin{subfigure}{0.5\linewidth}
		\centering
		\includegraphics[width=\linewidth]{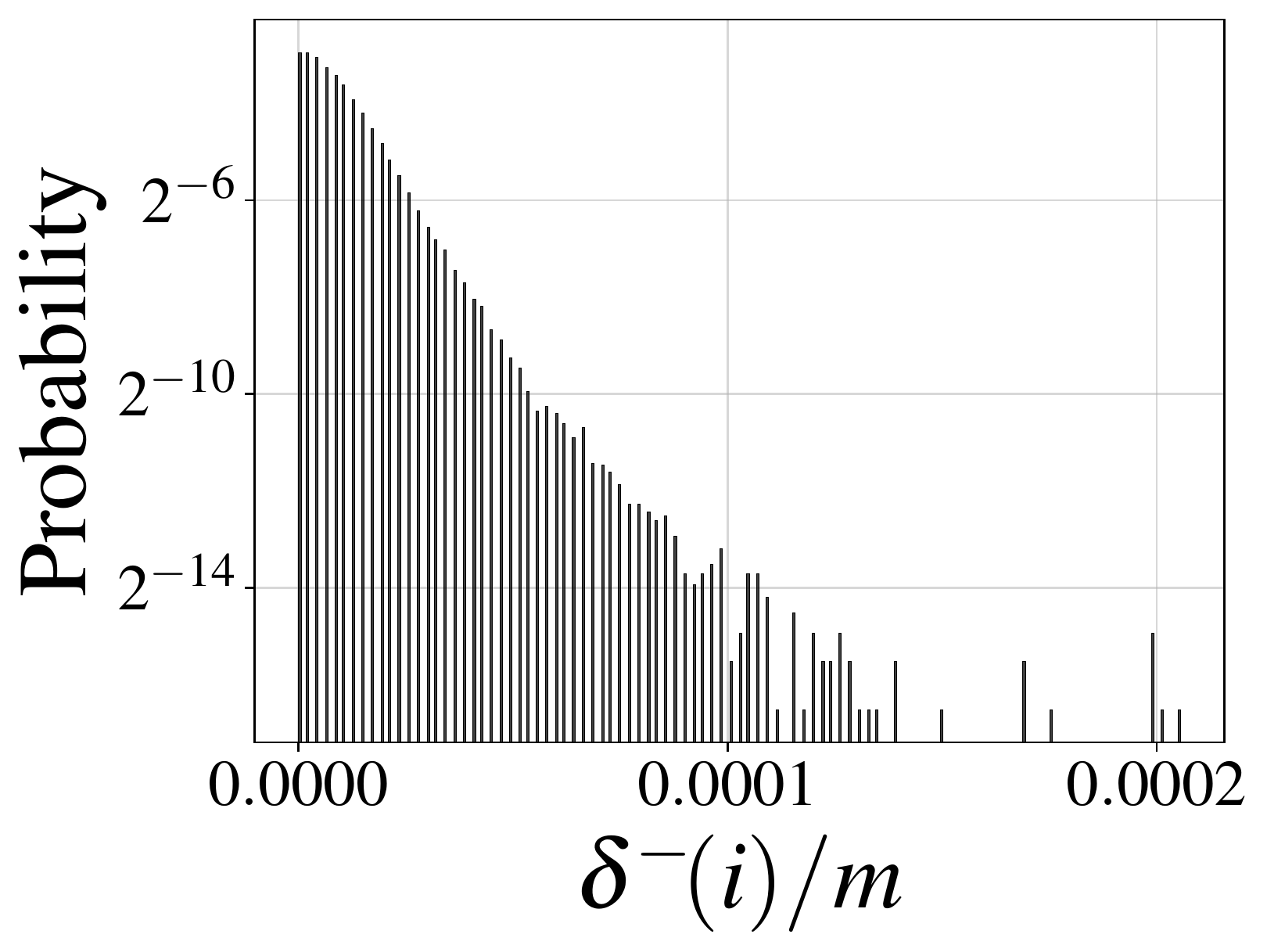}
		\subcaption{\nelathree}
	\end{subfigure}~%
	\begin{subfigure}{0.5\linewidth}
		\centering
		\includegraphics[width=\linewidth]{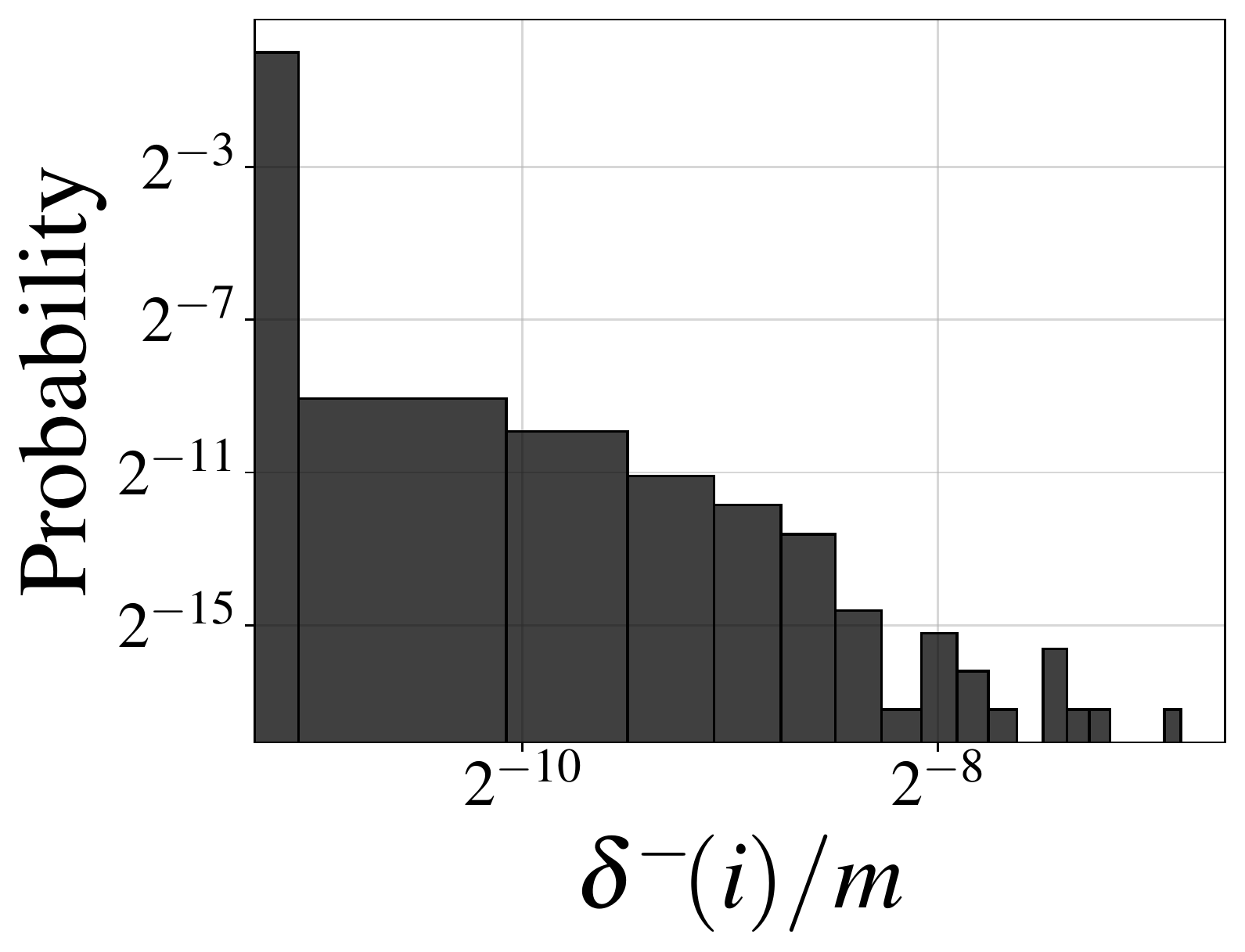}
		\subcaption{\yttwo}
	\end{subfigure}
	\caption{%
		Normalized in-degree distributions of our two largest real-world datasets for $\outregulardegree = 20$. 
		The in-degree distribution of the \yttwo graph is considerably more skewed than the in-degree distribution of the \nelathree graph.
	}\label{fig:real-indegrees-zero}
\end{figure}

\begin{figure}[t]
	\centering
	\begin{subfigure}{0.5\linewidth}
		\centering
		\includegraphics[width=\linewidth]{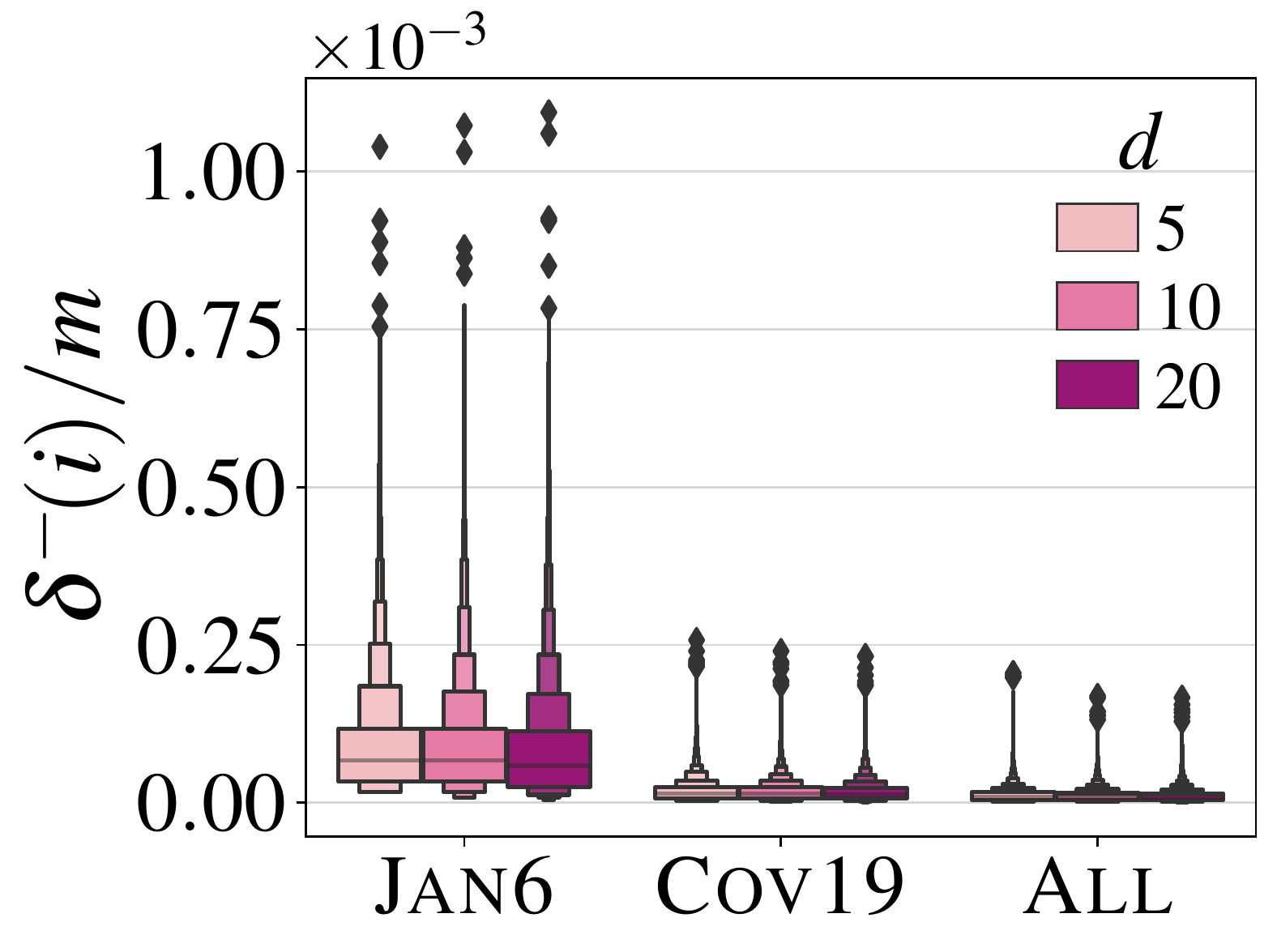}
		\subcaption{\nf}
	\end{subfigure}~%
	\begin{subfigure}{0.5\linewidth}
		\centering
		\includegraphics[width=\linewidth]{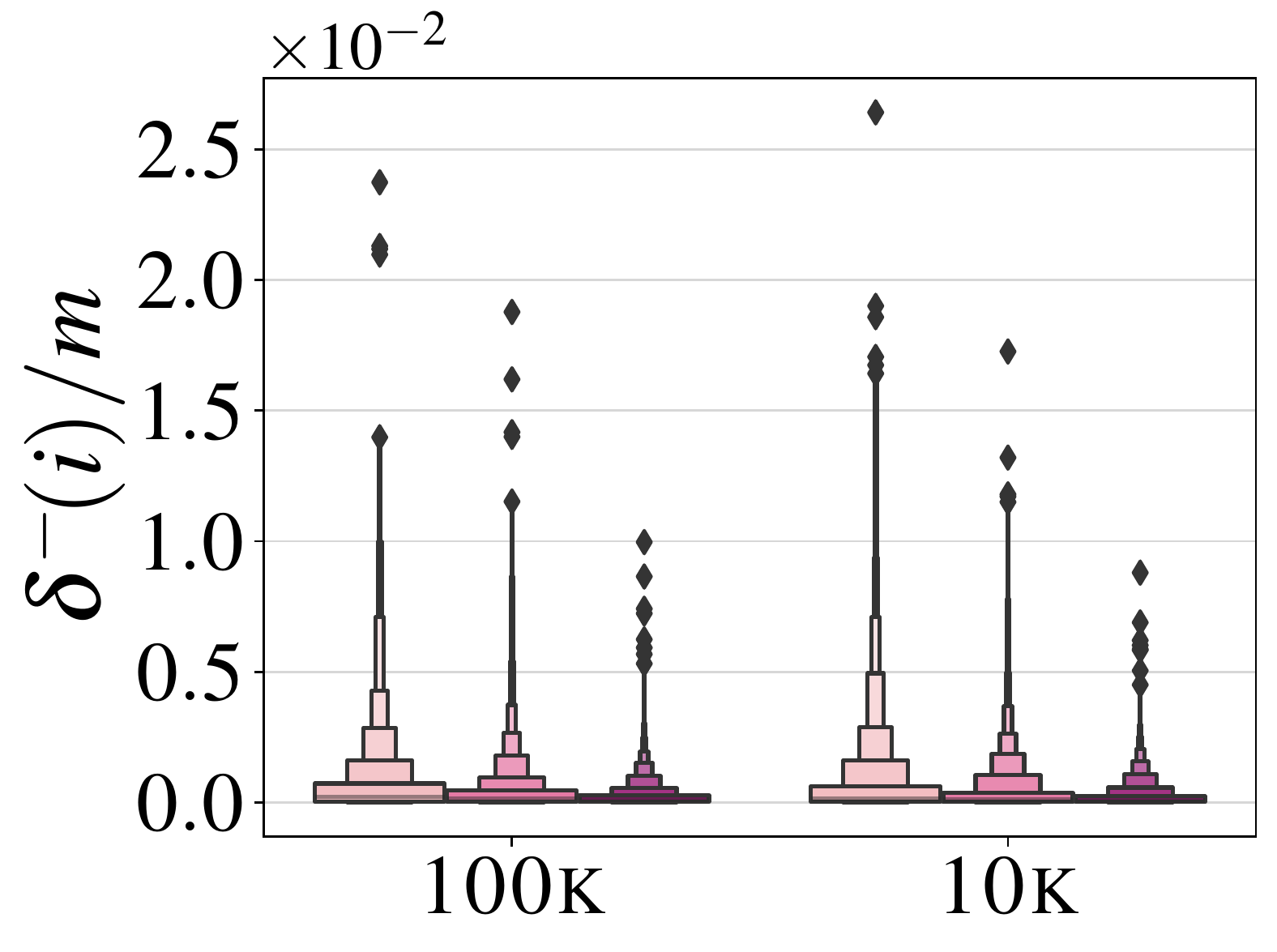}
		\subcaption{\yt}
	\end{subfigure}
	\caption{%
		Distributions of \emph{nonzero} in-degrees $\indegree{i}$, normalized by the number of edges $\nedges$, 
		for each of our real-world datasets. 
		The \nf datasets are an order of magnitude more balanced than the \yt datasets, 
		and within one collection, smaller graphs have more skewed in-degree distributions.
	}\label{fig:real-indegrees}
\end{figure}

\paragraph{Relevance-score distributions}

Complementing the discussion in the main paper, 
in \cref{fig:relevancescores}, we show the relevance score distribution for each of our real-world datasets.
Note that the \ndcg used in our \ourproblemtwo experiments does not expect relevance scores to lie within a particular range, 
and that the relevance scores obtained by preprocessing \yt are not strictly bounded, 
but they are guaranteed to \emph{mostly} lie between $0$ and $1$, 
whereas the relevance scores obtained by preprocessing \nf are directly cosine similarities, rescaled to lie between $0$ and $1$.
As the relevance scores of the \nf datasets are very concentrated, 
the quality threshold $\qualitythreshold$ hardly constrains our rewiring options when solving \ourproblemtwo on \nf graphs.

\begin{table}[b]
	\centering
	\caption{%
		Expected initial exposure $\nicefrac{\objective{\graph}}{\nnodes}$ of nodes in $\graph$ for the YouTube datasets.
	}
	\label{tab:yt-starting-costs}
	\begin{tabular}{lrrrrrrr}
\toprule
 & $d$ & $\alpha$ & $\probabilityshape$ &    $c_{B1}$       &     $c_{B2}$      &    $c_{R1}$        &     $c_{R2}$      \\
\midrule
\ytone & 5  & 0.05 & \textbf{S} &     6.318 &     8.129 &     4.335 &     2.518 \\
&    &      & \textbf{U} &     6.506 &     8.475 &     4.486 &     2.637 \\
&    & 0.10 & \textbf{S} &     3.251 &     4.245 &     2.303 &     1.491 \\
&    &      & \textbf{U} &     3.357 &     4.412 &     2.377 &     1.530 \\
&    & 0.20 & \textbf{S} &     1.694 &     2.234 &     1.245 &     0.900 \\
&    &      & \textbf{U} &     1.737 &     2.297 &     1.272 &     0.908 \\
& 10 & 0.05 & \textbf{S} &     6.387 &     8.316 &     4.440 &     2.688 \\
&    &      & \textbf{U} &     6.605 &     8.701 &     4.623 &     2.842 \\
&    & 0.10 & \textbf{S} &     3.355 &     4.417 &     2.395 &     1.584 \\
&    &      & \textbf{U} &     3.466 &     4.590 &     2.482 &     1.647 \\
&    & 0.20 & \textbf{S} &     1.750 &     2.317 &     1.290 &     0.938 \\
&    &      & \textbf{U} &     1.796 &     2.382 &     1.324 &     0.961 \\
& 20 & 0.05 & \textbf{S} &     6.983 &     9.026 &     4.880 &     3.014 \\
&    &      & \textbf{U} &     7.372 &     9.444 &     5.153 &     3.195 \\
&    & 0.10 & \textbf{S} &     3.606 &     4.716 &     2.582 &     1.722 \\
&    &      & \textbf{U} &     3.749 &     4.874 &     2.687 &     1.801 \\
&    & 0.20 & \textbf{S} &     1.844 &     2.429 &     1.359 &     0.989 \\
&    &      & \textbf{U} &     1.894 &     2.486 &     1.398 &     1.022 \\\midrule
\yttwo  & 5  & 0.05 & \textbf{S} &     4.198 &     5.597 &     2.992 &     1.926 \\
       &    &      & \textbf{U} &     4.173 &     5.785 &     3.040 &     2.066 \\
       &    & 0.10 & \textbf{S} &     2.330 &     3.217 &     1.734 &     1.244 \\
       &    &      & \textbf{U} &     2.401 &     3.369 &     1.801 &     1.315 \\
       &    & 0.20 & \textbf{S} &     1.309 &     1.854 &     1.019 &     0.806 \\
       &    &      & \textbf{U} &     1.353 &     1.922 &     1.053 &     0.834 \\
       & 10 & 0.05 & \textbf{S} &     5.093 &     6.729 &     3.641 &     2.377 \\
       &    &      & \textbf{U} &     5.712 &     7.576 &     4.101 &     2.704 \\
       &    & 0.10 & \textbf{S} &     2.729 &     3.743 &     2.027 &     1.448 \\
       &    &      & \textbf{U} &     2.958 &     4.063 &     2.203 &     1.584 \\
       &    & 0.20 & \textbf{S} &     1.450 &     2.046 &     1.125 &     0.883 \\
       &    &      & \textbf{U} &     1.525 &     2.152 &     1.185 &     0.932 \\
       & 20 & 0.05 & \textbf{S} &     6.185 &     8.186 &     4.405 &     2.820 \\
       &    &      & \textbf{U} &     6.741 &     8.987 &     4.819 &     3.094 \\
       &    & 0.10 & \textbf{S} &     3.120 &     4.285 &     2.310 &     1.625 \\
       &    &      & \textbf{U} &     3.306 &     4.569 &     2.460 &     1.741 \\
       &    & 0.20 & \textbf{S} &     1.577 &     2.228 &     1.222 &     0.949 \\
       &    &      & \textbf{U} &     1.638 &     2.324 &     1.275 &     0.996 \\
\bottomrule
\end{tabular}

\end{table}

\paragraph{Presence of safe nodes}

In \cref{thm:submodularity,cor:apxguarantee}, 
we established that if a graph $\graph$ has at least $\maxunsafeoutdegree$ safe nodes, 
where a node $i$ is safe if its exposure $\unitvector_i^T\fundamental\labelingvector$ is $0$
and $\maxunsafeoutdegree$ is the maximum degree of an unsafe node in $\graph$,
then we can approximate $\maxobjectivef$ up to a factor of $(1-\nicefrac{1}{e})$. 
In \cref{fig:safety}, we demonstrate that under \emph{all} our cost functions,
this applies to \emph{all} \yt graphs and roughly \emph{two thirds} of the \nf graphs, 
with the notable exception of news articles on the topic of January~6 
(i.e., content reporting on the Capitol riot).
Hence, our theoretical approximation guarantee mostly holds also in practice.

\clearpage

\begin{table}[t]
	\centering
	\caption{%
		Number of news outlets $\cardinality{C}$ and number of videos $\cardinality{\nodes}$ in each of our NELA-GT-2021 datasets, 
		for each unique combination of assigned costs.
	}
	\label{tab:nela-cost-counts}
	\begin{tabular}{llllrrrr}
\toprule
  &   &     &     &  & \textsc{NF-Jan6} & \textsc{NF-Cov19} & \textsc{NF-All} \\
$c_{B1}$ & $c_{B2}$ & $c_{R1}$ & $c_{R2}$ &   $|C|$    &            $|V|$      &       $|V|$            &         $|V|$        \\
\midrule
0 & 0 & 0.0 & 0.0 &     4 &              147 &               631 &             993 \\
0 & 0 & 0.2 & 0.0 &    69 &             3188 &             17794 &           29021 \\
0 & 0 & 0.4 & 0.0 &    18 &             1463 &              3986 &            6549 \\
0 & 1 & 0.6 & 0.0 &     1 &                0 &                 0 &               0 \\
0 & 1 & 0.6 & 1.0 &    40 &             3920 &             19398 &           32303 \\
1 & 1 & 0.6 & 0.5 &    73 &             1533 &              8804 &           15549 \\
1 & 1 & 0.8 & 0.5 &   112 &             1497 &              6436 &           14337 \\
1 & 1 & 1.0 & 0.5 &    24 &              237 &               983 &            1987 \\
\bottomrule
\end{tabular}

\end{table}

\begin{table}[b]
	\centering
	\caption{%
		Initial expected total exposure to harm  $\objective{\graph}$, 
		as well as 
		total segregation and maximum segregation from \citeauthor{fabbri2022rewiring} \cite{fabbri2022rewiring}, 
		on each of our \nf graphs with $\pabsorption = 0.05$,  $\probabilityshape=\textbf{U}$, 
		and $\roundingthreshold = 1.0$ for segregation computations under $\labeling_{R1}$ and $\labeling_{R2}$.
	}
	\label{tab:nela-initial-exposures}
	\begin{tabular}{lllrrr}
 \toprule
 &   $\outregulardegree$ &  $\labeling$        &   $\objective{\graph}$ & Total Seg. & Max. Seg. \\
 \midrule
\multirow[c]{12}{*}{\nelaone} & \multirow[c]{4}{*}{5} & $c_{B1}$ & 51\,940 & 6\,896 & 19.99\\
 &  & $c_{B2}$ & 118\,506 & 25\,038 & 19.99\\
 &  & $c_{R1}$ & 104\,948 & 259 & 2.04\\
 &  & $c_{R2}$ & 92\,536 & 8\,555 & 19.99\\
 & \multirow[c]{4}{*}{10} & $c_{B1}$ & 50\,682 & 5\,342 & 5.79\\
 &  & $c_{B2}$ & 114\,141 & 18\,151 & 19.99\\
 &  & $c_{R1}$ & 103\,482 & 250 & 1.47\\
 &  & $c_{R2}$ & 88\,800 & 6\,172 & 10.33\\
 & \multirow[c]{4}{*}{20} & $c_{B1}$ & 50\,319 & 4\,889 & 4.05\\
 &  & $c_{B2}$ & 113\,738 & 16\,194 & 6.50\\
 &  & $c_{R1}$ & 103\,445 & 246 & 1.21\\
 &  & $c_{R2}$ & 88\,579 & 5\,782 & 2.93\\\midrule
\multirow[c]{12}{*}{\nelatwo} & \multirow[c]{4}{*}{5} & $c_{B1}$ & 264\,781 & 54\,699 & 19.99\\
 &  & $c_{B2}$ & 635\,867 & 201\,579 & 19.99\\
 &  & $c_{R1}$ & 520\,763 & 1\,075 & 2.19\\
 &  & $c_{R2}$ & 503\,477 & 81\,903 & 19.99\\
 & \multirow[c]{4}{*}{10} & $c_{B1}$ & 252\,294 & 44\,313 & 19.99\\
 &  & $c_{B2}$ & 618\,645 & 157\,105 & 19.99\\
 &  & $c_{R1}$ & 513\,982 & 1\,038 & 1.86\\
 &  & $c_{R2}$ & 492\,498 & 65\,199 & 19.99\\
 & \multirow[c]{4}{*}{20} & $c_{B1}$ & 248\,708 & 38\,597 & 19.99\\
 &  & $c_{B2}$ & 617\,014 & 134\,998 & 19.99\\
 &  & $c_{R1}$ & 513\,113 & 1\,020 & 1.48\\
 &  & $c_{R2}$ & 492\,660 & 56\,696 & 19.05\\\midrule
\multirow[c]{12}{*}{\nelathree} & \multirow[c]{4}{*}{5} & $c_{B1}$ & 520\,092 & 128\,388 & 19.99\\
 &  & $c_{B2}$ & 1\,111\,742 & 383\,640 & 19.99\\
 &  & $c_{R1}$ & 890\,383 & 3\,013 & 19.99\\
 &  & $c_{R2}$ & 851\,697 & 136\,356 & 19.99\\
 & \multirow[c]{4}{*}{10} & $c_{B1}$ & 496\,667 & 103\,825 & 19.99\\
 &  & $c_{B2}$ & 1\,089\,690 & 307\,444 & 19.99\\
 &  & $c_{R1}$ & 880\,536 & 2\,666 & 15.10\\
 &  & $c_{R2}$ & 841\,357 & 111\,298 & 19.99\\
 & \multirow[c]{4}{*}{20} & $c_{B1}$ & 480\,186 & 88\,983 & 19.99\\
 &  & $c_{B2}$ & 1\,076\,287 & 260\,998 & 19.99\\
 &  & $c_{R1}$ & 874\,785 & 2\,187 & 6.90\\
 &  & $c_{R2}$ & 836\,194 & 96\,895 & 19.99\\
 \bottomrule
\end{tabular}

\end{table}

\begin{figure}[t]
	\centering
	\includegraphics[height=3cm]{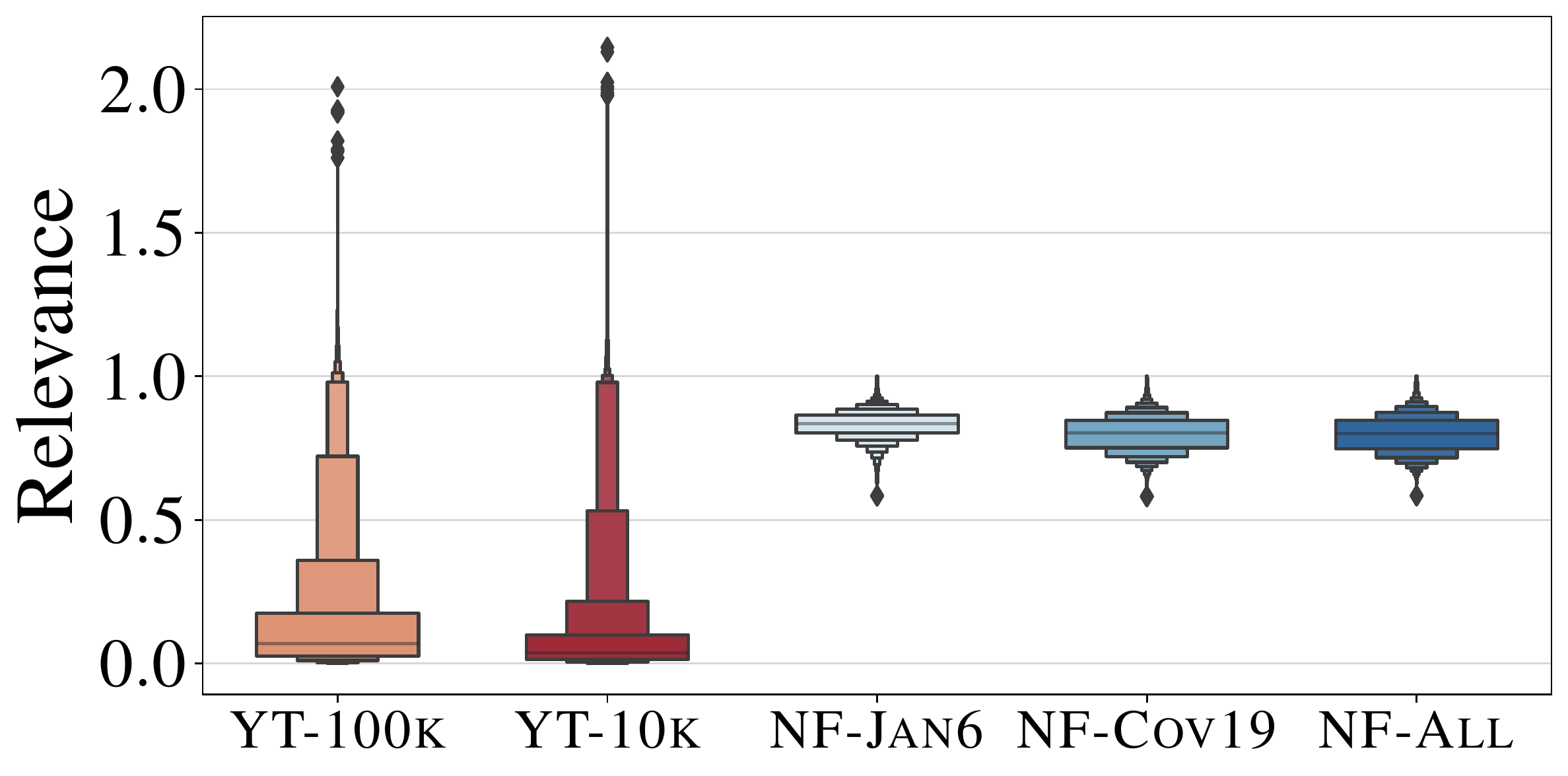}
	\caption{%
		Distributions of relevance scores $\relevancematrix[i,j]$ for each source node $i\in\nodes$ and the top $100$ nodes $j$ considered as potential rewiring targets, 
		for each of our real-world datasets in the \ourproblemtwo setting. 
		The relevance distributions are much more skewed in the \yt datasets than in the \nf datasets.
	}
	\label{fig:relevancescores}
\end{figure}

\begin{figure}[b]
	\centering
	\begin{subfigure}{0.478\linewidth}
		\centering
		\includegraphics[width=\linewidth]{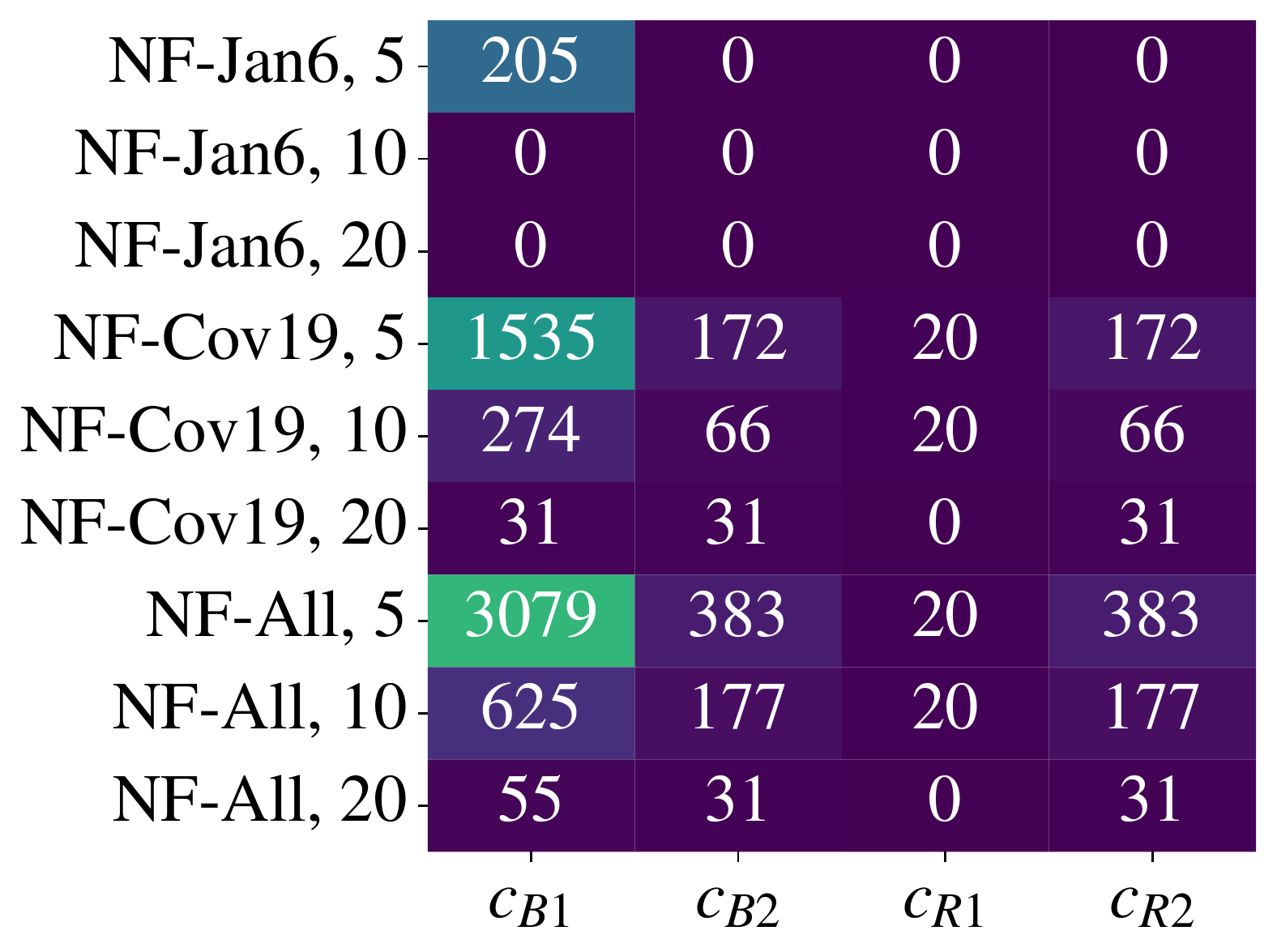}
		\subcaption{\nf}
	\end{subfigure}%
	\begin{subfigure}{0.5\linewidth}
		\centering
		\includegraphics[width=\linewidth]{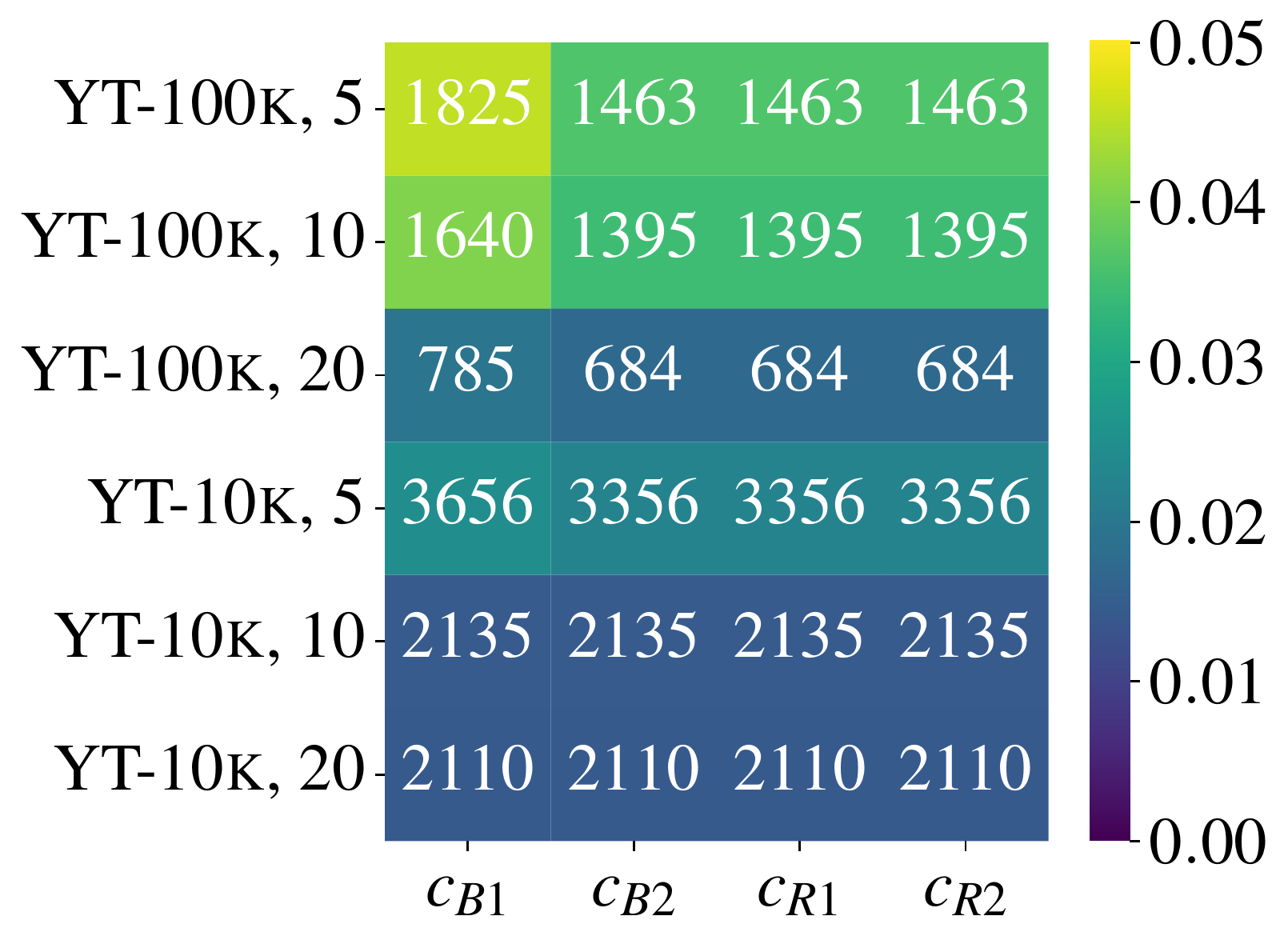}
		\subcaption{\yt}
	\end{subfigure}
	\caption{Fraction (color) and number (annotation) of safe nodes in each of our real-world graphs with degrees $\outregulardegree\in\{5,10,20\}$, under our four cost functions $\labeling\in\{\labeling_{B1},\labeling_{B2},\labeling_{R1},\labeling_{R2}\}$.
		In all \yt graphs and roughly two thirds of the \nf graphs, the precondition of \cref{thm:submodularity} holds, such that the greedy algorithm can approximate $\maxobjectivef$ up to a factor of $(1-\frac{1}{e})$.
	}\label{fig:safety}
\end{figure}

\paragraph{Impact of cost-function noise}

To see how errors in harmfulness assessment might impact our ability to rewire edges effectively,
we investigate the behavior of our exposure objective under noise in the cost function.
In particular, we assess how the distribution of node exposures shifts when we change the original cost vector $\labelingvector$ to a cost vector $\labelingvector'$ by
either swapping the cost of a randomly chosen harmful node with that of a randomly chosen benign node (\emph{cost swaps}), 
or setting the cost $\labeling_i$ of a randomly chosen node $i$ to its opposite, i.e., $1-\labeling_i$ (\emph{cost flips}). 
Illustrating the results on the \ytone dataset in \cref{fig:noise}, 
we observe that as expected---and \emph{by} \emph{construction}---,
node exposures are generally sensitive to individual cost assignments.
However, the median impact of moderate cost-function noise on node exposure levels is close to zero, 
and the most extreme cost fluctuations occur for nodes whose \emph{observed} exposure decreases as compared to their \emph{actual} exposure. 
The latter might lead \ourmethod to undervalue some highly exposed nodes in its rewiring considerations, 
but this risk is unavoidable when dealing with noisy data. 
In contrast to prior work, 
\ourmethod uses an exposure objective that depends on the cost assignments of \emph{all} nodes in the graph. 
Overall, our experiments with node-level cost-function noise demonstrate that this objective decays rather smoothly---%
not only in theory but also in practice.

\clearpage

\begin{figure}[t]
	\centering
	\begin{subfigure}{0.5\linewidth}
		\centering
		\includegraphics[width=\linewidth]{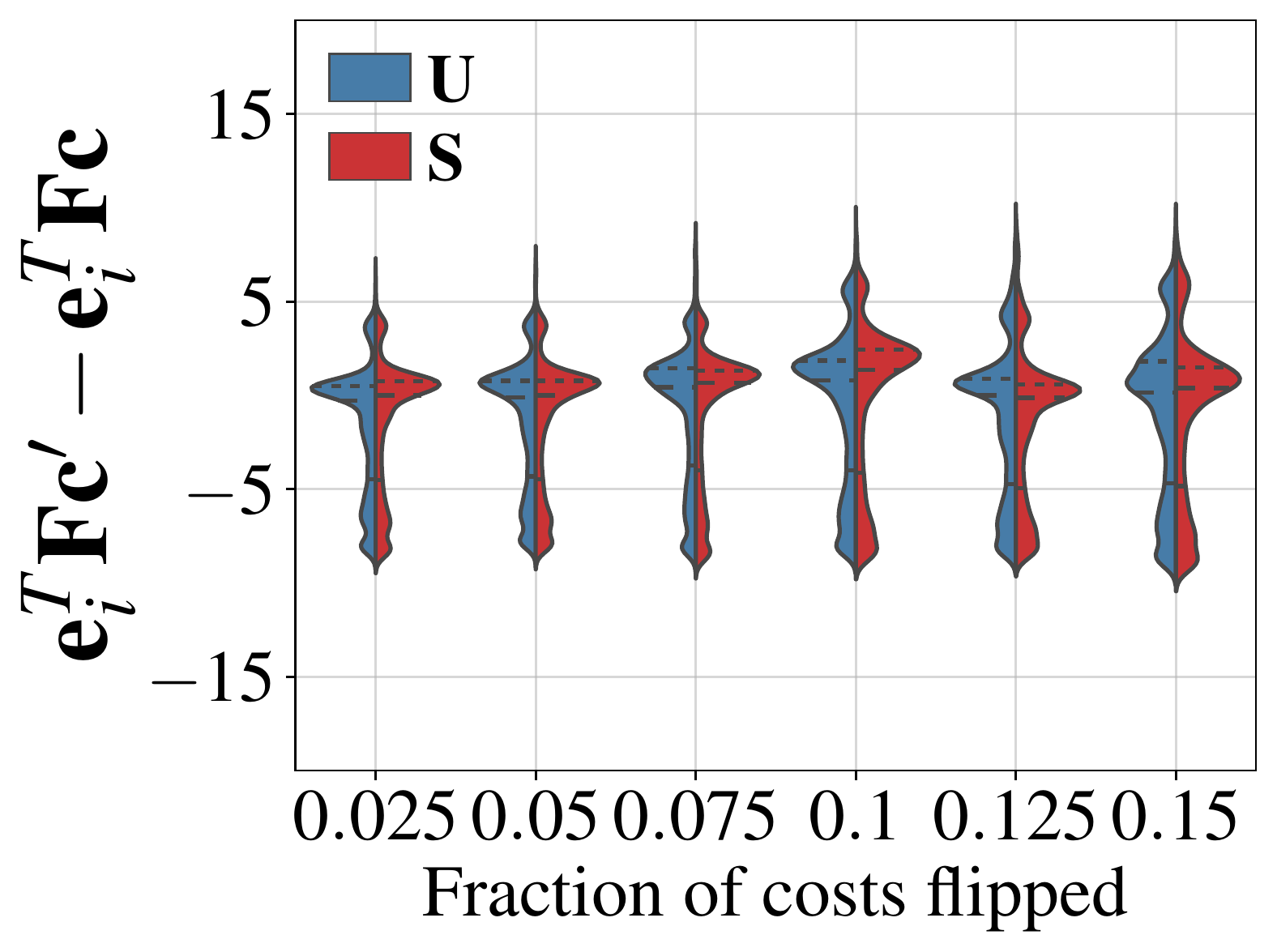}
		\subcaption{Cost flips}
	\end{subfigure}%
	\begin{subfigure}{0.5\linewidth}
		\centering
		\includegraphics[width=\linewidth]{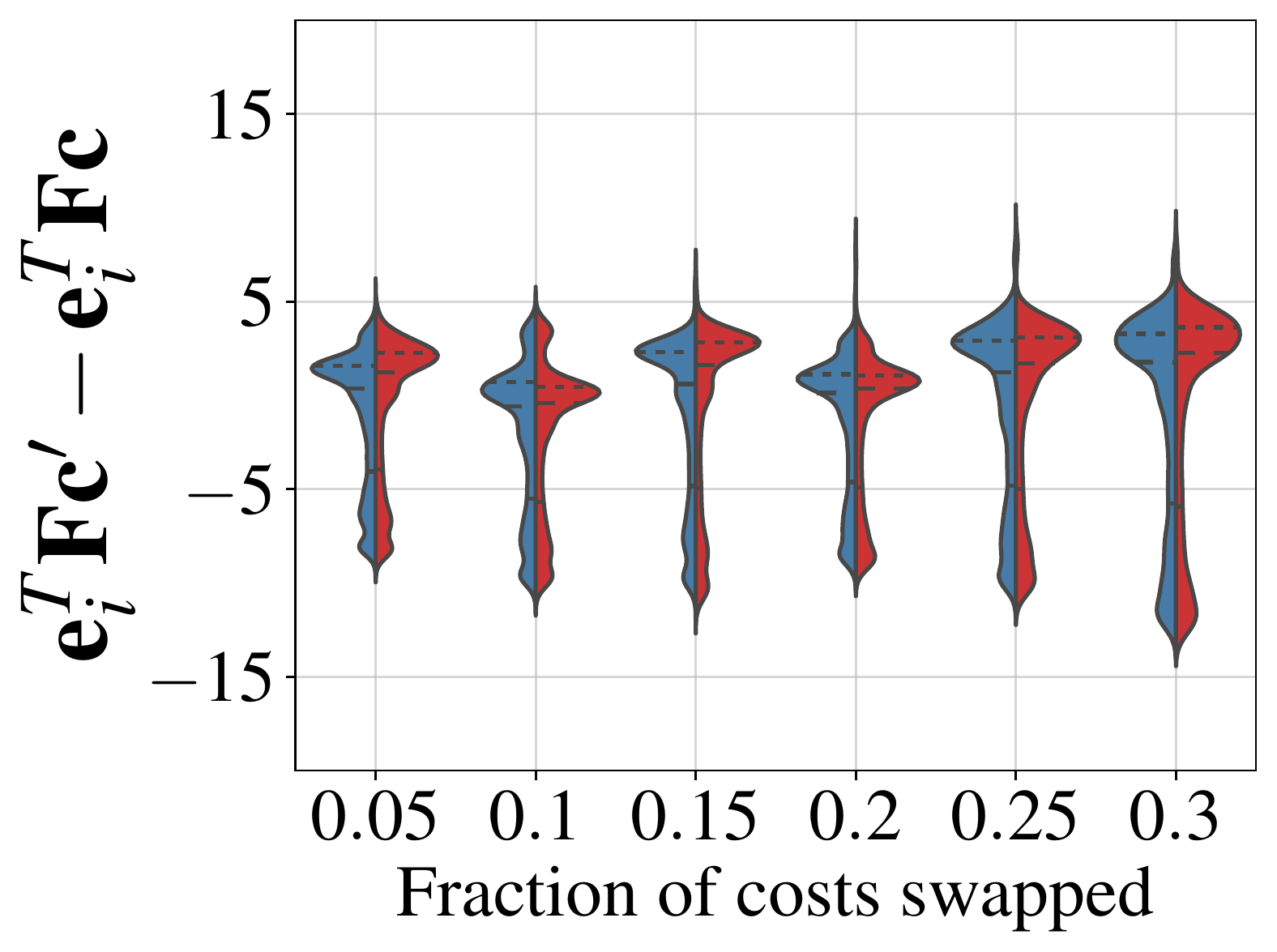}
		\subcaption{Cost swaps}
	\end{subfigure}
	\caption{%
		Distribution of differences between node exposures before and after the introduction of noise,
		shown for the \ytone dataset with $\outregulardegree=5$, $\pabsorption=0.05$,
		and $\probabilityshape\in\{\mathbf{U},\mathbf{S}\}$,
		as measured under $\labeling_{R1}$. 
		\emph{Negative} values signal that adding noise \emph{decreased} the exposure to harm of a particular node.
	}\label{fig:noise}
\end{figure}

\section{Further Experiments}
\label{apx:experiments}

\subsection{Impact of Modeling Choices}
\label{apx:exp:modelingchoices}

In the main text, we only demonstrated the impact of the quality threshold $\qualitythreshold$ on the performance of \ourmethod.
Here, we further discuss the performance impact of 
the regular out-degree $\outregulardegree$, 
the absorption probability $\pabsorption$, 
the shape of the probability distribution over out-edges $\probabilityshape$, 
and the cost function $\labeling$.

\paragraph{Impact of regular out-degree $\outregulardegree$}
Since the impact of individual edges on the objective function decreases as $\outregulardegree$ increases, 
for a given budget $\budget$,
we expect \ourmethod to reduce our objective more strongly for smaller values of $\outregulardegree$.
This is exactly what we find, as illustrated in \cref{fig:outregulardegree}, 
and the pattern persists across absorption probabilities $\pabsorption$, probability shapes $\probabilityshape$, quality thresholds $\qualitythreshold$, and cost functions $\labeling$.

\paragraph{Impact of absorption probability $\pabsorption$ and out-edge probability distribution shape $\probabilityshape$}
For smaller random-walk absorption probabilities $\pabsorption$, 
we obtain longer random walks and thus higher exposure to harmful content, 
and for $\probabilityshape = \mathbf{S}$, 
some edges are traversed particularly often.
Thus, given a constant budget $\budget$, 
we expect \ourmethod to achieve a \emph{larger} decrease of $\objectivef$ for smaller $\pabsorption$, 
and an initially \emph{faster} decrease on graphs with skewed out-edge probability distributions.
Again, this is what we find,
as depicted in \cref{fig:absorptionshape}.

\paragraph{Impact of cost function $\labeling$}
As the binary cost function $c_{B1}$ (used also in \cite{fabbri2022rewiring} on a prior version of the data from \cite{ribeiro2020auditing}) 
labels only videos from Alt-Right, Alt-Lite, and Intellectual Dark Web (IDW) channels as harmful ($\labeling_{B1} = 1$) and all other videos as benign ($\labeling_{B1} = 0$), 
whereas all other cost functions also assign positive cost to videos from anti-feminist channels (Incel, MGTOW, MRA, and PUA) (cf. \cref{tab:yt-costs}),
we expect \ourmethod to perform strongest under $\labeling_{B1}$. 
As exemplified in \cref{fig:cost_functions}, 
this is exactly what we observe, 
and the pattern persists across regular out-degrees $\outregulardegree$, 
absorption probabilities $\pabsorption$, 
distribution shapes $\probabilityshape$, 
and quality thresholds $\qualitythreshold$.  
Interestingly, we also consistently observe that \ourmethod is roughly equally strong under the binary cost function $\labeling_{B2}$ and the real-valued cost function $\labeling_{R1}$, 
and weakest under the real-valued cost function $\labeling_{R2}$.
As $\labeling_{R1}$ and $\labeling_{R2}$ differ only in how they assign costs to videos from IDW and anti-feminist channels, 
with $\labeling_{R1}$ ($\labeling_{R2}$) placing IDW to the \emph{right} (\emph{left}) of anti-feminist channels,
this means that reducing the exposure to harm is \emph{harder} when we consider the IDW more benign than anti-feminist communities, 
\emph{even though} there are more IDW videos in \ytone than videos from all anti-feminist communities combined. 

\begin{figure}[b]
	\centering
	\begin{subfigure}{0.5\linewidth}
		\centering
		\includegraphics[width=\linewidth]{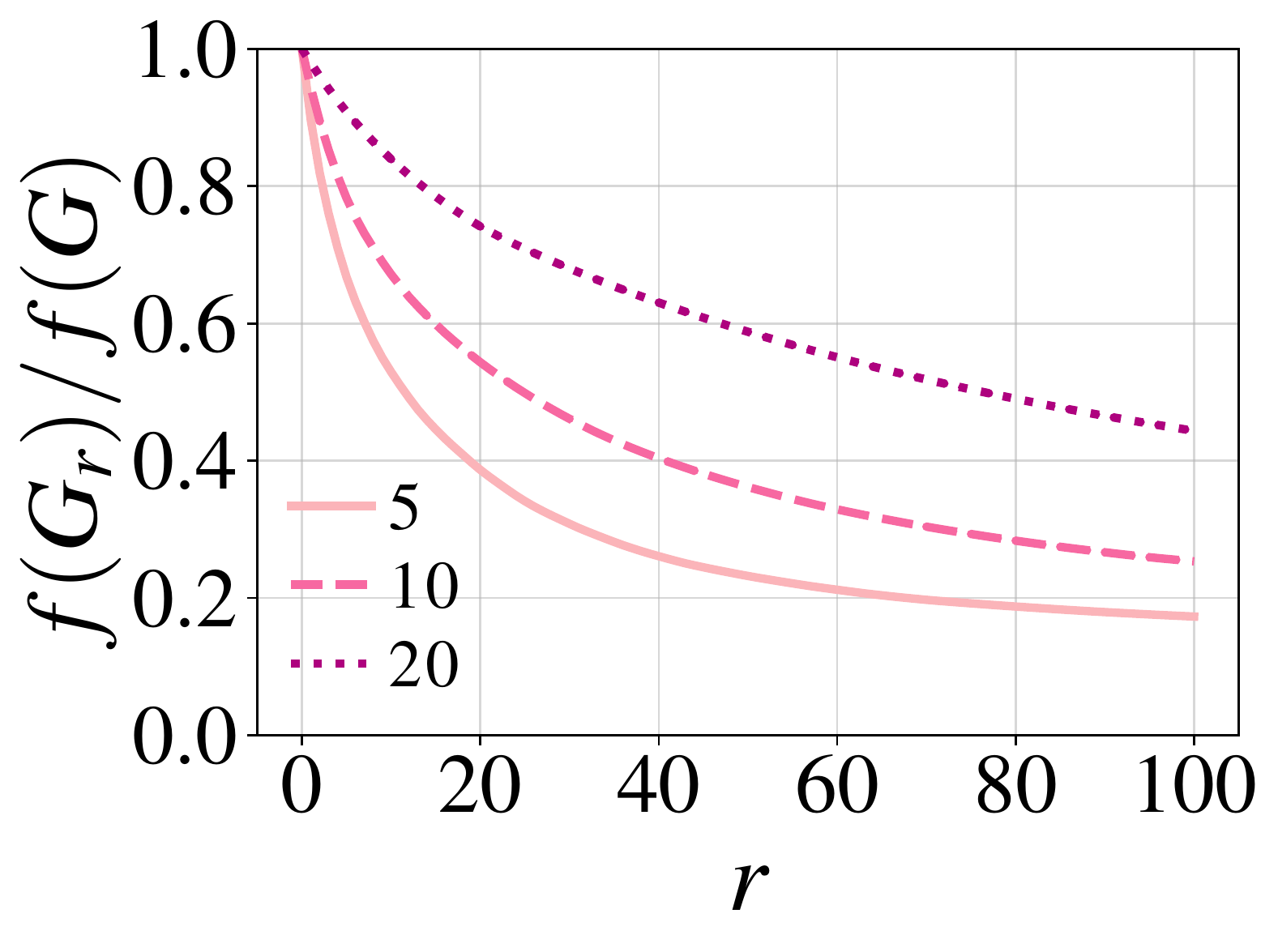}
		\subcaption{Absorption probability $\pabsorption = 0.05$}
	\end{subfigure}~%
	\begin{subfigure}{0.5\linewidth}
		\centering
		\includegraphics[width=\linewidth]{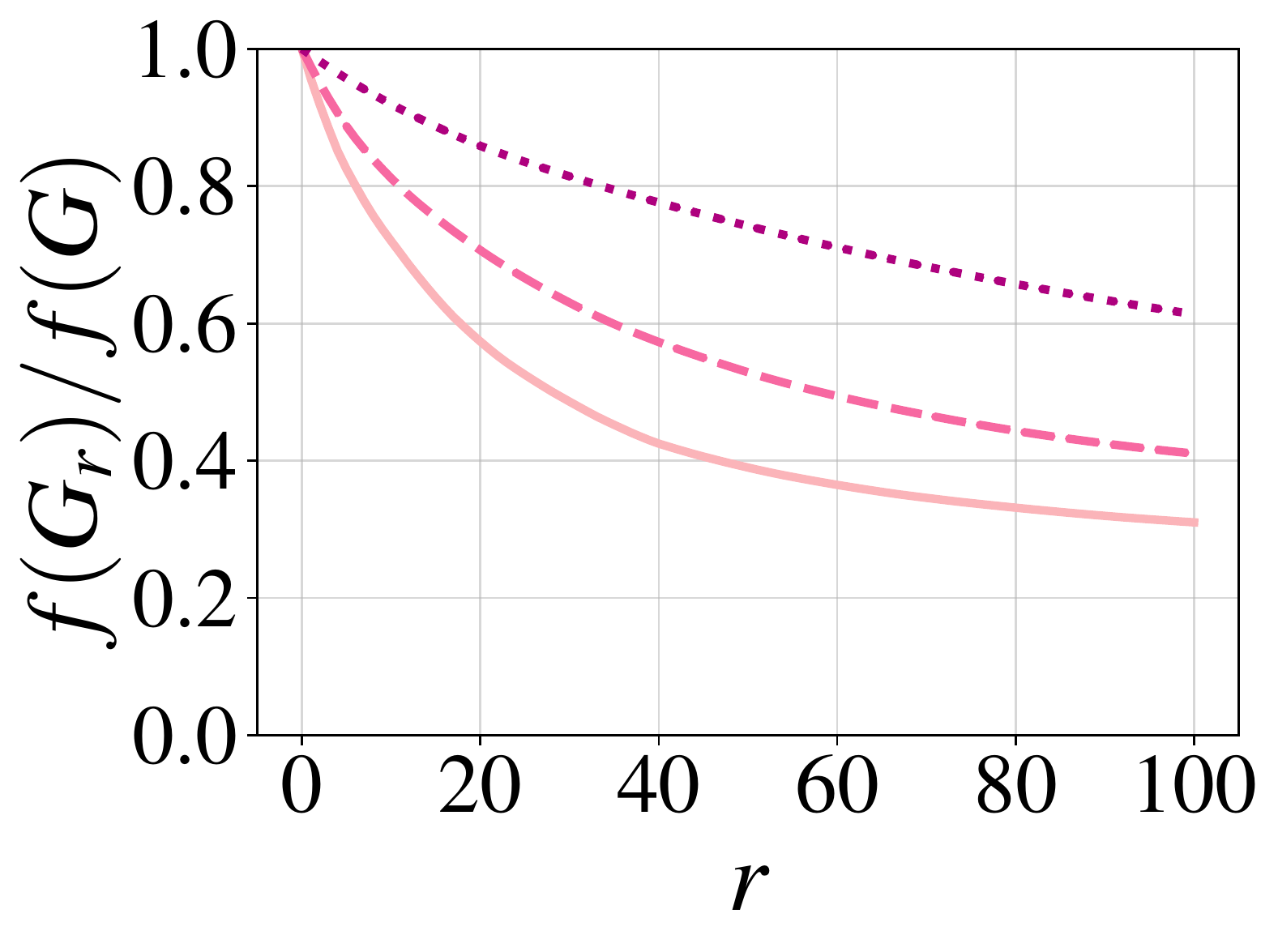}
		\subcaption{Absorption probability $\pabsorption = 0.1$}
	\end{subfigure}
	\caption{%
		Performance of \ourmethod for out-regular degrees $\outregulardegree\in\{5,10,20\}$, 
		run with $\qualitythreshold = 0.0$ under $c_{B1}$ on \ytone with $\probabilityshape=\mathbf{U}$.
		The smaller the out-degree, the stronger \ourmethod.
	} 
	\label{fig:outregulardegree}
\end{figure}
\begin{figure}[b]
	\centering
	\begin{subfigure}{0.5\linewidth}
		\centering
		\includegraphics[width=\linewidth]{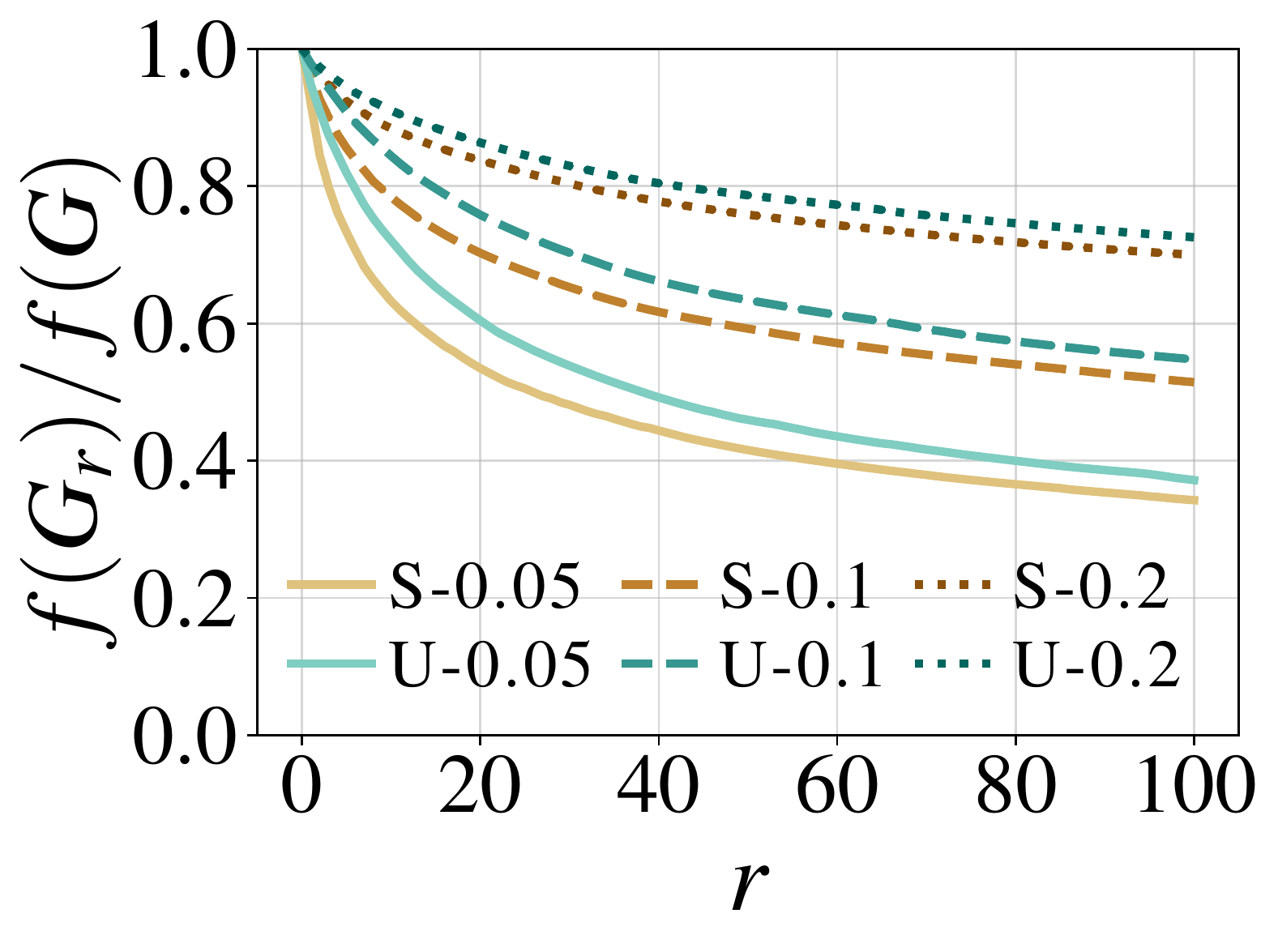}
		\subcaption{Cost function $\labeling_{R1}$}
	\end{subfigure}~%
	\begin{subfigure}{0.5\linewidth}
		\centering
		\includegraphics[width=\linewidth]{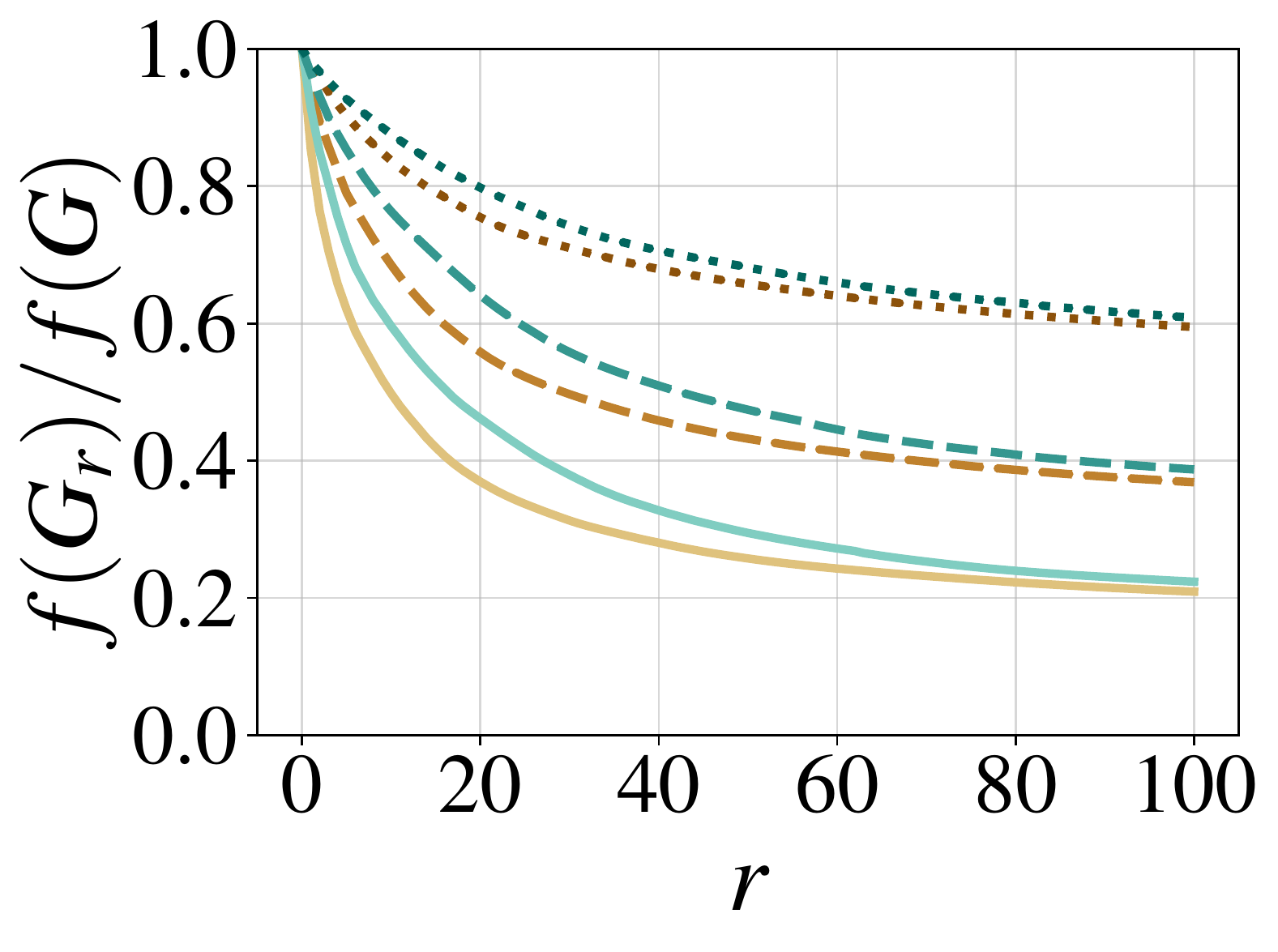}
		\subcaption{Cost function $\labeling_{B1}$}
	\end{subfigure}
	\caption{%
		Performance of \ourmethod
		for absorption probabilities $\pabsorption \in \{0.05,0.1,0.2\}$ and out-edge probability distribution shapes $\probabilityshape \in \{\mathbf{U}, \mathbf{S}\}$,
		run with $\qualitythreshold = 0.5$ on \ytone with $\outregulardegree = 5$. 
		The smaller the absorption probability, the stronger the performance of \ourmethod, 
		and our objective function drops faster when the out-edge probability distribution is skewed.
	}
	\label{fig:absorptionshape}
\end{figure}

\begin{figure}[t]
	\centering
	\begin{subfigure}{0.5\linewidth}
		\centering
		\includegraphics[width=\linewidth]{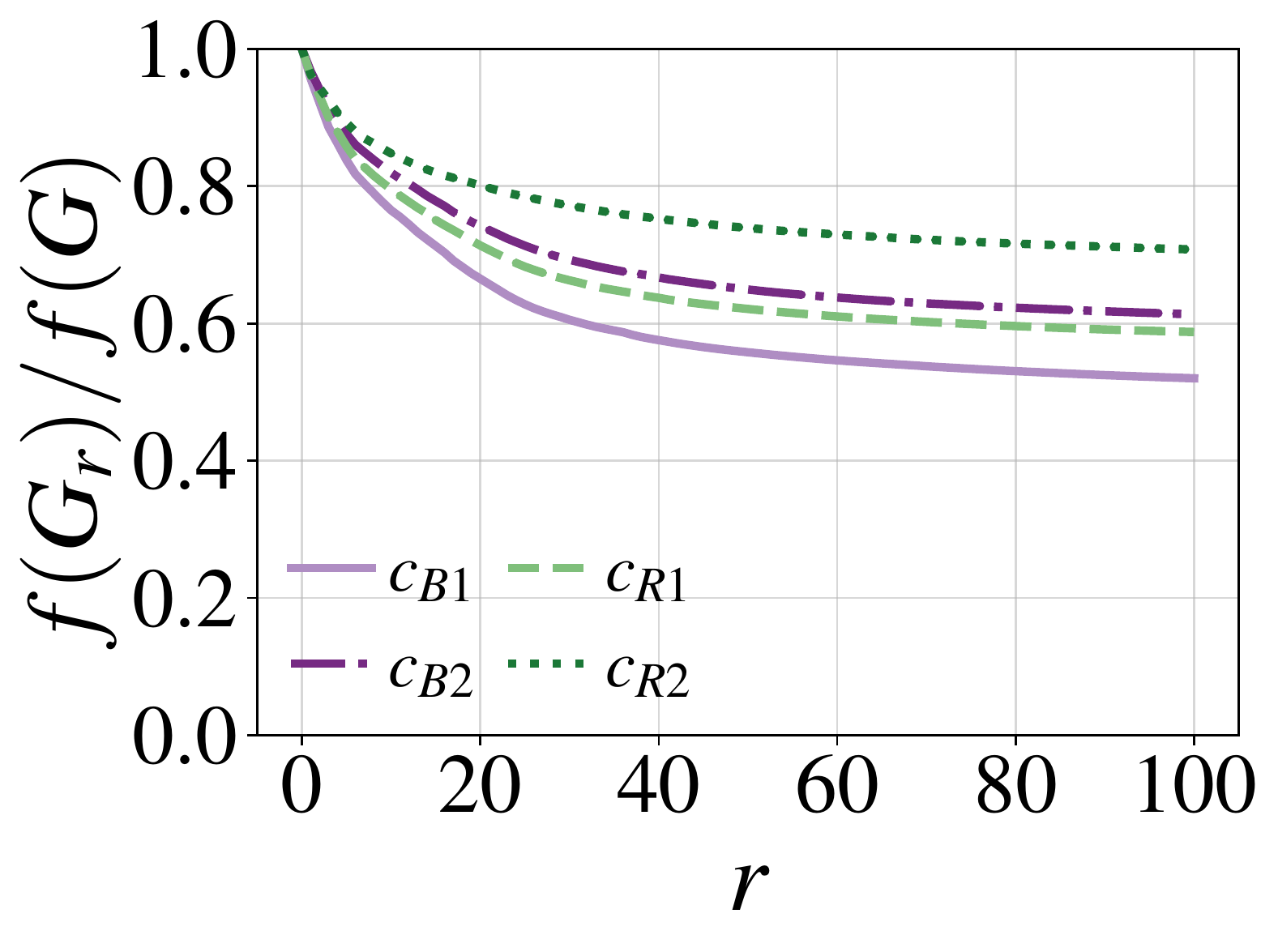}
		\subcaption{Quality threshold $\qualitythreshold=0.99$}
	\end{subfigure}~%
	\begin{subfigure}{0.5\linewidth}
		\centering
		\includegraphics[width=\linewidth]{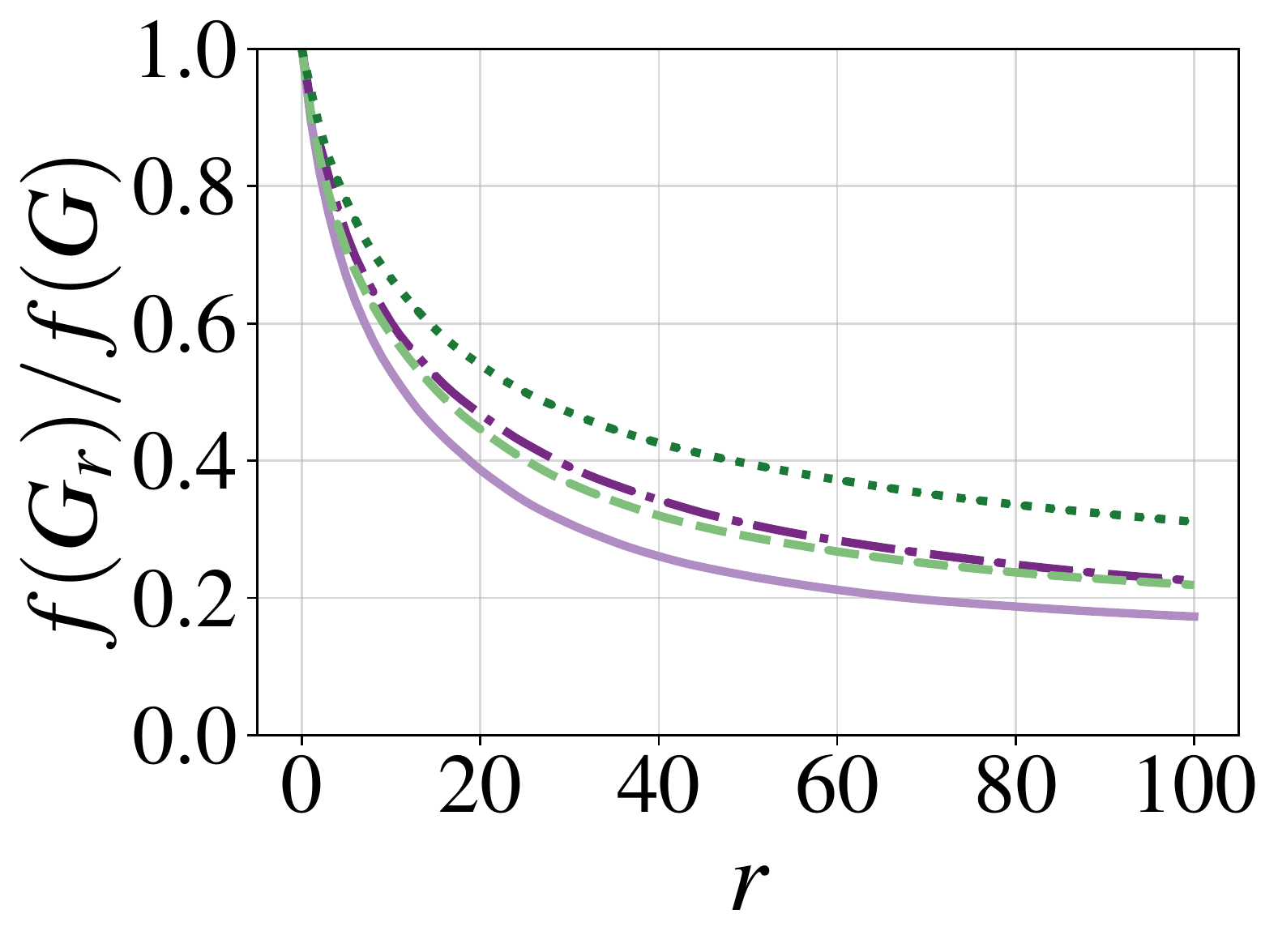}
		\subcaption{Quality threshold $\qualitythreshold=0.0$}
	\end{subfigure} 
	\caption{%
		Performance of \ourmethod under cost functions $c\in\{c_{B1},c_{B2},c_{R1},c_{R2}\}$, 
		run on \ytone with $\outregulardegree=5$, $\pabsorption=0.05$, and $\probabilityshape = \mathbf{U}$.
		\ourmethod is strongest under the binary cost function $\labeling_{B1}$, 
		weakest under the real-valued cost function $\labeling_{R2}$, 
		and roughly equally strong under the binary $\labeling_{B2}$ and the real-valued $\labeling_{R1}$.
	}\label{fig:cost_functions}
\end{figure}

\subsection{Scalability}
\label{apx:exp:scalability}

\subsubsection{Precomputations}

In \cref{fig:scalability} in the main text, we showed that \ourmethod's individual edge rewirings scale approximately linearly in practice, 
whereas \fabbrialg's individual edge rewirings scale quadratically.
In \cref{fig:precomputations}, we additionally show that precomputations add approximately linear overhead for \ourmethod and somewhat unpredictable, at times quadratic overhead for \fabbrialg. 
This could be due to two factors.
First, the relevance precomputations for \fabbrialg are slightly more complicated than for \ourmethod. 
Second, one part of \fabbrialg's precomputations not present in \ourmethod is a matrix inverse approximation via power iteration. 
This computation is quadratic in the number of \emph{harmful} nodes, 
as \fabbrialg considers only these nodes as transient states.

\begin{figure}[t]
	\centering
	\begin{subfigure}{0.5\linewidth}
		\centering
		\includegraphics[height=3.025cm]{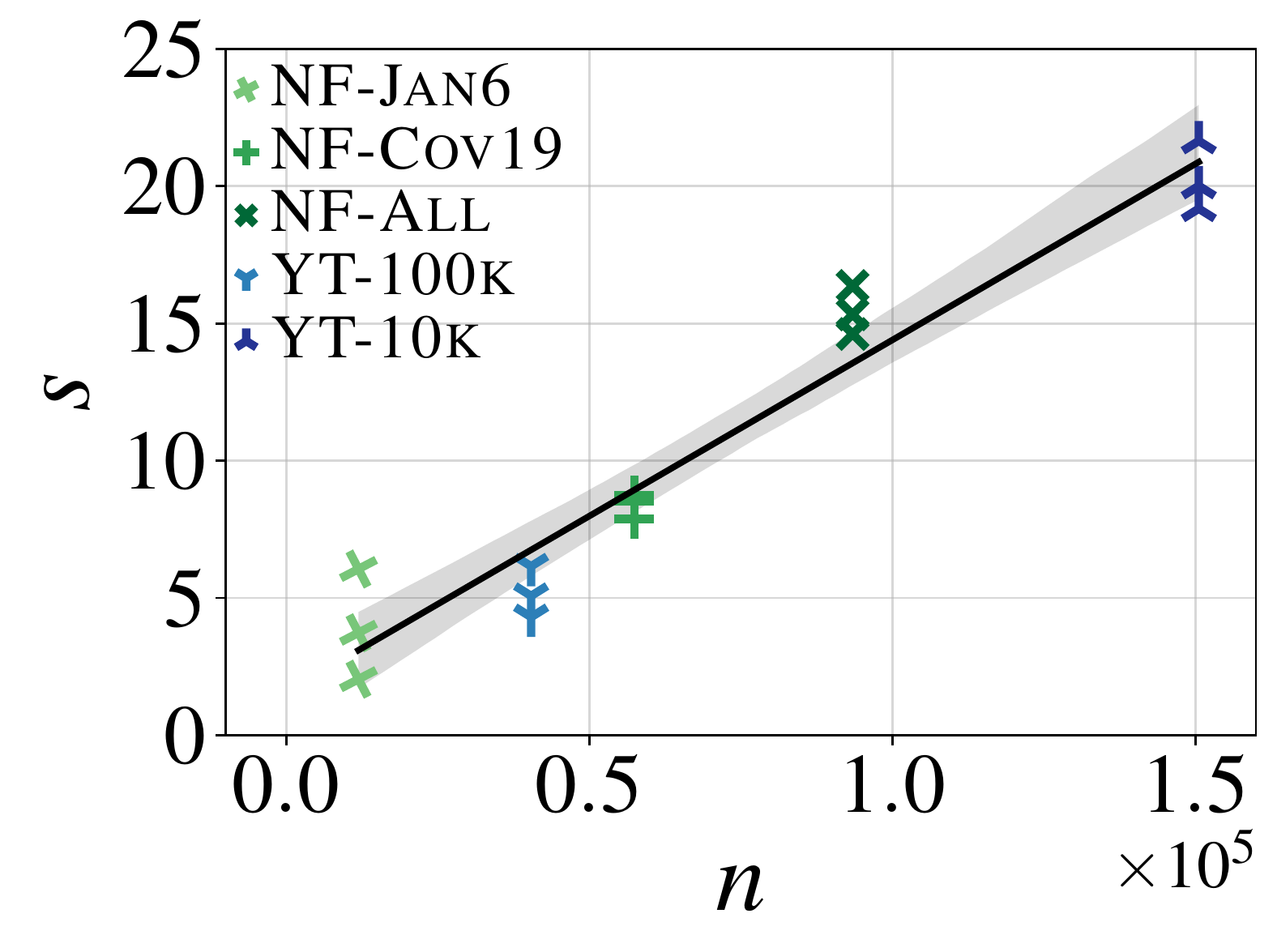}\vspace*{1pt}
		\subcaption{\ourmethod on \ourproblem}
	\end{subfigure}~%
	\begin{subfigure}{0.5\linewidth}
		\centering
		\includegraphics[width=\linewidth]{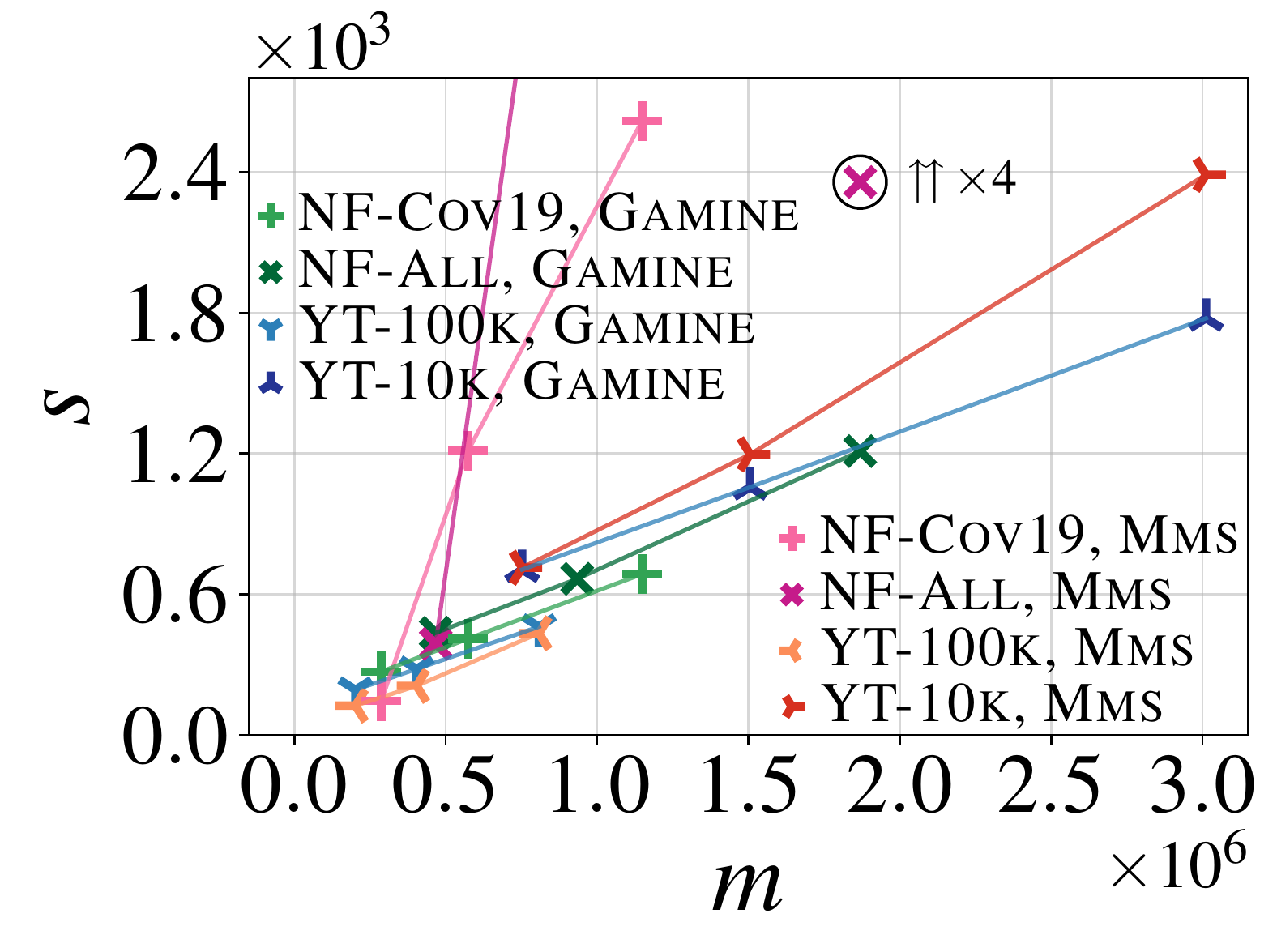}
		\subcaption{\ourmethod and \fabbrialg on \ourproblemtwo}
	\end{subfigure}
	\caption{%
		Empirical scaling of precomputations for \ourmethod and \fabbrialg, computed with $\pabsorption=0.05$, $\probabilityshape=\mathbf{U}$, $\labeling_{B1}$, and $\qualitythreshold=0.0$ (\ourproblem) resp. $0.99$ (\ourproblemtwo).
		We depict scaling for \ourproblem as a function of $\nnodes$, 
		with a linear regression fitted across datasets, 
		and scaling for \ourproblemtwo as a function of $\nedges$, 
		where we connect the observations stemming from different regular out-degrees of the same dataset for \nelatwo, \nelathree, and \yttwo. 
		Precomputations for \ourproblem are much faster than for \ourproblemtwo, 
		and while \ourmethod scales approximately linearly, \fabbrialg scales somewhat unpredictably.
	}
	\label{fig:precomputations}
\end{figure}

\begin{figure}[t]
	\centering
	\includegraphics[width=0.5\linewidth]{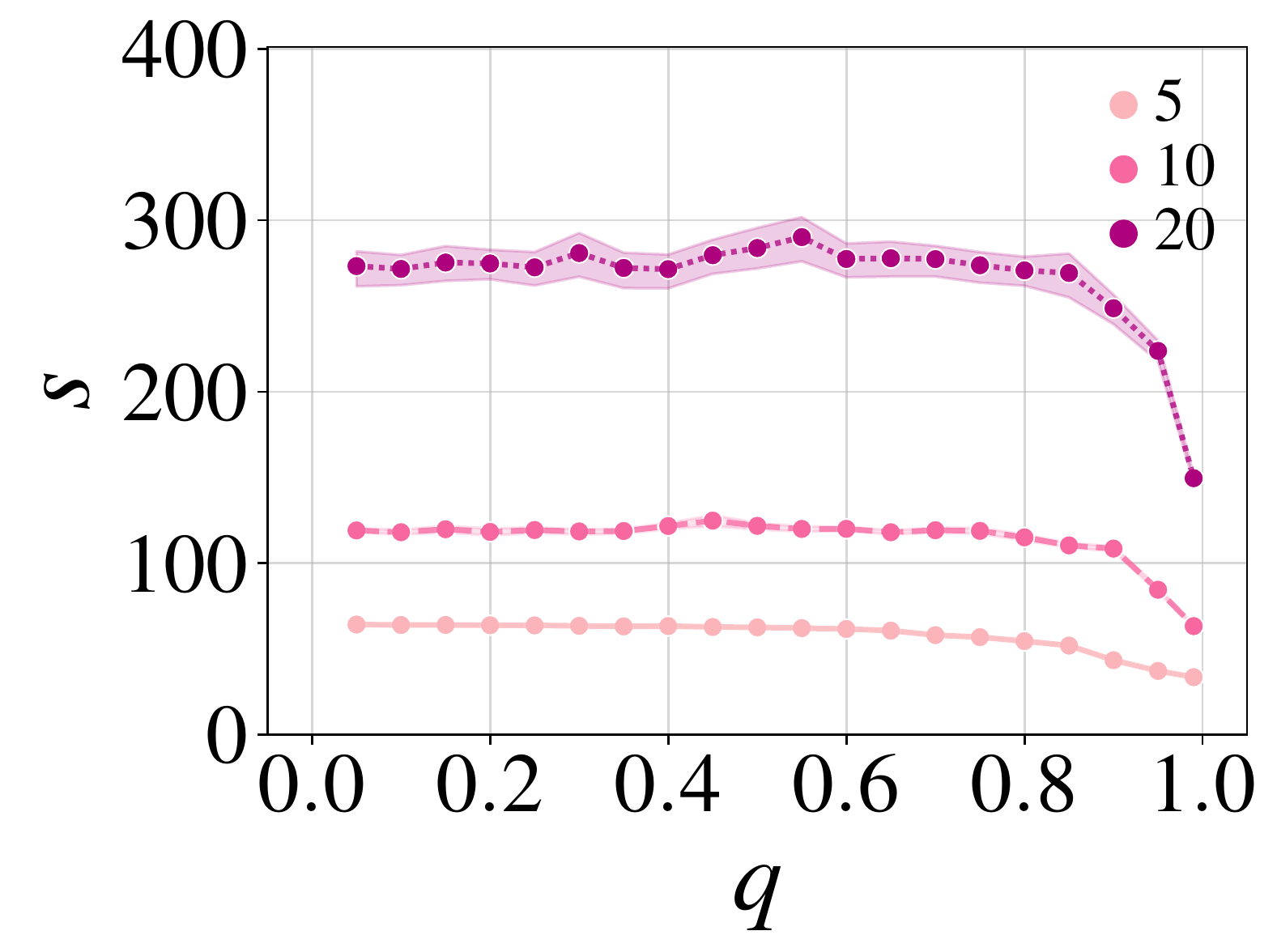}
	\caption{%
		Empirical scaling of \ourmethod as a function of the quality threshold $\qualitythreshold$, 
		run on \ytone with $\pabsorption=0.05$, $\probabilityshape=\mathbf{U}$, and $\labeling_{B1}$, for $\outregulardegree\in\{5,10,20\}$.
		The larger $\qualitythreshold$, the faster \ourmethod.
	}
	\label{fig:scalability:quality}
\end{figure}

\subsubsection{Impact of quality threshold}

In addition to \ourmethod's scaling behavior as a function of $\nnodes$ and $\nedges$, 
for \ourproblemtwo, 
we would like to understand how the scaling behavior of our method depends on the quality threshold $\qualitythreshold$. 
To this end, we run \ourmethod on each of our \ytone datasets with $\qualitythreshold \in \{\nicefrac{x}{100}\mid 0 < x < 100, x\mod 5 = 0\} \cup \{0.99\}$. 
Since increasing $\qualitythreshold$ eliminates rewiring candidates, 
we hope to see the running time decrease as $\qualitythreshold$ increases, 
and we expect a larger acceleration on graphs with higher (regular) out-degrees. 
As reported in \cref{fig:scalability:quality}, 
this is precisely what we find---%
and the dependence on $\qualitythreshold$ is particularly small for our sparser \ytone datasets. 

\begin{figure}[b]
	\centering
	\begin{subfigure}{0.33\linewidth}
		\centering
		\includegraphics[width=\linewidth]{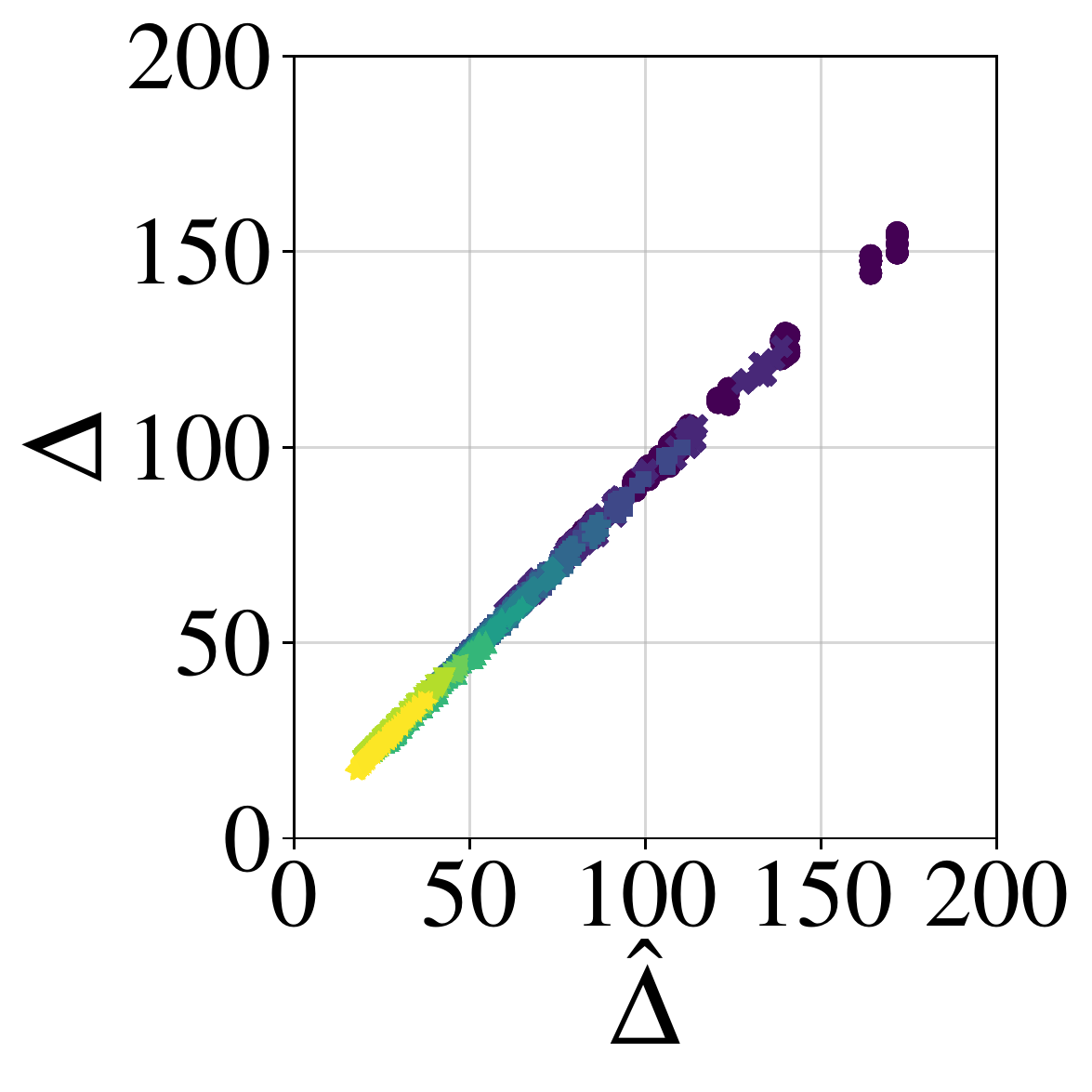}
		\subcaption{\synhom, $\nnodes = 100$}
	\end{subfigure}~%
	\begin{subfigure}{0.33\linewidth}
		\centering
		\includegraphics[width=\linewidth]{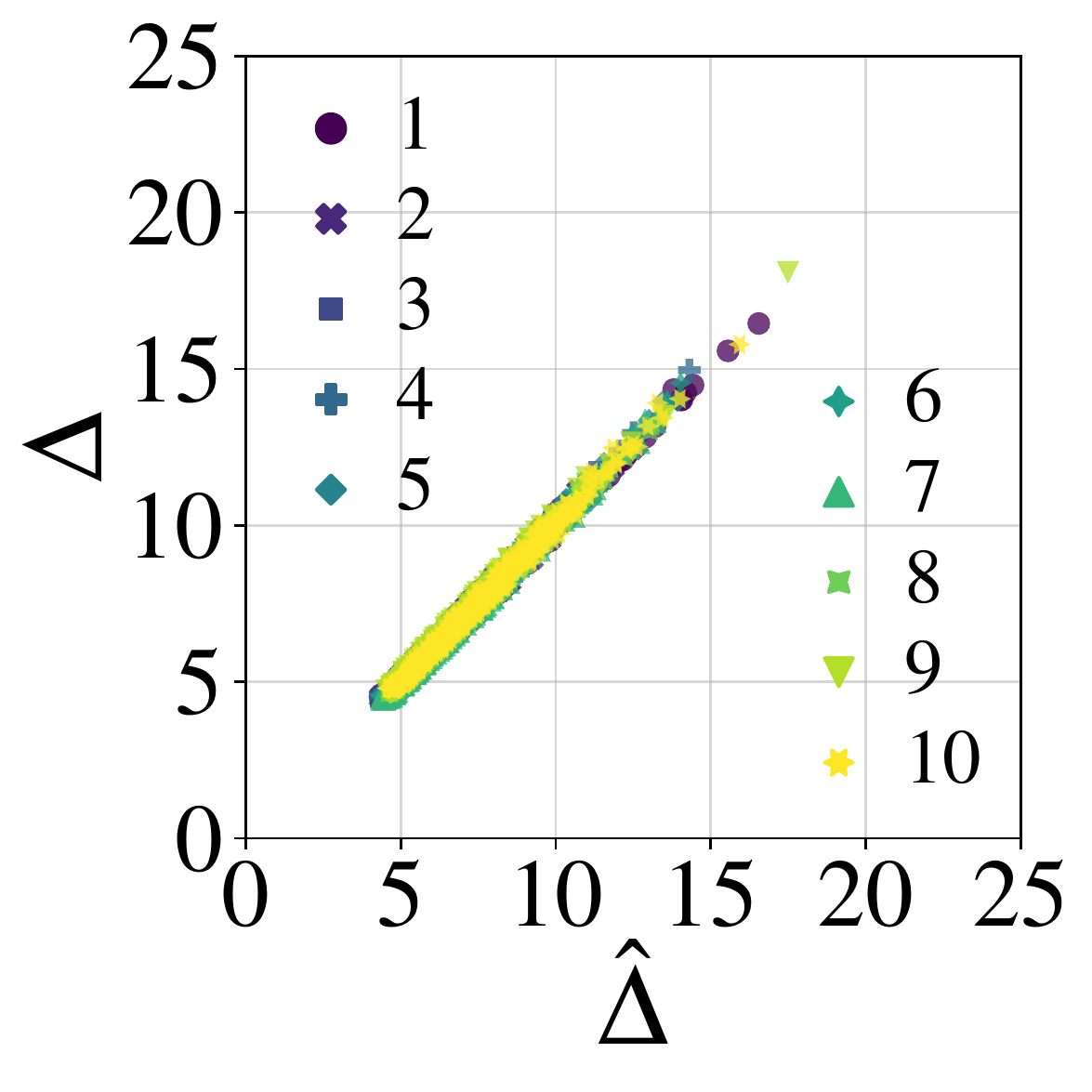}
		\subcaption{\synuni, $\nnodes = 100$}
	\end{subfigure}~%
	\begin{subfigure}{0.33\linewidth}
		\centering
		\includegraphics[width=\linewidth]{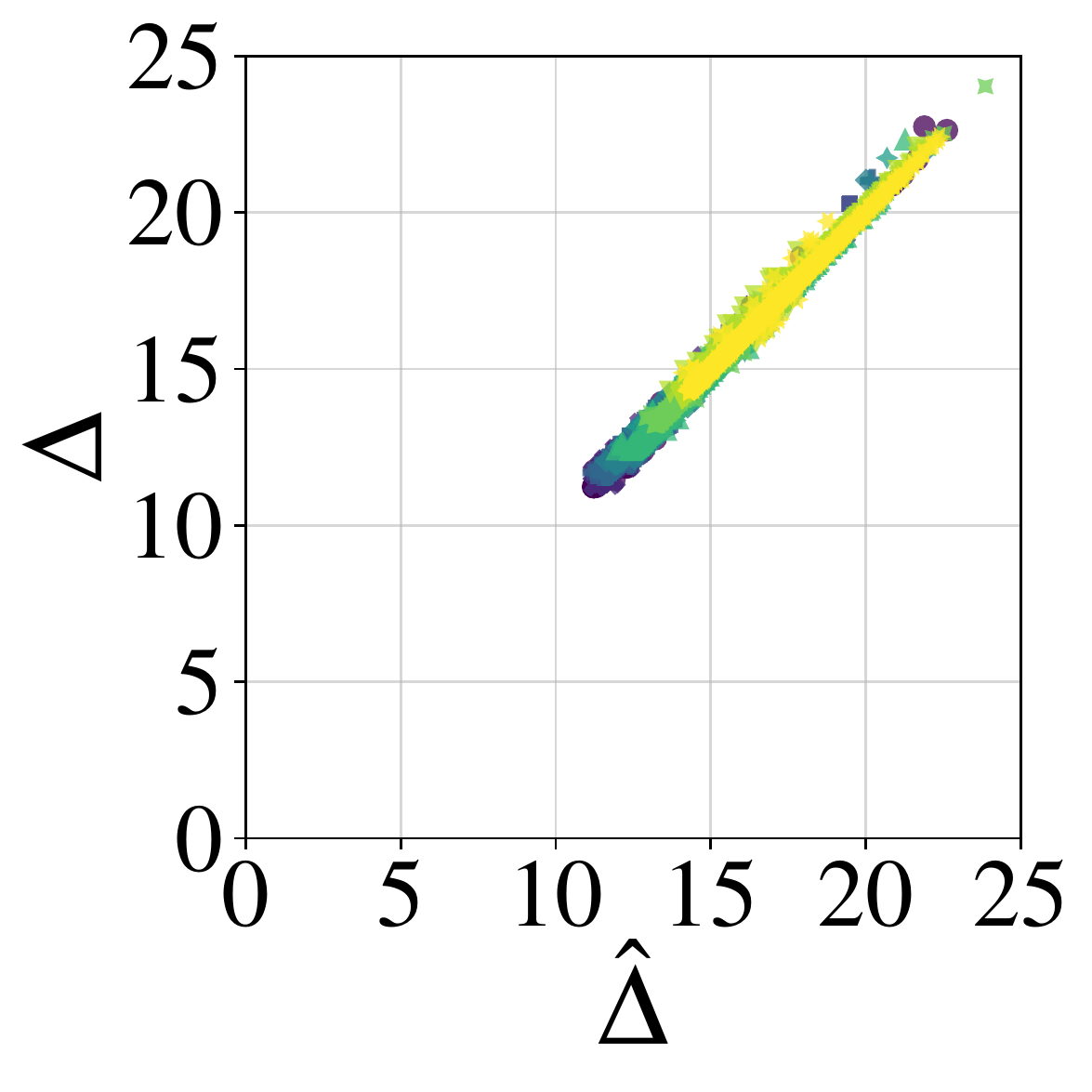}
		\subcaption{\synuni, $\nnodes = 1\,000$}
	\end{subfigure}
	\caption{%
		Correlation of the $\heuristic$ and $\Delta$ values for the $100$ candidates $(i,j,k)$ with the largest $\Delta$, 
		in $10$ rewiring rounds on synthetic graphs with $\pabsorption=0.05$, $\fractionbad = 0.7$, and $\probabilityshape=\textbf{U}$, under our binary costs $\labeling_B$. 
		$\Delta$ and $\heuristic$ are almost perfectly correlated.
	}
	\label{fig:heuristic}
\end{figure}
\begin{figure}[b]
	\centering
	\begin{subfigure}{0.5\linewidth}
		\centering
		\includegraphics[height=3.065cm]{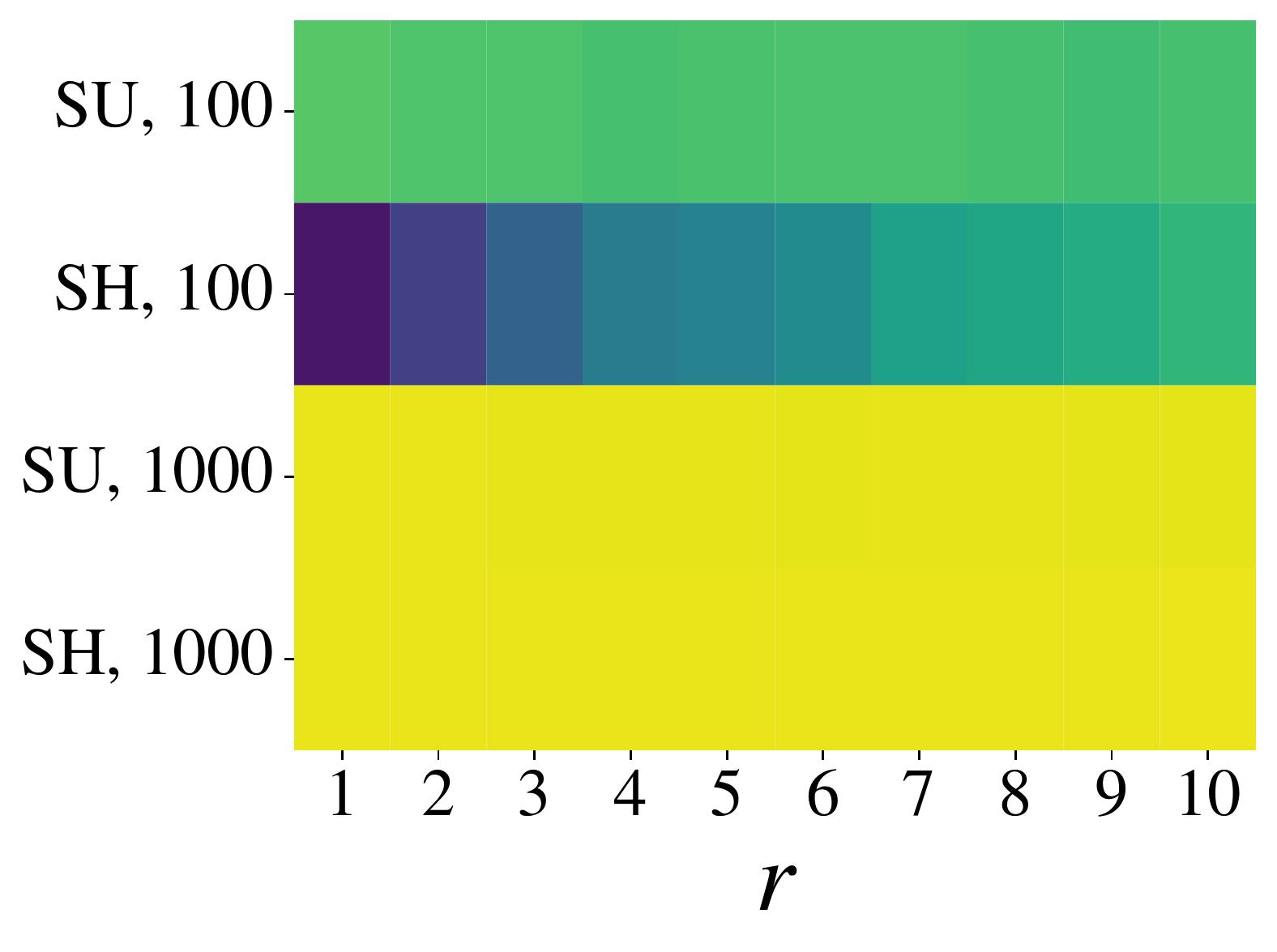}\vspace*{0.5pt}
		\subcaption{Pearson's Correlation}
	\end{subfigure}~%
	\begin{subfigure}{0.5\linewidth}
		\centering
		\includegraphics[width=\linewidth]{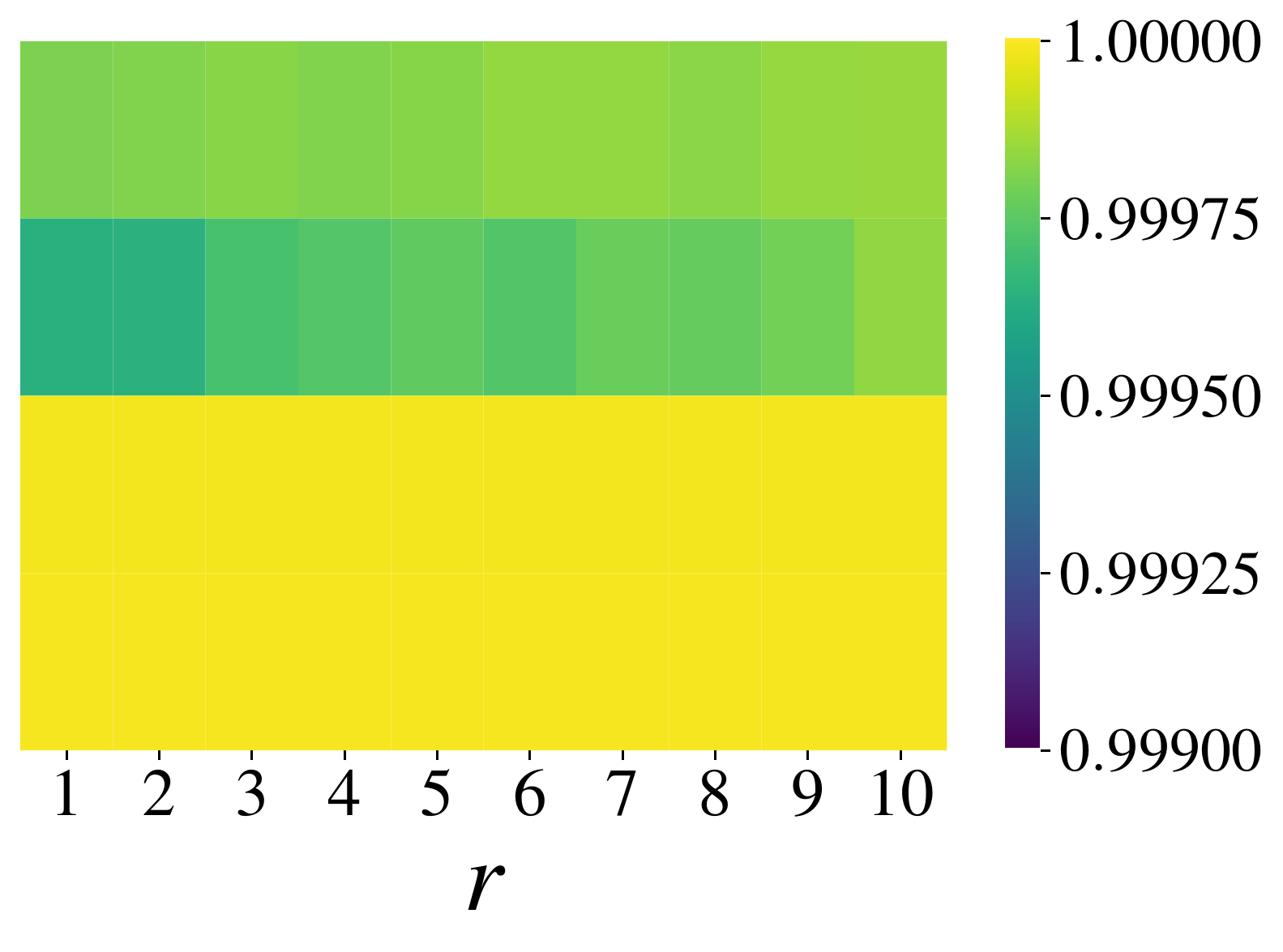}
		\subcaption{Spearman's Correlation}
	\end{subfigure}~%
	\caption{Pearsons' product-moment correlation and Spearman's rank correlation between $\Delta$ and $\heuristic$, 
		in $10$ rewiring rounds on synthetic graphs with $\pabsorption=0.05$, $\fractionbad = 0.7$, and $\probabilityshape=\textbf{U}$, under binary costs. 
		Both correlations are almost perfect across all rounds, 
		and they are even closer to $1$ for the larger synthetic graphs than for the smaller ones.
	}
	\label{fig:correlations}
\end{figure}

\subsection[Impact of Using Delta-Hat Instead of Delta]{Impact of Using $\heuristic$ Instead of $\Delta$}
\label{apx:exp:heuristic}

Having confirmed in the main text that \ourmethod scales linearly not only in theory but also in practice (cf. \cref{fig:scalability}), 
we would like to ensure that moving from $\Delta$ to $\heuristic$, which enables this scalability, 
has little impact on the quality of our results. 
To this end, we investigate the relationship between $\Delta$ and $\heuristic$ on the smallest instances of our synthetic graphs, \synuni and \synhom. 
As illustrated in \cref{fig:heuristic}, $\Delta$ and $\heuristic$ are almost perfectly correlated,
and \cref{fig:correlations} shows that this holds not only for the top-ranked candidates but for all candidates, 
under both product-moment correlation and, more importantly, rank correlation. 
Thus, we are confident that our reliance on $\heuristic$, rather than $\Delta$, 
to select greedy rewirings hardly degrades our results.

\subsection{Performance on the NELA-GT Datasets}
\label{apx:exp:nela}

Whereas on the \yt datasets,
$100$ rewirings with \ourmethod reduce the expected total exposure to harm by $50\%$ while guaranteeing recommendations still $95\%$ as relevant as the original recommendations (\cref{fig:quality_threshold}),
the reduction we achieve on the \nf datasets is more moderate.
As illustrated in \cref{fig:nela-performance}, 
our best result here is a reduction of the expected total exposure to harm by about $30\%$, again under a $95\%$ quality guarantee.
Notably, changing the quality threshold $\qualitythreshold$ has a smaller impact on the \nf than on the \yt datasets, 
and sometimes it has no performance impact at all.
In fact, for the \nelaone graphs involved in \cref{fig:nela-performance}, 
$\qualitythreshold = 0.95 \equiv 0.9 \equiv 0.5$ (and hence, we only draw the line for $\qualitythreshold = 0.95$).
This indicates that unlike on the \yt datasets, 
on the \nf datasets, 
\ourmethod is actually affected by the restriction of rewirings to the $100$ most relevant candidates, 
which we implement for all real-world datasets (cf.~\cref{exp:alg} and also \cref{apx:datasets:real}, \cref{fig:relevancescores}).


\begin{figure}[t]
	\centering
\begin{subfigure}{0.5\linewidth}
	\centering
	\includegraphics[width=\linewidth]{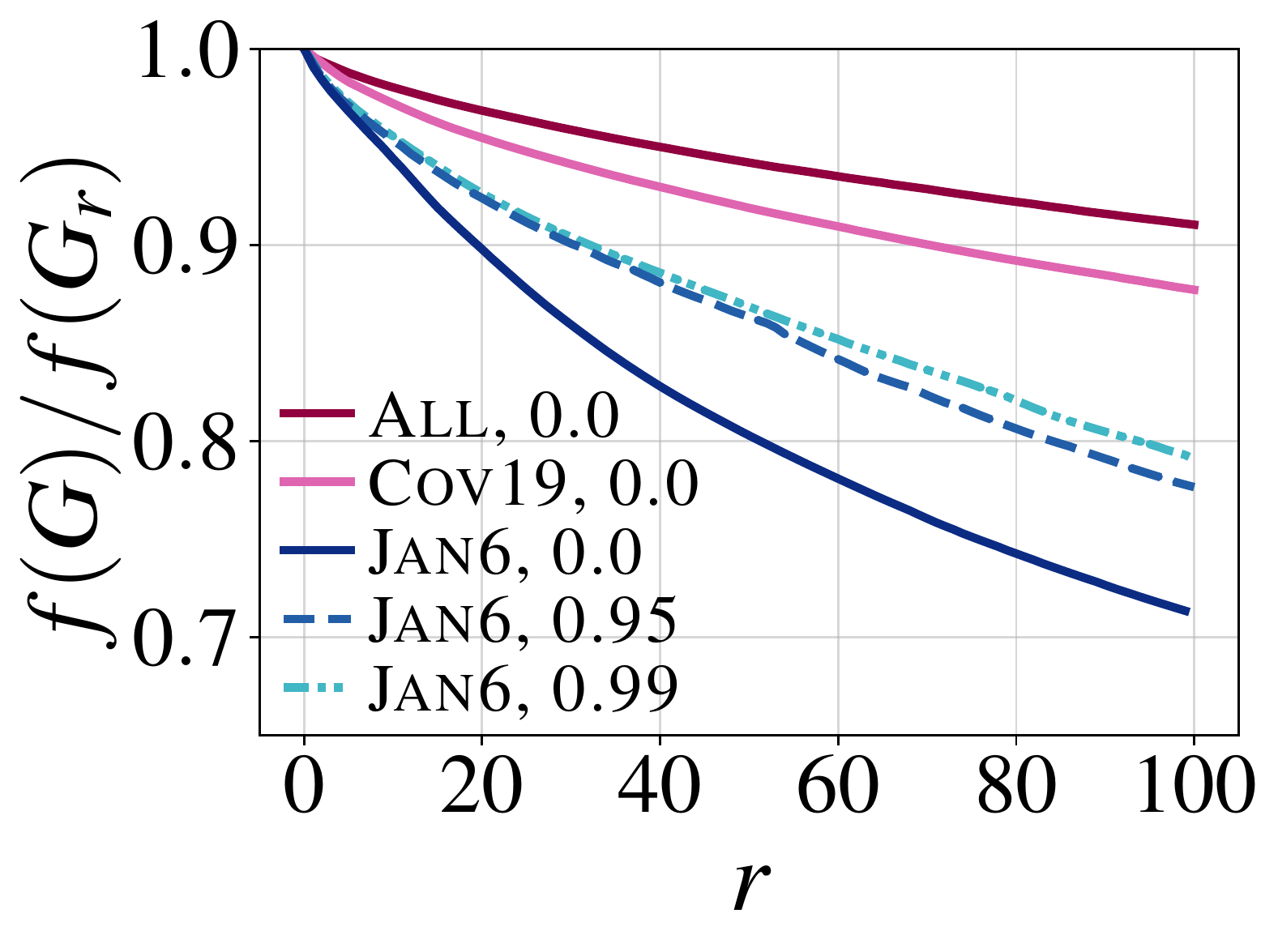}
	\subcaption{Cost function $\labeling_{R2}$}
\end{subfigure}~%
\begin{subfigure}{0.5\linewidth}
	\centering
	\includegraphics[width=\linewidth]{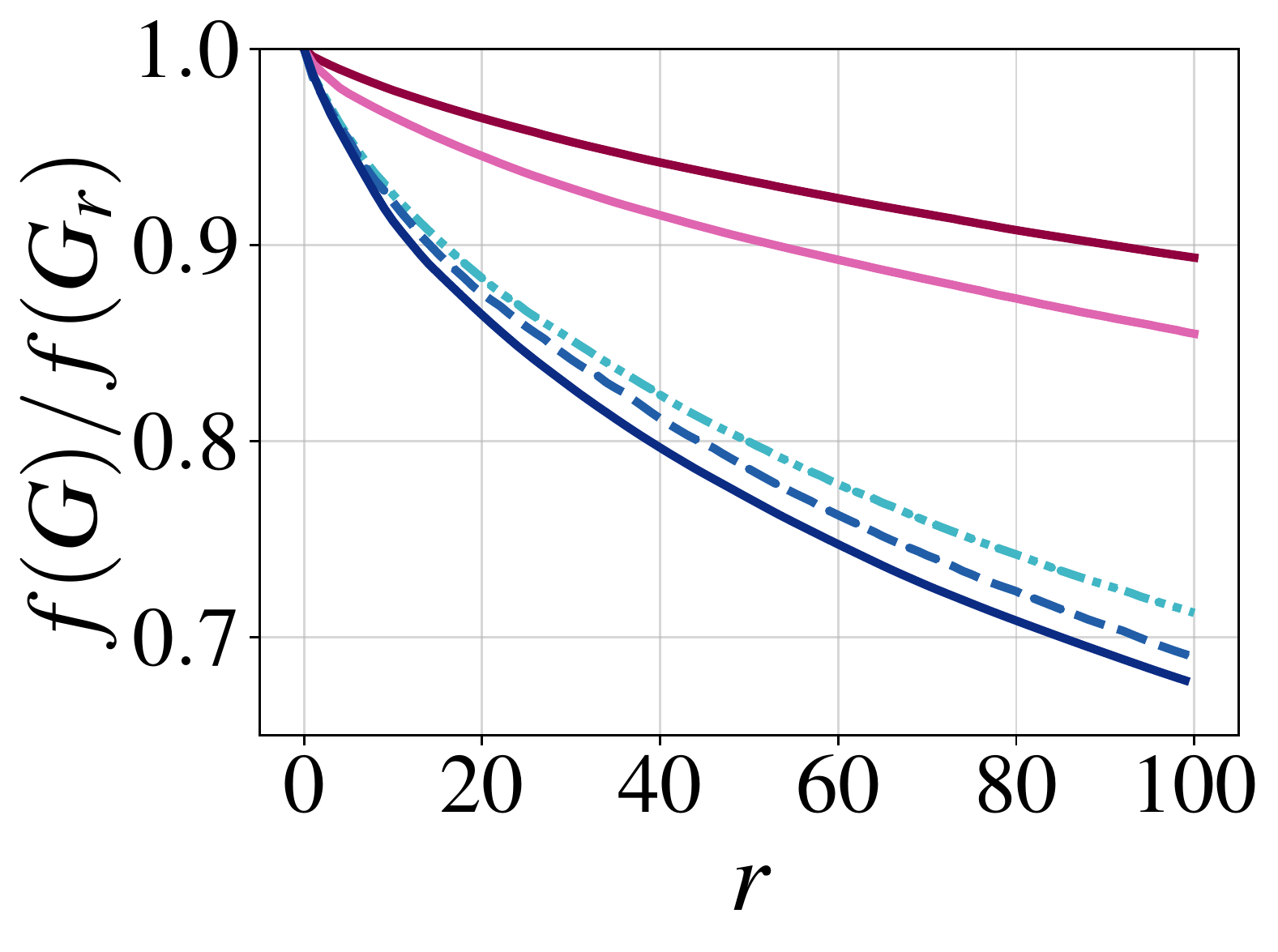}
	\subcaption{Cost function $\labeling_{B1}$}
\end{subfigure}
	\caption{%
		Performance of \ourmethod on our \nf datasets with $\outregulardegree = 5$, $\pabsorption=0.05$, and $\probabilityshape=\textbf{U}$.
		Varying $\qualitythreshold$ on the \nelaone dataset has a much smaller performance impact than what we observed on the \ytone dataset, 
		and the relative reduction in the expected total exposure after $100$ rewirings is at the level of what we achieve after $10$ rewirings on the \ytone dataset (cf.~\cref{fig:quality_threshold}).
	}
	\label{fig:nela-performance}
\end{figure}

\begin{figure}[b]
	\centering
	\begin{subfigure}{0.5\linewidth}
		\centering
		\includegraphics[width=\linewidth]{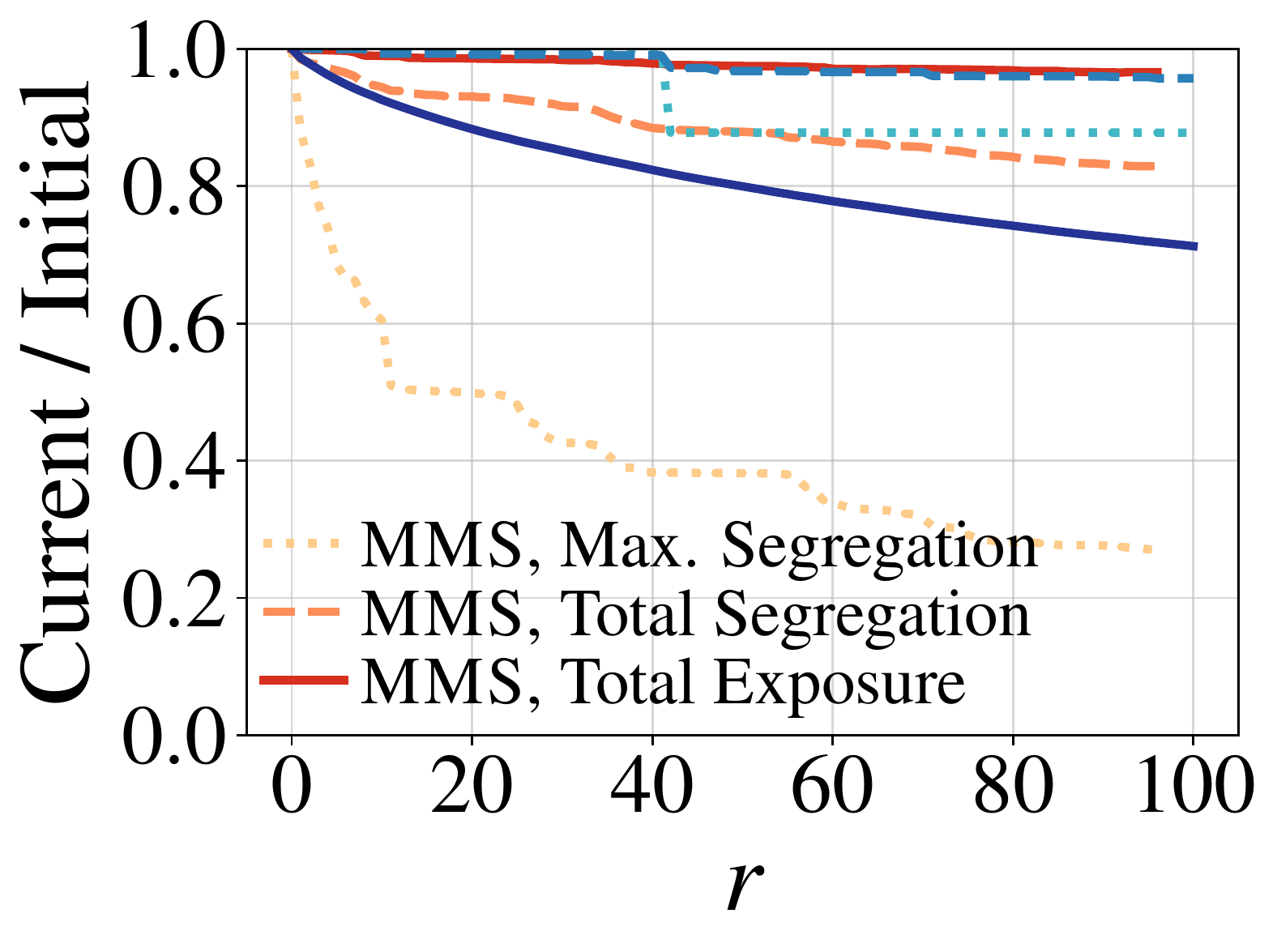}
		\subcaption{Quality threshold $\qualitythreshold = 0.99$}
	\end{subfigure}~%
	\begin{subfigure}{0.5\linewidth}
		\centering
		\includegraphics[width=\linewidth]{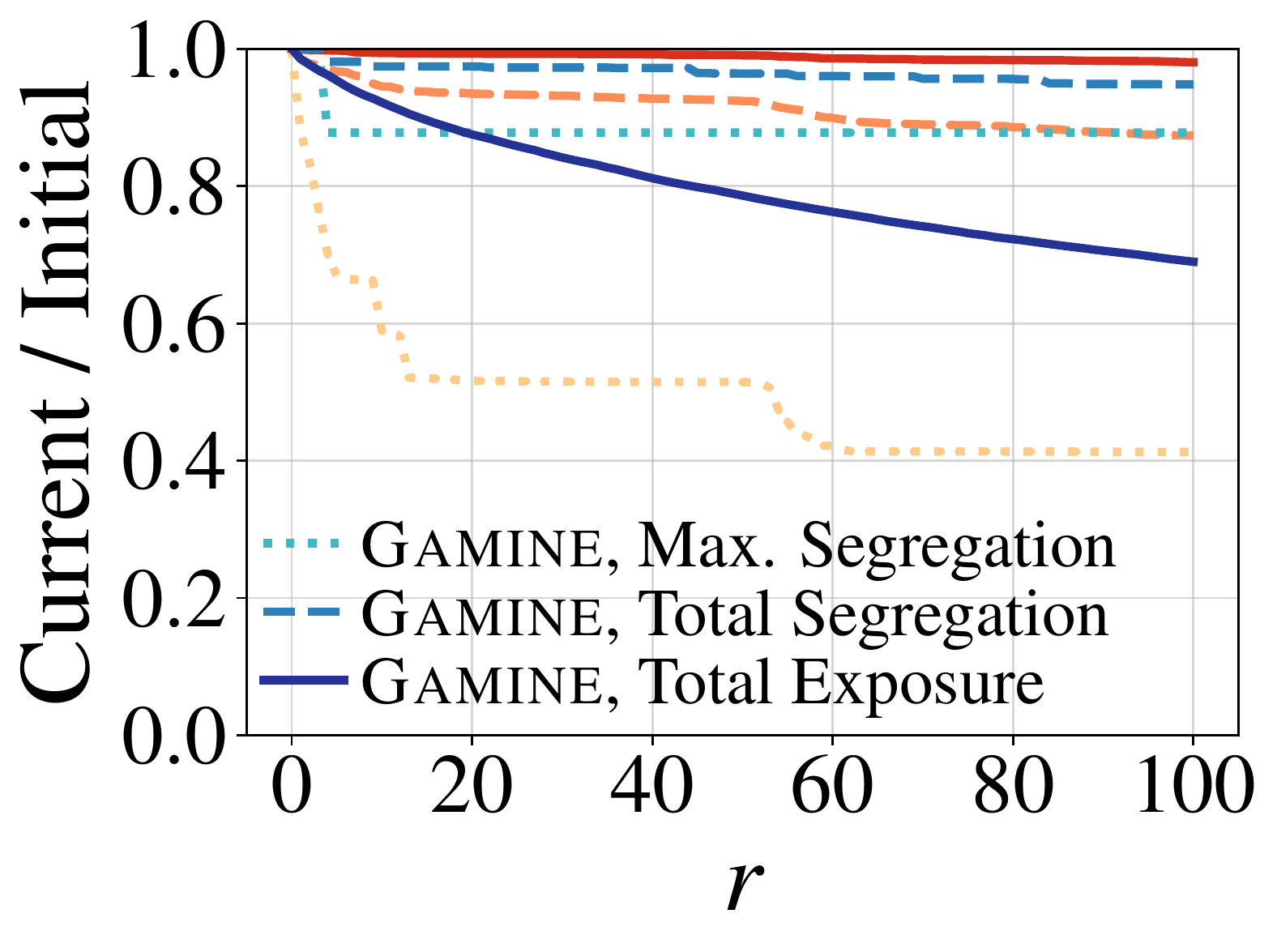}
		\subcaption{Quality threshold $\qualitythreshold = 0.5$}
	\end{subfigure}
	\caption{%
		Performance of \ourmethod and \fabbrialg 
		when measured under $c_{B1}$
		by the maximum segregation or the total segregation from \citeauthor{fabbri2022rewiring} \cite{fabbri2022rewiring}, 
		or by the total exposure as defined in \cref{eq:harm},
		run on \nelaone with $\outregulardegree = 5$, $\pabsorption=0.05$, and $\probabilityshape=\mathbf{U}$. 
		\ourmethod outperforms \fabbrialg on the exposure objective, 
		but \fabbrialg reduces its segregation objective more strongly. 
		Counterintuitively, \fabbrialg achieves a \emph{stronger} reduction of its segregation objective under a \emph{stricter} quality threshold.
	}
\label{fig:nela-mms}
\end{figure}

As illustrated in \cref{fig:nela-mms}, 
on the \nelaone dataset under $\labeling_{B1}$,
\ourmethod still outperforms \fabbrialg on our exposure objective, 
but \fabbrialg achieves a much stronger relative reduction of its segregation objective.
However, \fabbrialg counterintuitively reduces its objective function more strongly under a \emph{stricter} quality threshold---%
a behavior we never observe with \ourmethod under our exposure objective.
As given the same recommendation sequence $\outseq{i}$ at node $i$, 
a rewiring $\rewiring{i}{j}{k}$ 
that is  $\qualitythreshold$-permissible under $\qualitythreshold=0.99$ is also $\qualitythreshold$-permissible under $\qualitythreshold=0.5$,
this suggests that \fabbrialg is highly dependent on its trajectory and sometimes requires greedily suboptimal choices to obtain the best possible result after $\budget$ rewirings.

Moreover, the promising performance we observe for \fabbrialg on \nelaone under $\labeling_{B1}$ does not carry over to \nelatwo and \nelathree, or even to other cost functions on \nelaone:
On \nelatwo and \nelathree under $\labeling_{B1}$, 
and on \nelaone under $\labeling_{B2}$ or $\labeling_{R2}$ with binarization threshold $\roundingthreshold=1.0$, 
\fabbrialg cannot reduce its segregation objective at all, 
\emph{even though} the starting value of the maximum segregation is exactly the same as for \nelaone under $\labeling_{B1}$ (cf.~\cref{apx:datasets:real}, \cref{tab:nela-initial-exposures}).
On \nelaone under $\labeling_{R2}$ with binarization threshold $\roundingthreshold=1.0$, 
\fabbrialg stops after four rewirings with a reduction of $25\%$, 
but the maximum segregation is already miniscule from the start.
Thus, our experiments on \nf data confirm our impression from the main paper that \fabbrialg less robust than \ourmethod.

\subsection{Edge Statistics of Rewired Edges}
\label{apx:experiments:edgestats}

In the main text, we reported that \ourmethod frequently rewires edges $\edge{i}{j}$ with a comparatively large sum of in-degrees.
To corroborate this claim, in \cref{fig:indegree-sums}, 
we compare the distribution of normalized in-degree sums for edges rewired by \ourmethod to that of all edges. 
We find that the distribution of in-degree sums for edges rewired by \ourmethod has a higher median than the distribution of in-degree sums for all edges, 
and that the former is generally shifted toward higher in-degree sums as compared to the latter.
\balance

\begin{figure}[b]
	\centering
	\begin{subfigure}{0.5\linewidth}
		\centering
		\includegraphics[width=\linewidth]{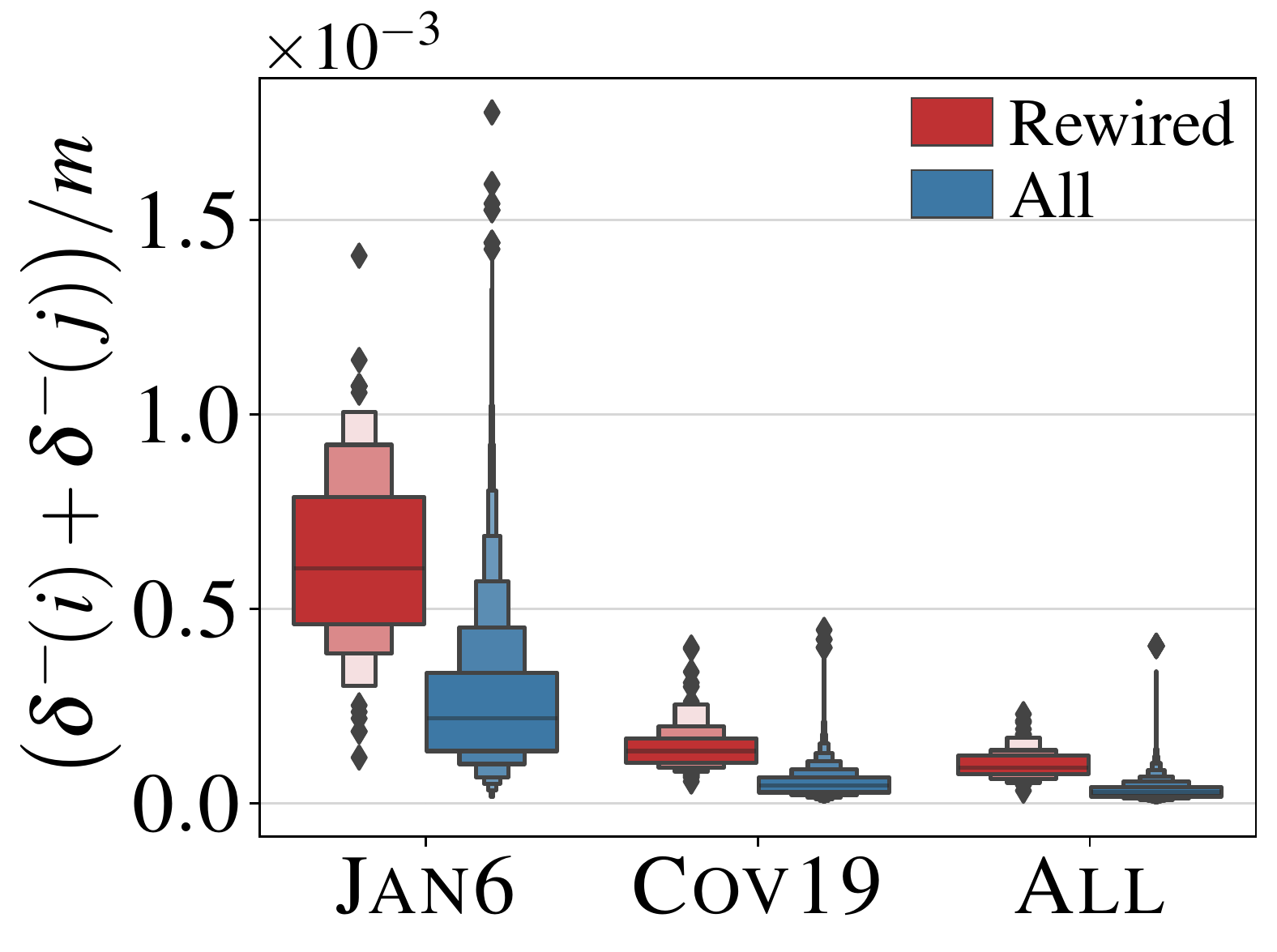}
		\subcaption{\nf, $\qualitythreshold=0.0$}
	\end{subfigure}~%
	\begin{subfigure}{0.5\linewidth}
		\centering
		\includegraphics[width=\linewidth]{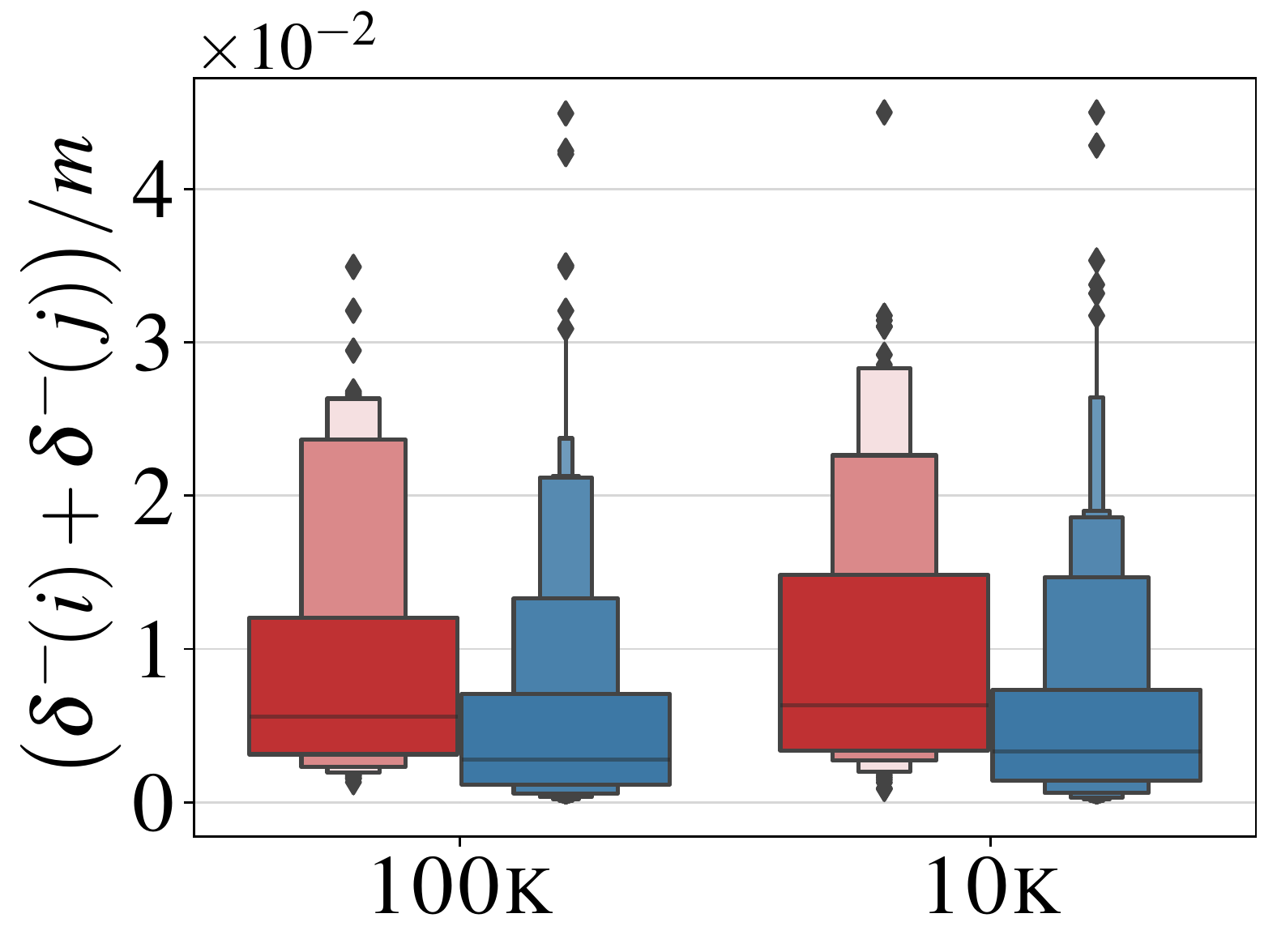}
		\subcaption{\yt, $\qualitythreshold=0.5$}
	\end{subfigure}
	\caption{%
		Distributions of normalized in-degree sums $\frac{\indegree{i} + \indegree{j}}{\nedges}$ for edges rewired by \ourmethod under $\labeling_{B1}$
		vs. all edges, 
		on real-world graphs with $\outregulardegree=5$, 
		$\pabsorption=0.05$,
		and $\probabilityshape=\mathbf{U}$. 
		\ourmethod tends to rewire edges with larger in-degree sums.
	}
	\label{fig:indegree-sums}
\end{figure}

\end{document}